\DeclareMathAlphabet\mathbfcal{OMS}{cmsy}{b}{n}
\newcommand{\mbf}{\mathbf}
\newcommand{\mc}{\mathcal}
\newcommand{\vep}{\varepsilon}
\renewcommand{\l}{\left}
\renewcommand{\r}{\right}
\def\wh{\widehat}
\def\wt{\widetilde}
\newcommand{\E}[0]{\mathsf{E}}
\newcommand{\Var}[0]{\mathsf{Var}}
\newcommand{\p}{\mathsf{P}}
\newcommand{\R}{\mathbb{R}}
\newcommand{\Z}{\mathbb{Z}}
\newcommand{\N}{\mathbb{N}}
\newcommand{\iid}{\mbox{\scriptsize{iid}}}
\newcommand{\nn}{\nonumber}
\newcommand{\sic}{\mbox{SC}}
\newcommand{\rss}{\mbox{RSS}}
\newcommand{\trim}{\mbox{\scriptsize trim}}
\newcommand{\rhow}{\rho_n^{\text{\scriptsize (W)}}}
\newcommand{\bbI}{\mathbb{I}}
\newcommand{\cp}{\theta}  
\newcommand{\Cp}{\Theta}  
\renewcommand{\c}{k} 
\newcommand{\C}{\mc K} 
\newcommand{\omegao}{\omega_n^{(1)}}
\newcommand{\omegat}{\omega_n^{(2)}}
\newcommand{\cmark}{\ding{51}}%
\newcommand{\xmark}{\ding{55}}%
\theoremstyle{definition}
\newtheorem{thm}{Theorem}[section]
\theoremstyle{definition}
\newtheorem{cor}[thm]{Corollary}
\theoremstyle{definition}
\newtheorem{lem}[thm]{Lemma}
\theoremstyle{definition}
\newtheorem{prop}[thm]{Proposition}
\theoremstyle{definition}
\newtheorem{assum}{Assumption}[section]
\theoremstyle{remark}
\newtheorem{rem}{Remark}[section]
\theoremstyle{definition}
\newtheorem{defn}{Definition}[section]
\theoremstyle{definition}
\title{Two-stage data segmentation permitting multiscale change points, heavy tails and dependence}
\author{Haeran Cho$^1$ and Claudia Kirch$^2$}
\begin{document}

\maketitle


\begin{abstract}
The segmentation of a time series into piecewise stationary segments, 
a.k.a. multiple change point analysis, is an important problem 
both in time series analysis and signal processing.
In the presence of {\it multiscale change points}
with both large jumps over short intervals and small changes over long stationary intervals,
multiscale methods achieve good adaptivity in their localisation
but at the same time, require the removal of false positives and duplicate estimators
via a model selection step.
In this paper, we propose a localised application of Schwarz information criterion which,
as a generic methodology, is applicable with any 
multiscale candidate generating procedure fulfilling mild assumptions.
We establish the theoretical consistency of the proposed localised pruning method
in estimating the number and locations of multiple change points 
under general assumptions permitting heavy tails and dependence.
Further, we show that combined with a MOSUM-based candidate generating procedure,
it attains minimax optimality in terms of detection lower bound and localisation for i.i.d.\ sub-Gaussian errors.
A careful comparison with the existing methods by means of 
(a) theoretical properties such as generality, optimality and algorithmic complexity, 
(b) performance on simulated datasets and run time, as well as 
(c) performance on real data applications,
confirm the overall competitiveness of the proposed methodology.
\end{abstract}

\footnotetext[1]{School of Mathematics, University of Bristol, UK.
Email: \url{haeran.cho@bristol.ac.uk}.}

\footnotetext[2]{Department of Mathematics, Otto-von-Guericke University; Center for Behavioral Brain Sciences (CBBS); Magdeburg, Germany.
Email: \url{claudia.kirch@ovgu.de}.} 

\section{Introduction}

Change point analysis has a long tradition in statistics since \citet{page1954}.
In recent years, there has been a surge of interest for 
computationally fast and statistically efficient methods for change point analysis
due to its importance in time series analysis, signal processing and 
many other applications where data is routinely collected over time in naturally nonstationary environments.  
In particular, many papers address the problem of testing for a change point, 
either retrospectively or sequentially,
when at most one change is expected, see \citet{csorgo1997} and \citet{horvath2014} for an overview.
Based on such tests, estimators for the location of a single change point 
can be derived with optimal localisation properties. 

However, it is often unknown how many structural changes are present in the data
and allowing for multiple change points, 
the goal of change point analysis is to estimate both the total number and locations of the change points. 
Examples where data segmentation is popularly employed include genomics 
(detecting chromosomal copy number aberrations, see \citet{olshen2004, li2016, niu2012, chan2017}),
neurophysiology 
(modelling the instabilities in the rate at which a neuron fires an action potential, \citet{messer2014}),
astronomy (detecting orbiting planets and their periodicity, 
\citet{fisch2018})
and finance (identifying and dating change points in financial time series, \citet{cho2012}), to name but a few. 

Broadly, approaches to retrospective change point analysis 
in the literature can be categorised into two:
One line of research relates to the aforementioned tests,
while the other aims at optimising objective functions,
constructed on the principle of penalised likelihood or minimum description length,
via dynamic programming \citep{killick2012, maidstone2017} or genetic algorithm \citep{davis2013}.
There are also methods based on hidden Markov models 
with algorithms for estimating the sequence of hidden states \citep{titsias2016}.

Recent algorithmic developments include multiscale methodologies 
which focus on isolating each change point within an interval sufficiently large for its detection,
whereby the tests and the estimators designed for the at-most-one-change alternatives 
are applicable to detect (possibly) multiple change points. 
The Wild Binary Segmentation (WBS) algorithm proposed in \citet{fryzlewicz2014}
accomplishes this by drawing a large number of random intervals.
\citet{kirch2014} investigate a moving sum (MOSUM) procedure which 
systematically tests for at most a single change point over moving windows at a single bandwidth,
and briefly discuss its multiscale extension for better adaptivity.
On the one hand, such multiscale methods enjoy the near-optimal localisation of change points
through scanning the same regions of the data at multiple resolutions.
On the other, this may result in conflicting (duplicate) estimators detected for the identical change point,
as well as false positives spuriously detected without any change points in their vicinity,
which makes a model selection step inevitable.

There exist post-processing and pruning procedures
specifically tailored for particular multiscale candidate generating methods and settings
to handle false positives and duplicates, however,
there is a lack of a unified approach to this task.
In this paper, we propose a generic methodology for this purpose,
which utilises the Schwarz criterion \citep{schwarz1978} 
and performs an exhaustive search for change point estimators in a {\it localised} way
on a candidate set generated by multiscale methods.
Contrary to the common usage of information criteria in change point problems,
the proposed localised pruning algorithm does not require the maximum number of change points as an input,
nor does it seek for the global minimiser of the criterion.

We show that as a generic tool, the localised pruning algorithm inherits 
the properties of the candidate generating method. 
Therefore, with a suitable candidate generating method, it consistently estimates 
the total number of change points as well as locating the change points with accuracy
while being computationally feasible.
In this paper, we verify the suitability of two candidate generating multiscale methods
based on the MOSUM and cumulative sum (CUSUM) statistics.

\subsection{Main contributions}
\label{sec:main}

Below, we summarise the main contributions made in this paper.

\begin{enumerate}[label = (\alph*)]
\item \label{ctrb:one} \textbf{Two-stage procedure.}
We explicitly separate the statistical analysis of the candidate generating method 
(Stage~1, see Section~\ref{sec:mosum}) 
from that of the model selection (pruning) methodology (Stage~2, see Section~\ref{sec:loc}). 
This allows us (i) to easily extend our statistical conclusions to different candidate generating methods,
and (ii) to gain insights into the assumptions required for each stage separately.
	
\item \label{ctrb:two} \textbf{Truly multiscale change points.}
In contrast to the assumptions commonly found in the literature
that require homogeneity on the change point structure, we adopt a truly multiscale setting
and thereby shed light upon the performance of the proposed change point methodology 
when both large changes over short stretches of stationarity 
as well as small changes over long stretches of stationarity 
are present simultaneously in the signal, see Definition~\ref{def_scenarios}.

\item \label{ctrb:three} \textbf{Minimax optimality.}
We show that the proposed localised pruning, combined with 
a MOSUM-based multiscale candidate generating mechanism,
achieves minimax optimality in change point localisation
as well as matching the rate of the minimax detection lower bound
when the errors are distributed as i.i.d.\ sub-Gaussian random variables,
see Corollary~\ref{cor:mosum:optimal}.

\item \label{ctrb:four} \textbf{Assumptions on the error distribution.}
While assumptions such as independence and (sub-)Gaussianity are 
often made in the literature, we allow for very general assumptions on the error distribution 
permitting both serial dependence and heavy tails. 
In addition, we explicitly state how the error distribution 
enters into our detection lower bound and localisation rate
(see Assumptions~\ref{assum:size} and~\ref{assum:cand}),
thus providing a guidance on the choice of tuning parameters.

\item \label{ctrb:five} \textbf{Universally competitive performance in simulations and data analysis.}
For a range of test signals of varying length, frequency of change points and error distributions,
the proposed method performs uniformly well in both model selection consistency and localisation accuracy, 
and within reasonable computation time (see Section~\ref{sec:sim}).
Applied to real data examples, our procedure is capable of handling the issues
often encountered in practice such as heteroscedasticity and low signal-to-noise ratio. 
We provide its implementation with a MOSUM-based candidate generating procedure
in the R package {\tt mosum} available on CRAN \citep{mosum}.

\item \label{ctrb:six} \textbf{Computational complexity.}
The computational complexity of the localised pruning algorithm 
with the MOSUM-based candidate generating method is given by $O(n\log(n))$,
which is comparable to or much lower than 
that of most competing methods (see Table~\ref{table:overview}). 
With other candidate generating methods, 
the computational complexity of the combined procedure 
will effectively be determined by that of the first-stage candidate generation.
\end{enumerate}

The multiple mean change problem has been extensively studied in the literature, 
often laying the groundwork for generalisations to more complex problems.
While this aspect has been well-established in the (at-most-one-change)
testing literature for both online and offline procedures,
the literature on data segmentation beyond $L^2$-based mean change point detection 
is relatively scarce.
The proposed localised pruning methodology has been  
constructed with such extensions in view, 
and we discuss these possibilities in Section~\ref{sec:conc}.

The rest of the paper is organised as follows:
In Section~\ref{sec:optimal}, we define a truly multiscale change point problem
and introduce the assumptions for theoretical consistency.
Also, we present the minimax optimality results available from the literature,
and provide a comparative study of our proposed methodology
and those shown to be near-minimax optimal.
In Section~\ref{sec:loc}, we motivate and propose 
the localised pruning as a generic methodology
applicable with a class of candidate generating mechanisms
and establish its theoretical consistency.
Section~\ref{sec:mosum} discusses 
a MOSUM-based candidate generating procedure
and shows the minimax optimality of the combined two-stage methodology.
In Section~\ref{sec:numeric}, we briefly summarise the simulation studies
and apply the proposed methodology to array comparative genomic hybridisation data.
Section~\ref{sec:conc} concludes the paper, followed by the main proofs. 
The rest of the proofs, discussion of an alternative, CUSUM-based candidate generating procedure
related to the WBS \citep{fryzlewicz2014}, 
complete simulation results as well as an application to Kepler light curve data 
are provided in Appendix.

\subsubsection*{Notations}

Throughout the paper, we adopt $\nu_n$ to denote a sequence
satisfying $\nu_n \to \infty$ at an arbitrarily slow rate,
which may differ from one occasion to another. 
We adopt the notation $a_n \asymp b_n$ to denote that $a_n = O(b_n)$ and $b_n = O(a_n)$.

\enlargethispage{1cm}

For convenience, the assumptions are formulated with asymptotic arguments 
but the proofs work directly with non-asymptotic conditions on the corresponding quantities 
on the set $\mc M_n$ defined in Theorem~\ref{thm:sbic} (collected in \eqref{eq:rates}) 
making constants trackable in principle.

\section{Multiscale change point analysis}
\label{sec:optimal}

\subsection{Multiscale change point detection problem}
\label{sec:model}

We consider the canonical change point model
\begin{align}
X_t &= f_t + \vep_t = f_0 + \sum_{j = 1}^{q_n} d_{j} \cdot \bbI_{t \ge \cp_j + 1} + \vep_t, \label{eq:model} 
\end{align}
where $\cp_1 < \cp_2 < \ldots < \cp_{q_n}$ with $\cp_j = \cp_{j, n}$ denote the $q_n$ change points 
(with $\cp_0 = 0$ and $\cp_{q_n + 1} = n$), 
at which the mean of $X_t$ undergoes changes of size $|d_j|$ where, again, $d_j = d_{j, n}$. 
We denote by $\delta_j = \delta_{j, n} = \min(\cp_j - \cp_{j - 1}, \cp_{j + 1} - \cp_j)$ 
the minimum distance of $\cp_j$ to its neighbouring change points.
The sequence of errors $\{\vep_t\}_{t=1}^n$ satisfies $\E(\vep_t) = 0$ and 
is allowed both serial dependence as well as heavy-tailedness as specified later.
We assume that $\max_{1 \le j \le q_n} \vert d_j \vert = O(1)$
as well as $\min_{1 \le j \le q_n} \delta_j \to \infty$,
separating the problem of change point detection under~\eqref{eq:model} 
from that of outlier detection.

In this paper, our interest lies in studying the performance of a change point detection methodology
in a truly multiscale, heterogeneous change point setting
where the signal $f_t$ may simultaneously contain both frequent large jumps
as well as small jumps over long stretches of stationarity. 
The following definitions distinguish multiscale formulations
of detection lower bound and localisation rate from their non-multiscale counterparts. 
\begin{defn}
\label{def_scenarios} \hfill
\begin{enumerate}[label = (\alph*)] 
\item \textbf{Sublinear change point problem.} There exists some $\kappa > 0$
such that $\max_{1 \le j \le q_n} \delta_j = O(n^{1 - \kappa})$.

\item \textbf{Detection lower bound.} 
We distinguish the following change point scenarios that are linked to different detection lower bounds.
\begin{enumerate}[label = (\roman*)]
	\item \textbf{Homogeneous change points:} \mbox{$\min_{1 \le j \le q_n} d_j^2 \,\min_{1\le j \le q_n} \delta_j \to \infty$.}

\item \textbf{Finite mixture of homogeneous change points:} 
There are $M < \infty$ subsets of change points, 
where change points within each subset are homogeneous as defined in~(a). 
The situation with finitely many changes, i.e., \ where $q_n \le M$, is a special case.

\item \textbf{Multiscale change points:} $\min_{1 \le j \le q_n} d_j^2 \, \delta_j \to \infty$.
\end{enumerate}

\item \textbf{Localisation rate:} We distinguish between a \textbf{homogeneous localisation rate} 
where the rate of localisation for the $j$-th change point is weighted globally
with $\min_{1\le j \le q_n}d_j^2$, and a \textbf{multiscale localisation rate} 
where the localisation rate is weighted locally with $d_j^2$.
\end{enumerate}
\end{defn}

The localised pruning methodology proposed in this paper 
is designed with the sublinear change point problem in view,
as the detection of (near-)linear change points is much easier
and does not usually require a multiscale procedure.
While it does not play any role in practice, 
the detection lower bound attained by our methodology
is minimax rate optimal only in the sublinear change point setting;
outside this setting, the detection lower bound is worse by at most $\log(n)$,
see also \cite{chan2013} where similar observations are made
on scan likelihood ratio statistic for a signal detection problem.
Extension of the result beyond this setting
would require the adoption of a scale-dependent penalty as in \cite{fromont2020} for the pruning methodology.
To the best of our knowledge, such a choice of penalty 
is available only for light-tailed errors,
and its extension to the general error distribution
we consider in this paper has not been investigated in the literature.

Definition~\ref{def_scenarios}~(b) shows different extensions of the assumption 
$d_1^2 \min(\cp_1, n - \cp_1) \to \infty$ commonly found in the change point testing literature 
(where $q_n = 1$ at most, see e.g., \citet{csorgo1997}).
Proceeding from (i) to (iii) therein,
the associated parameter space becomes more general 
and only (iii) truly requires multiscale methods
that scan the data for change points at more than finitely many scales. 
Nevertheless, most papers in the change point detection literature 
formulate the detection lower bound for the homogeneous setting only,
see Table~\ref{table:overview} and Section~\ref{sec:literature}.
On the other hand, we adopt the most general setting and impose 
an assumption on the size of changes correspondingly (see Assumption~\ref{assum:size} below).

The multiscale localisation rate in~(c) reflects that the difficulty 
in accurate localisation of each change point
depends on the size of the corresponding jump size only.

\subsection{Main assumptions}
\label{sec:err:dist}

The mathematical analysis in this paper is based on the following properties on the error distributions only, 
which makes the results very general permitting e.g., heavy tails and dependence 
and even non-stationarity.

\begin{assum}[Error distribution] 
\label{assum:vep}
We assume that $\{\vep_t\}_{t = 1}^n$ is ergodic with $\E(\vep_t) = 0$ and
$0 < c \le \Var(\vep_t) \le C < \infty$ for some $c, C >0$.
Further:
\begin{enumerate}[label = (\alph*)]
\item For some $\omega_n$ satisfying $\sqrt{\log(n)} = O(\omega_n)$, 
let $\p(\mc M_n^{(11)})\to 1$ where
\begin{align*}
\mc M_n^{(11)} &= \l\{ \max_{0 \le s < e \le n}
\frac{1}{\sqrt{e - s}} \Big\vert \sum_{t = s + 1}^e \vep_t \Big\vert \le \omega_n \r\}.
\end{align*}

\item 
For any sequences $1 \le a_n, b_n \le D_n$ with $D_n$ defined in Assumption~\ref{assum:size}, 
let $\p(\mc M_n^{(12)} \cap \mc M_n^{(13)}) \to 1$ where
\begin{align*}
\mc M_n^{(12)} &= \l\{ \max_{1 \le j \le q_n} \;
\max_{d_j^{-2} a_n \le \ell \le \cp_j - \cp_{j - 1}} 
\frac{\sqrt{d_j^{-2} a_n}}{\ell} \left\vert \sum_{t = \cp_j - \ell + 1}^{\cp_j} \vep_t \right\vert 
\le \omegao \r\}
\\
& \qquad \bigcap \l\{ \max_{1 \le j \le q_n} \; \max_{d_j^{-2}a_n \le \ell \le \cp_{j + 1} - \cp_j} 
\frac{\sqrt{d_j^{-2}a_n}}{\ell} \left\vert \sum_{t = \cp_j + 1}^{\cp_j + \ell} \vep_t \right\vert
 \le \omegao
\r\}, \quad \text{ and} \\
\mc M_n^{(13)} &= \l\{
\max_{1 \le j \le q_n} \max_{1 \le \ell \le d_j^{-2}b_n} \, \frac{1}{\sqrt{d_j^{-2}b_n}}
 \left\vert \sum_{t = \cp_j - \ell + 1}^{\cp_j} \vep_t \right\vert
\le \omegat
\r\}\\
&\qquad \bigcap \l\{
\max_{1 \le j \le q_n} \max_{1 \le \ell \le d_j^{-2}b_n} \, \frac{1}{\sqrt{d_j^{-2}b_n}}
 \left\vert \sum_{t = \cp_j + 1}^{\cp_j + \ell} \vep_t \right\vert
\le \omegat
\r\}.
\end{align*}
\end{enumerate}
\end{assum}

The lower bound on $\omega_n$ in Assumption~\ref{assum:vep}~(a) is quite natural in light of
Theorem~1 of \cite{shao1995} which derives the corresponding result
for i.i.d.\ random variables whose moment-generating function exists.
The bound is closely linked to the detection lower bound of our proposed methodology 
as we make the following assumption:

\begin{assum}[Multiscale lower bound on the size of changes]
\label{assum:size}
For $D_n :=\min_{1 \le j \le q_n} d_j^2 \, \delta_j$,
we require $D_n^{-1} \omega_n^2 \to 0$ for $\omega_n$ as in Assumption~\ref{assum:vep}. 
In addition, $D_n$ dominates the penalty used in the localised pruning algorithm, 
see Assumption~\ref{assum:penalty}.
\end{assum}

\begin{rem}
\label{rem:vep:one} \hfill
\begin{enumerate}[label = (\alph*)]
\item The rates $\omegao$ and $\omegat$ 
are closely connected with the localisation rate of the localised pruning method, 
see Assumption~\ref{assum:cand} for the precise statemenet.
Also, the bound $\omegao$ gives the rate of localisation
for the multiscale MOSUM procedure considered as
one of the candidate generating mechanisms in Section~\ref{sec:mosum}.
Note that $\omegao$ and $\omegat$ are always dominated by $\omega_n$
and are often much smaller, particularly in the presence of heavy tails and when $q_n$ is bounded,
see Proposition~\ref{prop:vep} below for specific examples. 

\item The bounds for the respective second set in  
$\mc M_n^{(12)}$ and $\mc M_n^{(13)}$ follow from the bounds of the first set 
in the case of i.i.d.\ errors but this is not necessarily so for time series errors. 
\end{enumerate}
\end{rem}

The next proposition provides the exact rates
for $\omega_n$, $\omegao$ and $\omegat$ in Assumption~\ref{assum:vep} 
for some special cases.
\begin{prop}
\label{prop:vep} 
In all follows, $\nu_n \to \infty$ arbitrarily slow.
\begin{enumerate}[label = (\alph*)]
\item \textbf{Sub-Gaussianity.} Let $\{\vep_t\}_{t = 1}^n$ be a sequence of i.i.d. \ random variables 
following a sub-Gaussian distribution as defined e.g.,\ in Section~2.5 of \cite{vershynin2018}.
Then, Assumption~\ref{assum:vep} holds with 
$\omega_n \asymp \sqrt{\log(n)}$ and $\omegao = \omegat \asymp \max(\sqrt{\log(q_n)}, \nu_n)$.

\item \textbf{Heavy tails.}  Let $\{\vep_t\}_{t = 1}^n$ be a sequence of i.i.d.  
regularly varying random variables with index of regular variation $\alpha > 0$ 
as defined e.g., \ in \cite{mikosch2010}.
Then Assumption~\ref{assum:vep} holds with
$\omega_n \asymp n^{1/\beta}$ and
$\omegao = \omegat \asymp \max(q_n^{1/\beta}, \nu_n)$ for any $\beta < \alpha$.

\item In the following general situations,
Assumption~\ref{assum:vep} holds with the rates given below which, however, are usually not tight:
\begin{enumerate}[label = (\roman*)]
\item \textbf{Invariance principle.} If there exists (possibly after changing the probability space) 
a standard Wiener process $W(\cdot)$ such that
$\sum_{t = 1}^{\ell} \vep_t - W(\ell) = O(\lambda_{\ell})$ a.s.
with $\lambda_{\ell} = o(\sqrt{\ell})$, then 
Assumption~\ref{assum:vep}~(a) holds with $\omega_n \asymp \max(\lambda_n\nu_n, \sqrt{\log(n)})$.

\item \textbf{Moment conditions.} If $\E \vert \sum_{t = l + 1}^r \vep_t \vert^{\gamma} \le C (r - l)^{\gamma/2}$ 
for any $-\infty < l < r < \infty$ and some constants $C > 0$ and $\gamma > 2$, then
Assumption~\ref{assum:vep}~(b) holds with 
$\omegao = \omegat \asymp q_n^{1/\gamma} \,\nu_n$.
\end{enumerate}
\end{enumerate}
\end{prop}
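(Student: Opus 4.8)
The plan is to verify Assumption~\ref{assum:vep} separately in each case, every time reducing the defining events $\mc M_n^{(11)}$, $\mc M_n^{(12)}$ and $\mc M_n^{(13)}$ to a maximal inequality for partial sums followed by a union bound: over the $O(n^2)$ intervals $(s,e]$ for $\mc M_n^{(11)}$, and over the (at most $2q_n$) one-sided neighbourhoods of the change points for $\mc M_n^{(12)}$ and $\mc M_n^{(13)}$. The reason the leading rate for $\omegao$ and $\omegat$ involves $q_n$ rather than $n$ is that the weight $\sqrt{d_j^{-2}a_n}/\ell$ in $\mc M_n^{(12)}$ (and the normalisation $1/\sqrt{d_j^{-2}b_n}$ in $\mc M_n^{(13)}$) localises the relevant fluctuations to a window of length $\asymp d_j^{-2}a_n$ (resp.\ $d_j^{-2}b_n$) around each $\cp_j$, with the $\sqrt{\cdot}$-normalisation exactly calibrated to that window length; splitting the range of $\ell$ dyadically into blocks $[2^k d_j^{-2}a_n, 2^{k+1}d_j^{-2}a_n)$, $k \ge 0$, produces a geometric series over scales that contributes only a constant, not an extra $\log n$, factor.

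For part~(a), the partial sums of i.i.d.\ sub-Gaussian errors are sub-Gaussian with variance proxy proportional to the number of summands, and Doob's maximal inequality upgrades this to $\p(\max_{m \le N}\vert\sum_{t=1}^m\vep_t\vert > u) \le 2\exp(-c u^2/N)$. Combined with a union bound over $O(n^2)$ intervals this yields $\mc M_n^{(11)}$ with $\omega_n \asymp \sqrt{\log n}$; applied with $N = d_j^{-2}b_n$ and a union over the change points it yields $\mc M_n^{(13)}$ with $\omegat \asymp \max(\sqrt{\log q_n}, \nu_n)$, the $\nu_n$ being needed only to force the bound to vanish when $q_n$ is bounded. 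For $\mc M_n^{(12)}$ the dyadic split described above gives a block-$k$ exceedance probability of order $\exp(-c(\omegao)^2 2^k)$, which sums over $k$ to order $\exp(-c(\omegao)^2)$, and a union over the change points then gives $\omegao \asymp \max(\sqrt{\log q_n}, \nu_n)$.

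Part~(b) is handled by truncation. Given $\beta < \alpha$, fix $\beta' \in (\beta, \alpha)$ and truncate the errors at level $\tau_n \asymp n^{1/\beta'}$ up to a logarithmic factor; since $\p(\vert\vep_1\vert > x) \asymp x^{-\alpha}$, the probability that some $\vert\vep_t\vert$, $t \le n$, exceeds $\tau_n$ is $O(n^{1 - \alpha/\beta'}\,\mathrm{polylog}(n)) \to 0$. On the complementary event one bounds the recentred truncated partial sums by Bernstein's inequality, treating the few short intervals deterministically so that the Bernstein exponent beats $\log n$ uniformly, and the truncation bias is readily seen to be negligible; this gives $\omega_n \asymp n^{1/\beta'}\,\mathrm{polylog}(n) = o(n^{1/\beta})$, and since $\beta < \alpha$ was arbitrary, $\omega_n \asymp n^{1/\beta}$. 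The same truncation at level $\asymp q_n^{1/\beta'}$, combined with the dyadic argument of part~(a) for $\mc M_n^{(12)}$, the plain maximal inequality for $\mc M_n^{(13)}$, and a union over the change points, gives $\omegao = \omegat \asymp \max(q_n^{1/\beta}, \nu_n)$: here the dominant contribution is the single largest error in the window around each $\cp_j$, regular variation (subexponentiality) ensuring that the partial sums do not exceed their largest summand by more than a constant factor.

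For part~(c)(i), write $S_m = \sum_{t=1}^m \vep_t = W(m) + R_m$ with $\vert R_m\vert = O(\lambda_m)$ a.s., so that $\max_{s<e}\frac{1}{\sqrt{e-s}}\vert S_e - S_s\vert \le \max_{s<e}\frac{\vert W(e) - W(s)\vert}{\sqrt{e-s}} + \max_{s<e}\frac{\vert R_e\vert + \vert R_s\vert}{\sqrt{e-s}}$; since $(W(e)-W(s))/\sqrt{e-s}$ is standard normal for every integer pair, a Gaussian tail bound and a union over $O(n^2)$ pairs control the first term by $O(\sqrt{\log n})$, while the a.s.\ bound on $R_m$ together with a $\nu_n$-padding argument (the finitely many initial terms, excluded from the $\limsup$, form a fixed random variable) control the second by $O(\lambda_n\nu_n)$, whence $\omega_n \asymp \max(\lambda_n\nu_n, \sqrt{\log n})$. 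For part~(c)(ii), the moment bound $\E\vert\sum_{t=l+1}^r\vep_t\vert^\gamma \le C(r-l)^{\gamma/2}$ is superadditive in $(l,r]$, so a Móricz-type maximal inequality upgrades it to $\E\max_{m \le N}\vert\sum_{t=1}^m\vep_t\vert^\gamma \le C_\gamma N^{\gamma/2}$ without logarithmic loss; Markov's inequality and the dyadic split of part~(a) then bound the $\mc M_n^{(12)}$ and $\mc M_n^{(13)}$ exceedance probabilities by $O((\omegao)^{-\gamma})$ per change point, and a union over the $q_n$ change points yields $\omegao = \omegat \asymp q_n^{1/\gamma}\nu_n$. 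I expect the main obstacle to be the bookkeeping around the weighted maximum in $\mc M_n^{(12)}$ in the heavy-tailed case: the truncation level must be chosen so that both the large-jump union bound and the Bernstein bound for the recentred bounded part remain summable over the $q_n$ change points, and the deterministic treatment of short intervals must mesh with that choice so that the final rate is exactly $\max(q_n^{1/\beta}, \nu_n)$ and not larger by a polylogarithmic factor.
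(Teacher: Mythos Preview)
Your argument for parts~(a), (c)(i) and~(c)(ii) matches the paper's proof closely: Hoeffding plus a union bound for $\omega_n$, a dyadic (peeling) decomposition combined with Doob's submartingale inequality for $\omegao$, and a Móricz--type maximal moment inequality (the paper cites Theorem~B.3 of Kirch's thesis) followed by Markov for~(c)(ii). The only cosmetic difference in~(a) is that for $\omegat$ the paper invokes L\'evy's reflection principle rather than Doob, but the effect is the same.

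Part~(b) is where you diverge. The paper does not truncate at all: for $\omega_n$ it simply cites Theorem~1.1 of \cite{mikosch2010}, and for $\omegao,\omegat$ it observes that a regularly varying law with index $\alpha$ has finite moments of every order $\beta'<\alpha$, so that for i.i.d.\ centred summands the Marcinkiewicz--Zygmund/Rosenthal inequality (the paper cites Stout) yields the moment condition of~(c)(ii) with $\gamma=\beta'$. Part~(c)(ii) then delivers $\omegao=\omegat\asymp q_n^{1/\beta'}\nu_n$, which for any $\beta<\beta'$ is dominated by $\max(q_n^{1/\beta},\nu_n)$. This route is considerably shorter and bypasses exactly the difficulty you flag at the end: there is no truncation level to calibrate against varying window lengths $d_j^{-2}a_n$, no bias term, and no separate treatment of short intervals. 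Your truncation-plus-Bernstein plan can be made to work, but the choice $\tau\asymp q_n^{1/\beta'}$ is not obviously sufficient for the ``no large jump in any window'' union bound, since the total length of the windows entering $\mc M_n^{(12)}$ and $\mc M_n^{(13)}$ is governed by the $d_j^{-2}a_n$ and the segment lengths, not by $q_n$ alone; you would need either a window-dependent truncation level or to fall back on the moment bound anyway. Since~(c)(ii) is already on your list, the cleaner fix is simply to prove~(c)(ii) first and then deduce the second half of~(b) from it.
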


\begin{rem}
\label{rem:vep:two} \hfill
\begin{enumerate}[label = (\alph*)]
\item For regularly varying jump size distributions,  
$\omega_n$ in Proposition~\ref{prop:vep}~(b) cannot be improved beyond 
$\omega_n = n^{1/\alpha}L(n)$ for some slowly varying function $L$  
(see Theorem~1.1 of \citet{mikosch2010} and Proposition~B.1.9~(9) of \cite{de2007}). 
For dependent errors, similar results are derived in \citet{mikosch2013}.
Furthermore, in the special case of a $t$-distribution with $\alpha$ degrees of freedom, 
then Assumption~\ref{assum:vep} holds with $\omega_n \asymp n^{1/\alpha}$
\citep[Section~4.2]{schluter2009}.

\item 
Invariance principles as in Proposition~\ref{prop:vep}~(c.i) 
have been derived for a variety of situations including dependent data under
weak dependency conditions such as mixing \citep[Theorem~4]{kuelbs1980}
and functional dependence measure conditions \citep{berkes2014}, to name but a few. 
The rate $\lambda_n$ is typically directly linked to the number of moments that exist, 
e.g., \ for i.i.d.\ errors, $\lambda_{\ell} = \log(\ell)$ if the moment generating function exists, 
and $\lambda_{\ell} = \ell^{1/(2+\Delta)}$ if $\E(\vep_t^{2+\Delta}) < \infty$
\citep{komlos1975, komlos1976}.  
Comparing the rate of $\omega_n$ in Proposition~\ref{prop:vep}~(c.i) with the one in (b),
shows that the rates from the invariance principle are usually not tight.

Moment conditions as in Proposition~\ref{prop:vep}~(c.ii) 
have been shown for many time series; see e.g., Appendix~B.1 in \citet{kirch2006resampling}.
\end{enumerate}
\end{rem}

\subsection{Minimax optimality}
\label{sec:minimax}

In this section, we state the benchmark for the minimax detection lower bound
and optimal localisation rate. 

The following result is from Proposition~1 of \citet{arias2011}.
\begin{prop}[\textbf{Minimax optimal separation rate}] 
\label{prop:lb}
Under~\eqref{eq:model}, let $H_{0, n}: \, q_n = 0$ 
and $H_{1, n}$ the setting where $q_n= 2$, $d_n: = d_1 = -d_2$
and $\delta_n: = \cp_2 - \cp_1$ with $n^{-1} \delta_n \to 0$.
Then, $H_{0, n}$ and $H_{1, n}$ are asymptotically inseparable if
$\vert d_n \vert \sqrt{\delta_n} \le \sqrt{2\log(n/\delta_n)} - \nu_n$
where $\nu_n \to \infty$. 
\end{prop}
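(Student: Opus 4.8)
The plan is to reduce the testing problem to a two-point hypothesis testing argument and show that the likelihood ratio (or a suitable surrogate) between the null and the worst-case alternative fails to separate the two, in the Le Cam sense. Working in the Gaussian model $X_t = f_t + \vep_t$ with $\vep_t$ i.i.d.\ standard normal (which is the standard setting for such minimax lower bounds; the claim is about asymptotic inseparability, so it suffices to exhibit one error distribution for which the bound is tight), the null $H_{0,n}$ corresponds to $f \equiv f_0$, while under $H_{1,n}$ the signal is a single ``bump'' of height $d_n$ on an interval $I$ of length $\delta_n$. The first step is to fix the location of this bump to be unknown and uniformly distributed over the $\asymp n$ admissible positions, turning $H_{1,n}$ into a mixture. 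By the Neyman--Pearson lemma, the minimax risk is governed by $\E_{0}[(L_n - 1)^2]$, where $L_n$ is the likelihood ratio of the mixture alternative against the null; I would aim to show this chi-square-type quantity stays bounded (in fact tends to a finite limit) under the stated condition on $|d_n|\sqrt{\delta_n}$.

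The key computation is the second-moment bound. For a single Gaussian bump of height $d_n$ on interval $I$, the log-likelihood ratio is linear in the data with squared norm $d_n^2 \delta_n$, so $\E_0[\ell_I \ell_J] = \exp(d_n^2 |I \cap J|)$ for two candidate intervals $I, J$. Averaging over the uniform prior on locations gives
\begin{align*}
\E_0[L_n^2] = \frac{1}{N^2}\sum_{I,J} \exp\!\l(d_n^2 |I \cap J|\r),
\end{align*}
where $N \asymp n$ is the number of admissible locations. Since for a fixed $I$ the overlap $|I\cap J|$ decreases linearly in the offset between $I$ and $J$, the inner sum behaves like a geometric series with ratio $\exp(-d_n^2)$ truncated at $|I\cap J| = \delta_n$, contributing at most $\asymp \delta_n^{-1}\exp(d_n^2\delta_n) \cdot N$ terms of non-negligible size (roughly). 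Putting this together, $\E_0[L_n^2] - 1 \lesssim (\delta_n/n)\exp(d_n^2\delta_n)$ up to logarithmic corrections from the geometric sum near full overlap; the condition $|d_n|\sqrt{\delta_n}\le \sqrt{2\log(n/\delta_n)} - \nu_n$ forces $\exp(d_n^2\delta_n) \le (n/\delta_n)\exp(-2\nu_n\sqrt{2\log(n/\delta_n)})\cdot(\text{lower order})$, which makes $(\delta_n/n)\exp(d_n^2\delta_n) \to 0$, so $L_n \to 1$ in $L^2(\p_0)$ and hence in $\p_0$-probability. A symmetric argument (or contiguity) then shows that no test can have both type-I and type-II errors bounded away from one, i.e.\ the hypotheses are asymptotically inseparable.

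The main obstacle is the precise bookkeeping of the factor $\sqrt{2\log(n/\delta_n)}$, rather than merely $\sqrt{2\log n}$: the relevant ``number of effectively distinct tests'' is not $n$ but $n/\delta_n$, because candidate bumps that overlap substantially are statistically nearly indistinguishable and should be grouped. Getting the constant $2$ exactly right requires carefully accounting for the near-diagonal terms in the double sum (where $|I\cap J|$ is close to $\delta_n$), which inflate $\E_0[L_n^2]$ by a polylogarithmic-in-$(n/\delta_n)$ factor; this is exactly what the slowly diverging $\nu_n$ in the statement absorbs. Since this is quoted from Proposition~1 of \citet{arias2011}, I would lean on their treatment of this edge effect rather than re-derive it, and simply verify that the two-point mixture and the second-moment bound transcribe correctly to the change point parametrisation with $q_n = 2$ and $d_1 = -d_2 = d_n$ (the two-sided bump being equivalent, after differencing, to a single interval of elevated mean).
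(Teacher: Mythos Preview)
The paper does not give its own proof of this proposition: it is simply quoted as Proposition~1 of \citet{arias2011}, with no argument supplied. Your sketch is precisely the standard second-moment (chi-square divergence) argument that Arias-Castro et al.\ use, so there is nothing to compare --- you have correctly reconstructed the approach of the cited source, including the observation that the effective number of ``distinct'' alternatives is $n/\delta_n$ rather than $n$, and that the slack $\nu_n$ absorbs the near-diagonal contribution in the double sum.
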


The next proposition is from Proposition~6 of \citet{fromont2020} 
and provides the minimax optimal rate of multiple change point localisation,
which is stated here with an enlarged parameter space for ready comparison.

\begin{prop}[\textbf{Minimax optimal localisation rate for possibly an unbounded number of change points}]
\label{prop:loc}
Under~\eqref{eq:model}, let $\vert d_j \vert =: d_n$ for all $j = 1, \ldots, q_n$ 
and denote by $\Xi = \{(\cp_1, \ldots, \cp_{q_n}): \,
0 \equiv \cp_0 < \cp_1 < \ldots < \cp_{q_n} < \cp_{q_n + 1} \equiv n \text{ and }
d_n^2 \min_{1 \le j \le q_n} (\cp_{j + 1} - \cp_{j-1}) > c_0 \log(q_n)\}$ for some $c_0 > 0$,
the parameter space for the locations of change points.
Then, 
\begin{align}
\inf_{\C \in \N^{q_n}} \sup_{\Cp \in \Xi}
\E_{\Cp}\{d_H(\C, \Cp)\} \ge C d_n^{-2} \log(q_n) \nn
\end{align}
for some $C > 0$, where $d_H$ denotes the Hausdorff distance, i.e.,
$d_H(\C, \Cp) = \max\{\max_{\c \in \C} \min_{\cp \in \Cp} |\c - \cp|, 
\max_{\cp \in \Cp} \min_{\c \in \C} |\cp - \c|\}$.
\end{prop}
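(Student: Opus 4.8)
The plan is to establish the bound by reduction to a multiple hypothesis testing problem followed by Fano's inequality, essentially reproducing the strategy of Proposition~6 in \citet{fromont2020}; the only extra care concerns choosing a base configuration that both realises the near-critical regime of the enlarged parameter space $\Xi$ and leaves enough room for the perturbations. Throughout I would take $\{\vep_t\}$ to be i.i.d.\ $N(0,\sigma^2)$, the canonical sub-Gaussian error model underlying Corollary~\ref{cor:mosum:optimal} (any richer error class containing it only makes the lower bound easier). Set $\Delta := c\, d_n^{-2}\log(q_n)$ for a small constant $c>0$ to be fixed later. First I would pick a base vector $\Cp_0 = (\cp_1^\ast,\ldots,\cp_{q_n}^\ast)\in\Xi$ whose change points are equispaced with $\cp_{j+1}^\ast - \cp_{j-1}^\ast \asymp d_n^{-2}\log(q_n)$, i.e.\ as small as the constraint defining $\Xi$ permits up to a moderate constant factor; such a configuration exists whenever $\Xi\neq\emptyset$ (otherwise the claim is vacuous). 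For $j=1,\ldots,q_n$ I would define $\Cp_j$ by shifting only the $j$-th coordinate, $\cp_j^\ast\mapsto\cp_j^\ast+\Delta$, leaving the others fixed. Since $\Delta$ is a small constant multiple of the minimal neighbour spacing, each $\Cp_j$ still lies in $\Xi$, and the $q_n+1$ configurations are pairwise $\Delta$-separated in Hausdorff distance (the unique displaced change point of $\Cp_j$ is at distance $\Delta$ from its original position, and $\Delta$ is smaller than the gap to the neighbouring change points).

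Next I would bound the information between the hypotheses. Under~\eqref{eq:model} the signals $f^{(0)}$ and $f^{(j)}$ associated with $\Cp_0$ and $\Cp_j$ agree everywhere except on the interval $(\cp_j^\ast,\cp_j^\ast+\Delta]$, on which they differ by the jump $d_j$ with $|d_j|=d_n$, so that $\Vert f^{(0)}-f^{(j)}\Vert_2^2 = d_n^2\Delta$ and, for Gaussian errors, $\mathrm{KL}(\p_{\Cp_j}\Vert\p_{\Cp_0}) = d_n^2\Delta/(2\sigma^2) = c\log(q_n)/(2\sigma^2)$. Averaging gives $\frac{1}{q_n}\sum_{j=1}^{q_n}\mathrm{KL}(\p_{\Cp_j}\Vert\p_{\Cp_0}) \le c\log(q_n)/(2\sigma^2)$, which is at most $\tfrac14\log(q_n)$ once $c\le\sigma^2/2$.

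Finally I would invoke Fano's inequality with the uniform prior on $\{\Cp_0,\ldots,\Cp_{q_n}\}$: since the averaged Kullback--Leibler divergence is below $\tfrac14\log(q_n)$ and $q_n$ is large, no test can identify the true index with error probability tending to zero. Because any estimator $\C\in\N^{q_n}$ with $d_H(\C,\Cp_j)<\Delta/2$ pins down $j$ (the configurations being $\Delta$-separated, and $d_H$ obeying the triangle inequality), it follows that $\max_{0\le j\le q_n}\p_{\Cp_j}(d_H(\C,\Cp_j)\ge\Delta/2)\ge c'$ for some $c'>0$ uniformly over estimators. Markov's inequality then yields $\sup_{\Cp\in\Xi}\E_\Cp\{d_H(\C,\Cp)\} \ge \tfrac{\Delta}{2}\,c' \asymp d_n^{-2}\log(q_n)$, which is the claim; when $q_n$ is bounded the same construction with two hypotheses (a Le Cam argument) recovers the $d_n^{-2}$ rate.

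The main obstacle is bookkeeping rather than conceptual: one has to verify that a base configuration attaining the $\asymp d_n^{-2}\log(q_n)$ neighbour spacing actually sits inside the enlarged $\Xi$ and that all $q_n$ single-coordinate perturbations remain admissible, which is where the constant $c_0$ in the definition of $\Xi$ and the small constant $c$ in $\Delta$ must be traded off against each other — the only place the enlargement of the parameter space relative to \citet{fromont2020} enters. Two further points worth flagging: the argument is genuinely a Fano (not an Assouad) argument because $d_H$ is a maximum and does not decompose additively over the change points; and the rate, including its leading constant, is tied to the Gaussian/sub-Gaussian error assumption, in line with Assumption~\ref{assum:size} and Proposition~\ref{prop:vep}~(a).
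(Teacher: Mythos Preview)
The paper does not supply its own proof of this proposition: it states the result as being ``from Proposition~6 of \citet{fromont2020}'' and merely remarks that the parameter space has been enlarged for ready comparison. Your Fano-based reconstruction is correct and is essentially the argument one finds in the cited source; in particular your identification of the KL divergence as $d_n^2\Delta/(2\sigma^2)$, the $\Delta$-separation of the $q_n+1$ configurations in Hausdorff distance, and the reduction via the triangle inequality are all standard and valid. Your self-flagged bookkeeping point is the right one to watch: shifting $\cp_j^\ast\mapsto\cp_j^\ast+\Delta$ tightens the constraint at index $j+1$ (since $\cp_{j+2}^\ast-\cp_j^\ast$ shrinks by~$\Delta$), so the base configuration must have neighbour gaps a fixed constant factor above the threshold $c_0 d_n^{-2}\log(q_n)$, which the enlarged $\Xi$ indeed permits.
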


Both Propositions~\ref{prop:lb}--\ref{prop:loc} are derived
under the special case where $\{\vep_t\}_{t = 1}^n$ are  i.i.d.\ random variables
following a (sub-)Gaussian distribution.
In Section~\ref{sec:mosum}, Corollary~\ref{cor:mosum:optimal} shows that
under sub-Gaussianity, 
the two-stage procedure combining a MOSUM-based candidate generating method
and the proposed localised pruning algorithm,
achieves minimax optimal rates in both localisation and detection lower bound 
(the latter in the sublinear change point setting
where $\log(n/\delta_j) \asymp \log(n)$ for each $j$).
Further, even in the presence of heavy-tailed errors and dependence,
it is shown to attain minimax optimal localisation rates
provided that there are finitely many change points (i.e., $q_n$ is finite).
To the best of our knowledge, there do not exist equivalent results 
on the detection lower bound or the localisation rate (when $q_n \to \infty$) 
beyond the i.i.d. sub-Gaussianity; once they become available,
the results we derive for the proposed methodology under Assumption~\ref{assum:vep} are general enough 
to be immediately compared to such a benchmark.

\subsection{Comparison with the existing literature}
\label{sec:literature}

\begin{table}[htb]
\caption{Comparison of change point detection methodologies on
the rates of detection lower bound and localisation derived under (sub-)Gaussianity
where $\delta_n = \min_{1 \le j \le q_n} \delta_j$,
and whether they are formulated in a multiscale way according to Definition~\ref{def_scenarios}.
We also provide their computational complexity, 
and whether their theoretical guarantee goes beyond the (sub-)Gaussian setting.
\citet{wang2018d} $\dagger$ refers to their $\ell_0$-penalised LSE estimator,
while \citet{wang2018d} $*$ refers to their modified WBS.}
\label{table:overview}
\resizebox{\columnwidth}{!}{
\begin{tabular}{l | cc | cc | c | c}
\hline\hline
& \multicolumn{2}{|c|}{Detection lower bound} & \multicolumn{2}{|c|}{Localisation} & Computational & Beyond  \\
Methodology  & Multiscale & Rate & Multiscale & Rate & complexity & sub-Gaussianity \\ \hline
{\bf MoLP} & \cmark & $\boldsymbol{\log(n)}$ & \cmark& $\boldsymbol{\log(q_n)}$ & $\boldsymbol{O(n\log(n))}$ & \cmark 
\\
\citet{chan2017} & \xmark & $\log(n/\delta_n)$ & \xmark & $\log(n)$ & $O(n\log(n))$ & \xmark  \\
\hline
Single-scale MOSUM & \xmark & $\log(n/\delta_n)$ & \cmark & $\log(q_n)$ & $O(n)$ & \cmark \\
\hline
\citet{fromont2020} & \cmark & $\log(n/\delta_j)$ & \cmark & $\log(q_n)$  &  $O(n^2)$ & \xmark \\
\citet{wang2018d} $\dagger$ & \xmark & $\log(n)$ & \cmark & $\log(n)$ & $O(n^2)$ & \xmark \\
\hline
\citet{wang2018d} $*$ & \xmark & $\log(n)$ & \cmark & $\log(n)$ & $O(n R_n)$ with & \xmark\\
\citet{baranowski2016} & \xmark & $\log(n)$ & \cmark & $\log(n)$ & $(n/\delta_n)^2 \log(n)/R_n \to 0$ & \xmark \\
\hline
\citet{frick2014} & \xmark & $\log(n/\delta_n)$ & \xmark & $\log(n)$ & $O(n^2)$ & \cmark \\
\citet{li2019} & \xmark & $q_n\log(n)$ & \xmark & $q_n\log(n)$ & -- & \xmark \\
\hline
\citet{fryzlewicz2017} & \xmark & $\log^2(n)$ & \xmark & $\log^2(n)$ & $O(n\log^2(n))$ & \xmark \\ 
\hline
\end{tabular}}
\end{table}

There exist various univariate time series segmentation algorithms 
which are shown to be near-minimax optimal in detecting and locating 
multiple change points: 
\citet{frick2014} and \citet{li2016} propose procedures that are termed as
multiscale change point segmentation methods in \citet{li2019};
noting empirical and theoretical limitations of the WBS as proposed in \cite{fryzlewicz2014},
\cite{baranowski2016} and \citet{wang2018d} propose  modifications of the WBS
which require additional tuning parameters such as a threshold or a lower bound on 
$\delta_n := \min_{1 \le j \le q_n} \delta_j$;
 \cite{boysen2009}, \citet{wang2018d} and \citet{fromont2020} 
investigate an $\ell_0$-penalised least squares (LSE) estimator,
the former two with the Schwarz criterion-type penalty and the latter with an adaptive one;
\citet{chan2017} propose two methods, where
one  bears some resemblance to a multiscale MOSUM procedure with `bottom-up' merging
(see also \citet{messer2014}) while the other 
to the tail-greedy unbalanced Haar (TGUH) method of \citet{fryzlewicz2017}.

All the papers discussed above present their theoretical findings under the assumption
that $\{\vep_t\}_{t = 1}^n$ is a sequence of i.i.d. (sub-)Gaussian random variables,
 with the exception of \cite{frick2014} allowing for i.i.d. errors following exponential family distributions;
an extension of their results to dependent error processes is studied in \cite{dette2018}.

Table~\ref{table:overview} provides an overview of these methodologies alongside
the localised pruning applied with a multiscale MOSUM procedure for candidate generation (referred to as `MoLP'),
on their theoretical performance, computational complexity and generality beyond the sub-Gaussian setting;
\cite{boysen2009} assume that $\vert d_j \vert$ and $\delta_j/n$ are 
bounded away from zero, and thus we exclude it from the table.
Apart from this paper and \citet{fromont2020}, 
all others formulate the detection lower bounds only for 
the homogeneous change points according to Definition~\ref{def_scenarios}~(b);
the detection lower bound in the latter paper is slightly lower 
outside the sublinear change point setting, 
requiring that $d_j^2\delta_j/\log(n/\delta_j) \to \infty$ for each $j$.
Also, the MoLP, the penalised LSE of \cite{fromont2020} 
and the single-scale MOSUM procedure are
the only methods known to achieve the exact minimax optimal localisation rate $\log(q_n)$
for multiple change point estimation (Proposition~\ref{prop:loc}).
Our proposed method achieves this with the computational complexity of $O(n\log(n))$
rather than $O(n^2)$ required for solving the $\ell_0$-penalised least squares estimation problem;
we defer a detailed discussion on the computational complexity to Appendix~\ref{sec:comp}.

\enlargethispage{1cm}

Additionally, the theoretical analysis in this paper is conducted in a much more general setting 
permitting heavy-tailed and serially correlated errors under Assumption~\ref{assum:vep},
which sets our paper apart from the above list.

\section{Localised pruning via Schwarz criterion}
\label{sec:loc}

Our goal is to estimate both the total number $q_n$ and 
the locations of the change points $\cp_j, \, j = 1, \ldots, q_n$ under~\eqref{eq:model}.
For this purpose, we introduce a generic, localised pruning methodology which,
applicable to a set of candidate change point estimators returned 
by multiscale change point procedures,
achieves consistent estimation of multiple change points in their total number and locations.

Many multiscale change point procedures are based on the principle of 
isolating each change point for its detection and estimation,
and typically attach extra information to change point estimators about their  detection intervals.
Such examples include the multiscale extension of the MOSUM procedure \citep{kirch2014}
and the WBS \citep{fryzlewicz2014}:
The MOSUM procedure scans a series of MOSUM statistics
\begin{align}
T_{b, n}(G; X):= 
\sqrt{\frac{G}{2}}  \left(\bar{X}_{(b-G+1):b}-\bar{X}_{(b+1):(b+G)}\right)
\label{eq:mosum}
\end{align}
where $\bar{X}_{s:e} = (e - s + 1)^{-1} \sum_{t = s}^e X_t$,
for a given bandwidth $G$ and $G \le b \le n - G$, and marks as change point candidates 
the locations where $|T_{b, n}(G; X)|$ simultaneously exceeds a critical value and forms local maxima;
thus each candidate estimator $\c$ is associated with its natural detection interval
$\mc I_N(\c) = (\c - G, \c + G]$.
The WBS examines the CUSUM statistics
\begin{align}
\mc X_{s, b, e} \equiv \mc X_{s, b, e}(X) = 
\sqrt{\frac{(b - s)(e - b)}{e - s}}
\l(\bar{X}_{(s+1):b}-\bar{X}_{(b+1):e}
\r) 
\label{eq:cusum}
\end{align}
for $s + 1 \le b \le e - 1$ over a large number of randomly drawn intervals $(s, e] \subset [1, n]$.
The maximiser of the CUSUM statistics $\c = \arg\max_{s < b < e} |\mc X_{s, b, e}|$
can be regarded as a change point candidate if the test statistic $|\mc X_{s, \c, e}|$ exceeds a certain threshold,
and the interval $\mc I_N(\c) = (s, e]$ is readily associated with its detection.

In what follows, we describe the proposed localised pruning methodology
assuming that a set of candidate estimators $\C$ is given.
Specific candidate generating methods are discussed in 
Section~\ref{sec:mosum} and Appendix~\ref{sec:wbs}.

\subsection{Methodology}
\label{sec:method}

Let $\C$ denote the set of all the candidate change point estimators to be pruned down.
For each $\c \in \C$, we denote the detection interval of $\c$ by
$\mc I(\c) \equiv (\c - G_L, \c + G_R]$,
where the left detection distance $G_L = G_L(\c)$ is the distance from $\c$ to the leftmost point of the interval,
and the right detection distance $G_R = G_R(\c)$ is defined analogously. 

Information criteria are frequently adopted for model selection in change point problems,
and we adopt the Schwarz criterion \citep[SC]{schwarz1978} for this purpose.
For a given set of change point candidates $\mc A = \{\tilde{\c}_1 < \ldots <\tilde{\c}_m\} \subset \C$,
the SC is evaluated as
\begin{align}
\label{eq_sic}
\sic(\mc A) = \frac{n}{2}\log\l\{\frac{\rss(\mc A)}{n}\r\} + \vert\mc A\vert \cdot \xi_n,
\end{align}
where it balances between the goodness-of-fit measured by the residual sum of squares
\begin{align*}
\rss(\mc A) = \sum_{j=0}^m \sum_{t = \tilde{\c}_j + 1}^{\tilde{\c}_{j + 1}} 
\Big( X_t - \bar{X}_{(\tilde{\c}_j+1) : \tilde{\c}_{j + 1}} \Big)^2
\quad \text{with} \quad
\tilde{\c}_0 = 0 \text{ and } \tilde{\c}_{m + 1} = n,
\end{align*}
and the penalty imposed on the model complexity $|\mc A|$.

\begin{assum}[Penalty] 
\label{assum:penalty}
The penalty parameter $\xi_n$ satisfies
\begin{align*}
\frac{\xi_n}{D_n} \to 0 \quad \text{and} \quad \frac{\omega_n^2}{\xi_n} \to 0,
\end{align*}
where $\omega_n$ and $D_n$ are as in Assumptions~\ref{assum:vep}~(a) and~\ref{assum:size}, respectively.
\end{assum}

The assumption shows the connection between the penalty parameter $\xi_n$,  
the noise level $\omega_n$ and the detection lower bound $D_n$.
For i.i.d. \ sub-Gaussian random variables,
the rate of $\omega_n$ in Proposition~\ref{prop:vep}~(a) cannot be improved \citep[Theorem~1]{shao1995}
and thus the (strengthened) Schwarz penalty of $\xi_n = \log^{1 + \Delta}(n)$
with some $\Delta > 0$ can be allowed by Assumption~\ref{assum:penalty} 
(see e.g., \citet{yao1988} and \citet{fryzlewicz2014}).
Proposition~\ref{prop:vep}~(b) and Remark~\ref{rem:vep:two}~(a) 
indicate that a penalty stronger than logarithmic in $n$ is required for heavy-tailed errors
in order to guarantee consistent estimation of the number of change points by means of the $\sic$, 
an observation also made by \citet{kuhn2001}.

In the literature,
exhaustive minimisation of an information criterion over all $\mc A \subset \C$
for a given candidate set $\C$, has been considered as
a model selection method,
see e.g., \citet{niu2012}, \citet{chan2014} and \citet{yau2016}.
Such an exhaustive approach may result in a computationally inhibitive search space
as its size grows exponentially with $|\C|$. 
Moreover, it does not utilise the information immediately available 
about the detection intervals of change point estimators.
For example, if the detection interval of a candidate $\c$ does not overlap with that of any other estimator, 
there is little to be gained by having $\c$ considered alongside other candidates in the evaluation of $\sic$.
On the other hand, if $\mc I(\c)$ overlaps with the detection interval of another candidate, say $\c^\prime$,
it is possible that $\c$ and $\c^\prime$ are conflicting estimators of the identical change point,
which justifies the joint consideration of the two.

Based on these observations, we propose the localised pruning methodology 
consisting of two nested algorithms, 
where the outer algorithm iteratively selects the local environment 
on which the inner algorithm performs the pruning.

\subsubsection{Outer algorithm: Localisation ({\tt LocAlg})}
\label{section_LA}

Taking the set of change point candidates $\C$ as an input, 
the outer algorithm for localisation iteratively selects a subset of candidates
to be pruned down by the inner algorithm ({\tt PrunAlg}) described in Section~\ref{section_PA}.
For this, the algorithm sorts the candidates in $\C$ according to a sorting function $h$.
One possibility is to use the jump size associated with each $\c \in \C$,
which is calculated within the detection interval $\mc I(\c) = (\c - G_L, \c + G_R]$ as
\begin{align}
\label{eq:jump:sort}
h_{\mc J}(\c) = \l\vert \bar{X}_{(\c - G_L + 1):\c} - \bar{X}_{(\c + 1):(\c + G_R)} \r\vert.
\end{align}
If (asymptotic) null distributions of the test statistics are available, another possibility is to use the
inverse of the $p$-values, say $h_{\mc P}$, as a sorting function. 
In practice, the use of $h_{\mc P}$ may slow down the pruning algorithm by generating many ties
when many of the $p$-values are artificially set to zero by the machine (see \citet{meier2018}).
Either with $h_{\mc J}$ or $h_{\mc P}$, additional tie-breaking rules can be employed,
e.g., by preferring the candidates associated with the smallest detection interval 
according to $G_L + G_R$, $G_L$ or $G_R$;
if there are still ties, an arbitrary choice can be made.
We note that the theoretical results do not depend on the choice of the sorting function or the tie-breaking rule.

Denote by $\mc C$ the candidates for which no decision has been reached yet,
and by $\wh{\Cp}$ the set of already accepted candidates. 
At the beginning of the algorithm, the active candidate set $\mc C$ 
is given by the complete candidate set $\C$ and $\wh{\Cp}$ is set to be empty.
Then, the  outer algorithm iteratively processes the candidates in the following way.

\begin{enumerate}[label=\textbf{Step \arabic*:}, itemindent = 25pt]
	\item \textbf{Find the most prominent candidate}.
		According to a sorting function $h$ (and tie-breakers if necessary), 
		find a candidate $\c_{\circ} \in \mc C$ from the active candidate set that maximises $h$. 
	
	\item \textbf{Define the local search environment}.
		Find $\c_L$ that is closest to $\c_\circ$ while being strictly left to $\c_\circ$
		from the candidates which either
		\begin{itemize}
			\item have already been accepted (and belong to $\wh{\Cp} \cup \{0\}$), or 
			\item are still to be either accepted or discarded ($\mc C$)  
			whose detection intervals do not overlap with that of $\c_{\circ}$,
			i.e.,\ $\mc I(\c_L)\cap\mc I(\c_\circ)=\emptyset$ or equivalently 
			$|\c_{\circ}-\c_L| \ge G_R(\c_L)+G_L(\c_{\circ})$.
		\end{itemize}
		Identify $\c_R$ strictly to the right of $\c_\circ$ from $\wh{\Cp} \cup \{n\} \cup \mc C$ 
		with analogous restrictions.
		Then, any candidates without decision that fall within $(\c_L, \c_R)$ are considered 
		as candidates competing with $\c_{\circ}$. 
		We denote this set of change point candidates by $\mc D$,
		i.e., \ $\mc D = \mc C \cap (\c_L, \c_R)$.
		
	\item  \textbf{Pruning Algorithm ({\tt PrunAlg})}. 
		Apply the inner algorithm for pruning, {\tt PrunAlg}, with the arguments 
		$(\mc D, \mc C, \wh{\Cp}, \c_L, \c_R)$. 
		As an output, we yield a subset $\wh{\mc A} \subset \mc D$ (possibly empty) which contains 
		candidates to be accepted in the next step.
	
	\item \textbf{Update the accepted ($\wh{\Cp}$) and active ($\mc C$) candidate sets.} 
		We accept all estimators from the output of {\tt PrunAlg}, $\wh{\mc A}$, 
		but not all of $\mc D \setminus \wh{\mc A}$ are discarded yet.
		This is because $\mc D$ may contain acceptable estimators 
		of change points that are too close to the boundaries $\c_L$ or $\c_R$,
		for which we cannot guarantee their acceptance at the current iteration (see Theorem~\ref{thm:sbic}).
		However, if $\c_L$ (resp. $\c_R$) has already been accepted, we 
		discard any candidates in $\mc D \setminus \wh{\mc A}$ which lie to the left (right) of
		the leftmost (rightmost) candidate in $\wh{\mc A}$.
		Similarly, unaccepted candidates in $\mc D \setminus \wh{\mc A}$
		that lie between any two elements of $\wh{\mc A}$ are discarded.
		In addition, we remove $\c_\circ$ from the future consideration regardless of 
		whether it has been accepted by {\tt PrunAlg} or not.
		
		In summary, we denote the set of all the candidates for which a decision has been reached, 
		either because it has been accepted or discarded according to the above consideration,
		by $\mc R$. 
		Then, we add $\wh{\mc A}$ to $\wh{\Cp}$ 
		and remove all the candidates in $\mc R$ from $\mc C$.
		
	\item {\bf Iteration.} Repeat Steps~1~to~4 until $\mc C$ is empty. 
		The set $\wh{\Cp}$ is the final set of estimators and the output of the algorithm.
\end{enumerate}

A pseudo-code of the outer algorithm can be found in 
Algorithm~1 of Appendix~\ref{sec:algs}.

{\tt LocAlg} is guaranteed to terminate since at each iteration, 
Step~4 discards at least one candidate~$\c_\circ$ from the active candidate set.
Under a mild condition on $\C$,
we show that this yields consistent estimation by guaranteeing that
at least one suitable estimators remain in $\mc C$ for all the undetected change points, 
see Assumption~\ref{assum:det:intervals} and the discussion thereafter.

In Step~3 of {\tt LocAlg}, the inner algorithm {\tt PrunAlg} makes a decision 
between competing candidates using $\sic$, 
which are evaluated at each $\mc A \subset \mc D = \mc C \cap (\c_L, \c_R)$ as
\begin{align}
\sic(\mc A|\mc C, \wh{\Cp}, \c_L, \c_R) =&
\frac{n}{2}\log\l\{\frac{\rss(\mc A\cup \wh{\Cp} \cup (\mc C \setminus \mc D))}{n}\r\} 
\nn \\
& + (|\mc A| + \vert \wh{\Cp}\vert + \vert\mc C \setminus \mc D \vert) \cdot \xi_n.
\nn 
\end{align}
By construction, it makes a decision which of the candidates in $\mc D$ to accept 
while treating all other currently surviving candidates outside of $(\c_L, \c_R)$ as given.
Therefore, at any iterations of {\tt LocAlg}, all $X_t, \, 1 \le t \le n$, enter in the computation of $\sic$.
In other words, {\tt LocAlg} has the interpretation of
performing an adaptively selected subset of the exhaustive search over the complete candidate set $\C$
in a localised manner, by utilising the information readily available about the detection intervals 
of change point candidates.

\subsubsection{Inner algorithm: Pruning ({\tt PrunAlg})}
\label{section_PA}

The inner pruning algorithm {\tt PrunAlg} in Step~3 of the outer localisation algorithm {\tt LocAlg}
takes as its input $(\mc D, \mc C, \wh{\Cp}, \c_L, \c_R)$,
and looks for a subset $\wh{\mc A} \subset \mc D$ 
to be added to the finally accepted candidates according to the following rules:

Let $\mc F$ denote the collection of 
all subsets $\mc A \subset \mc D$ for which it holds:
\begin{enumerate}[label=(C\arabic*)]
\setlength\itemsep{0em}
\item \label{eq_c1} 
adding further change point candidates to $\mc A$ monotonically increases the $\sic$,
\end{enumerate}
and denote by $m^* = \min_{\mc A \in \mc F} |\mc A|$.
Then, we select $\wh{\mc A}$ as
\begin{align}
\wh{\mc A} &= \arg\min\l\{\mc A \subset_R \mc A^\prime
\text{ with } \mc A^\prime \in \mc F \text{ and } \r. 
\label{eq_c2}  \tag{C2} \\
& \qquad \qquad \qquad \qquad \qquad
\l. m^* \le |\mc A^\prime| \le m^* + 2: \, \sic(\mc A | \mc C, \wh{\Cp},\c_L, \c_R)\r\}
\nn
\end{align}
where, by $\mc A \subset_R \mc A^\prime = \{\tilde{\c}_{1} < \tilde{\c}_{2} < \ldots < \tilde{\c}_{m} \}$,
we indicate that $\mc A \setminus \mc A^\prime \subset \{\tilde{\c}_{1}, \tilde{\c}_{m}\}$,
i.e., $\mc A$ contains all {\em inner} elements of $\mc A^\prime$ (if exist)
while the first and the last elements of $\mc A^\prime$ may or may not be included in $\mc A$.
If there are multiple subsets yielding the minimum $\sic$ in \eqref{eq_c2}, 
we choose the one with the minimum cardinality.
If there are ties in the cardinality as well, we arbitrarily select one.

\begin{rem}
\label{rem_prun}
By performing a top-down search, 
the condition \ref{eq_c1} typically prunes down the search space quickly: 
If removing $\c \in \mathcal{A}$ from $\mathcal{A}$ leads to an increase in $\sic$, 
no subset of $\mathcal{A} \setminus \{\c\}$ can be an element of $\mc F$. 
For a complete algorithmic description of {\tt PrunAlg}, see Algorithm~2 in 
Appendix~\ref{sec:algs}.
and also \citet{meier2018} for details about its efficient implementation.
\end{rem}

\begin{rem}
\label{rem:direct}
It is possible to apply the search criteria \ref{eq_c1}--\eqref{eq_c2} to $\C$ directly, 
without iteratively going through the steps of the outer algorithm.
In such a case, \eqref{eq_c2} is simplified to
\begin{equation}
\wh{\mc A} = \arg\min\{\mc A \in \mc F
\text{ with } |\mc A| = m^*: \, \sic(\mc A  | \C, \emptyset, 0, n)\},
\label{eq_c2r} 
\tag{C2$^\prime$}
\end{equation}
i.e., search for $\wh{\mc A}$ only among the subsets satisfying \ref{eq_c1}.
This approach still gains computationally
compared to minimising the $\sic$ among all the $2^{|\C|}$ subsets of $\C$ while,
as shown in Corollary~\ref{cor:one}, achieves consistency in multiple change point estimation.
However, it is still to be avoided when there are many candidates to be pruned down,
and {\tt LocAlg} greatly reduces the computational cost by breaking down 
the scope of {\tt PrunAlg} at each iteration.
\end{rem}

\begin{rem}
We highlight the key differences between the use of $\sic$ in {\tt PrunAlg}
and the conventional use of information criteria 
as a model selection tool in the change point literature.
A common approach is to evaluate an information criterion 
at a sequence of nested candidate models with increasing number of change points,
which often requires the maximum allowable number of change points, say $q_{\max}$,
as an input parameter.
However, selection of this tuning parameter is not straightforward especially when $n$ is large,
without pre-supposing the frequency or the sparsity of the change points,
and some approaches require $q_{\max}$ to be fixed 
in their theoretical consideration \citep{fryzlewicz2014, baranowski2016}.
In contrast, our localised pruning method bypasses such a requirement
by identifying local intervals over which the $\sic$-based search is performed.
In the simulation studies, we observe empirical evidence of the sub-optimality of 
sequential evaluation and minimisation of an information criterion,
particularly when there are frequent changes in the signal
(see e.g., Table~\ref{supp:table:sim:dense}), 
which further supports the search criteria \ref{eq_c1}--\eqref{eq_c2}
adopted by {\tt PrunAlg}.
\end{rem}

\subsection{Consistency of the localised pruning algorithm}
\label{sec:theory}

In this section, we show that the localised pruning algorithm combining {\tt LocAlg} and {\tt PrunAlg}
consistently estimates the total number of change points when applied to a suitable set of candidates. 
Furthermore, it `almost' inherits the rate of convergence of 
the change point estimators from the candidate generating mechanisms,
and thus achieves consistency in change point localisation 
under mild conditions on the set of candidates.

We make the following assumption on candidate generation.
\begin{assum}[Candidate generating algorithm] 
\label{assum:cand}
Let $\C = \C_n$ denote the set of candidates obtained from $\{X_t\}_{t = 1}^n$ and $Q_n = |\C|$ 
the total number of candidates.
Then, with $\omegao$, $\omegat$ and $\omega_n$ as in Assumption~\ref{assum:vep}:
\begin{enumerate}[label = (\alph*)]
\item With probability approaching one, 
each change point has at least one candidate in its $(d_j^{-2} \rho_n)$-environment, i.e., as $n \to \infty$,
\begin{align*}
\p(\mc M_n^{(2)}) \to 1 \quad \text{where} \quad
\mc M_n^{(2)} = \l\{\max_{1 \le j \le q_n} \min_{\c \in \C}  d_j^2 \; |\c - \cp_j| \le \rho_n \r\}
\end{align*}
for a sequence $\rho_n$ with $\max(\omegao, \omegat)^2 = O(\rho_n)$ and $\rho_n = O(\omega_n^2)$.
\item The total number of candidates $Q_n$ fulfils 
$n^{-1} \omega_n^2 \, Q_n \to 0$. 
\end{enumerate}
\end{assum}

The sequence $\rho_n$ is the precision associated with the candidate generating method.
We show that the proposed pruning algorithm almost inherits this rate 
in the sense made more precise in Theorem~\ref{thm:sbic}.
We conjecture that typically, $\omegao \asymp \omegat$
as in all of the examples in Proposition~\ref{prop:vep}.
We further conjecture that, if so, $(\omegao)^2$ (or a related term) 
gives a lower bound for the minimax optimal localisation rate:
This agrees with our observations in Propositions~\ref{prop:vep} and~\ref{prop:loc}
under sub-Gaussian errors and when there are a finite number of change points,
and thus indicates that the lower bound $\max(\omegao, \omegat)^2$ is a reasonable one.
The requirement $\rho_n = O(\omega_n^2)$ is a weak one
with $\omega_n$ always dominating $\omegao$ and $\omegat$,
see Remark~\ref{rem:vep:one}~(a).
If the precision attained by a particular candidate generating procedure is worse than $\omega_n^2$,
the localised pruning can still achieve consistency
but with a stronger penalty $\xi_n$ fulfilling $\rho_n/\xi_n \to 0$, 
see~\eqref{eq:rates} and the discussion underneath.

Assumption~\ref{assum:cand}~(b) on the number of candidates 
replaces a more stringent condition requiring $q_n$ to be fixed,
which is found in the literature adopting the information criterion
for determining the number of change points \citep{yao1988, kuhn2001}.
In particular, this rules out applying the localised pruning algorithm
with every possible point as candidate estimators, i.e., $\C = \{1, \ldots, n - 1\}$.
However, a reasonably good candidate generating method ought not to return
too many candidates while meeting Assumption~\ref{assum:cand}~(a), 
and we show that the MOSUM- and CUSUM-based candidate generating methods 
fulfils this requirement
in Section~\ref{sec:mosum} and Appendix~\ref{sec:wbs}.

The following definitions that categorise the candidate estimators in $\C$
are frequently used throughout the paper.
\begin{defn}
\label{def_acceptable} \hfill
\begin{enumerate}[label = (\alph*)]
\item A candidate $\c^* \in \C$ that yields $d_j^2|\c^* - \cp_j| \le \rho_n$ 
with $\rho_n$ as in Assumption~\ref{assum:cand}~(a)
is referred to as a {\it strictly valid} estimator for $\cp_j$,
and the set of such candidates is denoted by $\mc V^*_j$ for each $j = 1, \ldots, q_n$.
\item For $\nu_n \to \infty$ at an arbitrarily slow rate,
a candidate $\c^{\prime} \in \C$ with 
$d_j^2|\c^{\prime} - \cp_j| \le \rho_n\nu_n$ is referred to as an {\it acceptable} estimator for $\cp_j$, 
and the set of such candidates is denoted by $\mc V_j^{\prime}$.
\item The remaining candidates $\c \in \C \setminus \mc V_j^\prime$ are 
{\it unacceptable} for $\cp_j$.
\end{enumerate}
\end{defn}
 
The gap between the best localisation rate $\rho_n$ of the candidate generating procedure 
and what is acceptable for the localised pruning algorithm is unavoidable:
For two very close candidates, 
the $\sic$ evaluated with the one slightly further away from a change point than the other
can end up being smaller simply by chance. 

We now show that {\tt PrunAlg} described in Section~\ref{section_PA}, as a generic pruning algorithm,
achieves consistent estimation of the number of change points 
as well as returning acceptable estimators for all $\cp_j, \, j = 1, \ldots, q_n$.
Although the boundary points $(\c_L, \c_R)$ supplied as input arguments to {\tt PrunAlg}  
are always chosen among the change point candidates (including $0$ and $n$) in Step~2 of {\tt LocAlg}, 
our theory below is applicable to any $(s, e]$ with $0 \le s < e \le n$ as the interval of consideration
and $\mc D = \C \cap (s, e)$ as the set of local candidates to be pruned down.
In this context, it is understood that $\wh{\Cp}$ contains 
candidates lying outside $(s, e)$ only.

It may be the case that some change points are too close to either $s$ or $e$ 
and thus may or may not be detectable by {\tt PrunAlg} within $(s, e]$,
which necessitates the pruning criterion \eqref{eq_c2} instead of the simpler \eqref{eq_c2r}.
We define the following sets of local change points
with universal constants $0 < c^* < C^* < \infty$ 
as in Proposition~\ref{prop:detectability} below:
\begin{align}
\Cp^{(s, e)} &= \l\{\cp_j: \, d_j^2\,\min(\cp_j - s, e - \cp_j) \ge C^*\xi_n\r\},
\label{eq:sure:detect}
\\
\bar{\Cp}^{(s, e)} &= \l\{\cp_j: \, d_j^2\,\min(\cp_j - s, e - \cp_j) \ge c^*\xi_n\r\}.
\label{eq:all:detect}
\end{align}

Theorem~\ref{thm:sbic} establishes the connection between 
the output of {\tt PrunAlg} and the sets defined in \eqref{eq:sure:detect}--\eqref{eq:all:detect}. 
\begin{thm}
\label{thm:sbic}
Let Assumptions~ \ref{assum:vep}, \ref{assum:size}, \ref{assum:penalty} and~\ref{assum:cand} hold,
and denote by $\wh{\Cp}^{(s, e)}$ the output of {\tt PrunAlg}
from applying the criteria \ref{eq_c1}--\eqref{eq_c2} to the local candidates 
$\mc D = \C \cap (s, e)$ within an interval $(s, e]$,
and by $\mc P_n^{(s, e)}$ the following event:
The output set $\wh{\Cp}^{(s, e)}$ contains
\begin{enumerate}[label = (\alph*)]
\item exactly one acceptable candidate  for each $\cp_j \in \Cp^{(s, e)}$, 
i.e., $|\wh{\Cp}^{(s, e)} \cap \mc V_j^\prime| = 1$ for $\cp_j \in \Cp^{(s, e)}$,
\item  at most one acceptable candidate  for each $\cp_j \in \bar{\Cp}^{(s, e)} \setminus \Cp^{(s, e)}$, i.e., $|\wh{\Cp}^{(s, e)} \cap \mc V_j^\prime| \le 1$ 
for $\cp_j \in \bar{\Cp}^{(s, e)} \setminus \Cp^{(s, e)}$, and
\item no other candidates, 
i.e., $\wh{\Cp}^{(s, e)} \setminus \bigcup_{j: \, \cp_j \in \bar{\Cp}^{(s, e)}} \mc V_j^\prime = \emptyset$.
\end{enumerate}
Then, with $\mc M_n := \mc M_n^{(11)} \cap \mc M_n^{(12)} \cap \mc M_n^{(13)} \cap \mc M_n^{(2)}$, we have
\begin{align*}
\p\l( \bigcap_{0 \le s < e \le n} \mc P_n^{(s, e)}, \ \mc M_n \r) \to 1 \quad \text{as} \quad n \to \infty.
\end{align*}
\end{thm}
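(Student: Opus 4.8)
The plan is to establish the statement uniformly over all intervals $(s,e]$ by conditioning on the favourable event $\mc M_n$ and then proving, deterministically on $\mc M_n$, that for every $(s,e]$ the output $\wh{\Cp}^{(s,e)}$ of {\tt PrunAlg} satisfies (a)--(c). The first reduction is to note that $\p(\mc M_n) \to 1$ is immediate from Assumptions~\ref{assum:vep} and~\ref{assum:cand}, so it suffices to work on $\mc M_n$ and show that the bad event $\bigcup_{0 \le s < e \le n} (\mc P_n^{(s,e)})^c$ is empty there. Since there are only $O(n^2)$ intervals, if we can prove that on $\mc M_n$ the conclusion holds for each fixed $(s,e]$ with the relevant noise bounds already absorbed into $\mc M_n$, the union over intervals is automatic; this is exactly why the favourable event is stated once and for all rather than per-interval. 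I would first translate the definitions of $\Cp^{(s,e)}$ and $\bar\Cp^{(s,e)}$, together with Assumption~\ref{assum:cand}~(a) (on $\mc M_n^{(2)}$, every change point has a strictly valid candidate nearby), into the statement that for each $\cp_j \in \bar\Cp^{(s,e)}$ the local candidate pool $\mc D = \C \cap (s,e)$ contains at least one acceptable (in fact strictly valid) estimator for $\cp_j$, while change points outside $\bar\Cp^{(s,e)}$ may safely be ignored because they are too close to the boundary for the penalty $\xi_n$ to register them.

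The analytic core is a collection of $\sic$-comparison lemmas, which I expect to be the main obstacle and which presumably occupy the bulk of the deferred proof. The key quantities are changes $\Delta\rss$ in the residual sum of squares when a candidate is added to, removed from, or perturbed within a configuration $\mc A$. Using the identity for the drop in RSS from splitting a segment at a true change point, one shows: (i) adding a strictly valid estimator for an as-yet-unfitted $\cp_j \in \bar\Cp^{(s,e)}$ decreases $\sic$ — this uses that the signal contribution $d_j^2 \min(\cp_j-s, e-\cp_j) \gtrsim \xi_n$ dominates the fluctuation terms controlled by $\mc M_n^{(11)}$; (ii) adding a candidate near no change point, or a second candidate near an already-fitted $\cp_j$, increases $\sic$ — here the RSS gain is purely noise, bounded via $\omega_n$ on $\mc M_n^{(11)}$, and Assumption~\ref{assum:penalty} ($\omega_n^2/\xi_n \to 0$) ensures the penalty wins; (iii) among two candidates both within the $(d_j^{-2}\rho_n)$-environment of $\cp_j$, moving from a strictly valid one to one merely acceptable changes $\sic$ by a lower-order amount controlled by $\mc M_n^{(12)}$ and $\mc M_n^{(13)}$ (this is the unavoidable gap discussed before Definition~\ref{def_acceptable}, and is why the conclusion is phrased with $\mc V_j'$ rather than $\mc V_j^*$). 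The logarithmic form of the penalty in \eqref{eq_sic} needs the standard linearisation $\frac{n}{2}\log(\rss/n) = \frac{n}{2}\log(\sigma^2) + \frac{\rss - n\sigma^2}{2\sigma^2} + \text{l.o.t.}$, valid on $\mc M_n$ since $\rss/n$ concentrates around a constant in $[c, C]$.

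Granting these $\sic$-comparison estimates, the combinatorial step is to argue that the specific selection rule \ref{eq_c1}--\eqref{eq_c2} lands on a configuration of the required shape. From (i) and (ii), any $\mc A \in \mc F$ (no further candidate can be added without increasing $\sic$) must already contain an acceptable estimator for every $\cp_j \in \Cp^{(s,e)}$ and cannot contain any candidate unacceptable for all $\cp_j \in \bar\Cp^{(s,e)}$; moreover by (ii) it contains at most one acceptable estimator per change point. Hence $m^* = \min_{\mc A \in \mc F}|\mc A|$ counts exactly the change points in some set sandwiched between $\Cp^{(s,e)}$ and $\bar\Cp^{(s,e)}$, the ambiguity being precisely the boundary-sensitive change points in $\bar\Cp^{(s,e)} \setminus \Cp^{(s,e)}$. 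The $\arg\min$ in \eqref{eq_c2} over the $R$-containment neighbourhood $m^* \le |\mc A'| \le m^*+2$ then cannot introduce spurious estimators (again by (ii), any $\mc A'$ with an extra junk candidate has strictly larger $\sic$), so $\wh{\mc A} = \wh\Cp^{(s,e)}$ inherits (a)--(c). The one delicate point is that {\tt PrunAlg} returns $\wh{\mc A}$ as an \emph{inner} subset ($\mc A \subset_R \mc A'$), so I must check that the two boundary elements that may be dropped are exactly those associated with change points in $\bar\Cp^{(s,e)}\setminus\Cp^{(s,e)}$ near $s$ or $e$ — this is where the two-sided band $[c^*\xi_n, C^*\xi_n]$ of Proposition~\ref{prop:detectability} is used to match the gap between what \eqref{eq_c2} can guarantee and what it can merely permit. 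Finally, since none of these arguments used any property of $(s,e]$ beyond the noise bounds already folded into $\mc M_n$, taking the intersection over all $O(n^2)$ intervals costs nothing, giving the claimed probability bound.
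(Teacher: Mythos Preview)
Your overall architecture --- reduce to the deterministic event $\mc M_n$, establish $\sic$-comparison lemmas of the three types you describe, then run a combinatorial argument on the selection rule \ref{eq_c1}--\eqref{eq_c2} --- is the same as the paper's, and your lemmas (i)--(iii) correspond to Propositions~\ref{prop:detectability}, \ref{prop:step:two} and~\ref{prop:step:three}. The gap is in the combinatorial step, where you mischaracterise the sets in $\mc F$.

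You claim that any $\mc A \in \mc F$ ``cannot contain any candidate unacceptable for all $\cp_j \in \bar\Cp^{(s,e)}$'' and ``contains at most one acceptable estimator per change point''. This is false: condition \ref{eq_c1} only says that \emph{adding} further candidates increases $\sic$; it says nothing about removing candidates already in $\mc A$. In particular $\mc D$ itself (with all its spurious and duplicate candidates) trivially satisfies \ref{eq_c1}, and more generally any set containing one close estimator per change point plus arbitrary junk will too. Similarly, your lemma (i) only tells you that $\mc A \in \mc F$ must contain an estimator within $C^*d_j^{-2}\xi_n$ of each $\cp_j \in \Cp^{(s,e)}$ --- this is the paper's (D1) --- not that it contains an \emph{acceptable} one (precision $\rho_n\nu_n$, which is much finer). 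The upgrade from ``close'' to ``acceptable'' requires Proposition~\ref{prop:step:three}, which is a \emph{replacement} statement, not an addition statement, and so does not feed directly into \ref{eq_c1}.

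What the paper actually does is more delicate: it shows (D2) that a ``clean'' configuration (exactly one acceptable estimator per $\cp_j \in \bar\Cp^{(s,e)}$) lies in $\mc F$, so $m^* \le |\bar\Cp^{(s,e)}| \le m^* + 2$. Then for any $\mc A$ in the search range of \eqref{eq_c2} that violates (a)--(c), it runs a four-step surgery: remove estimators too far from any change point (Propositions~\ref{prop:detectability}(b)--(c) and~\ref{prop:step:two}(b.ii)); for each surely detectable $\cp_\circ$ with only a close-but-unacceptable estimator, first peel off neighbouring estimators that block the hypotheses of Proposition~\ref{prop:step:three}, then swap in a strictly valid one; do the same for detectable-but-not-surely change points; finally remove duplicates. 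Each step strictly decreases $\sic$, and the resulting set can be padded back to something in $\mc F$ of cardinality $|\bar\Cp^{(s,e)}|$ via (D2), so the original $\mc A$ is not the $\arg\min$. The intermediate peeling in Step~2 is where the case analysis gets intricate, and it is not something you can shortcut by a direct structural claim about $\mc F$.
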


In view of Theorem~\ref{thm:sbic}, we categorise the change points
according to their detectability within a given interval in the following definition.
\begin{defn}
\label{def:detectable} 
For any $0 \le s < e \le n$, we refer to
\begin{enumerate}[label = (\alph*)]
\item any change points in $\Cp^{(s,e)}$ as {\it surely detectable} within $(s, e]$,
\item any change points in $\bar{\Cp}^{(s,e)}$ as {\it detectable} within $(s, e]$, and
\item any change points in $\{\Cp \cap (s, e)\} \setminus \bar{\Cp}^{(s,e)}$ 
as {\it undetectable} within $(s, e]$.
\end{enumerate}
\end{defn}

The following corollary establishes that {\tt PrunAlg}, 
when applied to the complete candidate set $\C$ directly,
achieves consistency in multiple change point estimation.
\begin{cor}
\label{cor:one}
Under the assumptions of Theorem~\ref{thm:sbic}, 
applying the search criteria \ref{eq_c1} and \eqref{eq_c2r}
to the candidate set $\C$ within $(0, n]$ yields
$\wh{\Cp}^{(0, n)}=\{\wh{\cp}_1 < \ldots < \wh{\cp}_{\wh q_n}\}$
which consistently estimates $\Cp$, i.e.,
\begin{align*}
\p\l\{\wh q_n = q_n; \, 
\max_{1 \le j \le q_n} d_{j}^2 |\wh \cp_{j} \,\bbI_{j \le \wh q_n} - \cp_{j}| \le \rho_n\nu_n\r\} 
\ge \p(\mc M_n) + o(1) \to 1.
\end{align*}
\end{cor}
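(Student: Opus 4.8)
The plan is to obtain the corollary as the special case of Theorem~\ref{thm:sbic} in which the interval of consideration is the whole data range $(s,e]=(0,n]$, so that $\mc D=\C\cap(0,n)=\C$ and the rule \ref{eq_c1}--\eqref{eq_c2} collapses to \ref{eq_c1} together with \eqref{eq_c2r}. The first thing I would check is that \emph{within $(0,n]$ every change point is surely detectable}: since $\cp_{j-1}\ge 0$ and $\cp_{j+1}\le n$, one has $\min(\cp_j,\,n-\cp_j)\ge\min(\cp_j-\cp_{j-1},\,\cp_{j+1}-\cp_j)=\delta_j$, hence $d_j^2\min(\cp_j,n-\cp_j)\ge d_j^2\delta_j\ge D_n$. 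As Assumptions~\ref{assum:size} and~\ref{assum:penalty} force $\xi_n/D_n\to 0$, for all large $n$ we get $D_n\ge C^*\xi_n$ and therefore $\Cp^{(0,n)}=\bar\Cp^{(0,n)}=\Cp$ in the notation of \eqref{eq:sure:detect}--\eqref{eq:all:detect}. On the event $\bigcap_{0\le s<e\le n}\mc P_n^{(s,e)}\cap\mc M_n$ of Theorem~\ref{thm:sbic}, parts~(a) and~(c) of $\mc P_n^{(0,n)}$ then state precisely that $|\wh\Cp^{(0,n)}\cap\mc V_j'|=1$ for every $j=1,\dots,q_n$ and $\wh\Cp^{(0,n)}\subseteq\bigcup_{j=1}^{q_n}\mc V_j'$ (part~(b) being vacuous).

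Next I would show that the acceptable neighbourhoods $\mc V_1',\dots,\mc V_{q_n}'$ are, for $n$ large, pairwise disjoint and ordered. For $\c\in\mc V_j'$ and $\c'\in\mc V_{j'}'$ with $j<j'$, using $\cp_{j'}-\cp_j\ge\max(\delta_j,\delta_{j'})\ge D_n/\min(d_j^2,d_{j'}^2)\ge\tfrac12 D_n(d_j^{-2}+d_{j'}^{-2})$ together with $|\c-\cp_j|\le\rho_n\nu_n d_j^{-2}$ and $|\c'-\cp_{j'}|\le\rho_n\nu_n d_{j'}^{-2}$, one gets $\c'-\c\ge(d_j^{-2}+d_{j'}^{-2})\big(\tfrac12 D_n-\rho_n\nu_n\big)>0$, where $\rho_n\nu_n=o(D_n)$ holds for a sufficiently slowly growing $\nu_n$ because $\rho_n=O(\omega_n^2)=o(\xi_n)=o(D_n)$ by Assumptions~\ref{assum:cand} and~\ref{assum:penalty}. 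Combining this with $|\wh\Cp^{(0,n)}\cap\mc V_j'|=1$ for all $j$ and $\wh\Cp^{(0,n)}\subseteq\bigcup_j\mc V_j'$ gives $\wh q_n=|\wh\Cp^{(0,n)}|=\sum_{j=1}^{q_n}|\wh\Cp^{(0,n)}\cap\mc V_j'|=q_n$, and the ordering forces the $j$-th smallest element $\wh\cp_j$ of $\wh\Cp^{(0,n)}$ into $\mc V_j'$, i.e.\ $d_j^2|\wh\cp_j-\cp_j|\le\rho_n\nu_n$; since $\wh q_n=q_n$ the indicator $\bbI_{j\le\wh q_n}$ is identically one, which is the asserted localisation bound. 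The probability statement then follows since $\p\big(\bigcap_{0\le s<e\le n}\mc P_n^{(s,e)},\,\mc M_n\big)\to 1$ while $\p(\mc M_n)\to 1$.

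The step I expect to require the most care is justifying that, for the interval $(0,n]$, the simple rule \eqref{eq_c2r} returns an output with the same properties as the rule \eqref{eq_c2} on which Theorem~\ref{thm:sbic} is stated. Here I would revisit the construction in the proof of Theorem~\ref{thm:sbic}: because $\bar\Cp^{(0,n)}\setminus\Cp^{(0,n)}=\emptyset$, there are no change points close to the endpoints whose inclusion is optional, so the slack that \eqref{eq_c2} allows through the relation $\mc A\subset_R\mc A'$ and the band $m^*\le|\mc A'|\le m^*+2$ is never exercised. Concretely, a set of $q_n$ strictly valid estimators, one in each $\mc V_j^*$, belongs to $\mc F$ on $\mc M_n$ (adding any further candidate strictly inflates $\sic$, since the RSS gain is $O(\omega_n^2)=o(\xi_n)$), so $m^*\le q_n$; conversely, by the same covering arguments used in the proof of Theorem~\ref{thm:sbic} together with the disjointness established above, every member of $\mc F$ must contain a distinct estimator for each surely detectable change point, so $|\mc A|\ge q_n$ for all $\mc A\in\mc F$. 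Hence $m^*=q_n$ and every cardinality-$m^*$ element of $\mc F$ already consists of exactly one acceptable estimator per change point, so the minimiser selected by \eqref{eq_c2r} enjoys the same properties as the one selected by \eqref{eq_c2}, and the conclusion of Theorem~\ref{thm:sbic} transfers unchanged. Assembling the three parts yields $\p\{\wh q_n=q_n;\ \max_{1\le j\le q_n} d_j^2|\wh\cp_j\bbI_{j\le\wh q_n}-\cp_j|\le\rho_n\nu_n\}\ge\p\big(\bigcap_{0\le s<e\le n}\mc P_n^{(s,e)},\,\mc M_n\big)\to 1$, as claimed.
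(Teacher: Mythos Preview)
Your proposal is correct and follows precisely the route the paper intends: the corollary is the specialisation of Theorem~\ref{thm:sbic} to $(s,e]=(0,n]$, where $\Cp^{(0,n)}=\bar\Cp^{(0,n)}=\Cp$ collapses \eqref{eq_c2} to \eqref{eq_c2r}, and the paper itself offers no separate proof beyond this observation. One minor point: when you argue $|\mc A|\ge q_n$ for every $\mc A\in\mc F$, the ``disjointness established above'' was proved for the acceptable neighbourhoods $\mc V_j'$ of radius $\rho_n\nu_n d_j^{-2}$, whereas \ref{eq_c1n} only places estimators in the larger $(C^*\xi_n d_j^{-2})$-neighbourhoods; the same disjointness argument goes through for these larger neighbourhoods since $\xi_n/D_n\to 0$, but it is worth stating this explicitly.
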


As pointed out in Remark~\ref{rem:direct},
pruning down $\C$ according to~\ref{eq_c1} and~\eqref{eq_c2r}
is computationally more efficient than 
the exhaustive minimisation of $\sic$ over all subsets of $\C$. 
Nevertheless, the localisation from the outer algorithm {\tt LocAlg} 
results in a considerable computational advantage 
when a large set of candidates needs to be pruned down.

Next, we establish that the consistency achieved by {\tt PrunAlg} within local search environments
(as in Theorem~\ref{thm:sbic}),
is carried over to the entire data set via the outer localisation algorithm {\tt LocAlg}.

\begin{assum}
\label{assum:det:intervals}
Recall that the detection interval of each $\c \in \C$ 
is denoted by $\mc I(\c) = (\c - G_L(\c), \c + G_R(\c)]$. Then, for each $j = 1, \ldots, q_n$,
there exists at least one acceptable candidate $\check{\c}_j \in \mc V_j^\prime$
which is situated well within its own detection interval by satisfying
\begin{align}\label{eq:assum:det:intervals}
\frac{\xi_n}{d_j^2 \, \min\{G_L(\check{\c}_j), G_R(\check{\c}_j)\}} \to 0.
\end{align} 
\end{assum}

Assumption~\ref{assum:det:intervals} justifies the removal of $\c_\circ$ identified in Step~1 of each iteration
from the future consideration,
regardless of whether it is accepted by {\tt PrunAlg} or not:
If $\c_\circ$ is an acceptable estimator for some $\cp_j$ while meeting \eqref{eq:assum:det:intervals},
such $\cp_j$ is surely detectable within $(\c_L, \c_R]$ and either $\c_\circ$ or some $\c \in \mc V_j^\prime$
is accepted by {\tt PrunAlg} at the current iteration;
if not, there still remain at least one acceptable estimators in the active candidate set $\mc C$
for any undetected change points after removing $\c_\circ$.
 We discuss how Assumption~\ref{assum:det:intervals} is met
by the MOSUM-based candidate generating procedure in Remark~\ref{rem_25_mos},
and provide a similar discussion for the CUSUM-based procedure 
in Appendix~\ref{sec:wbs}.


Theorem~\ref{thm:sbic:full} proves that {\tt PrunAlg} combined with
the outer algorithm {\tt LocAlg} 
achieves consistency in multiple change point estimation.

\begin{thm}
\label{thm:sbic:full}
Under the assumptions of Theorem~\ref{thm:sbic} and Assumption~\ref{assum:det:intervals}, 
the localised pruning algorithm {\tt LocAlg} 
outputs $\wh{\Cp} = \{\wh{\cp}_1 < \ldots < \wh{\cp}_{\wh q_n}\}$
which consistently estimates $\Cp$, i.e.,
\begin{align*}
\p\l\{\wh q_n = q_n; \, 
\max_{1 \le j \le q_n} d_{j}^2 |\wh \cp_{j} \,\bbI_{j \le \wh q_n} - \cp_{j}| \le \rho_n\nu_n\r\} 
\ge \p(\mc M_n) + o(1) \to 1,
\end{align*}
for some $\nu_n \to \infty$ at an arbitrarily slow rate.
\end{thm}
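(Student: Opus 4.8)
The plan is to derive Theorem~\ref{thm:sbic:full} from Theorem~\ref{thm:sbic} by tracking, through the iterations of {\tt LocAlg}, an inductive invariant that guarantees the active set $\mc C$ always retains a suitable estimator for every change point that has not yet been detected. Throughout, we work on the event $\bigcap_{0 \le s < e \le n} \mc P_n^{(s,e)} \cap \mc M_n$, which has probability at least $\p(\mc M_n) + o(1)$ by Theorem~\ref{thm:sbic}, so that every invocation of {\tt PrunAlg} on an interval $(\c_L, \c_R]$ behaves exactly as described there: it returns exactly one acceptable estimator for each surely detectable change point in $\Cp^{(\c_L, \c_R)}$, at most one for each merely detectable change point, and nothing else. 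On this event all statements below are deterministic.

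First I would set up the induction. The invariant to be maintained at the start of each iteration is: (i) $\wh{\Cp}$ contains exactly one acceptable estimator for each change point already "finalised" and no spurious estimators; (ii) for every change point $\cp_j$ not yet represented in $\wh{\Cp}$, the active set $\mc C$ still contains the acceptable candidate $\check{\c}_j \in \mc V_j^\prime$ from Assumption~\ref{assum:det:intervals} that sits well within its own detection interval. The base case is immediate since $\mc C = \C$ and $\wh{\Cp} = \emptyset$. For the inductive step, consider an iteration with most prominent candidate $\c_\circ$ and the resulting boundaries $\c_L, \c_R$ and local set $\mc D = \mc C \cap (\c_L, \c_R)$. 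The key geometric observation, to be verified from Step~2 of {\tt LocAlg}, is that any change point $\cp_j$ whose surviving acceptable estimator $\check{\c}_j$ lies in $\mc D$ is in fact \emph{surely detectable} within $(\c_L, \c_R]$: because $\check{\c}_j$ satisfies $\mc I(\check{\c}_j) \subset (\c_L, \c_R)$ up to the usual $\rho_n \nu_n$ slack (this uses that $\c_L, \c_R$ were chosen as candidates whose detection intervals do not overlap $\mc I(\c_\circ)$, together with~\eqref{eq:assum:det:intervals} and $\xi_n/D_n \to 0$), one gets $d_j^2 \min(\cp_j - \c_L, \c_R - \cp_j) \ge C^* \xi_n$, i.e., $\cp_j \in \Cp^{(\c_L, \c_R)}$. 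Hence Theorem~\ref{thm:sbic} forces {\tt PrunAlg} to accept exactly one acceptable estimator for each such $\cp_j$, which is then moved into $\wh{\Cp}$, preserving (i).

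Next I would check that property (ii) is preserved, i.e., no $\check{\c}_j$ for an undetected $\cp_j$ is wrongly discarded in Step~4. Candidates are discarded in Step~4 only when they lie between two accepted estimators in $\wh{\mc A}$, or on the appropriate side of $\wh{\mc A}$ when $\c_L$ (resp.\ $\c_R$) is already accepted, or they equal $\c_\circ$. For the first two cases, the argument of the previous paragraph shows that any $\check{\c}_j$ in the discarded region belongs to a surely detectable $\cp_j$, so an acceptable estimator for $\cp_j$ has just been placed in $\wh{\Cp}$ — $\cp_j$ is finalised, not undetected, and its $\check{\c}_j$ need no longer be retained. The removal of $\c_\circ$ itself is handled by the dichotomy spelled out after Assumption~\ref{assum:det:intervals}: if $\c_\circ$ is acceptable for some $\cp_j$ with~\eqref{eq:assum:det:intervals}, then $\cp_j$ is surely detectable within $(\c_L, \c_R]$ and gets an accepted estimator this round; otherwise $\c_\circ \notin \mc V_j^\prime$ for any $j$ with a still-needed $\check{\c}_j \ne \c_\circ$, so discarding $\c_\circ$ costs nothing. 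This closes the induction. Termination of {\tt LocAlg} is guaranteed because $\c_\circ$ is removed at every iteration, so after finitely many steps $\mc C = \emptyset$; at that point (ii) forces every $\cp_j$ to have been finalised, and (i) gives $\wh q_n = q_n$ with exactly one acceptable estimator per change point, which is precisely $\max_{1 \le j \le q_n} d_j^2 |\wh\cp_j - \cp_j| \le \rho_n \nu_n$.

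I expect the main obstacle to be the geometric bookkeeping in the inductive step: carefully showing that the boundaries $\c_L, \c_R$ produced by Step~2 always enclose a wide enough interval around any change point whose retained estimator falls in $\mc D$, so that "retained acceptable estimator inside $\mc D$" really does upgrade to "surely detectable within $(\c_L,\c_R]$". This requires combining (a) the non-overlap condition $|\c_\circ - \c_L| \ge G_R(\c_L) + G_L(\c_\circ)$ and its mirror, (b) the fact that when $\c_L$ or $\c_R$ is a previously accepted estimator it is itself acceptable for some change point, hence close to it, and (c) Assumption~\ref{assum:det:intervals} via~\eqref{eq:assum:det:intervals}, all while keeping the $O(\rho_n \nu_n)$ and $O(\xi_n)$ error terms under control using $\rho_n = O(\omega_n^2)$, $\omega_n^2/\xi_n \to 0$ and $\xi_n/D_n \to 0$. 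A secondary subtlety is ensuring that change points which are \emph{undetectable} within the current $(\c_L,\c_R]$ — those too close to $\c_L$ or $\c_R$ — neither get a spurious estimator (handled by part~(c) of Theorem~\ref{thm:sbic}) nor have their eventual good estimator discarded (handled by the "not all of $\mc D \setminus \wh{\mc A}$ are discarded" provision in Step~4), so that they are picked up correctly at a later iteration when a better-centred interval is selected.
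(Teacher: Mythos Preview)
Your inductive framework---tracking (i) that $\wh{\Cp}$ accumulates only acceptable estimators, one per detected change point, and (ii) that $\mc C$ retains a good estimator for every undetected change point---is exactly the skeleton the paper uses. The gap is in how you verify (ii).

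Your central geometric claim, that $\check{\c}_j \in \mc D$ forces $\mc I(\check{\c}_j) \subset (\c_L,\c_R)$ (hence $\cp_j \in \Cp^{(\c_L,\c_R)}$), is false. Step~2 imposes non-overlap between $\mc I(\c_L)$ and $\mc I(\c_\circ)$, not between $\mc I(\c_L)$ and $\mc I(\check{\c}_j)$. Concretely: take $\c_L \in \mc C$ with a short detection interval lying just left of $\mc I(\c_\circ)$, and $\check{\c}_j$ slightly to the right of $\c_L$ with $\mc I(\check{\c}_j)$ overlapping $\mc I(\c_\circ)$ on the right (so $\check{\c}_j\in\mc D$) but with $G_L(\check{\c}_j)$ large, extending well past $\c_L$. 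Then $\cp_j - \c_L$ can be smaller than $C^*d_j^{-2}\xi_n$ even though $d_j^2 G_L(\check{\c}_j)\gg\xi_n$. So $\check{\c}_j\in\mc D$ with $\cp_j\notin\Cp^{(\c_L,\c_R)}$ is entirely possible, and your argument for preserving (ii) collapses in exactly the boundary case you flag as a ``secondary subtlety''.

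The paper does not try to show $\check{\c}_j\in\mc D\Rightarrow\cp_j$ surely detectable. Instead, for $\cp_j$ close to the boundary $s=\c_L$ (say $d_j^2(\cp_j-s)<C^*\xi_n$) and not detected this round, it first shows $s$ cannot be acceptable for any change point: $d_{j-1}^2(s-\cp_{j-1}) = d_{j-1}^2(\cp_j-\cp_{j-1})\{1-d_j^2(\cp_j-s)/d_j^2(\cp_j-\cp_{j-1})\}\ge D_n - C^*\xi_n > \rho_n\nu_n$, and similarly for $\cp_j$. Hence $s\notin\wh{\Cp}$, so Step~4 does \emph{not} discard candidates to the left of $\min\wh{\mc A}$, and $\check{\c}_j$ survives. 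The removal of $\c_\circ$ is handled separately: if $\c_\circ$ is acceptable for such a $\cp_j$, then since $s\in\mc C$ with $\mc I(s)\cap\mc I(\c_\circ)=\emptyset$ one has $G_L(\c_\circ)\le\c_\circ-s$, whence $d_j^2 G_L(\c_\circ)\le d_j^2(\c_\circ-s)\le C^*\xi_n+\rho_n\nu_n$, so $\c_\circ$ fails~\eqref{eq:assum:det:intervals} and cannot be the protected $\check{\c}_j$. Note that this last step uses the non-overlap relation for $\c_\circ$ specifically; it does not transfer to an arbitrary $\check{\c}_j\in\mc D$, which is precisely why your shortcut does not go through.
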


Its proof follows from the following two observations:
\begin{itemize}
\item When a change point is surely detectable for the first time 
at some iteration (in the sense of Definition~\ref{def:detectable}~(a)), 
it gets detected by an acceptable estimator by Theorem \ref{thm:sbic}
and consequently is no longer detectable in the subsequent iterations
thanks to how the local environments are defined in Step~2 of {\tt LocAlg}.
\item On the other hand, those change points which are yet to be detected
have corresponding acceptable estimators in the pool of candidates $\mc C$
due to how $\mc C$ is reduced in Step~4 of {\tt LocAlg}.
\end{itemize}

\section{Candidate generation}
\label{sec:mosum}

In this section, we investigate 
 a two-stage procedure combining the localised pruning methodology
with a multiscale extension of the MOSUM procedure of \citet{kirch2014}.
In Appendix~\ref{sec:wbs}, we provide the corresponding results
for a CUSUM-based procedure motivated by the WBS \citep{fryzlewicz2014}. 
Our theoretical analysis indicates that 
both the detection lower bound and the localisation rate
achieved with the MOSUM-based candidate generating procedure
are always better than those achievable with the CUSUM-based one.

\subsection{MOSUM procedure and its multiscale extension}

\cite{kirch2014} analyse the properties of a single-scale MOSUM procedure
which, for a bandwidth $G = G_n$, estimates the locations of the change points
by the locations of {\it significant} local maxima of the MOSUM statistic~\eqref{eq:mosum}
according to two different criteria. 
For the purpose of generating candidates for the localised pruning,
we adopt the method termed $\eta$-criterion 
with a lower false negative rate (see Section~2.2 of \cite{meier2018}).
Let $\C(G, \alpha) = \{\c_{G, j}, \, 1 \le j \le \wh q_G\}$ denote the set of candidates
obtained from bandwidth $G$ and some significance level $\alpha \in (0, 1)$.
\\ [1mm]
{\bf $\eta$-criterion.} Each $\c_{G, j}$ is the local maximiser of the MOSUM detector
within its $\lfloor \eta G\rfloor$-radius for some $\eta > 0$,
and $ \vert T_{\c_{G, j}, n}(G; X) \vert > \tau \, D_n(G; \alpha)$,
where $\tau^2$ is the (long-run) variance of the error sequence $\{\vep_t\}_{t = 1}^n$. 
The threshold $D_n(G; \alpha)$ is chosen such that for a signal with no change points,
there are no false positives reported (uniformly for this given bandwidth) 
with asymptotic probability $(1 - \alpha)$.
\\[1mm]
In the following, we assume that $\tau^2$ is known for simplicity.
Our arguments can readily be adapted to the case where 
a global estimator $\wh{\tau}^2_n$ satisfying 
$\vert \wh{\tau}_n^2 - \tau^2 \vert = o_P(\log^{-1}(n))$ is available.
More complicated arguments, as given in Section~2.3 of \cite{kirch2014}, 
are needed when a scale-dependent, local estimator $\wh{\tau}_{t, G}^2$ 
is adopted in place of $\tau^2$, such as the one implemented in the R package {\tt mosum} \citep{mosum};
however, this estimator needs not be uniformly consistent (in $G \le t \le n - G$).

When a single-scale MOSUM procedure is adopted for estimating
both the number and the locations of the change points,
the parameter $\alpha$ needs to be selected small enough 
in order not to incur any false positives, at the cost of a high false negative rate.
On the other hand, when the MOSUM procedure is adopted solely for
generating a set of candidates to be pruned down by a model selection method,
we can select $\alpha$ generously 
(e.g., $\alpha = 0.1$ is used by default in \citet{mosum}) or even do without thresholding.
In practice, it is recommended to apply a mild threshold 
since setting $D_n(G; \alpha) = 0$ may incur a violation of Assumption~\ref{assum:cand}~(b),
which adds computational burden as well as possibly leading to a loss of estimation accuracy.

The following proposition extends Theorem~3.2 of \cite{kirch2014}.

\begin{prop}
\label{prop:mosum:a}
Let $\eta \in (0, 1)$ for the $\eta$-criterion and suppose:
\begin{enumerate}[label = (\alph*)]
\item For each $j = 1, \ldots, q_n$, there exists $G(j)$ such that 
$2G(j) \le \delta_j$ and
$d_j^2G(j) \ge c_M D_n$ for some constant $c_M > 0$ that does not depend on $j$.

\item  $\p(\mc M_n^{(11)}) \to 1$ with $D_n^{-1} \omega_n^2 \to 0$,
where $\mc M_n^{(11)}$ is as in Assumption~\ref{assum:vep}~(a).

\item $\p(\mc M_n^{(12)} \cap \mc M_n^{(12+)} \cap \mc M_n^{(12-)}) \to 1$ with 
$\mc M_n^{(12)}$ from Assumption~\ref{assum:vep}~(b),
and $\mc M_n^{(12\pm)}$ defined analogously as
\begin{align*}
\mc M_n^{(12\pm)} &= \l\{ \max_{1 \le j \le q_n} \;
\max_{d_j^{-2} a_n \le \ell \le \cp_j - \cp_{j - 1}} 
\frac{\sqrt{d_j^{-2} a_n}}{\ell} \left\vert \sum_{t = \cp_j - \ell \pm G(j) + 1}^{\cp_j \pm G(j)} \vep_t \right\vert 
\le \omegao \r\}
\\
& \bigcap \l\{ \max_{1 \le j \le q_n} \; \max_{d_j^{-2}a_n \le \ell \le \cp_{j + 1} - \cp_j} 
\frac{\sqrt{d_j^{-2}a_n}}{\ell} \left\vert \sum_{t = \cp_j \pm G(j) + 1}^{\cp_j \pm G(j) + \ell} \vep_t \right\vert
 \le \omegao
\r\}.
\end{align*}
\end{enumerate}
Then, for a set $\mc S_n$ (specified in Lemma~\ref{lem_prop_mosum}) 
fulfilling $\p(\mc S_n) \to 1$,
there exists a universal constant $C_M > 0$ 
(not depending on the signal or the distribution of $\{\vep_t\}_{t = 1}^n$)
such that
\begin{align*}
\p\l(\max_{1 \le j \le q_n} \min_{\c \in \C(G(j), \alpha)} d_j^2 |\c - \cp_j| \ge C_M(\omegao)^2,
\; \mc S_n \r) \to 0.
\end{align*}
\end{prop}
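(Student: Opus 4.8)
The plan is to fix an arbitrary $j$ and show that any candidate $\c \in \C(G(j), \alpha)$ detected within a window of bandwidth $G(j)$ around $\cp_j$ must satisfy $d_j^2 |\c - \cp_j| \le C_M (\omegao)^2$ on the good set $\mc S_n$, by a direct analysis of the MOSUM detector $T_{b, n}(G(j); X)$ evaluated near $\cp_j$. Since the proposition concerns the $\min$ over $\c \in \C(G(j), \alpha)$, the key is to produce, on $\mc S_n$, \emph{at least one} significant local maximiser of $|T_{b, n}(G(j); X)|$ within the required distance of $\cp_j$; the existence of a significant local maximiser somewhere near $\cp_j$ will follow from condition~(a) (which guarantees the signal strength $d_j^2 G(j) \ge c_M D_n$ beats the threshold $\tau D_n(G(j);\alpha) \asymp \sqrt{\log(n/G(j))}$, using Assumption~\ref{assum:size} and $D_n^{-1}\omega_n^2 \to 0$), and the fact that $2G(j) \le \delta_j$ guarantees no other change point interferes inside the relevant window.

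The main work is the localisation bound. First I would decompose $T_{b, n}(G(j); X) = T_{b, n}(G(j); f) + T_{b, n}(G(j); \vep)$ into a deterministic signal part and a stochastic noise part. For $b$ in the $\lfloor \eta G(j)\rfloor$-neighbourhood of $\cp_j$ and within the detection window, the signal part $T_{b,n}(G(j); f)$ is an explicit piecewise-linear `tent' function of $b$ peaking at $b = \cp_j$ with value $\asymp |d_j| \sqrt{G(j)}$ and decreasing linearly in $|b - \cp_j|$ with slope $\asymp |d_j|/\sqrt{G(j)}$, since $2G(j) \le \delta_j$ ensures only the single change $d_j$ enters the window. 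If $\c$ is a local maximiser with $|\c - \cp_j|$ large, then because $|T_{\c,n}| \ge |T_{\cp_j, n}|$ (a local maximum near the true peak must dominate the value at $\cp_j$ up to the noise), we get $T_{\c,n}(f) \ge T_{\cp_j, n}(f) - 2\max_b |T_{b,n}(\vep)|$, and combining with the linear decay of the tent this forces $|d_j| |\c - \cp_j| / \sqrt{G(j)} \lesssim \max_b |T_{b,n}(G(j); \vep)|$ over the relevant range of $b$. Squaring and rearranging yields $d_j^2 |\c - \cp_j|^2 / G(j) \lesssim \big(\max_b |T_{b,n}(\vep)|\big)^2$, i.e., $d_j^2 |\c - \cp_j| \lesssim \big(\max_b |T_{b,n}(\vep)|\big)^2 / (d_j^2 G(j)) \lesssim \big(\max_b |T_{b,n}(\vep)|\big)^2 / D_n$.

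It then remains to control $\max_b |T_{b, n}(G(j); \vep)|$ for $b$ ranging over the window $(\cp_j - G(j), \cp_j + G(j)]$, uniformly in $j$, on the event $\mc S_n$. This is where conditions~(b) and~(c) enter: writing $T_{b,n}(G(j);\vep)$ in terms of partial sums of $\vep_t$ over intervals anchored at $\cp_j \pm G(j)$ and at $b$, the events $\mc M_n^{(12)}$, $\mc M_n^{(12+)}$, $\mc M_n^{(12-)}$ bound exactly these partial-sum fluctuations by $\omegao$ (after the appropriate normalisation by $\sqrt{d_j^{-2} a_n}/\ell$), so that $\max_b |T_{b,n}(G(j); \vep)| \lesssim \omegao$ on $\mc S_n$, with $\mc S_n$ defined as the intersection of these events (together with whatever is needed for the `existence of a significant maximiser' part, controlled via $\mc M_n^{(11)}$). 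Substituting gives $d_j^2 |\c - \cp_j| \lesssim (\omegao)^2 / D_n \cdot D_n = (\omegao)^2$ — wait, more carefully, $d_j^2|\c-\cp_j| \lesssim (\omegao)^2/(d_j^2 G(j)) \cdot d_j^2 G(j)/D_n \cdot D_n$; the clean bookkeeping is $d_j^2 |\c - \cp_j| \le C_M (\omegao)^2$ after using $d_j^2 G(j) \ge c_M D_n$ and $(\omegao)^2 = O(\rho_n) = O(\omega_n^2) = o(D_n)$, and one must track that the constant $C_M$ depends only on $c_M, \eta$ and not on the signal or error law. I expect the main obstacle to be the bookkeeping in the localisation step: matching the partial-sum ranges appearing in $T_{b,n}(G(j);\vep)$ — which involve intervals shifted by $\pm G(j)$ around $\cp_j$ and of length up to the neighbouring spacing — precisely to the ranges controlled by $\mc M_n^{(12)} \cap \mc M_n^{(12\pm)}$, and ensuring the normalisations line up so that the bound is genuinely $\omegao$ uniformly over $j$ and over $b$ in the window, rather than picking up an extra $\log$ or $\sqrt{q_n}$ factor. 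The rest is the tent-function geometry, which is routine, plus invoking Lemma~\ref{lem_prop_mosum} for the precise description of $\mc S_n$ and the guarantee $\p(\mc S_n) \to 1$.
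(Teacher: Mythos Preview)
Your plan correctly identifies the tent-function geometry of the signal part and the need to compare $|T_{\c,n}|$ with $|T_{\cp_j,n}|$, but the localisation argument as written does not deliver the rate $(\omegao)^2$; it gives a strictly worse rate. From $|T_{\c,n}(X)| \ge |T_{\cp_j,n}(X)|$ and the tent slope $|d_j|/\sqrt{2G(j)}$ you obtain
\[
\frac{|d_j|\,|\c-\cp_j|}{\sqrt{2G(j)}} \;\le\; \bigl|T_{\c,n}(G(j);\vep)-T_{\cp_j,n}(G(j);\vep)\bigr| \;\le\; 2\max_b |T_{b,n}(G(j);\vep)|,
\]
and hence $d_j^2|\c-\cp_j| \lesssim |d_j|\sqrt{G(j)}\cdot \max_b|T_{b,n}(G(j);\vep)|$ --- not $(\max_b|T_{b,n}(\vep)|)^2/(d_j^2G(j))$ as you wrote (that algebraic step is simply incorrect). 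Moreover, the events $\mc M_n^{(12)}\cap\mc M_n^{(12\pm)}$ do \emph{not} bound $\max_b|T_{b,n}(G(j);\vep)|$ by $\omegao$: they control partial sums with the normalisation $\sqrt{d_j^{-2}a_n}/\ell$, not $1/\sqrt{\ell}$, so the uniform bound you get on $T_{b,n}(\vep)$ is only $O(\omega_n)$ via $\mc M_n^{(11)}$. Plugging that in yields $d_j^2|\c-\cp_j| \lesssim \sqrt{D_n}\,\omega_n$, which is much larger than $(\omegao)^2$ since $(\omegao)^2 \le \omega_n^2 = o(D_n)$.

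The paper's argument (following the proof of Theorem~3.2 in \cite{kirch2014}) avoids the uniform noise bound altogether. Writing the event $\{\c_j-\cp_j < -C_M d_j^{-2}(\omegao)^2\}$ as a comparison of maxima of $|T_{b,n}|$ over $b$ far from versus close to $\cp_j$, the noise enters through terms of the form $A_1(l,n;G(j))/|\cp_j-l|$ --- i.e., a partial sum \emph{divided by its length}. On the bad event one has $|\cp_j-l| \ge C_M d_j^{-2}(\omegao)^2$, and this is precisely the lower endpoint $d_j^{-2}a_n$ (with $a_n = C_M(\omegao)^2$) in the definition of $\mc M_n^{(12\pm)}$. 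Those events then give $|A_1|/|\cp_j-l| \lesssim |d_j|/\sqrt{C_M}$, which for $C_M$ large enough is beaten by the tent slope $|d_j|/\sqrt{2G(j)}$ after the common $1/\sqrt{G(j)}$ factor. The missing idea in your plan is exactly this: you must exploit that the noise contribution is \emph{averaged} over the gap $|\cp_j - l|$, and use the $\sqrt{d_j^{-2}a_n}/\ell$ normalisation of $\mc M_n^{(12\pm)}$ with $a_n \asymp (\omegao)^2$, rather than bounding the noise uniformly.
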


\begin{rem}
\label{rem:prop:mosum:a}
\begin{enumerate}[label = (\alph*)]
\item Condition~(a) of Proposition~\ref{prop:mosum:a} requires that for each change point $\cp_j$, 
there exists a bandwidth $G(j)$ suitable for its detection.

\item Condition~(b) is assumed for the consistency of the localised pruning method also. 
Proposition~\ref{prop:mosum:a} continues to hold under the following weaker condition:
\begin{align*}
\max_{1 \le j \le q_n} \frac{1}{|d_j|\sqrt{G(j)}}\,\max_{|\ell - \cp_j|\le\frac 32 G(j)} 
\l\vert \frac{1}{\sqrt{G(j)}} \sum_{t = \lfloor \ell -  G(j)/2 + 1 \rfloor}^{\lfloor \ell + G(j)/2 \rfloor} \vep_t \r\vert = o_P(1).	
\end{align*}

This assertion follows e.g., \
when an invariance principle holds as in Proposition~\ref{prop:vep}~(c.i),
and there are a finite mixture of homogeneous change points 
with an appropriate bandwidth for each of the homogeneous subsets
(see Definition~\ref{def_scenarios}~(b)), in addition to
\begin{align*}
\frac{\lambda_n^2}{\min_{1 \le j \le q_n} d_j^2G(j)^2} = o(1)
\quad \text{and} \quad 
\frac{\log(n)}{\min_{1 \le j \le q_n} d_j^2G(j)} = o(1).		
\end{align*}

\item The assumptions on $\mc M_n^{(12\pm)}$ in Condition~(c) 
do not impose additional constraints in the following cases:
\begin{itemize}
\item When $\{\vep_t\}_{t = 1}^n$ are independent and identically distributed.
\item When $\{\vep_t\}_{t = 1}^n$ are stationary time series errors
and there are a finite mixture of homogeneous change points.
\end{itemize}
\end{enumerate}
\end{rem}

In Corollary~\ref{cor:mosum:a}l, 
we show that the single-scale MOSUM procedure yields consistent estimators 
with optimal localisation rate, 
either under sub-Gaussianity or when there are finitely many change points,
but only under the assumption that the change points are {\it homogeneous}
as defined in Definition~\ref{def_scenarios}~(b).
On the other hand, when the change points are heterogeneous,
it cannot produce consistent estimators by construction.

As noted in Remark~\ref{rem:prop:mosum:a}~(a), 
a natural solution to this lack of adaptivity 
is to apply the MOSUM procedure with a range of bandwidths.
At the same time, scanning the same data at multiple scales introduces 
duplicate estimators and false positives, necessitating the use of a pruning method. 
\cite{messer2014} and \cite{messer2018} propose to 
prune down the estimators from a multiscale MOSUM procedure in a bottom-up manner,
and a similar approach is taken by \cite{chan2017}:
Accepting all the estimators from the smallest bandwidth,
it proceeds to coarser scales and only accepts a change point estimator
if its detection interval does not contain any estimators that are already accepted.
While the bottom-up approach is applicable with multiple symmetric bandwidths,
there is no canonical ordering when asymmetric bandwidths are used.
More importantly, this approach rules out the possibility of removing any spurious estimators
including those detected from the finest bandwidth,
and thus requires the finest bandwidth to be large relative to $n$ 
in order to avoid spurious change point estimators.
In Section~\ref{sec:sim}, we observe on the simulated datasets
that indeed, the bottom-up merging tends to incur a large number of false positives.

\subsection{Localised pruning with MOSUM-based candidate generation}

The localised pruning algorithm proposed in Section~\ref{sec:method} 
is well-suited for pruning down the candidates generated by the multiscale MOSUM procedure.
Let $\mc G$ denote a set of bandwidths.
Each estimator $\c \in \C(G, \alpha)$ for $G \in \mc G$ is associated with
the natural detection interval $\mc I_N(\c) = (\c - G, \c + G]$.
Asymmetric bandwidths $\mbf G = (G_\ell, G_r)$ with $(G_\ell, G_r) \in \mc H \subset \mc G \times \mc G$ 
are readily incorporated into the methodology
using the MOSUM statistics defined as a correctly scaled difference 
between $\bar{X}_{(b - G_\ell + 1):b}$ and $\bar{X}_{(b + 1):(b + G_r)}$ for $b = G_\ell, \ldots, n - G_r$, 
and the corresponding $\mc I_N(\c) = (\c - G_\ell, \c + G_r]$ for 
$\c \in \C(\mbf G, \alpha)$; 
for more details, we refer to \cite{meier2018}.
Then, the collection of all the estimators from the multiscale MOSUM procedure,
$\C(\mc H, \alpha) = \bigcup_{\mbf G \in \mc H}\C(\mbf G, \alpha)$, 
can serve as the set of candidates $\C$. 
For Step~1 of the outer localisation algorithm {\tt LocAlg},
we can sort the candidate change points either according to the size of associated jumps
(see \eqref{eq:jump:sort})
or using the $p$-values derived from the asymptotic null distribution defined for each pair of bandwidths,
although care should be taken in their interpretation across multiple scales.
\\ [1mm]
{\bf Selection of bandwidths.}
We propose to generate the set of bandwidths $\mc G$ as follows.
Selecting a single parameter $G_0$,
which should be smaller than the minimal distance between adjacent change points, 
and setting $G_1 = G_0$,
we iteratively yield $G_m, \, m \ge 2,$ as a Fibonacci sequence, 
i.e., $G_m = G_{m - 1} + G_{m - 2}$.
Equivalently, we set $G_m = F_m \,G_0$ 
where $F_m = F_{m - 1} + F_{m -2}$ with $F_0 = F_1 = 1$ are the Fibonacci numbers.
This is repeated until for some $H = H_n$, it holds that $G_{H} < \lfloor n/\log(n) \rfloor$ 
while $G_{H + 1} \ge \lfloor n/\log(n) \rfloor$.
When using asymmetric bandwidths,
it is advisable to avoid the pairs of bandwidths 
which are too strongly unbalanced,
both in view of the asymptotic theory and  
the finite sample performance as is well-known from the two-sample testing literature.
A similar requirement can also be found in \citet{chan2017}.
For this reason, we only include the pairs of bandwidths $\mbf G = (G_\ell, G_r)$ in $\mc H$ that satisfy
\begin{align}
G_\ell, G_r \in \mc G = \{G_1, \ldots, G_H\} \quad \text{with} \quad
\frac{\max(G_{\ell},G_r)}{\min(G_{\ell},G_r)}\le C_{\text{asym}}
\label{eq_band_asym}
\end{align}
for some constant $C_{\text{asym}} > 0$. 
\\ [1mm]
With the thus-constructed set of asymmetric bandwidths $\mc H$,
Assumption~\ref{assum:cand}~(b) follows.

\begin{prop}
\label{prop:mosum:b}
Suppose that $\omega_n^2/G_0\to 0$ with $\omega_n$ as in Assumption~\ref{assum:vep}~(a).
Then, for $\mc H$ fulfilling~\eqref{eq_band_asym},
we have $n^{-1} \omega_n^2 \,\vert \C(\mc H, \alpha) \vert \to 0$.
\end{prop}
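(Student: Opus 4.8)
The plan is to bound $|\C(\mc H, \alpha)|$ by controlling the number of candidates produced at each bandwidth pair and then summing over all pairs in $\mc H$. First I would observe that for a fixed pair $\mbf G = (G_\ell, G_r)$, the $\eta$-criterion accepts a candidate $\c$ only if $\c$ is a local maximiser of $|T_{\c, n}(\mbf G; X)|$ within its $\lfloor \eta \min(G_\ell, G_r)\rfloor$-radius; hence any two accepted candidates from the same pair are separated by at least $\lfloor \eta \min(G_\ell, G_r)\rfloor \gtrsim \eta G_0$ (using \eqref{eq_band_asym} to relate $\min(G_\ell, G_r)$ to the smallest scale in that pair, or simply to $G_1 = G_0$ at worst). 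Therefore $|\C(\mbf G, \alpha)| = O(n/G_0)$ uniformly over $\mbf G \in \mc H$.

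Next I would count the number of admissible bandwidth pairs. By construction $\mc G = \{G_1, \ldots, G_H\}$ is a Fibonacci-type sequence with $G_m = F_m G_0$, so $G_m$ grows geometrically and the stopping rule $G_H < \lfloor n/\log(n)\rfloor \le G_{H+1}$ forces $H = O(\log(n/(G_0\log n))) = O(\log n)$. The balancedness restriction \eqref{eq_band_asym} means each $G_\ell$ can be paired only with $G_r$ within a bounded ratio $C_{\text{asym}}$, and since the scales grow geometrically this is a bounded number of choices of $G_r$ per $G_\ell$; thus $|\mc H| = O(H) = O(\log n)$. Combining the two bounds, $|\C(\mc H, \alpha)| \le \sum_{\mbf G \in \mc H} |\C(\mbf G, \alpha)| = O\big((n/G_0)\log n\big)$.

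Finally I would verify Assumption~\ref{assum:cand}~(b). We have
\begin{align*}
n^{-1}\omega_n^2\,|\C(\mc H, \alpha)| = O\l(\frac{\omega_n^2 \log(n)}{G_0}\r).
\end{align*}
Since $\omega_n^2/G_0 \to 0$ by hypothesis, this does not quite suffice on its own because of the extra $\log(n)$ factor; I would sharpen the count of candidates per bandwidth by noting that at scale $G_m$ the separation is $\gtrsim \eta G_m$ rather than $\eta G_0$, so $|\C(\mbf G, \alpha)| = O(n/\min(G_\ell, G_r)) = O(n/G_m)$ for a pair built from scale index $\asymp m$. Summing the geometric series $\sum_m n/G_m \asymp n/G_0$ (with only $O(1)$ pairs per scale), we get $|\C(\mc H, \alpha)| = O(n/G_0)$ without the logarithmic loss, whence $n^{-1}\omega_n^2 |\C(\mc H, \alpha)| = O(\omega_n^2/G_0) \to 0$. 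The main obstacle I anticipate is precisely this bookkeeping: ensuring the per-scale candidate bound is $O(n/G_m)$ rather than $O(n/G_0)$ — which requires correctly handling asymmetric pairs and the $\eta$-radius in terms of $\min(G_\ell, G_r)$, together with the geometric growth of the Fibonacci scales — so that the sum over $\mc H$ collapses to $O(n/G_0)$ and the stray $\log(n)$ disappears.
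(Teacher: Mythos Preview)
Your refined argument is correct and matches the paper's proof: both bound the number of candidates at bandwidth pair $(G_\ell,G_r)$ by $n/(\eta\min(G_\ell,G_r))$, then exploit the balancedness condition~\eqref{eq_band_asym} together with the geometric growth of the Fibonacci scales to make the sum over $\mc H$ collapse to $O(n/G_0)$. The only cosmetic difference is that the paper organises the sum as the convergent double series $\sum_{\ell,r}1/(F_\ell+F_r)<\infty$ (using $\min(G_\ell,G_r)\ge (G_\ell+G_r)/(2C_{\text{asym}})$), whereas you count $O(1)$ balanced partners per scale and sum $\sum_m 1/F_m<\infty$; these are equivalent.
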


The assumption $\omega_n^2/G_0 \to 0$ is made solely to 
obtain a crude deterministic upper bound on the number of possible candidates from the smallest bandwidth.
We may replace it by a condition that directly 
limits the number of candidates detected at each bandwidth,
or an assumption on $q_n$ in combination 
with a stochastic version of Assumption~\ref{assum:cand}.

\begin{rem}\label{rem_25_mos} \hfill
\begin{enumerate}[label=(\alph*)]
\item For each $\c \in \C(\mc H, \alpha)$,
the natural detection interval $\mc I_N(\c)$ can serve as its detection interval $\mc I(\c) = (\c - G_L, \c + G_R]$,
whereby the detection distances $(G_L, G_R)$
are given by the set of bandwidths $(G_\ell, G_r)$ with which $\c$ has been detected.
Then, we have Assumption~\ref{assum:det:intervals} fulfilled by $\C(\mc H, \alpha)$ provided that
there exists a single bandwidth $G(j) \in \mc G$ satisfying $d_j^2G(j)/\xi_n \to \infty$
for each $j = 1, \ldots, q_n$,
which is readily met
under Condition~(a) of Proposition~\ref{prop:mosum:a} and Assumption~\ref{assum:penalty}.

\item It may be the case that $\C(\mc H, \alpha)$ contains
identical acceptable candidates $\c$ of $\cp_j$ returned at multiple scales,
including some $(G_\ell, G_r)$ that does not satisfy $d_j^2\min(G_\ell, G_r)/\xi_n \to \infty$. 
Against such a contingency, we propose to assign as $\mc I(\c)$ the natural detection interval
that returns the smallest $p$-value for the MOSUM test associated with the detection of $\c$.
Because the $p$-values decrease with the increase of jump size as well as 
that of bandwidths,
this strategy will recommend a reasonably large natural detection interval as $\mc I(\c)$.
In simulation studies, we use an implementation of the algorithm
which simply supposes that Assumption~\ref{assum:det:intervals} is satisfied
by the candidate generating mechanism.
\end{enumerate}
\end{rem}

The consistency of the localised pruning algorithm 
in combination with the MOSUM-based candidate generating mechanism 
follows immediately from Propositions~\ref{prop:mosum:a},~\ref{prop:mosum:b} 
and Theorem~\ref{thm:sbic:full}.

\begin{thm}
\label{cor:mosum}
Let Assumptions~\ref{assum:vep}, \ref{assum:size},  \ref{assum:penalty} and~\ref{assum:det:intervals} hold,
and suppose that the conditions in Propositions~\ref{prop:mosum:a} and \ref{prop:mosum:b} are satisfied.
Then, the localised pruning algorithm {\tt LocAlg} applied to $\C(\mc H, \alpha)$,
yields $\wh{\Cp} = \{\wh\cp_1 < \ldots < \wh\cp_{\wh{q}_n}\}$
which consistently estimates $\Cp$, i.e., 
\begin{align*}
\p\l\{\wh q_n = q_n; \, 
\max_{1 \le j \le q_n} d_{j}^2 |\wh \cp_{j} \bbI_{j \le \wh q_n} - \cp_{j}| \le \nu_n (\omegao)^2 \r\}  \to 1
\end{align*}
for any $\nu_n \to \infty$ arbitrarily slowly.
\end{thm}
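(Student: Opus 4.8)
The plan is to obtain Theorem~\ref{cor:mosum} as a direct corollary of Theorem~\ref{thm:sbic:full}: among the hypotheses of that theorem, Assumptions~\ref{assum:vep}, \ref{assum:size}, \ref{assum:penalty} and~\ref{assum:det:intervals} are assumed here outright, so the only thing left to check is that the MOSUM-based candidate set $\C := \C(\mc H, \alpha)$ satisfies Assumption~\ref{assum:cand} with a precision $\rho_n \asymp (\omegao)^2$. Once this is established, Theorem~\ref{thm:sbic:full} delivers $\wh\Cp$ with $\p\{\wh q_n = q_n;\ \max_{1\le j\le q_n} d_j^2 |\wh\cp_j \bbI_{j\le\wh q_n} - \cp_j| \le \rho_n\nu_n\} \to 1$, and substituting $\rho_n \asymp (\omegao)^2$ (absorbing the constant into the generic arbitrarily slowly diverging sequence $\nu_n$) gives the stated localisation rate $\nu_n (\omegao)^2$.

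For Assumption~\ref{assum:cand}~(a), the starting point is Condition~(a) of Proposition~\ref{prop:mosum:a}, which for each $j$ supplies a bandwidth $G(j)$ with $2G(j) \le \delta_j$ and $d_j^2 G(j) \ge c_M D_n$. After replacing $G(j)$ by the largest element of the Fibonacci grid $\mc G$ not exceeding it, which is within a constant factor of $G(j)$ by the Fibonacci construction and still meets Condition~(a) with slightly adjusted constants, we may take $G(j) \in \mc G$; then the symmetric pair $(G(j), G(j))$ trivially satisfies~\eqref{eq_band_asym} and belongs to $\mc H$, so $\C(G(j), \alpha) \subseteq \C(\mc H, \alpha) = \C$. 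Proposition~\ref{prop:mosum:a} now yields, on the set $\mc S_n$ with $\p(\mc S_n) \to 1$, that $\max_{1\le j\le q_n} \min_{\c \in \C(G(j),\alpha)} d_j^2 |\c - \cp_j| < C_M (\omegao)^2$ with probability tending to one, and the inclusion above upgrades this to $\p(\mc M_n^{(2)}) \to 1$ with $\rho_n := C_M (\omegao)^2$. This choice lies in the admissible window demanded by Assumption~\ref{assum:cand}~(a): $\rho_n = O(\omega_n^2)$ since $\omegao = O(\omega_n)$ by Remark~\ref{rem:vep:one}~(a), and $\max(\omegao, \omegat)^2 = O(\rho_n)$ since $\omegat = O(\omegao)$ in the settings considered (as in every case of Proposition~\ref{prop:vep}).

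Assumption~\ref{assum:cand}~(b) is immediate from Proposition~\ref{prop:mosum:b}, which states precisely $n^{-1}\omega_n^2 \, |\C(\mc H,\alpha)| \to 0$ under $\omega_n^2/G_0 \to 0$, part of the assumed hypotheses. Thus all the assumptions of Theorem~\ref{thm:sbic:full} hold (Assumption~\ref{assum:det:intervals} being assumed directly, and in any case guaranteed through Remark~\ref{rem_25_mos}~(a), since $d_j^2 G(j)/\xi_n \ge c_M D_n/\xi_n \to \infty$ by Condition~(a) of Proposition~\ref{prop:mosum:a} and Assumption~\ref{assum:penalty}), and the conclusion follows as described above. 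I expect the main obstacle to be the bookkeeping in the previous paragraph: one must ensure that the detecting bandwidth $G(j)$ furnished by Proposition~\ref{prop:mosum:a} can be taken inside the Fibonacci grid without destroying Condition~(a) (relying on the multiplicative density of the Fibonacci sequence and the scale-robustness of that condition up to constants), and that the precision $\rho_n \asymp (\omegao)^2$ inherited from the single-bandwidth analysis genuinely falls in the interval $[\max(\omegao,\omegat)^2,\ \omega_n^2]$ required by Assumption~\ref{assum:cand}~(a). Beyond that, the argument is a mechanical concatenation of Propositions~\ref{prop:mosum:a}, \ref{prop:mosum:b} and Theorem~\ref{thm:sbic:full}.
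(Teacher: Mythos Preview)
Your proposal is correct and follows exactly the route the paper takes: the paper's own proof is a single sentence stating that the result ``follows immediately from Propositions~\ref{prop:mosum:a},~\ref{prop:mosum:b} and Theorem~\ref{thm:sbic:full}'', and you have simply spelled out the bookkeeping behind that sentence. Your two flagged technicalities --- rounding $G(j)$ to the Fibonacci grid and the relation $\omegat = O(\omegao)$ needed for $\rho_n := C_M(\omegao)^2$ to sit inside the window of Assumption~\ref{assum:cand}~(a) --- are points the paper also leaves implicit (the latter is in fact only a conjecture, see Remark~\ref{rem:vep:one} and the discussion after~\eqref{eq:rates}), so your level of rigour matches or slightly exceeds the paper's.
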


The next corollary provides the consistency of $\wh{\Cp}$ in specific settings,
which follows directly from Proposition~\ref{prop:vep} and Theorem~\ref{cor:mosum}.
\begin{cor}
\label{cor:mosum:optimal}
 Let Assumptions~\ref{assum:size}, \ref{assum:penalty}, \ref{assum:det:intervals} 
and Condition~(a) of Proposition~\ref{prop:mosum:a} hold and $\omega_n^2/G_0 \to 0$,
with $\omega_n$ specified below.
\begin{enumerate}[label = (\alph*)]
\item {\bf Sub-Gaussianity.}
Let $\{\vep_t\}_{t = 1}^n$ meet the conditions of Proposition~\ref{prop:vep}~(a).
Then, with $\omega_n \asymp \sqrt{\log(n)}$, we have
\begin{align*}
\p\l\{\wh q_n = q_n; \, 
\max_{1 \le j \le q_n} d_{j}^2 |\wh \cp_{j} \bbI_{j \le \wh q_n} - \cp_{j}| \le \max(\log(q_n), \nu_n) \r\} 
\to 1.
\end{align*}

\item {\bf Heavy tails.} Let $\{\vep_t\}_{t = 1}^n$ meet the conditions of Proposition~\ref{prop:vep}~(b).
Then, with $\omega_n \asymp n^{1/\beta}$ for any $\beta < \alpha$, we have
\begin{align*}
\p\l\{\wh q_n = q_n; \, 
\max_{1 \le j \le q_n} d_{j}^2 |\wh \cp_{j} \bbI_{j \le \wh q_n} - \cp_{j}| \le \max(q_n^{2/\beta}, \nu_n) \r\} 
\to 1.
\end{align*}

\item {\bf Invariance principle and moment conditions.} 
Let $\{\vep_t\}_{t = 1}^n$ meet the conditions of Propositions~\ref{prop:vep}~(c) 
and~\ref{prop:mosum:a}~(c)  with $\omegao \asymp q_n^{1/\gamma} \nu_n$.
Then, with $\omega_n \asymp \max(\lambda_n\nu_n, \sqrt{\log(n)})$, we have 
\begin{align*}
\p\l\{\wh q_n = q_n; \, 
\max_{1 \le j \le q_n} d_{j}^2 |\wh \cp_{j} \bbI_{j \le \wh q_n} - \cp_{j}| \le q_n^{2/\gamma} \nu_n \r\} 
\to 1,
\end{align*}
where these rates are typically not tight.
\end{enumerate}
\end{cor}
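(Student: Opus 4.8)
The plan is to read off all three statements from Theorem~\ref{cor:mosum}, which already yields $\wh q_n = q_n$ together with $\max_{1\le j\le q_n} d_j^2|\wh\cp_j\bbI_{j\le\wh q_n}-\cp_j| \le \nu_n(\omegao)^2$ with probability tending to one for every $\nu_n\to\infty$. Corollary~\ref{cor:mosum:optimal} already postulates Assumptions~\ref{assum:size}, \ref{assum:penalty} and~\ref{assum:det:intervals}, Condition~(a) of Proposition~\ref{prop:mosum:a}, and $\omega_n^2/G_0\to 0$; so the task splits into (i) checking the hypotheses of Theorem~\ref{cor:mosum} that are not already granted --- namely Assumption~\ref{assum:vep} and Conditions~(b)--(c) of Proposition~\ref{prop:mosum:a}, with Proposition~\ref{prop:mosum:b} being immediate from $\omega_n^2/G_0\to 0$ --- and (ii) substituting the explicit rates of $\omega_n$ and $\omegao=\omegat$ supplied by Proposition~\ref{prop:vep}.

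For step~(i): Assumption~\ref{assum:vep} holds in parts~(a) and~(b) by Proposition~\ref{prop:vep}~(a)--(b), with $\omega_n\asymp\sqrt{\log n}$, $\omegao\asymp\max(\sqrt{\log q_n},\nu_n)$, respectively $\omega_n\asymp n^{1/\beta}$, $\omegao\asymp\max(q_n^{1/\beta},\nu_n)$; in part~(c), Assumption~\ref{assum:vep}~(a) holds by Proposition~\ref{prop:vep}~(c.i) with $\omega_n\asymp\max(\lambda_n\nu_n,\sqrt{\log n})$, and Assumption~\ref{assum:vep}~(b) by Proposition~\ref{prop:vep}~(c.ii) with $\omegao=\omegat\asymp q_n^{1/\gamma}\nu_n$. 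Condition~(b) of Proposition~\ref{prop:mosum:a} is then immediate, since $\p(\mc M_n^{(11)})\to 1$ is part of Assumption~\ref{assum:vep}~(a) and $D_n^{-1}\omega_n^2\to 0$ is part of Assumption~\ref{assum:size}. For Condition~(c), I would invoke Remark~\ref{rem:prop:mosum:a}~(c): in the i.i.d.\ settings of parts~(a) and~(b) the events $\mc M_n^{(12\pm)}$ impose nothing beyond $\mc M_n^{(12)}$, so Condition~(c) is subsumed by Assumption~\ref{assum:vep}~(b); in part~(c), the corresponding control is assumed outright, the hypothesis there being that $\{\vep_t\}$ meets the conditions of Proposition~\ref{prop:mosum:a}~(c). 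All hypotheses of Theorem~\ref{cor:mosum} thus hold in each case.

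For step~(ii), it remains to simplify $\nu_n(\omegao)^2$ with the rates above: in part~(a), $(\omegao)^2\asymp\max(\log q_n,\nu_n)$, so $\nu_n(\omegao)^2\asymp\max(\log q_n,\nu_n)$; in part~(b), $(\omegao)^2\asymp\max(q_n^{2/\beta},\nu_n)$, so $\nu_n(\omegao)^2\asymp\max(q_n^{2/\beta},\nu_n)$; in part~(c), $(\omegao)^2\asymp q_n^{2/\gamma}\nu_n$, so $\nu_n(\omegao)^2\asymp q_n^{2/\gamma}\nu_n$, which are the three displayed bounds. The one point needing care --- and the closest thing to an obstacle here --- is the bookkeeping of the generic slowly-diverging sequences denoted $\nu_n$: one repeatedly uses that a product or a fixed power of finitely many sequences, each tending to infinity arbitrarily slowly, is again dominated by such a sequence, so that the universal constant $C_M$ from Proposition~\ref{prop:mosum:a} and the handful of $\nu_n$ factors coming from Theorem~\ref{cor:mosum} and Proposition~\ref{prop:vep} are absorbed into the stated rates. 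Everything else is pure specialisation, and I foresee no further difficulty.
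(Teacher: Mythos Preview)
Your proposal is correct and follows exactly the approach the paper takes: the paper simply states that the corollary ``follows directly from Proposition~\ref{prop:vep} and Theorem~\ref{cor:mosum}'', and you have spelled out precisely that verification, including the check of Conditions~(b)--(c) of Proposition~\ref{prop:mosum:a} via Remark~\ref{rem:prop:mosum:a}~(c) in the i.i.d.\ cases and by direct assumption in part~(c). Your remark about absorbing the various $\nu_n$ factors into a single generic slowly-diverging sequence is the only bookkeeping needed, and you handle it correctly.
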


In light of Propositions~\ref{prop:lb} and~\ref{prop:loc},
Corollary~\ref{cor:mosum:optimal} shows that under sub-Gaussianity,
the localisation pruning applied with the MOSUM-based candidate generating procedure
yields minimax optimal rates 
both in terms of the detection lower bound in the sublinear change point regime,
and the localisation rate.
Also, even when $\{\vep_t\}_{t = 1}^n$ is heavy-tailed, 
if the number of change points $q_n$ is finite,
the combined methodology achieves the minimax optimal localisation rate.

\section{Numerical results}
\label{sec:numeric}

\subsection{Simulation results}
\label{sec:sim}

We conducted an extensive simulation study
comparing the performance of the proposed localised pruning algorithm
combined with the MOSUM- and CUSUM-based candidate generation 
discussed in Section~\ref{sec:mosum} and Appendix~\ref{sec:wbs}, respectively,
against that of a large number of competitors whose implementations are readily available in R.
We consider the five test signals from \cite{fryzlewicz2014}
and their extensions ($n \ge 2 \times 10^4$) with both frequent and sparse change points, 
in order to assess the scalability of different methods.
As error sequences, we consider i.i.d. random variables following 
Gaussian and $t_5$ distributions, and AR($1$) processes with both weak and strong autocorrelations.

Overall, the proposed localised pruning performs well according to a variety of criteria,
often performing as well as or even better than many competitors 
both in terms of the total number of estimated change points and their locations.
At the same time, the localised pruning is shown to be scalable to long signals with $n \ge 2 \times 10^4$.
Most competing methods are specifically tailored for i.i.d. Gaussian errors
and thus struggle with heavy tails or serial correlations.
On the other hand, the localised pruning applied with theoretically-motivated tuning parameters
is shown to handle such error distributions well.
Between the two different candidate generating methods,
the MOSUM-based method produces estimators of better localisation accuracy
while the CUSUM-based one tends to incur more false positives.
For a complete description of the simulation results, see Appendix~\ref{sec:sim:app}.

\subsection{Real data analysis: Array CGH data}
\label{sec:data}

In this section, we illustrate the performance of the proposed methodology 
using  array comparative genomic hybridisation (CGH) data
that has previously been analysed in the literature. 

\begin{figure}[htbp]
\centering
\includegraphics[width=1\textwidth]{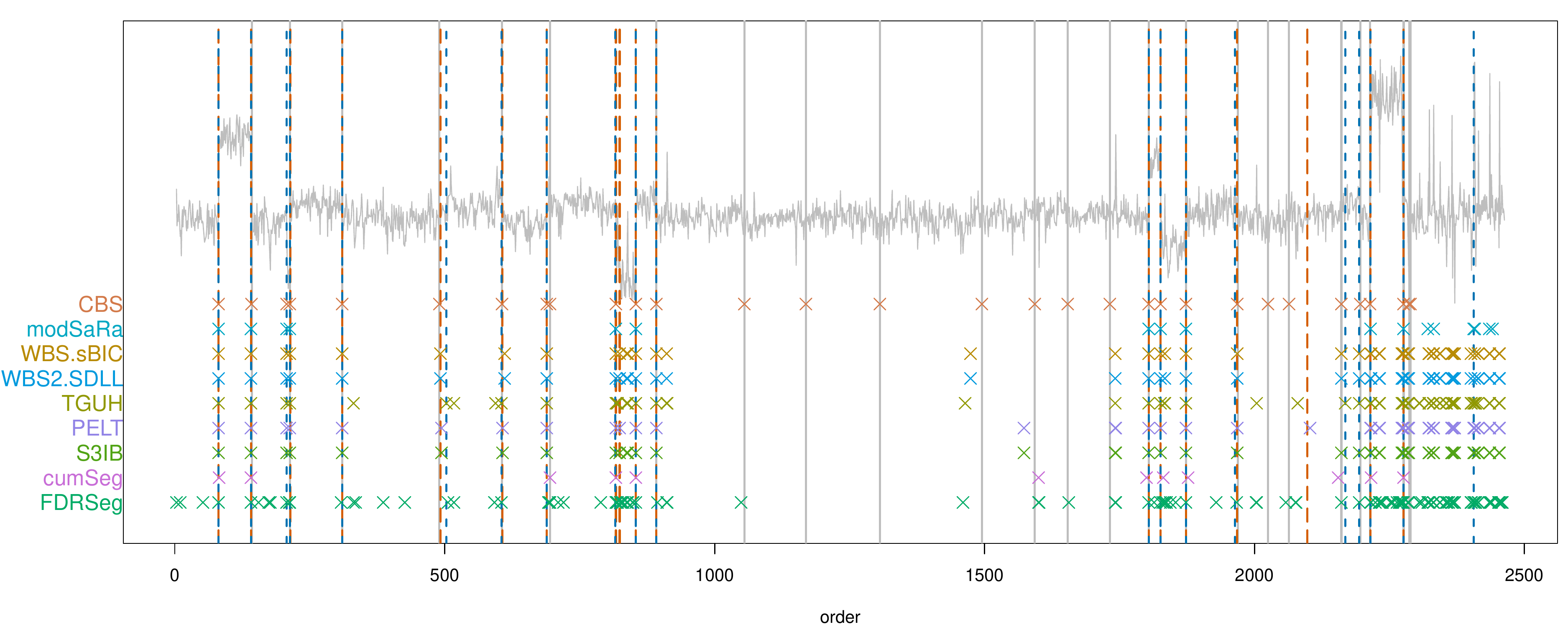}
\caption{Normalised copy number ratios of a comparison of DNA from cell strain S$0034$.
Vertical solid lines indicate the boundaries between chromosomes,
longdashed lines are change points estimated by MoLP
and dashed lines are those estimated by CuLP.
Change-point estimators from different methods are also plotted ($\times$).}
\label{fig:s0034}
\end{figure}

Microarray-based comparative genomic hybridization (array CGH) 
provides a means to quantitatively measure DNA copy number aberrations 
and to map them directly onto genomic sequences \citep{snijders2001}.
We analyse a dataset obtained from a breast tumour specimen (S$0034$) 
described in \citet{snijders2001} ($n = 2227$).
A number of algorithms have been proposed which,
regarding any gains or losses in the copy number from
the normalised copy number ratios between two DNA samples as change points,
identify their total number and locations under the model~\eqref{eq:model},
see e.g., \ \citet{olshen2004}, \citet{li2016} and \citet{niu2012}.

\citet{olshen2004} proposed to smooth the array CGH data for outlier removal 
prior to change point analysis.
Noticing that such a step may introduce serial correlations,
we choose to analyse the raw data and account for possible outliers by
adopting the penalty $\xi_n = \log^{1.1}(n)$ for the localised pruning algorithm,
with $\alpha = 0.4$ and $\eta = 0.4$ for MoLP and $C_\zeta = 0.5$ 
for CuLP.
In addition to the methods included in the comparative simulation study in Section~\ref{sec:sim},
we consider the circular binary segmentation algorithm of \citet{olshen2004} 
(CBS, implemented in \citet{DNAcopy})
and the modified screening and ranking algorithm of \citet{xiao2014} 
(modSaRa, implemented in \citet{modsara}).
It is important to note that the CBS takes all boundary markers between neighbouring chromosomes
as an input unlike any other procedures in consideration,
and automatically marks all of them as change points.

Figure~\ref{fig:s0034} plots the normalised fluorescence ratios from S$0034$
and the change point estimators returned by various methods,
and Table~\ref{table:s0034} reports the number of estimated change points.
Overall, MoLP and CuLP detect fewer number of change points compared to most of the competitors,
and many elements of the two sets of estimators either coincide or lie very close to each other.
Also, many change point estimators coincide with the boundary markers 
although they are detected {\it without} knowing their positions unlike the CBS.

The data exhibits heteroscedasticity particularly
beyond the genome order $2274$
where there is a dramatic increase in the variability.
Both candidate generating methods return a large number of candidates
(MoLP: $167$, CuLP: $93$) 
and our localised approach to pruning manages to reduce the size of the candidate sets
reasonably well.
On the other hand, WBS.sBIC, WBS2.SDLL, TGUH, PELT, S3IB and FDRSeg
are susceptible to returning spurious change point estimators particularly in 
this region of increased volatility.
CumSeg misses some of the change points commonly detected by many methods, 
which is consistent with the findings reported in Section~\ref{sec:sim}.

Interestingly, CuLP, WBS.sBIC, WBS2.SDLL and FDRSeg
are affected by the randomness involved in generating either the candidate estimators or the critical values,
and yield different results on different runs when applied to this data set.
It may be due to that the underlying signal is not exactly piecewise constant,
a phenomenon known as genomic waves \citep{diskin2008}. 
The results for these methods reported here were obtained 
by setting the seed of R's random number generator to be one. 

\begin{table}[htb]
\centering
\caption{Number of change points estimated from the S$0034$ data set.}
\label{table:s0034}
\resizebox{\columnwidth}{!}
{\small
\begin{tabular}{|cc|ccccccccc|}
\hline
MoLP & CuLP & CBS & modSaRa &  WBS.sBIC &  WBS2.SDLL & TGUH   & PELT & S3IB & cumSeg & FDRSeg \\ \hline
18 & 20 & 31 &     17  &     52 &      84 &      65  &     46  &     49   &    12  &    126
\\
\hline
\end{tabular}}
\end{table}

\section{Conclusions and outlook}
\label{sec:conc}

In this paper, we propose the localised pruning algorithm which,
together with a class of multiscale candidate generating procedures,
forms a two-stage methodology to data segmentation.
Adopting a truly multiscale framework,
we prove the consistency of the proposed methodology 
 in multiple change point estimation under mild conditions,
and show that it inherits the localisation property of the candidate generating mechanism. 
Theoretical properties for the second-stage localised pruning algorithm 
are discussed independently from the choice of first-stage candidate generating methods, 
allowing an easy extension of the results to other candidate generating methods.
Two examples for this choice are provided: A multiscale MOSUM procedure and a WBS algorithm.
In particular, combined with the former, the localised pruning algorithm 
achieves minimax rate optimality both in change point localisation and detection lower bound 
in those settings where such optimality results are available.
Importantly, we work with meta-assumptions on the key elements
of the change point structure and the error distribution,
the latter of which only concern the bounds given in Assumption~\ref{assum:vep}
and thus permit both heavy-tailedness and serial dependence.
In doing so, the influence of each element on our theoretical arguments
is made transparent and discussed in details, 
allowing for their easy extension to other error distributions in the future.

A comparison with competitors in terms of 
(a) theoretical properties such as the detection lower bound and the localisation rate, 
(b) computational complexity, speed and scalability to large sample sizes, and
(c) the performance in a variety of simulations and real data examples, 
shows that our proposed methodology performs universally well,
especially when combined with the MOSUM-based candidate generating method,
whose implementation is provided in the R package \texttt{mosum} available on CRAN \citep{mosum}.

While we focus on the univariate mean change point detection problem in this paper, 
there are natural ways for extending the proposed methodology to more general change point problems:
Via an appropriate transformation of the data, e.g., \ by adopting an $M$-estimation framework,
change points in the stochastic properties of interest can be made detectable
as change points in the mean of the transformed time series.
With a suitably modified information criterion,
our methodology becomes applicable to 
a variety of more complex change point scenarios, such as
the detection of changes in the mean of multivariate data; 
regression parameters (e.g., neural-network-based nonparametric (auto-)regression); 
other distributional parameters (e.g., integer-valued time series)
and robust change point detection
\citep{kirch2015detection, kirch2015use, kirch2018}.
Some first results in this direction based on the current paper 
have already been obtained in \citet{reckruhm2019estimating}, 
where the necessity for a model selection strategy in such general change point problems 
is well-motivated (see Chapter~2.4 therein).
Besides, our results can be adapted to detect parameter changes in renewal processes 
\citep{kuhn2001, messer2014}.

In light of these examples, the present work can be seen as an important first step towards 
an extended methodology for more general data segmentation problems, 
for which the literature is much scarcer compared to the literature on
change point detection in the mean of univariate time series.

\section{Main proofs}

In this section, we provide the proofs of Theorems~\ref{thm:sbic}--\ref{thm:sbic:full}
which establish the consistency of the localised pruning algorithm
combining {\tt LocAlg} and {\tt PrunAlg}.
They are based on Propositions~\ref{prop:detectability}--\ref{prop:step:three}, 
whose proofs can be found in Appendix~\ref{sec:pf:main:props}.
Throughout, we assume that Assumptions~\ref{assum:vep} and~\ref{assum:cand}~(a)
(and Assumption~\ref{assum:det:intervals} for Theorem~\ref{thm:sbic:full}) hold.
In addition, we work under the following non-asymptotic bound:
\begin{align}
\max\l(
\frac{\omegao}{\sqrt{\nu_n\,\rho_n}}, \, 
{\frac{\omegat}{\nu_n\sqrt{\rho_n}}}, \, 
\frac{Q_n\omega_n^2}{n}, \,
\frac{\xi_n}{D_n}, \,
\frac{\rho_n\nu_n}{\xi_n}, \,
\frac{\omega_n^2}{\xi_n}, \,
\frac{1}{\nu_n}
\r) \le \frac{1}{M} 
\label{eq:rates}
\end{align}
for some $M > 0$, which holds for all $n \ge n(M)$ for some large enough $n(M)$.
This replaces the asymptotic conditions in Assumptions~\ref{assum:size}, 
\ref{assum:penalty} and~\ref{assum:cand}. 
 Here, we regard $\rho_n$ as the precision originally attained by 
a candidate generating mechanism.
If $\max(\omegao, \omegat)^2 = O(\rho_n)$ as in Assumption~\ref{assum:cand}~(a),
\eqref{eq:rates} is fulfilled by $\nu_n \to \infty$ 
arbitrarily slowly as stated in the theorem.
If not, the assertions still hold for any $\nu_n$ fulfilling the above.
Also, when $\rho_n = O(\omega_n^2)$ is not met, 
the assertions continue to hold but with a penalty parameter
greater than the acceptable precision, which is reflected in~\eqref{eq:rates}.
In the proofs of Propositions~\ref{prop:detectability}--\ref{prop:step:three}, 
we state the precise requirement on the ratios in the LHS of \eqref{eq:rates} 
each instance they appear;
while this allows to make a tighter bound on each term
with which non-asymptotic results are readily derived, we omit such a detailed analysis here and simply state
that the assertion in~\eqref{eq:rates} holds for $n$ large enough.

We write $\sic(\mc A) = \sic(\mc A|\mc C, \wh{\Cp}, s, e)$
where there is no confusion since,
for given $s$ and $e$,
the difference between $\sic(\mc A|\mc C, \wh{\Cp}, s, e)$ 
and $\sic(\mc A^\prime|\mc C, \wh{\Cp}, s, e)$ does not depend on candidates outside $(s, e)$
for any $\mc A, \mc A^\prime \subset \mc C \cap (s, e)$.
For a change point currently under consideration, say $\cp_{\circ}$,
we write its neighbouring change points as $\cp_{\pm}$ 
(i.e., $\Cp \cap (\cp_{-}, \cp_+) = \{\cp_\circ\}$)
allowing for $\cp_- = 0$ and $\cp_+ = n$,
and denote the associated jump sizes by $d_\circ$ and $d_{\pm}$, respectively.

For any given interval $(s, e]$, 
Proposition \ref{prop:detectability} establishes the sure detectability
of any change point in $\Cp^{(s, e)}$ as defined in \eqref{eq:sure:detect},
as well as the {\it un}detectability of any change point not belonging to
$\bar{\Cp}^{(s, e)}$ as defined in \eqref{eq:all:detect}.
\begin{prop}
\label{prop:detectability}
For any $0 \le s < e \le n$ (with $\Cp \cap (s, e) \ne \emptyset$) and $\cp_\circ \in \Cp \cap (s, e)$, 
let $\mc A \subset \mc D = \C \cap (s, e)$ denote a set of candidate estimators where
$\c_\pm \in \mc A \cup \{s, e\}$ satisfy $\cp_\circ\in  (\c_-, \c_+)$
as well as $\mc A \cap (\c_-, \c_+) = \emptyset$.
Then, there exist universal constants
$c^*, C^* \in (0, \infty)$ with $c^* < C^*$, 
with which the following statements hold on $\mc M_n$ 
for $n$ large enough:
Let
\begin{align}
	\max\l\{ d_+^2(\c_+ - \cp_+) \cdot \bbI_{\c_+ \ge \cp_+}, 
	d_-^2(\cp_- - \c_-) \cdot \bbI_{\c_- \le \cp_-} \r\} \le C^*\xi_n.
\notag
\end{align}
\begin{enumerate}[label = (\alph*)]
\item If $d_\circ^2 \min(\cp_\circ - \c_-, \c_+ - \cp_\circ) \ge C^*\xi_n$, we have 
$\sic(\mc A) > \sic(\mc A \cup \{\c_\circ^\prime\})$
for all $\c_\circ^\prime \in \mc V_\circ^\prime$.

\item Suppose $\cp_- < \c_-$ and $d_{\circ}^2(\cp_{\circ} - \c_-) < c^*\xi_n$.
Then, if either $\cp_+ > \c_+$ or $|\c - \cp_{\circ}| < (\cp_+ - \c)$, 
we have $\sic(\mc A) < \sic(\mc A \cup \{\c\})$.

\item Suppose $\c_+ < \cp_+$ and $d_{\circ}^2(\c_+ - \cp_{\circ}) < c^*\xi_n$.
Then, if either $\c_- > \cp_-$ or $|\c - \cp_{\circ}| < (\c - \cp_-)$, 
we have $\sic(\mc A) < \sic(\mc A \cup \{\c\})$.
\end{enumerate}
\end{prop}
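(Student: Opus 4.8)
The plan is to work directly on the set $\mc M_n$ and to track how $\sic(\mc A)$ changes when a single candidate $\c_\circ^\prime$ (resp.\ $\c$) is added to $\mc A$. Writing $\sic(\mc A \cup \{\c\}) - \sic(\mc A) = \frac{n}{2}\log\{\rss(\mc A \cup \{\c\})/\rss(\mc A)\} + \xi_n$, and using $\log(1+x) \le x$ together with the observation that $\rss(\mc A)/n$ is bounded away from $0$ and $\infty$ on $\mc M_n^{(11)}$ (since the errors have variance bounded away from $0$ and $C$), the sign of the $\sic$-difference is governed by a comparison between $\xi_n$ and (a constant multiple of) the reduction in $\rss$ obtained by inserting the extra split point. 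The key algebraic fact I would isolate first is an exact formula for this $\rss$ reduction: inserting a point $\c$ into an interval $(\c_-, \c_+]$ whose data has sample mean $\bar{X}_{(\c_-+1):\c_+}$ reduces the $\rss$ by $\frac{(\c - \c_-)(\c_+ - \c)}{\c_+ - \c_-}(\bar{X}_{(\c_-+1):\c} - \bar{X}_{(\c+1):\c_+})^2$, i.e.\ by the squared CUSUM contrast at $\c$ within $(\c_-, \c_+]$. Everything then reduces to lower/upper bounding that CUSUM contrast.

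For part (a): on the segment $(\c_-, \c_+]$ the true signal has exactly one change of size $|d_\circ|$ at $\cp_\circ$, plus possible small overhangs at $\c_\pm$ of order $d_\pm^2(\c_+ - \cp_+) \le C^*\xi_n$ etc. Decomposing $\bar{X}_{(\c_-+1):\c_\circ^\prime} - \bar{X}_{(\c_\circ^\prime+1):\c_+}$ into its signal part and its noise part, I would show the signal part is of order $|d_\circ|$ up to errors controlled by the overhang terms and by $\rho_n$ (since $\c_\circ^\prime$ is acceptable, $d_\circ^2|\c_\circ^\prime - \cp_\circ| \le \rho_n\nu_n$), while the noise part is controlled on $\mc M_n^{(11)}$ (and $\mc M_n^{(12)}$ for the near-$\cp_\circ$ fluctuations) by terms of order $\omega_n^2 / (\c_+ - \c_-)^{1/2}$ and similar. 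Multiplying the squared contrast by $\frac{(\c_\circ^\prime - \c_-)(\c_+ - \c_\circ^\prime)}{\c_+ - \c_-}$ and using $d_\circ^2\min(\cp_\circ - \c_-, \c_+ - \cp_\circ) \ge C^*\xi_n$, the $\rss$ reduction is at least $c\, d_\circ^2 \min(\cp_\circ - \c_-, \c_+ - \cp_\circ) \ge c\, C^*\xi_n$ minus lower-order terms ($\omega_n^2$, $\xi_n$ from the overhangs, $\rho_n\nu_n$), all of which are $o(\xi_n)$ or dominated by choosing $C^*$ large enough via \eqref{eq:rates}. Hence the reduction exceeds $2\xi_n/\text{(const)}$ and the $\sic$ strictly decreases; choosing the universal constant $C^*$ large enough absorbs all the implicit constants.

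Parts (b) and (c) are symmetric, so I would only do (b). Here $\cp_\circ$ lies very close to $\c_-$ in the sense $d_\circ^2(\cp_\circ - \c_-) < c^*\xi_n$, and the extra candidate $\c$ is strictly inside $(\c_-, \c_+)$; the geometric hypothesis (either $\cp_+ > \c_+$, or $|\c - \cp_\circ| < \c_+ - \c$) guarantees that $\c$ does not straddle $\cp_\circ$ in a way that would let it capture the jump — more precisely it ensures the signal is (nearly) constant on $(\c_-, \c_+]$ to the left and right of $\c$ up to the tiny overhangs, so the squared contrast at $\c$ is purely noise plus a piece of size $O(c^*\xi_n / (\c - \c_-))$ coming from the $\cp_\circ$-jump being squeezed near the boundary. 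Bounding the $\rss$ reduction by $c^*\xi_n + (\text{noise of order }\omega_n^2) = o(\xi_n)$ via \eqref{eq:rates} and choosing $c^*$ small, it is strictly less than $2\xi_n/\text{(const)}$, so adding $\c$ strictly \emph{increases} $\sic$.

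The main obstacle I anticipate is the bookkeeping for part (a) when $\c_\pm$ themselves carry small overhang biases: one must show that the contrast computed against the \emph{empirical} means on $(\c_-, \c_\circ^\prime]$ and $(\c_\circ^\prime+1, \c_+]$ is still of order $|d_\circ|$ and not degraded below the $\sqrt{\xi_n}/|d_\circ|$ scale by the combination of (i) the $\rho_n\nu_n$ mislocation of $\c_\circ^\prime$, (ii) the $\le C^*\xi_n$ overhangs at $\c_\pm$, and (iii) the noise fluctuations near $\cp_\circ$ on scale $d_\circ^{-2}a_n$ controlled by $\mc M_n^{(12)}$. Getting the constants to line up — so that a single pair $(c^*, C^*)$, independent of the signal and the error law, works uniformly over all $(s,e]$ and all $\cp_\circ$ — is the delicate part, and is where \eqref{eq:rates} (with its arbitrarily slow $\nu_n$) is used to push every cross term to lower order.
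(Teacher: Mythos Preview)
Your proposal is correct and follows essentially the same route as the paper: the identity ``$\rss$-reduction $=$ squared CUSUM'' (Lemma~\ref{lem:cusum}), the two-sided bound $C'n \le \rss(\mc A) \le C''n$ (Lemma~\ref{lem:rss:n}), and the resulting sandwich $\underline{C}\,|\mc X_{\c_-,\c,\c_+}|^2 - \xi_n \le \sic(\mc A\setminus\{\c\}) - \sic(\mc A) \le \bar{C}\,|\mc X_{\c_-,\c,\c_+}|^2 - \xi_n$ (Lemma~\ref{lem:sbic}) are exactly the paper's reduction, after which the constants are fixed as $C^* > \max(1, 2/\underline{C})$ and $c^* < \min(1, 1/\bar{C})$.

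Two small corrections. First, only $\mc M_n^{(11)}$ is used here: the noise CUSUM is bounded crudely by $|\mc X_{\c_-,\c,\c_+}(\vep)| \le 2\omega_n = o(\sqrt{\xi_n})$, and the finer events $\mc M_n^{(12)}$, $\mc M_n^{(13)}$ play no role in this proposition (they enter only in Proposition~\ref{prop:step:three}). Second, your reading of the geometric hypothesis in~(b) is slightly off: the condition is $|\c - \cp_\circ| < \cp_+ - \c$ (not $\c_+ - \c$), and its role is \emph{not} to make the signal ``nearly constant'' on either side of $\c$ --- the jump at $\cp_\circ$ is still there --- but rather, when $r_+ > 0$ (i.e.\ $\cp_+ \le \c_+$), to force $\c_+ - \c \ge (\cp_+ - \cp_\circ)/2$, so that the overhang contribution $\sqrt{(\c-\c_-)(\c_+-\c)/(\c_+-\c_-)}\cdot r_+|d_+|/(\c_+-\c)$ is bounded by $C^*\xi_n/\sqrt{D_n} = o(\sqrt{\xi_n})$. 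With that in place, the signal CUSUM at $\c$ is at most $\sqrt{c^*\xi_n}(1+o(1))$ and the conclusion follows from the upper half of Lemma~\ref{lem:sbic}.
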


Throughout, for any $\c_\pm, \c_\circ \in \C \cup \{0, n\}$ with $\c_- < \c_\circ < \c_+$,
we refer to $\c_\circ$ as detecting $\cp_\circ \in \Cp \cap (\c_-, \c_+]$
if $\cp_\circ = \arg\min_{\cp \in \Cp \cap (\c_-, \c_+]} |\c_\circ - \cp|$,
i.e., its nearest change point within $(\c_-, \c_+]$ is $\cp_\circ$,
even though there may be some $\cp_j \notin (\c_-, \c_+]$ closer to $\c_\circ$ than $\cp_\circ$.

Proposition \ref{prop:step:two} states that
when a given set $\mc A$ already contains an acceptable candidate for a change point in a local environment,
$\sic$ increases if another candidate detecting the same change point is added to $\mc A$, 
as well as that adding spurious candidates increases $\sic$.
\begin{prop}
\label{prop:step:two}
For any $0 \le s < e \le n$ and some $\c_{\circ} \in \mc D = \C \cap (s, e)$, 
let $\mc A \subset \mc D \setminus \{\c_\circ\}$ with $\c_\pm \in \mc A \cup \{s, e\}$ 
chosen such that $\c_- < \c_\circ < \c_+$ and $(\c_-, \c_+) \cap \mc A = \emptyset$.
Further, we suppose that $\c_\pm$ satisfy
\begin{enumerate}[label = (\alph*)]
\item $\Cp \cap (\c_-, \c_+] = \emptyset$, or
\item if $\Cp \cap (\c_-, \c_+] \ne \emptyset$, then for any $\cp_j \in \Cp \cap (\c_-, \c_+]$,
we have $d_j^2\min(\cp_j - \c_-, \c_+ - \cp_j) \le C^*\xi_n$. Additionally, for $\cp_{\circ} \in \Cp \cap (\c_-, \c_+]$ detected by $\c_\circ$, either
\begin{enumerate}[label=(\roman*)]
\item at least one of $\c_\pm$ is acceptable, i.e., $d_\circ^2\min(\cp_\circ - \c_-, \c_+ - \cp_\circ) \le \rho_n\nu_n$, or
\item $d_{\circ}^2|\c_{\circ}-\cp_{\circ}|>\wt{C}\,\xi_n$ for $\wt{C}>\max(C^*,\bar{C}(C^*)^2)$ with 
$\bar{C}$ as defined in Lemma~\ref{lem:sbic}.
\end{enumerate}
\end{enumerate}
Then, adding $\c_\circ$ to $\mc A$ yields an increase of $\sic$,
i.e., for $n$ large enough,
\begin{align*}
\sic(\mathcal{A}) < \sic(\mathcal{A} \cup \{\c_{\circ}\}) \quad \text{on} \quad \mathcal{M}_n.
\end{align*}
\end{prop}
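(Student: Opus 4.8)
The plan is to reduce the statement to an upper bound on the drop in residual sum of squares caused by inserting $\c_\circ$ into $\mc A$. Write $\rss_0=\rss(\mc A\cup\wh\Cp\cup(\mc C\setminus\mc D))$ and $\rss_1=\rss((\mc A\cup\{\c_\circ\})\cup\wh\Cp\cup(\mc C\setminus\mc D))$. Because $\c_\pm\in\mc A\cup\{s,e\}$, $(\c_-,\c_+)\cap\mc A=\emptyset$, and (by the convention fixed before Theorem~\ref{thm:sbic}) every partition point inside $(s,e)$ belongs to $\mc A$, the interval $(\c_-,\c_+]$ is a single block of the partition behind $\rss_0$, split at $\c_\circ$ in the one behind $\rss_1$; hence
\begin{align*}
\rss_0-\rss_1=\mc X_{\c_-,\c_\circ,\c_+}^2(X),
\end{align*}
the squared CUSUM statistic of~\eqref{eq:cusum}. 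Since $\sic(\mc A\cup\{\c_\circ\})-\sic(\mc A)=\tfrac n2\log(\rss_1/\rss_0)+\xi_n$ and, on $\mc M_n$, $\rss_1\ge\bar c\,n$ for some $\bar c>0$ (this uses Lemma~\ref{lem:sbic}, which bounds the RSS from below uniformly over partitions with $O(Q_n)$ points), the inequality $\log(1+x)\le x$ reduces the claim to showing
\begin{align*}
\mc X_{\c_-,\c_\circ,\c_+}^2(X)<2\bar c\,\xi_n\qquad\text{on }\mc M_n\text{ for }n\text{ large.}
\end{align*}
I would then split $\mc X_{\c_-,\c_\circ,\c_+}(X)=\mc X_{\c_-,\c_\circ,\c_+}(f)+\mc X_{\c_-,\c_\circ,\c_+}(\vep)$: on $\mc M_n^{(11)}$ the noise CUSUM is at most $2\omega_n$ for any admissible triple, so $\mc X_{\c_-,\c_\circ,\c_+}^2(\vep)\le4\omega_n^2=o(\xi_n)$ by Assumption~\ref{assum:penalty}, and it remains to control the signal CUSUM.

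The signal CUSUM is handled case by case. Under~(a), $f$ is constant on $(\c_-,\c_+]$, so $\mc X_{\c_-,\c_\circ,\c_+}(f)=0$. Under~(b), note first that $(\c_-,\c_+]$ contains at most two change points: a change point $\cp_j$ whose two change-point neighbours both lie in $(\c_-,\c_+]$ satisfies $\min(\cp_j-\c_-,\c_+-\cp_j)\ge\delta_j$, hence $d_j^2\min(\cp_j-\c_-,\c_+-\cp_j)\ge D_n$, which for $n$ large contradicts the hardness hypothesis since $D_n/\xi_n\to\infty$. Writing $\mathrm{SS}_f(I)=\sum_{t\in I}(f_t-\bar f_I)^2$, one has $\mc X_{\c_-,\c_\circ,\c_+}^2(f)=\mathrm{SS}_f((\c_-,\c_+])-\mathrm{SS}_f((\c_-,\c_\circ])-\mathrm{SS}_f((\c_\circ,\c_+])$. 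In case~(b)(i), if say $\c_-$ is acceptable for $\cp_\circ$ then $d_\circ^2(\cp_\circ-\c_-)\le\rho_n\nu_n=o(\xi_n)$, so the contribution of $\cp_\circ$ to each of the three sums is $o(\xi_n)$; the contribution of the remaining hard change point $\cp_\bullet$, if present, cancels up to $o(\xi_n)$ between $\mathrm{SS}_f((\c_-,\c_+])$ and the half of the split that contains $\cp_\bullet$, since both intervals share the endpoint of $(\c_-,\c_+]$ that is at distance $\le d_\bullet^{-2}C^*\xi_n$ from $\cp_\bullet$ while their other endpoints are at distance $\gg d_\bullet^{-2}\xi_n$ from it; hence $\mc X_{\c_-,\c_\circ,\c_+}^2(f)=o(\xi_n)$. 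In case~(b)(ii), $\c_\circ$ detects $\cp_\circ$ and $d_\circ^2|\c_\circ-\cp_\circ|>\wt C\,\xi_n$; since $\wt C>C^*$, hardness forces the distance from $\cp_\circ$ to the endpoint of $(\c_-,\c_+]$ on $\c_\circ$'s side to be at most $d_\circ^{-2}C^*\xi_n\ll|\c_\circ-\cp_\circ|$, which pins $\c_\circ$ strictly between $\cp_\circ$ and its neighbouring change point and makes the gain from splitting at $\c_\circ$ only $\mc X_{\c_-,\c_\circ,\c_+}^2(f)\le (C^*)^2\xi_n/\wt C+o(\xi_n)$ (the energy of $\cp_\bullet$, if present, again cancels as in~(b)(i)). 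Choosing $\wt C>\bar C(C^*)^2$ with $\bar C$ as in Lemma~\ref{lem:sbic} makes this at most $\bar c\,\xi_n+o(\xi_n)$.

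Collecting the cases gives $\mc X_{\c_-,\c_\circ,\c_+}^2(X)\le\bar c\,\xi_n+o(\xi_n)<2\bar c\,\xi_n$ on $\mc M_n$ for $n$ large, which is the inequality needed. The main obstacle is the geometry of case~(b), especially~(b)(ii): one must track the exact position of $\c_\circ$ relative to $\cp_\circ$ and its neighbouring change point, verify the cancellation of the ``far'' change point's energy between the undivided block and the appropriate half of the split, and --- crucially --- coordinate the universal constants $c^*$, $C^*$, $\wt C$ and $\bar C$ so that the leading constant in the bound on $\mc X_{\c_-,\c_\circ,\c_+}^2(f)$ stays strictly below $2\bar c$; this is exactly why the statement requires $\wt C>\bar C(C^*)^2$.
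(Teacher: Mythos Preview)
Your strategy coincides with the paper's: reduce (via the upper bound in Lemma~\ref{lem:sbic}) to showing $|\mc X_{\c_-,\c_\circ,\c_+}|<\sqrt{\xi_n/\bar C}$ on $\mc M_n$, then split into signal plus noise and bound the noise by $2\omega_n=o(\sqrt{\xi_n})$. (A small misattribution: the uniform lower bound $\rss\ge\bar c\,n$ is Lemma~\ref{lem:rss:n}, which is what feeds into Lemma~\ref{lem:sbic}.)

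Where you diverge is in bounding the signal CUSUM under~(b). The paper does not use your $\mathrm{SS}_f$ decomposition and the accompanying ``energy cancellation'' heuristic; instead it invokes the closed-form expression for $\mc X_{\c_-,\c_\circ,\c_+}(f)$ from Lemma~\ref{lem_ck_1}, which already writes the signal CUSUM as a sum of three explicit terms (one carrying $d_\circ$, one each carrying $d_\pm r_\pm$), and bounds each directly. For instance, after reducing by symmetry to $r_-=0$, the $r_+$-term satisfies $\sqrt{W_{\c_\circ}}\,r_+|d_+|/(\c_+-\c_\circ)\le\sqrt{2}\,C^*\xi_n/\sqrt{D_n}=o(\sqrt{\xi_n})$ once one checks $\c_+-\c_\circ\ge(\cp_+-\cp_\circ)/2$ from ``$\c_\circ$ detects $\cp_\circ$''; the $d_\circ$-term is $\le\sqrt{\rho_n\nu_n}=o(\sqrt{\xi_n})$ in~(b.i) and $\le (C^*/\sqrt{\wt C})\sqrt{\xi_n}$ in~(b.ii), which is where $\wt C>\bar C(C^*)^2$ enters. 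Your $\mathrm{SS}_f$ route can in principle be made rigorous, but as written the cancellation step is not: with two change points in $(\c_-,\c_+]$, $\mathrm{SS}_f$ is not additive across change points, so one cannot literally isolate and subtract ``$\cp_\bullet$'s contribution'' from the whole-block term and the half-block term separately. The cleanest way to substantiate your intuition is exactly to expand $\mc X_{\c_-,\c_\circ,\c_+}(f)$ via Lemma~\ref{lem_ck_1}, at which point the argument becomes the paper's.
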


The next proposition asserts that a set containing an unacceptable candidate yields larger $\sic$ 
than the one replacing it with a strictly valid estimator,
when the corresponding change point is detectable in the interval of consideration.
\begin{prop}
\label{prop:step:three}
For any $0 \le s < e \le n$ (with $\bar{\Cp}^{(s, e)} \ne \emptyset$) and $\cp_\circ \in \bar{\Cp}^{(s, e)}$,
let $\mc A \subset \mc D = \C \cap (s, e)$ be any candidate subset with
$\c_\pm \in \mc A \cup \{s, e\}$ satisfying 
$\cp_\circ \in (\c_-, \c_+)$, $\mc A \cap (\c_-, \c_+) = \emptyset$, 
$d_\circ^2 |\c_\pm - \cp_\circ| \ge c^*\xi_n$,
as well as
\begin{align*}
\max\l\{d_+^2(\c_+ - \cp_+) \cdot \bbI_{\c_+ \ge \cp_+}, \, 
d_-^2(\cp_- - \c_-) \cdot \bbI_{\c_- \le \cp_-}\r\} \le C^*\xi_n.
\end{align*}
Denote by $\c_{\circ}^*\in \mc V_\circ^*$ a strictly valid estimator for $\cp_{\circ}$, 
and by $\c_\circ$ an estimator detecting $\cp_\circ$ within $(\c_-, \c_+]$
which satisfies $d_\circ^2 |\c_\circ - \cp_\circ| \le \wt{C}\xi_n$
with $\wt{C}$ as in Proposition~\ref{prop:step:two},
while being unacceptable for $\cp_\circ$.
Then, adding $\c^*_{\circ}$ to $\mc A$ yields a greater reduction in the
$\rss$ than adding $\c_\circ$, 
i.e., for $n$ large enough,
\begin{align*}
\sic(\mathcal{A} \cup \{\c_{\circ}\}) > \sic(\mathcal{A} \cup \{\c_{\circ}^*\})
 \quad \text{on} \quad \mathcal{M}_n.
\end{align*}
\end{prop}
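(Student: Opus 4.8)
The plan is to compare $\sic(\mc A \cup \{\c_\circ\})$ and $\sic(\mc A \cup \{\c_\circ^*\})$ by decomposing the residual sums of squares. Since the penalty terms are identical (both sets have the same cardinality), it suffices to show that $\rss(\mc A \cup \{\c_\circ\}) > \rss(\mc A \cup \{\c_\circ^*\})$ on $\mc M_n$, because $\sic$ is increasing in $\rss$ through the $\frac n2\log(\rss/n)$ term (and the gap in $\rss$ will be shown to be of order $\xi_n \gg 1$, which is more than enough to control the logarithm after a Taylor expansion as in Lemma~\ref{lem:sbic}). First I would restrict attention to the interval $(\c_-, \c_+]$: since $\mc A$ contains no candidates strictly inside this interval and $\c_\pm \in \mc A \cup \{s,e\}$, the $\rss$ contributions from segments outside $(\c_-, \c_+]$ are common to both configurations, so only the local decomposition on $(\c_-, \c_+]$ matters.

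The key algebraic step is the standard identity expressing the reduction in $\rss$ achieved by inserting a single split point into an interval: for any interval $(a, b]$ and split point $a < m < b$, the reduction equals the squared CUSUM quantity $\mc X_{a, m, b}(X)^2 = \frac{(m - a)(b - m)}{b - a}\l(\bar X_{(a+1):m} - \bar X_{(m+1):b}\r)^2$. I would apply this with $(a, b] = (\c_-, \c_+]$ and $m \in \{\c_\circ, \c_\circ^*\}$, then substitute $X_t = f_t + \vep_t$ and split each CUSUM into a signal part and a noise part. Because $\cp_\circ \in \bar\Cp^{(s,e)}$ and both $\c_\circ, \c_\circ^*$ detect $\cp_\circ$ within $(\c_-, \c_+]$, with $\c_\circ^*$ strictly valid (so $d_\circ^2|\c_\circ^* - \cp_\circ| \le \rho_n$) and $\c_\circ$ unacceptable but still satisfying $d_\circ^2|\c_\circ - \cp_\circ| \le \wt C \xi_n$, the signal part of the CUSUM at $\c_\circ^*$ is essentially $d_\circ^2 \cdot (\text{effective length})$ up to the small discrepancy $d_\circ^2|\c_\circ^* - \cp_\circ| = O(\rho_n)$, whereas at $\c_\circ$ the signal part loses a term of order $d_\circ^2|\c_\circ - \cp_\circ|$. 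Using the bounds $d_\circ^2|\c_\pm - \cp_\circ| \ge c^*\xi_n$ and the neighbouring-change-point control $d_\pm^2(\cdot) \le C^*\xi_n$, together with the event $\mc M_n^{(11)}$ to bound the cross terms between signal and noise (each cross term is $O(|d_\circ| \sqrt{\text{length}} \cdot \omega_n) = o(d_\circ^2 \cdot \text{length})$ since $D_n^{-1}\omega_n^2 \to 0$) and $\mc M_n^{(12)}, \mc M_n^{(13)}$ to bound the purely-noise CUSUM contributions near $\cp_\circ$ at the relevant scales, I would show that the reduction in $\rss$ from inserting $\c_\circ^*$ exceeds that from inserting $\c_\circ$ by an amount of order $d_\circ^2|\c_\circ - \cp_\circ| - O(\rho_n \nu_n)$. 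Since $\c_\circ$ is unacceptable, $d_\circ^2|\c_\circ - \cp_\circ| > \rho_n \nu_n$, so this difference is strictly positive for $n$ large, and in fact is $\gtrsim \xi_n$ by the assumed lower bounds, which suffices to propagate the inequality through the logarithm in $\sic$.

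The main obstacle I anticipate is the bookkeeping of the noise terms at the correct scales: the relevant "effective lengths" governing the CUSUM at $\c_\circ$ and $\c_\circ^*$ are $\min(\cp_\circ - \c_-, \c_+ - \cp_\circ)$ on one side and the small offsets $|\c_\circ - \cp_\circ|, |\c_\circ^* - \cp_\circ|$ on the other, and these must be matched precisely to the windows appearing in $\mc M_n^{(12)}$ (scale $d_\circ^{-2} a_n$, the "long" side up to the neighbouring change point) and $\mc M_n^{(13)}$ (scale $d_\circ^{-2} b_n$, the "short" side of order the localisation precision), so that the stochastic fluctuations are controlled by $\omegao$ and $\omegat$ respectively rather than the cruder $\omega_n$. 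Getting the case analysis right — in particular when $\c_\pm$ sits between $\cp_\circ$ and a neighbouring change point $\cp_\pm$, which is exactly why the hypotheses impose both $d_\circ^2|\c_\pm - \cp_\circ| \ge c^*\xi_n$ and $d_\pm^2(\c_+ - \cp_+)\bbI_{\c_+ \ge \cp_+} \le C^*\xi_n$ — and ensuring the constants $c^*, C^*, \wt C$ chosen here are consistent with those fixed in Propositions~\ref{prop:detectability} and~\ref{prop:step:two} and in Lemma~\ref{lem:sbic}, is where the argument is most delicate; the algebra itself is a routine CUSUM decomposition once the scales are pinned down.
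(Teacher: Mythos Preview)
Your plan matches the paper's proof: reduce to $|\mc X_{\c_-, \c_\circ^*, \c_+}|^2 > |\mc X_{\c_-, \c_\circ, \c_+}|^2$ via Lemma~\ref{lem:cusum}, then isolate a leading signal term (the paper writes it as $A_1(\wt{\mc F}) = \mc W_{\cp_\circ}\wt{\mc F}_{\cp_\circ}^2 - \mc W_{\c_\circ}\wt{\mc F}_{\c_\circ}^2 = d_\circ^2(\cp_\circ - \c_\circ)(\c_+ - \cp_\circ)/(\c_+ - \c_\circ)$ when $\c_\circ < \cp_\circ$) and show every remaining cross and noise term is $o(A_1(\wt{\mc F}))$. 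Two small corrections to watch when you write it out: (i) the leading gap $A_1(\wt{\mc F})$ is in general only $\ge \rho_n\nu_n/2$, not $\gtrsim \xi_n$ --- but since the penalty terms cancel, strict positivity of the $\rss$ difference is all you need, so no Taylor expansion is required; (ii) the finer bounds $\omegao, \omegat$ from $\mc M_n^{(12)}, \mc M_n^{(13)}$ are needed not for the ``purely-noise'' terms but for the \emph{cross} term carrying the increment $\mc E_{\c_\circ^*} - \mc E_{\c_\circ}$, since that increment lives at the fine scale $d_\circ^{-2}\rho_n$ where the crude $\omega_n$ bound would give $\omega_n/\sqrt{\rho_n\nu_n}$, which does not vanish --- and to control the weight ratios $\mc W_{\c_\circ}/\mc W_{\cp_\circ}$ cleanly the paper splits into two cases according to whether $|\c_\circ - \cp_\circ| \gtrless \min(\c_\circ - \c_-, \c_+ - \c_\circ)$.
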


\subsection{Proof of Theorem~\ref{thm:sbic}}

On $\mc M_n$, the following arguments hold uniformly in $0 \le s < e \le n$ 
and the corresponding $\mc D = \C \cap (s, e)$
for $n$ large enough.
First, we note that
\begin{enumerate}[label=(D\arabic*)]
\setlength\itemsep{0em}
\item \label{eq_c1n} 
any set $\mc A \subset \mc D$ fulfilling \ref{eq_c1} contains
at least one estimator 
satisfying $d_j^2 \min_{\c \in \mc A}|\c - \cp_j| \le C^*\xi_n$
for all $\cp_j \in \Cp^{(s, e)}$.
\end{enumerate}
We prove \ref{eq_c1n} by contradiction.
Suppose that for some $\cp_\circ \in \Cp^{(s, e)}$, the set $\mc A$ does not contain 
any candidate within its $(C^* d_\circ^{-2} \xi_n)$-environment.
To such $\mc A$, we can add, if necessary, strictly valid candidates 
until the resultant set contains one strictly valid candidate 
for each $\cp_j \in \Cp \cap (s, e) \setminus \{\cp_\circ\}$.
Then, the conditions of Proposition~\ref{prop:detectability}~(a) are met,
and adding any $\c^\prime_\circ \in \mc V^\prime_\circ$ to such a set results in a decrease of $\sic$.

Also, we can always find a subset of $\mc D$ that fulfils \ref{eq_c1}, since
\begin{enumerate}[label=(D\arabic*)]
\addtocounter{enumi}{1}
\setlength\itemsep{0em}
\item \label{eq_d2}
any $\mc A \subset \mc D$ 
containing exactly one acceptable estimator for all $\bar{\Cp}^{(s, e)}$ 
with $|\mc A| = |\bar{\Cp}^{(s, e)}|$ satisfies \ref{eq_c1}.
\end{enumerate}
To see this, adding candidates detecting $\cp_j \in \bar{\Cp}^{(s, e)}$ to $\mc A$ incurs
monotonic increase of $\sic$ by Proposition~\ref{prop:step:two}
since in each step, either (a) or (b.i) therein is fulfilled for any candidates 
$\c_\circ \in \mc D \setminus \mc A$ 
(since $\rho_n\nu_n < C^*\xi_n$ under \eqref{eq:rates} for $n$ large enough).
Similarly, when adding those detecting $\cp_j \in \Cp \cap (s, e) \setminus \bar{\Cp}^{(s, e)}$ to $\mc A$, 
Proposition~\ref{prop:detectability}~(b)--(c) applies.

Denoting by $\mc F_{[m]}$ the collection of
the subsets of $\mc D$ of cardinality $m$ that fulfil \ref{eq_c1}.
By \ref{eq_c1n}, we have $|\mc F_{[m]}| = 0$ for $m < |\Cp^{(s, e)}|$.
Also, defining $m^* = \min\{1 \le m \le |\mc D|: \, |\mc F_{[m]}| \ne \emptyset\}$, 
we have $m^* \le|\bar{\Cp}^{(s, e)}| \le |\Cp^{(s, e)}| +2\le  m^* + 2$ by \ref{eq_d2}.
Suppose now that there exists $\mc A \in \bigcup_{m^* \le m \le m^* + 2} \mc F_{[m]}$
for which
\begin{enumerate}[label = (\alph*)]
\item $\vert \mc A \cap \mc V_j^\prime \vert \ne 1$ for $\cp_j \in \Cp^{(s, e)}$, or
\item $\vert \mc A \cap \mc V_j^\prime \vert > 1$ for $\cp_j \in \bar{\Cp}^{(s, e)} \setminus \Cp^{(s, e)}$, or
\item $\mc A \setminus \bigcup_{j: \, \cp_j \in \bar{\Cp}^{(s, e)}} \mc V_j^\prime \ne \emptyset$.
\end{enumerate}
We show that such a set $\mc A$ cannot be returned by \eqref{eq_c2}.
To this end, we apply the following operations to $\mc A$. 
Because the set changes after each operation, 
we denote the active set by $\mc A^\prime$ in the following which is initially set as $\mc A^\prime = \mc A$.

\begin{enumerate}[label=\textbf{Step \arabic*:}, itemindent = 25pt]
\item If $\mc A^\prime$ contains any estimator of $\Cp \cap (s, e) \setminus \bar{\Cp}^{(s, e)}$, 
iteratively remove such estimators from $\mc A^\prime$ one at a time which,
by Proposition~\ref{prop:detectability}~(b)--(c) and \ref{eq_c1n}, 
strictly reduces the $\sic$ monotonically.
Also remove any estimator $\c_{\circ} \in \mc A^\prime$ one at a time
which is too far from its nearest change point,
say $\cp_\circ$,
in the sense that $d_\circ^2 |\c_\circ - \cp_\circ| > \wt{C}\xi_n$;
this strictly reduces the $\sic$ by Proposition~\ref{prop:step:two}~(a), (b.ii) and~\ref{eq_c1n}.

\item If $\mc A^\prime \cap \mc V_\circ^\prime = \emptyset$
for some $\cp_\circ \in \Cp^{(s, e)}$,
by \ref{eq_c1n}, we have at least one $\c_\circ \in \mc A^\prime$
satisfying $d_\circ^2|\c_\circ - \cp_\circ| \le C^*\xi_n$.
Let $\c_\circ$ be the closest estimator of $\cp_\circ$ in $\mc A^\prime$ 
and identify $\c_\pm \in \mc A \cup \{s, e\}$ such that
$(\c_-, \c_+) \cap \mc A = \{\c_\circ\}$.
When  $d_\circ^2\min(\cp_\circ - \c_-, \c_+ - \cp_\circ) < c^*\xi_n$, 
we can remove one of $\c_{\pm}$ closer to $\cp_{\circ}$ while decreasing the $\sic$. 
To see this, suppose without loss of generality (otherwise consider the time series in reverse) that this is $\c_+$.
Then, $\c_+ > \cp_{\circ}$ since $\c_{\circ}$ is the estimator closest to $\cp_{\circ}$ in $\mc A^\prime$.
Denote by $\tilde{\c}_{\circ} = \c_+$ 
and define $\tilde{\c}_\pm$ analogously as $\c_\pm$ with regards to $\tilde{\c}_\circ$
(such that $\tilde{\c}_- = \c_\circ$), and let $\tilde{d}_\circ$ denote the jump size associated with
a change point $\tilde{\cp}_\circ$.
Then, one of the followings applies.
\begin{itemize}
\item Conditions of Proposition~\ref{prop:detectability}~(b) are met by $\tilde\c_\pm$
if $\c_\circ = \tilde{\c}_- \le \cp_{\circ} = \tilde{\cp}_{\circ} < \c_+ = \tilde{\c}_{\circ}$.

\item Conditions of Proposition~\ref{prop:step:two}~(a) are met by $\tilde\c_\pm$
if $\cp_{\circ} <\tilde{\c}_- < \tilde{\c}_{\circ} < \tilde{\c}_+ \le \cp_+$.

\item Conditions of Proposition~\ref{prop:step:two}~(b.ii) hold for $\tilde\c_\pm$ and $\tilde\cp_\circ$
if $\cp_{\circ} < \tilde{\c}_- < \tilde{\c}_{\circ} < {\cp}_+ = \tilde{\cp}_{\circ} <\tilde{\c}_+$,
since in this case, 
$\tilde{d}_{\circ}^2(\tilde{\cp}_\circ - \tilde{\c}_{\circ}) =  
\tilde{d}_{\circ}^2(\tilde{\cp}_\circ - \cp_\circ)\{1 - (\tilde{\c}_\circ - \cp_\circ)/(\tilde{\cp}_\circ - \cp_\circ)\}
\ge D_n - c^*\xi_n>\wt{C}\xi_n$ for $n$ large enough.
\end{itemize}

In all cases, removing $\tilde{\c}_{\circ} = \c_+$ results in a decrease of $\sic$.
Iteratively repeat the removal and re-defining of $\c_\circ$ and $\c_\pm$
until $d_\circ^2\min(\cp_\circ - \c_-, \c_+ - \cp_\circ) \ge c^*\xi_n$.
Then, the resultant $\mc A^\prime$ and $\c_\circ$ are such that
$\mc A^\prime \setminus \{\c_\circ\}$ meets the conditions of Proposition~\ref{prop:step:three} for $\cp_\circ$.
Therefore, replacing $\c_\circ$ with any of $\c_\circ^* \in \mc V_\circ^*$
yields a reduction in the $\sic$.
Repeat the above until $|\mc A^\prime \cap \mc V_j^\prime| = 1$ for all $\cp_j \in \Cp^{(s, e)}$,
which strictly decreases $\sic(\mc A^\prime)$ monotonically.

\item If $\mc A^\prime \cap \mc V_j^\prime = \emptyset$
for some $\cp_j \in \bar{\Cp}^{(s, e)} \setminus \Cp^{(s, e)}$
yet $\mc A^\prime$ contains an estimator of $\cp_j$,
we take the same steps as in Step~2 for all such $\cp_j$ so that $|\mc A^\prime \cap \mc V_j^\prime| = 1$,
which strictly decreases $\sic(\mc A^\prime)$ monotonically.

\item If there exists $\cp_j \in \bar{\Cp}^{(s, e)}$ 
for which there are more than one estimator in $\mc A^\prime$, 
through Steps 2--3, we have $\mc A^\prime \cap \mc V_j^\prime \ne \emptyset$.
Remove the duplicate estimators one at a time 
until all $\cp_j$ with $\mc A^\prime \cap \mc V_j^\prime \ne \emptyset$
have exactly one acceptable estimator in $\mc A^\prime$
which, by Proposition~\ref{prop:detectability}~(b)--(c) or
by Proposition~\ref{prop:step:two}~(a) and (b.i), results in a strictly monotonic reduction of $\sic$.
\end{enumerate}

After Steps 1--4, we have $\mc A^\prime$ that satisfies
$\mc A^\prime \setminus \bigcup_{j: \, \cp_j \in \bar{\Cp}^{(s, e)}} \mc V_j^\prime = \emptyset$,
with $|\mc A^\prime \cap \mc V_j^\prime| = 1$ for $\cp_j \in \Cp^{(s, e)}$
and $|\mc A^\prime \cap \mc V_j^\prime| \le 1$ for $\cp_j \in \bar{\Cp}^{(s, e)} \setminus \Cp^{(s, e)}$,
as well as $\sic(\mc A^\prime) < \sic(\mc A)$
because under (a)--(c), at least one of Steps 1--4 above has to take place. 
Further, if necessary, by adding strictly valid candidates to $\mc A^\prime$
for all those $\cp_j \in \bar{\Cp}^{(s, e)} \setminus \Cp^{(s, e)}$ 
with $|\mc A^\prime \cap \mc V_j^\prime| = 0$, 
we yield $\mc A^{\prime\prime} \supset \mc A^\prime$ fulfilling \ref{eq_c1} by \ref{eq_d2} 
and of cardinality $|\bar{\Cp}^{(s, e)}|$, i.e., 
$\mc A^{\prime\prime} \in \bigcup_{m^* \le m \le m^* + 2} \mc F_{[m]}$.
Since $\mc A^\prime \subset_R \mc A^{\prime\prime}$ with $\subset_R$ defined below \eqref{eq_c2}
and $\sic(\mc A^\prime) < \sic(\mc A)$, 
this shows that $\mc A$ with candidates belonging to either of (a)--(c) cannot be returned in \eqref{eq_c2}.
In conclusion, $\wh{\Cp}^{(s, e)}$ obtained from \eqref{eq_c2}
satisfies the assertion of the theorem.

\subsection{Proof of Theorem~\ref{thm:sbic:full}}

Under \eqref{eq:rates}, 
we make the following observations:
For all $j = 1, \ldots, q_n$,
\begin{enumerate}
\item[(a)] $d_j^2 | \wh\cp_j - \cp_j | \le \rho_n \nu_n < c^* \xi_n$ for any $\wh\cp_j \in \mc V_j^\prime$, and
\item[(b)] $d_j^2 \min(\cp_j - \cp_{j - 1}, \cp_{j + 1} - \cp_j) \ge D_n > 2\max(C^* \xi_n, \rho_n\nu_n)$
\end{enumerate}
for $n$ large enough. 

In iteratively applying Steps 1--4 of {\tt LocAlg},
Theorem~\ref{thm:sbic} guarantees that $\wh{\Cp}$ contains 
only acceptable estimators of $\cp_j \in \Cp$. 
Also, each change point can belong to $\Cp^{(s, e)}$ 
defined by the interval of consideration $(s, e] = (\c_L,\c_R]$ at most once:
When $\cp_j \in \Cp^{(s, e)}$ for the first time, it gets detected by some $\wh{\cp}_j \in \mc V_j^\prime$
by Theorem~\ref{thm:sbic}.
Then, in the following iterations, either $\cp_j \notin (s, e)$, 
or some $\c \in (\mc C \cup \wh{\Cp}) \cap [\min(\cp_j, \wh{\cp}_j) \, , \,
\max(\cp_j, \wh{\cp}_j)]$ defines the endpoints of the local environment by Step~2.
In the latter case, $\cp_j$ cannot be a detectable change point 
within the interval of consideration of this particular iteration due to (a),
which guarantees that no further estimator for $\cp_j$ is added to $\wh{\Cp}$.

When there exists $\cp_j \in \bar{\Cp}^{(s, e)} \setminus \Cp^{(s, e)}$ at some iteration,
Theorem~\ref{thm:sbic} indicates that it may or may not get detected at this iteration.
If it does, an acceptable estimator of $\cp_j$ is added to $\wh{\Cp}$
and the same argument as above applies.
If not, without loss of generality, suppose $\cp_j - s \le e - \cp_j$.
By construction, $c^*\xi_n \le d_j^2(\cp_j - s) < C^*\xi_n$ and
thus from (b), we have 
\begin{align*}
d_{j - 1}^2(s - \cp_{j - 1}) = d_{j-1}^2 (\cp_j-\cp_{j-1}) 
\left\{ 1-\frac{d_j^2(\cp_j-s)}{d_j^2(\cp_j-\cp_{j-1})} \right\} 
\ge D_n - C^*\xi_n > \rho_n\nu_n,
\end{align*}
i.e., the boundary point $s$ cannot be an acceptable estimator for either $\cp_{j - 1}$ or $\cp_j$.
Consequently, it cannot have already been added to $\wh{\Cp}$ 
in the previous iterations by Theorem~\ref{thm:sbic}. 
Therefore, all acceptable estimators for $\cp_j$, 
with the possible exception of $\c_{\circ}$ identified in Step~1, 
remain in $\mc C$ by (a)--(b) and how it is reduced in Step~4 of {\tt LocAlg}.

Next, we justify the removal of $\c_\circ$ from $\mc C$ at each iteration.
Clearly, if $\c_{\circ}$ is not acceptable for any change point, 
it can be safely removed from the future consideration.
Next, suppose that $\c_\circ$ is an acceptable estimator of $\cp_j$
and $\cp_j - s \le e - \cp_j$.
\begin{enumerate}[label = (\alph*)]
\item When $\cp_j \in \Cp^{(s, e)}$, we have either $\c_\circ$ or another acceptable estimator of $\cp_j$ accepted by {\tt PrunAlg}, 
and therefore $\c_\circ$ can be removed.

\item When $\cp_j \in \bar{\Cp}^{(s, e)} \setminus \Cp^{(s, e)}$, 
if $\cp_j$ is detected at the current iteration, the same argument as in (a) applies.
If not, as shown above, 
$s$ has not been added to $\wh{\Cp}$ yet and 
by construction of the interval of consideration in Step~2, it follows that
\begin{align*}
d_j^2 G_L(\c_{\circ}) = d_j^2(\c_{\circ} - s) \le C^*\xi_n + \rho_n\nu_n = C^*\xi_n(1 + o(1)),
\end{align*}
which shows that $\c_{\circ}$ cannot fulfil \eqref{eq:assum:det:intervals} for $\cp_j$ 
(nor any other change point as it is acceptable for $\cp_j$). 
Consequently, $\c_{\circ}$ can safely be removed from $\mc C$ 
since by Assumption~\ref{assum:det:intervals} and the construction of $\mc R$ in Step~4, 
there remains at least one acceptable estimator for $\cp_j$ that fulfils \eqref{eq:assum:det:intervals} 
in $\mc C$ after the current iteration.

\item When $\cp_j \notin \bar{\Cp}^{(s, e)}$ (which is not necessarily situated within $(s, e)$),
we first consider the case where $s$ has already been accepted. 
Then by Theorem~\ref{thm:sbic}, $s$ is acceptable for some change point, say $\cp_{j^\prime}$, such that
\begin{align*}
&d_{j^\prime + 1}^2 (\cp_{j^\prime+1} - s) = d_{j^\prime+1}^2(\cp_{j^\prime+1} - \cp_{j^\prime})\,
\left\{ 1 - \frac{d_{j^\prime}^2(s-\cp_{j^\prime})}{d_{j^\prime}^2(\cp_{j^\prime + 1} - \cp_{j^\prime})} \right\}\\&
\ge D_n\left( 1-\frac{\rho_n\nu_n}{D_n} \right) > C^*\xi_n,
\end{align*}
i.e., $\cp_{j^\prime + 1}$ is either surely detectable within $(s, e)$, too close to $e$, or $\cp_{j^\prime + 1} \notin (s, e)$
to have been detected by $\c_\circ$.
Therefore, $j = j^\prime$ and $\c_{\circ}$ can safely be removed as in (a)
since there already exists an acceptable estimator $s$ in $\wh{\Cp}$.
If $s$ has not been accepted, the argument analogous to that in (b) applies.
\end{enumerate}
The case when $\cp_j - s > e - \cp_j$ is similarly handled.

The above (b)--(c) show that under Assumptions~\ref{assum:cand} and~\ref{assum:det:intervals}, 
for each $j = 1, \ldots, q_n$, 
acceptable estimators of $\cp_j$ remain in $\mc C$ until its detection
and at least one of them, when set as $\c_\circ$ in Step~1 of {\tt LocAlg},
leads $\cp_j$ to belong to $\Cp^{(s, e)}$ at some iteration,
from which we conclude that all $\cp_j \in \Cp$ are eventually detected by acceptable estimators.
Finally, $|\mc R| \ge 1$ at all iterations since $\mc R$ contains $\c_\circ$ at least,
which ensures that {\tt LocAlg} terminates eventually.

\subsection*{Acknowledgements} 

Haeran Cho was supported by the Engineering and Physical Sciences Research Council grant no. EP/N024435/1.
The authors would like to thank the Isaac Newton Institute for Mathematical Sciences for support and hospitality during the programme `Statistical scalability'  when work on this paper was undertaken. This work was supported by: EPSRC grant number EP/R014604/1.

\bibliographystyle{asa}
\bibliography{fbib}

\clearpage

\appendix

\numberwithin{equation}{section}
\numberwithin{figure}{section}
\numberwithin{table}{section}

\part*{Appendix}

\section{Data Example: Kepler light curve data}
\label{sec:kepler}

\begin{figure}[htbp]
\centering
\includegraphics[width=\textwidth]{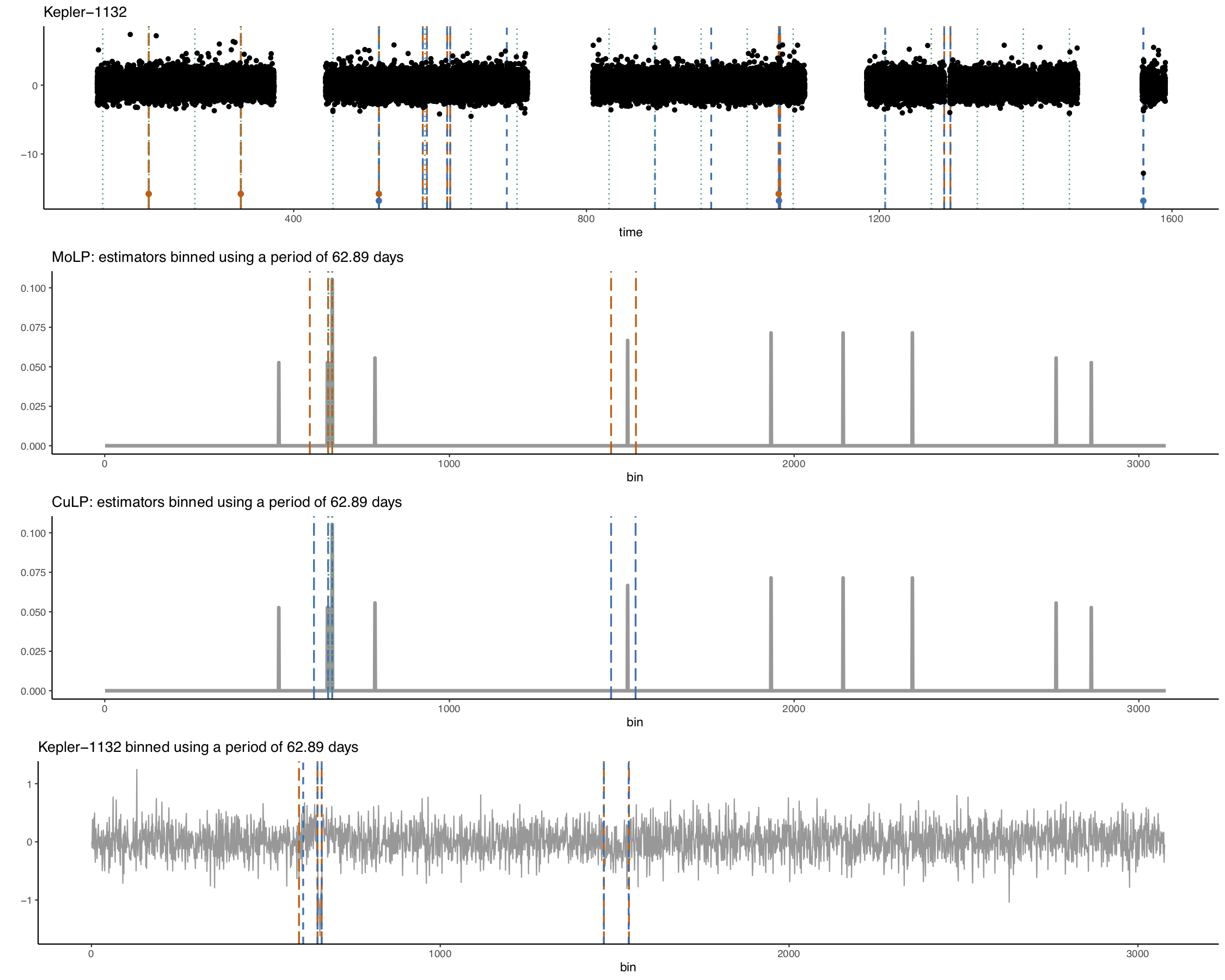}
\caption{Top: Luminosity of Kepler-1132 measured every half an hour (approximately)
with change point estimators (vertical lines; longdashed: MoLP, dashed: CuLP)
and the beginnings of the anomalous intervals detected by \citet{fisch2018}
periodically repeated every $62.89$ days (vertical dotted lines). 
Where two estimators returned by the same method lie too close to each other to be distinguished, a filled circle is added.
Second, third: Change point estimators from the top panel binned using the periodicity of $62.89$.
Bottom: Kepler-1132 data binned and aggregated using the periodicity of $62.89$ days.
In the second, third and bottom panels, 
change point estimators from the aggregated data are also given as vertical lines
(longdashed: MoLP, dashed: CuLP, dotted: \citet{fisch2018}).}
\label{fig:kepler}
\end{figure}

Kepler light curve dataset contains regularly measured luminosity of stars.
The transit of an orbiting planet results in periodically recurring segments of reduced luminosity,
which can be used for detecting exoplanets via the transit method \citep{sartoretti1999}.
Regarding segments of dimmed luminosity as collective anomalies,
\citet{fisch2018} apply their anomaly detection methodology to the light curve data
obtained from Kepler-1132 (available in the R package {\tt anomaly} \citep{anomaly}),
which is known to host at least one orbiting planet \citep{kepler}.
In their paper, the data is pre-processed into equally sized bins 
aggregating the luminosity from different orbits
using the known periodicity ($62.89$ days) of the orbiting planet.
This amplifies the signal and transforms the irregularly sampled time series data into a regular one.
From the aggregated data, they detect a short interval of collective anomalies over $[649, 660]$ 
(at the scale of bins).

We first apply the proposed localised pruning to the raw Kepler-1132 data without aggregation,
the result of which is reported in the top panel of Figure~\ref{fig:kepler}.
Without further information available, 
we simply ignore the presence of missing observations, which yields $n = 51405$.
Considering the possible presence of outliers and heavy tails,
we set the penalty at $\xi_n = \log^{1.1}(n)$.
MoLP (with $\alpha = 0.2$ and $\eta = 0.4$) detects $14$ estimators in total,
while CuLP (with $C_\zeta = 0.5$) 
returns $16$ estimators,
out of which there are $10$ overlapping estimators in the sense that
either they are identical or very close to one another.
Unlike \citet{fisch2018}, we do not use the known periodicity 
to accumulate the information obtained from different orbits,
nor do we utilise the knowledge that the changes are of epidemic nature.
Nonetheless, as demonstrated in Figure~\ref{fig:kepler},
both MoLP and CuLP identify the anomalous interval detected by \citet{fisch2018} at some orbits.
Additionally detected change points may be attributed to the missingness in the data
which is not accounted for by our methodology,
particularly the pair in the vicinity of $1290$ in the observation time scale.

We also analyse the binned and aggregated data of length $N = 3078$
with the penalty $\xi_N = \log^{1.01}(N)$ 
chosen on the basis of Gaussian-like tail behaviour of the binned data.
Both MoLP and CuLP yield $5$ estimators including $648$ and $660$, correctly
identifying the anomalous segment reported in \citet{fisch2018}.

In summary, our methodology is able to detect the periodic reduction in luminosity of Kepler-1132
without aggregating the signal using the extra information of periodicity which,
in the problem of detecting exoplanets, may not be readily available. 

\section{CUSUM-based candidate generation}
\label{sec:wbs}

The CUSUM statistic in \eqref{eq:cusum} is designed to test 
the null hypothesis of no change point ($H_0: q_n = 0$)
against the at-most-one-change alternative ($H_1: q_n = 1$).
It corresponds to the likelihood ratio statistic under i.i.d.\ Gaussian errors
and as such, is particularly appropriate for single change point estimation. 

For multiple change point detection,
\citet{vostrikova1981} and \citet{venkatraman1992} establish
the consistency of the Binary Segmentation algorithm
that makes recursive use of CUSUM-based estimation.
However, its sub-optimality, both in terms of the conditions required for the consistency
and the rate of change point localisation, has been noted in \citet{fryzlewicz2014}.
As an alternative, he proposes the Wild Binary Segmentation (WBS) 
which aims at isolating the change points by drawing a large number of random intervals.
When a sufficient number of random intervals are drawn, with large probability,
there exists at least one interval which is well-suited for the detection and localisation of 
each $\cp_j, \, j = 1, \ldots, q_n$.
Since then, \citet{fryzlewicz2018} proposes its variation (WBS2)
that draws random intervals in a more systematic fashion and 
generates a complete solution path,
while \citet{kovacs2020} propose a `seeded' version of WBS
that constructs the background intervals in a deterministic fashion.
In the WBS and its variants, the candidates are generated by scanning the data multiple times
over a large number of (randomly drawn) intervals,
and various pruning methods have been proposed including
thresholding, sequential application of an information criterion \citep{fryzlewicz2014}
and the steepest-drop to low levels (SDLL) method \citep{fryzlewicz2018}.

We propose the following version of WBS2 as a candidate generating mechanism.
It requires the tuning parameters
$R_n$, the maximal number of random intervals to be drawn at each iteration,
and $\wt{Q}_n$, which relates to the maximal depth of recursion $L_n$ as
$L_n = \lfloor \log_2(\wt{Q}_n + 1) \rfloor$.
The step-by-step description of the WBS2 is provided below.

\begin{enumerate}[label=\textbf{Step \arabic*:}, itemindent = 25pt]
\setcounter{enumi}{-1}
\item Initialise the input arguments:
The set of candidates $\C(R_n, \wt Q_n) = \emptyset$,
$s = 0$, $e = n$ and the recursion depth $\ell = 1$. 

\item Quit the routine if 
$e - s = 1$ or $\ell > L_n$; if not,
let $\wt R = \min\{R_n, (e - s)(e - s - 1)/2\}$. 
If $\wt R \le R_n$, let 
$\mc R_{s, e} =\{(l, r) \in \Z^2: \, s \le l < r \le e \text{ and } r - l > 1\}$ 
serve as $[s_m, e_m], \, m = 1, \ldots, \wt{R}$. 
If not, draw $\wt R$ intervals $[s_m, e_m], \, m = 1, \ldots, \wt R$,
uniformly at random from the set $\mc R_{s, e}$.

\item Identify $(m_\circ, \c_\circ) =
\arg\max_{(m, b): \, 1 \le m \le \wt R, \, s_m < b < e_m} |\mc X_{s_m, b, e_m}|$.
	
\item Update $\C(R_n, \wt Q_n)$ 
by adding $\c_\circ$ and store its natural detection interval $\mc I_N(\c_\circ) = (s_\circ, e_\circ]$.

\item Repeat Steps~1--3 separately with $(s, \c_\circ, \ell + 1)$ and $(\c_\circ, e, \ell + 1)$.
\end{enumerate}

Through implementing the maximal recursion depth into the procedure,
it trivially holds that the size of candidate set satisfies
$\vert \C(R_n, \wt Q_n) \vert \le \wt Q_n$.
We propose to apply the localised pruning to the thus-generated set of candidates 
$\C(R_n, \wt Q_n)$,
which satisfies Assumptions~\ref{assum:cand} and~\ref{assum:det:intervals} 
on the set of candidate estimators.

\begin{prop}
\label{prop:wbs}
\hfill
\begin{enumerate}[label=(\alph*)]
\item Let $\p(\mc M_n^{(11)}) \to 1$ where $\mc M_n^{(11)}$ is defined in Assumption~\ref{assum:vep}~(a).
Also, suppose that there exist some $\beta \in (0, 1]$ and $c_\delta \in (0, 1)$ satisfying
\begin{align}
& \min_{1 \le j \le q_n} \delta_j \ge c_\delta n^\beta 
\quad \text{and} \quad 
\frac{\omega_n^2}{\min_{1 \le j \le q_n} d_j^2 n^{5\beta - 4}} \to 0
\label{cond:wbs:one}
\end{align}
where, as before, $\delta_j = \min(\cp_j - \cp_{j - 1}, \cp_{j + 1} - \cp_j)$
and $\omega_n$ is as in Assumption~\ref{assum:vep}~(a).
In addition, suppose that
\begin{align}
\frac{n^{2 - 2\beta} \log(n)}{R_n} \to 0, \qquad \frac{q_n}{\wt Q_n} \to 0
\label{cond:wbs:two}
\end{align}
and let $\rhow = c_W n^{4 - 4\beta} \omega_n^2$ for some $c_W \in (0, \infty)$.
Then, it holds
\begin{align*}
& \p\Big(\max_{1 \le j \le q_n} 
\min_{\c \in \C(R_n, \wt Q_n)} d_j^2|\c - \cp_j| \le \rhow \Big) \to 1.
\end{align*}
\item Suppose $n^{-1} \omega_n^2 \wt Q_n \to 0$.
Then, for any realisation of the random intervals, we have
$n^{-1} \omega_n^2 \vert\C(R_n, \wt Q_n) \vert \to 0$.
\item Suppose that conditions in (a) hold. Then, for each $j = 1, \ldots, q_n$, 
there exists $\check{\c} \in \{\c \in \C(R_n, \wt Q_n): \, d_j^2|\c - \cp_j| \le \rhow\}$
such that $\min(\check{\c} - \check{s}, \check{e} - \check{\c}) \ge c \delta_j$,
where $\mc I_N(\check{\c}) = (\check{s}, \check{e}]$ represents the natural detection interval of $\check{\c}$
and $c$ is a universal constant satisfying $c \in (0, 1]$.
\end{enumerate}
\end{prop}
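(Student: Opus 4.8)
The plan is to transfer the consistency analysis of Wild Binary Segmentation \citep{fryzlewicz2014, fryzlewicz2018} to the depth-truncated recursion used in our WBS2 variant. Everything will be shown on $\mc M_n^{(11)} \cap \mc W_n$, where $\mc W_n$ is a ``good intervals'' event with $\p(\mc W_n) \to 1$ introduced below; I would establish~(a) and~(c) simultaneously and read off~(b) from the truncation of the recursion depth.

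\emph{The event $\mc W_n$.} Fix a small universal $c_1 \in (0, 1/2)$ and call an interval $(l, r]$ \emph{favourable for $\cp_j$} if $\cp_j - l \in [c_1\delta_j, \delta_j]$ and $r - \cp_j \in [c_1\delta_j, \delta_j]$; such an interval contains $\cp_j$ and no other change point and has length of order $\delta_j$. The current interval in the recursion is always of the form $(s, e]$ with $s, e \in \{0, 1, \ldots, n\}$, of which there are $O(n^2)$; let $\mc W_n$ be the event that, for each such $(s, e]$ wide enough to contain a favourable interval for some $\cp_j \in (s, e)$, at least one of the intervals drawn at node $(s, e]$ (or all of $\mc R_{s, e}$, when $\mc R_{s, e}$ is fully enumerated there) is favourable for $\cp_j$. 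A single uniform draw from $\mc R_{s, e}$ is favourable for a given such $\cp_j$ with probability at least $c_1^2 \delta_j^2/(e - s)^2 \ge c_1^2 c_\delta^2 \, n^{2\beta - 2}$ by~\eqref{cond:wbs:one}; since the draws at each node are made afresh, the probability of no favourable draw at a fixed node for a fixed $\cp_j$ is at most $\exp(-c\, n^{2\beta - 2} R_n)$, and~\eqref{cond:wbs:two} forces $n^{2\beta - 2} R_n / \log(n) \to \infty$, so a union bound over the $O(n^3)$ pairs (interval, change point) gives $\p(\mc W_n) \to 1$.

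\emph{Localisation and coverage.} On $\mc M_n^{(11)}$, the classical single-change CUSUM bounds together with the interval-refinement device of WBS (adapting Theorem~3.1 of \citet{fryzlewicz2014} to the fluctuation control supplied by $\mc M_n^{(11)}$) yield that the maximiser $\c_\circ$ of $|\mc X_{s_m, b, e_m}|$ over any drawn interval containing at least one change point lies within $d_{j^\star}^{-2}\rhow$ of some change point $\cp_{j^\star} \in (s_m, e_m)$, the factor $n^{4 - 4\beta}$ in $\rhow$ absorbing the appropriate power of $(e_m - s_m)/\delta_{j^\star} \le c_\delta^{-1} n^{1 - \beta}$, and within $d_j^{-2} C\omega_n^2$ of $\cp_j$ whenever the drawn interval is favourable for $\cp_j$. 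An induction on the recursion depth then shows that, on $\mc M_n^{(11)} \cap \mc W_n$, every change point is detected: a branch containing an undetected change point never stops for the length reason because $\min_j \delta_j \to \infty$; at each internal node $\c_\circ$ lies within $d_{j^\star}^{-2}\rhow$ of a change point $\cp_{j^\star}$ of the current interval, and the children $(s, \c_\circ]$ and $(\c_\circ, e]$ contain strictly fewer change points; iterating, each $\cp_j$ is eventually the only change point of the current interval, at which node every drawn interval either misses $\cp_j$ (giving a pure-noise CUSUM of lower order on $\mc M_n^{(11)}$) or contains only $\cp_j$ so that, on $\mc W_n$, the favourable interval for $\cp_j$ wins, producing a candidate $\c_\circ$ with $d_j^2 |\c_\circ - \cp_j| \le \rhow$. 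This is~(a) --- \emph{provided} the depth at which every change point is isolated does not exceed $L_n = \lfloor \log_2(\wt Q_n + 1) \rfloor$, which I expect to be the main obstacle.

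\emph{Detection intervals and part~(b).} For~(c), let $\c_\circ$ be the candidate that detects $\cp_j$ at the node where $\cp_j$ is the sole change point of the current interval, and let $(s_m, e_m] = \mc I_N(\c_\circ)$ be the interval realising this detection. Its CUSUM at $\c_\circ \approx \cp_j$ dominates that of the favourable interval for $\cp_j$ present on $\mc W_n$, hence is of order at least $|d_j|\sqrt{\delta_j}$; since $(s_m, e_m]$ contains no change point other than $\cp_j$, this CUSUM is at most $|d_j|\sqrt{(\cp_j - s_m)(e_m - \cp_j)/(e_m - s_m)}$ plus an $O(\omega_n)$ noise term, which forces $(\cp_j - s_m)(e_m - \cp_j)/(e_m - s_m) \gtrsim \delta_j$ and therefore $\min(\cp_j - s_m, e_m - \cp_j) \gtrsim \delta_j$ via $\min(a, b) \ge ab/(a + b)$. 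As $d_j^2|\c_\circ - \cp_j| \le \rhow$ and~\eqref{cond:wbs:one} gives $\rhow/d_j^2 = o(\delta_j)$, we obtain $\min(\c_\circ - s_m, e_m - \c_\circ) \ge c\,\delta_j$ for a universal $c \in (0, 1]$, i.e.\ (c) with $\check{\c} = \c_\circ$; this is precisely what is needed to verify Assumption~\ref{assum:det:intervals} for $\C(R_n, \wt Q_n)$. Part~(b) is immediate: a binary recursion tree of depth at most $L_n$ has at most $2^{L_n} - 1 \le \wt Q_n$ internal nodes, each yielding one candidate, so $|\C(R_n, \wt Q_n)| \le \wt Q_n$ for every realisation of the random intervals, whence $n^{-1}\omega_n^2 |\C(R_n, \wt Q_n)| \le n^{-1}\omega_n^2 \wt Q_n \to 0$. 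The hard part remains the depth control in~(a): one must show that on $\mc W_n$ no branch is truncated by $L_n$ while still containing an undetected change point, which ties the permissible growth of $q_n$ to $\wt Q_n$ through $q_n / \wt Q_n \to 0$ and calls for a careful count of how the change points are split among sibling intervals as the recursion unfolds (the detection order being governed by the relative prominence $|d_j|\sqrt{\delta_j}$).
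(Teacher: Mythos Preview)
Your treatment of~(b) is identical to the paper's.

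For~(a) and~(c) your plan contains a genuine gap. The claim ``iterating, each $\cp_j$ is eventually the only change point of the current interval'' is false: in the recursion, the moment some $\c_\circ$ close to $\cp_{j^\star}$ is returned at a node, $\cp_{j^\star}$ becomes (up to $d_{j^\star}^{-2}\rhow$) a boundary of both children and is never again an \emph{interior} change point of any descendant interval. So $\cp_{j^\star}$ need not be isolated at the node where it is detected --- with $q_n=3$, if the root detects $\cp_2$ then $\cp_2$ was never the sole interior change point of any visited interval. Your entire argument for~(c) rests on this isolation: you take the winning interval $(s_m,e_m]$ to contain only $\cp_j$, so that the signal CUSUM is bounded by $|d_j|\sqrt{(\cp_j-s_m)(e_m-\cp_j)/(e_m-s_m)}$, and invert. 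But at the node where $\cp_j$ is actually detected, the winning interval $(s_\circ,e_\circ]$ may contain several change points and this single-change bound is unavailable.

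The paper avoids any isolation step and proves~(a) and~(c) simultaneously through a direct case analysis of the winning interval $(s_\circ,e_\circ]$ (its claim~(W2)). Using monotonicity of the piecewise-linear map $b\mapsto\mc X_{s_\circ,b,e_\circ}(f)$ between consecutive change points (Lemmas~7--9 of \citet{wang2018b} and Lemma~2.2 of \citet{venkatraman1992}), it identifies the change point $\cp_j$ nearest to $\c_\circ$, and in three cases --- according to whether $\cp_{j+1}\in(s_\circ,e_\circ)$ and how $\mc X_{s_\circ,\cp_j,e_\circ}(f)$ compares with $\mc X_{s_\circ,\cp_{j+1},e_\circ}(f)$ --- establishes both $\min(\cp_j-s_\circ,e_\circ-\cp_j)\ge c\,\delta_j$ and $d_j^2|\c_\circ-\cp_j|\le\rhow$. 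The worst-case rate $n^{4-4\beta}\omega_n^2$ arises precisely in the multi-change-point Cases~1--2 (via a crude intermediate bound of the form $|\c_\circ-\cp_j|\lesssim n^2\omega_n/(|d_j|\delta_j^{3/2})$, refined in a second pass), so the factor $n^{4-4\beta}$ cannot be recovered from single-change favourable intervals alone. This multi-change-point analysis is the technical core that you have swept under ``the interval-refinement device of WBS''.

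On the depth bound you flag: the paper's (W1)--(W2) induction makes the useful part of the recursion a binary tree whose internal nodes are in bijection with $\{\cp_1,\ldots,\cp_{q_n}\}$ (each detects one interior change point, the children inherit disjoint contiguous blocks), and asserts depth at most $\lceil\log_2(q_n+1)\rceil+1$, which under $q_n/\wt Q_n\to 0$ stays below $L_n$; the argument does not involve the relative prominences $|d_j|\sqrt{\delta_j}$ you invoke. A minor point: your event $\mc W_n$ over all $O(n^2)$ pairs $(s,e]$ needs a coupling (pre-drawing $R_n$ intervals at every possible node) to be well-defined, since the recursion visits only a random subset; the paper does the analogous pre-draw but restricts to the $O(q_n^2\underline\Delta_n^2)$ intervals $I_{u,v}^{t_1,t_2}$ that can actually arise under~(W1).
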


\begin{rem}\label{rem_25_wbs}
For each $\c \in \C(R_n, \wt Q_n)$, 
the natural detection interval $\mc I_N(\c) = (s, e]$ 
can serve as its detection interval $\mc I(\c)$
in which case the detection distances are given by $G_L(\c) = \c - s$
and $G_R(\c) = e - \c$.
Proposition~\ref{prop:wbs}~(c) indicates that 
$\C(R_n, \wt Q_n)$ fulfils Assumption~\ref{assum:det:intervals}
under Assumption~\ref{assum:size}.
Besides, by construction, each estimator in 
$\C(R_n, \wt Q_n)$ is distinct and therefore $\C(R_n, \wt Q_n)$
bypasses the issue discussed in Remark~\ref{rem_25_mos}~(b).
\end{rem}

Compared to the condition (a) of Proposition~\ref{prop:mosum:a}
on the minimal size of change, measured by the jump size $d_j$ and spacing $\delta_j$,
for the MOSUM-based candidate generating mechanism,
the corresponding condition in~\eqref{cond:wbs:one} is considerably stronger.
Also, the rate of localisation reported in Proposition~\ref{prop:mosum:a} is always
tighter than $\rhow$ given in the above theorem. 
The bottleneck in our theoretical analysis of the CUSUM-based candidate generation procedure
is the following: 
The WBS-type procedures looking for the largest CUSUM at each iteration,
do not rule out that a change point $\cp_j$ is detected
by its estimator $\c_\circ$ within an interval $(s_\circ, e_\circ)$ 
which also contains $\cp_{j - 1}$ or $\cp_{j + 1}$ (and more) well within the interval.
In such a case, the localisation rate $\vert \c_\circ - \cp_j \vert$ depends not only on $d_j$
but also on the minimum spacing $\min_{1 \le j \le q_n} \delta_j$, 
which results in the sub-optimal localisation rate as well as the detection lower bound
given in Proposition~\ref{prop:wbs}.
Besides, the theoretical guarantee therein is for the homogeneous change points only.
An analogous result is reported in \citet{wang2018}
where the WBS is adopted for high-dimensional change point detection which,
to the best of our knowledge, is the best available result 
on the detection lower bound and the localisation rate of the WBS.
\\ [1mm]
The maximum number of intervals to be drawn at each iteration, $R_n$, is 
required to increase as the minimal spacing $\min_{1 \le j \le q_n} \delta_j$ decreases
(see~\eqref{cond:wbs:two}), thus increasing the total computational complexity of 
the candidate generating procedure as $O(R_n n)$.

The consistency of the localised pruning algorithm 
in combination with the CUSUM-based candidate generating mechanism  
follows from Proposition~\ref{prop:wbs} and Theorem~\ref{thm:sbic:full}.

\begin{thm}
\label{cor:wbs}
Let Assumptions~\ref{assum:vep}--\ref{assum:size} and~\ref{assum:penalty} hold
and additionally, let $\xi_n^{-1} n^{4 - 4\beta}\omega_n^2 \to 0$.
Also suppose that the conditions in Proposition~\ref{prop:wbs} are satisfied.
Then, the localised pruning algorithm {\tt LocAlg} applied to $\C(R_n, \wt Q_n)$
yields $\wh{\Cp} = \{\wh\cp_1 < \ldots < \wh\cp_{\wh{q}_n}\}$
which consistently estimates $\Cp$, i.e.,
\begin{align*}
& \p\l\{\wh q_n = q_n; \, 
\max_{1 \le j \le q_n} d_j^2 |\wh \cp_{j} \bbI_{j \le \wh q_n} - \cp_{j}| \le \rhow\nu_n\r\} 
\to 1,
\end{align*}
with $\rhow$ as in Proposition~\ref{prop:wbs} and $\nu_n \to \infty$ arbitrarily slow.
\end{thm}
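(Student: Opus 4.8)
\subsection{Proof of Theorem~\ref{cor:wbs}}

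The plan is to reduce the claim to Theorem~\ref{thm:sbic:full}: upon setting $\rho_n = \rhow$, I would verify that the candidate set $\C(R_n, \wt Q_n)$ produced by the WBS2 routine fulfils all the hypotheses of that theorem, after which the conclusion reads off verbatim with $\rho_n\nu_n = \rhow\nu_n$. Since Assumptions~\ref{assum:vep}, \ref{assum:size} and~\ref{assum:penalty} are assumed outright, what remains is to check Assumptions~\ref{assum:cand} and~\ref{assum:det:intervals} for $\C(R_n, \wt Q_n)$, or more precisely that the non-asymptotic bound~\eqref{eq:rates} underlying the proofs of Theorems~\ref{thm:sbic}--\ref{thm:sbic:full} holds with $\rho_n = \rhow$.

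First I would dispatch Assumption~\ref{assum:cand}. Part~(a) is exactly Proposition~\ref{prop:wbs}~(a), which gives $\p(\mc M_n^{(2)}) \to 1$ for $\rho_n = \rhow$, while the sandwiching $\max(\omegao, \omegat)^2 = O(\rhow)$ follows from $\omegao, \omegat \le \omega_n$ (Remark~\ref{rem:vep:one}~(a)) together with $\rhow \asymp n^{4-4\beta}\omega_n^2$ and $n^{4-4\beta} \ge 1$ for $\beta \le 1$. Part~(b) is Proposition~\ref{prop:wbs}~(b) under the assumed $n^{-1}\omega_n^2 \wt Q_n \to 0$, which places $Q_n = |\C(R_n, \wt Q_n)|$ into the slot $Q_n\omega_n^2/n$ in~\eqref{eq:rates}. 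For Assumption~\ref{assum:det:intervals} I would take the natural detection interval $\mc I_N(\c) = (s, e]$ as the detection interval $\mc I(\c)$, so $G_L(\c) = \c - s$ and $G_R(\c) = e - \c$ (Remark~\ref{rem_25_wbs}); Proposition~\ref{prop:wbs}~(c) then supplies, for each $j$, an acceptable $\check{\c}_j \in \mc V_j^\prime$ with $\min\{G_L(\check{\c}_j), G_R(\check{\c}_j)\} \ge c\delta_j$, whence $d_j^2\min\{G_L(\check{\c}_j), G_R(\check{\c}_j)\} \ge c\, d_j^2\delta_j \ge c D_n$ and~\eqref{eq:assum:det:intervals} holds by Assumptions~\ref{assum:size} and~\ref{assum:penalty}. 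Invoking Theorem~\ref{thm:sbic:full} with $\C = \C(R_n, \wt Q_n)$ then delivers the stated probability bound, the factor $\p(\mc M_n) \to 1$ coming from Assumption~\ref{assum:vep} and Proposition~\ref{prop:wbs}~(a).

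The main obstacle --- and the reason Theorem~\ref{cor:wbs} carries the extra hypothesis absent from the MOSUM counterpart, Theorem~\ref{cor:mosum} --- is that the WBS-based precision $\rhow$ is coarser than $\omega_n^2$: a CUSUM-optimal interval at a given recursion depth may enclose several neighbouring change points, so the localisation error is inflated by the minimal spacing, forcing $\rhow \asymp n^{4-4\beta}\omega_n^2$. Hence the requirement $\rho_n = O(\omega_n^2)$ in Assumption~\ref{assum:cand}~(a) fails, and the argument must instead be run through the relaxed form of~\eqref{eq:rates} in which the penalty is allowed to exceed the precision, i.e.\ only $\rho_n\nu_n/\xi_n \to 0$ is needed in place of $\rho_n = O(\omega_n^2)$; this is precisely what $\xi_n^{-1}n^{4-4\beta}\omega_n^2 \to 0$ supplies, with $\nu_n \to \infty$ chosen slowly enough. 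The one remaining thing to check is that this still leaves a window for the penalty, i.e.\ $\rhow\nu_n = o(\xi_n)$ and $\xi_n = o(D_n)$: here~\eqref{cond:wbs:one} combined with $\min_{1 \le j \le q_n} \delta_j \ge c_\delta n^\beta$ yields $D_n \ge (\min_j d_j^2)(\min_j \delta_j)$ and hence $n^{4-4\beta}\omega_n^2 = o(D_n)$, opening such a window. Everything beyond this is the routine bookkeeping of matching $\omegao$, $\omegat$, $\rho_n$, $\xi_n$, $D_n$ and $Q_n$ to the corresponding terms in~\eqref{eq:rates}, exactly as in the proof of Theorem~\ref{cor:mosum}.
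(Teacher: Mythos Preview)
Your proposal is correct and follows the same route as the paper, which simply states that the result ``follows from Proposition~\ref{prop:wbs} and Theorem~\ref{thm:sbic:full}'' together with the discussion following~\eqref{eq:rates}. You have supplied exactly the missing glue: Proposition~\ref{prop:wbs}~(a)--(c) verify Assumptions~\ref{assum:cand} and~\ref{assum:det:intervals} (the latter via Remark~\ref{rem_25_wbs}), and the extra hypothesis $\xi_n^{-1}n^{4-4\beta}\omega_n^2 \to 0$ compensates for the failure of $\rhow = O(\omega_n^2)$ by feeding the relaxed term $\rho_n\nu_n/\xi_n$ in~\eqref{eq:rates} rather than the original $\rho_n = O(\omega_n^2)$ in Assumption~\ref{assum:cand}~(a).
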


The additional requirement on the penalty $\xi_n$ is necessary due to
the localisation rate achieved by the CUSUM-based candidate generation 
always dominating $\omega_n^2$, such that the penalty needs to be chosen accordingly larger;
see also the discussion following~\eqref{eq:rates}.
In view of the discussion below Proposition~\ref{prop:wbs}, 
we believe that such a requirement on the penalty term cannot be lifted
when performing model selection on the candidates generated by a WBS-type method
using an information criterion, unless some modification of the WBS
such as that proposed in \cite{baranowski2016} is adopted.

In practice, it is not straightforward to select $\wt Q_n$ which effectively imposes 
an upper bound on the number of candidates.
For numerical studies in Section~\ref{sec:numeric},
instead of selecting $\wt Q_n$, we choose a weak threshold $\zeta_n$ as a multiple of $\sqrt{\log(n)}$,
and keep only those candidates for which the corresponding CUSUM statistics
(after standardisation) exceed $\zeta_n$.
This approach provides more flexibility to deal with heavy-tailedness or serial dependence
present in the error sequence.


\subsection{Proof of Proposition~\ref{prop:wbs}}

Firstly, (b) follows directly from the construction of WBS2 and the condition on $\wt Q_n$, since
\begin{align*}
\vert \C(R_n, \wt Q_n) \vert \le 
\sum_{j=0}^{L_n} 2^j = 2^{L_n} - 1 \le \wt Q_n.
\end{align*}

The following proof of (a) is an adaptation of the proof of Theorem~3.1~(iii) of \cite{fryzlewicz2018}
and that of Theorem~2 of \cite{wang2018}.
Throughout the proof, we adopt $C_i, \, i \ge 1$ to denote positive constants.
Also, $\mc X_{s, b, e}(f)$ (resp. $\mc X_{s, b, e}(\vep)$)
denotes the CUSUM statistic analogously defined as $\mc X_{s, b, e}$ in \eqref{eq:cusum}
with $f_t$ ($\vep_t$) replacing $X_t$.

We define the following intervals for $j = 0, \ldots, q_n$,
\begin{align*}
&I_j = [r_j, \ell_{j + 1} ]\qquad \text{where} \quad 
r_j = \cp_j + \lceil (\cp_{j + 1} - \cp_j)/3\rceil,  \quad
\ell_{j+1} = \cp_{j+1} - \lceil (\cp_{j+1} - \cp_{j})/3\rceil
\end{align*}
and for $1 \le u + 1 < v \le q_n + 1$,
\begin{align*}
&	I^{t_1, t_2}_{u, v} = [\max(0, \cp_u + t_1), \min(\cp_v + t_2, n)]  
\quad \text{with} \quad t_1, t_2 \in [- \underline{\Delta}_n, \underline{\Delta}_n],
\text{ where } \underline{\Delta}_n =\frac{\rho_n^{(W)}}{\min_{1 \le j \le n} d_j^2}.
\end{align*}

Suppose that on each interval $I^{t_1,t_2}_{u, v}$,
we draw $R_n$ intervals $\{[s_m, e_m], \, m = 1, \ldots, R\}$ randomly and uniformly
from $\left\{(l, r) \in I^{t_1, t_2}_{u, v}  \times I^{t_1, t_2}_{u, v} : \, l + 1 < r\right\}$.
When $R_n \ge |I^{t_1, t_2}_{u, v}|(|I^{t_1, t_2}_{u, v}| - 1)/2$,
we use $\{[s_m, e_m], \, m = 1, \ldots, \wt{R}\}$
with $\wt{R} = |I^{t_1, t_2}_{u, v}|(|I^{t_1, t_2}_{u, v}| - 1)/2$ which contains all feasible sub-intervals of $I^{t_1, t_2}_{u, v}$.
For notational convenience, we do not specify the (stochastic) dependence
of $(s_m, e_m)$ on $u, v, t_1$ or $t_2$.

For each interval $I^{t_1, t_2}_{u, v}$, consider the event
$\mc A^{t_1, t_2}_{u, v} = 
\bigcap_{j = u + 1}^{v - 1} \bigcup_m \{(s_m, e_m) \in I_{j - 1} \times I_j\}$.
If $\wt R \le R_n$, we have $\p((\mc A^{t_1, t_2}_{u, v})^c) = 0$; if not, 
\begin{align*}
\p\l((\mc A^{t_1, t_2}_{u, v})^c\r) 
&\le
q_n \prod_{m = 1}^{R_n}  \max_{u + 1 \le j \le v - 1} \l\{ 1 - \p((s_m, e_m) \in I_{j - 1} \times I_j) \r\}
\le 
q_n\l(1 - \frac{c_{\delta}^2}{9n^{2 - 2\beta}}\r)^{R_n}
\end{align*}
such that for $\Omega_n := \bigcap_{t_1, t_2, u, v} \mc A^{t_1, t_2}_{u, v}$, 
by $\log(1 - x) \le -x$ for $x \in [0, 1)$,
\begin{align*}
\p(\Omega_n)
\ge 
1 - \sum_{t_1, t_2, u, v} \p((\mc A^{t_1, t_2}_{u, v})^c)
\ge 
1 - \frac{1}{2}q_n(q_n + 1)(q_n + 2) (2\underline{\Delta}_n + 1)^2\exp\l(- \frac{c_{\delta}^2\,R_n}{9n^{2 - 2\beta}}\r) \to 1
\end{align*}
under \eqref{cond:wbs:two}.
We claim that on $\Omega_n \cap \mc M_n^{(11)}$,
\begin{enumerate}[label = (W\arabic*)]
\item at some iteration,
if there exist $1 \le u + 1 < v \le q_n +1$ such that
$s$ and $e$ satisfy $\max\{d_u^2|s - \cp_u|, d_v^2|e - \cp_v|\} \le \rhow$,
\item the call of Steps~1--3 of WBS2 with such $s$ and $e$ as its arguments
adds $\c_\circ$ which satisfies
$d_j^2 |\c_\circ - \cp_j| \le \rho_n^{(W)}$ for some $j \in \{u + 1, \ldots, v - 1\}$.
\end{enumerate}

The condition in (W1) trivially holds at the very first iteration of WBS2
with $s = \cp_0 = 0$ and $e = \cp_{q_n + 1} = n$.
Then by induction, each $\cp_j, \, j = 1, \ldots, q_n$ is detected by an estimator 
within $(d_j^{-2}\rhow)$-distance
before the depth exceeds $\lceil \log_2(q_n +1) \rceil + 1$ thanks to (W2),
since we add (at most) $2^{\ell - 1}$ elements to 
$\C(R_n, \wt Q_n)$ at each depth $\ell$, 
which completes the proof of (a).

It remains to show that (W2) holds given that (W1) is met by some $s$ and $e$.
Let
\begin{align*}
(s_\circ, \c_\circ, e_\circ) 
= \arg\max_{(s_m, b, e_m): \, s_m < b < e_m, 1 \le m \le R_n}
|\mc X_{s_m, b, e_m}|.
\end{align*}
On the event $\Omega_n$,
there exists at least one interval $(s_{m(j)}, e_{m(j)}] \in \{(s_m, e_m] \subset (s, e], \, m = 1, \ldots, \wt R\}$
satisfying $(s_{m(j)}, e_{m(j)}) \in I_{j - 1} \times I_j$ for each $j \in \{u + 1, \ldots, v - 1\}$, 
which is non-empty by (W1).
Denoting by 
$\c^*_j = \arg\max_{s_{m(j)} < b < e_{m(j)}} |\mc X_{s_{m(j)}, b, e_{m(j)}}|$,
we have
\begin{align}
|\mc X_{s_\circ, \c_\circ, e_\circ}| 
 \ge 
\max_{u + 1 \le j \le v - 1} |\mc X_{s_{m(j)}, \c^*_j, e_{m(j)}}|
\ge \max_{u + 1 \le j \le v - 1} |\mc X_{s_{m(j)}, \cp_j, e_{m(j)}}|. 
\label{eq:pf:wbs:one}
\end{align}
On $\mc M_n^{(11)}$, it holds as in \eqref{eq_noise_cusum}
\begin{align}
|\mc X_{s, b, e}(\vep)| \le 2\,\omega_n.
\label{eq:pf:wbs:two}
\end{align}
Also, under \eqref{cond:wbs:one}, it follows straightforwardly that
\begin{align}
(d^2_j\delta_j)^{-1} \omega_n^2\to 0.
\label{cond:wbs:three}
\end{align}
Then, we have
\begin{align}
& |\mc X_{s_\circ, \c_\circ, e_\circ}  (f)| 
\ge \max_{u + 1 \le j \le v - 1} |\mc X_{s_{m(j)}, \cp_j, e_{m(j)}}|-2\omega_n
\ge \max_{u + 1 \le j \le v - 1} |\mc X_{s_{m(j)}, \cp_j, e_{m(j)}}(f)|-4\omega_n\notag\\
&\ge \min_{1 \le j \le q_n} \frac{\sqrt{d_j^2 \delta_j}}{\sqrt 6} - 4\omega_n > 
\min_{1 \le j \le q_n} \frac{\sqrt{d_j^2 \delta_j}}{2\sqrt 6} \label{eq_ck_b24a2}
\end{align}
by \eqref{cond:wbs:three} for $n$ large enough,
which shows in particular that there is at least one change point within $(s_{\circ}, e_{\circ})$.

Let $\cp_\pm$ denote the two change points $\cp_- < \c_{\circ} \le \cp_+$ satisfying 
$(\cp_-, \c_\circ) \cap \Cp = \emptyset$ and $(\c_\circ, \cp_+) \cap \Cp = \emptyset$.
From \eqref{eq_ck_b24a2}, at least one of $\cp_\pm$ belongs to $(s_\circ, e_\circ)$.
If $\cp_+ \notin (s_\circ, e_\circ)$, then by Lemma~8~(b) of \cite{wang2018b},
$\mc X_{s_\circ, b, e_\circ}(f)$ does not change sign and has strictly decreasing absolute values for $\cp_- \le b \le \c_\circ$.
In this case, we set $\cp_j = \cp_-$. 
If $\cp_- \notin (s_\circ, e_\circ)$, similarly,
$\mc X_{s_\circ, b, e_\circ}(f)$ does not change sign and 
has strictly increasing absolute values for $\c_\circ \le b \le \cp_+$ (their Lemma~8~(a)), 
and we set $\cp_j = \cp_+$.
If both $\cp_\pm \in (s_\circ, e_\circ)$,
by Lemma~8~(c)--(d) of \cite{wang2018b} and Lemma~2.2 of \cite{venkatraman1992},
$\mc X_{s_\circ, b, e_\circ}(f)$ is either strictly decreasing in modulus without sign change
for $\cp_- \le b \le \c_{\circ}$,  
or strictly increasing in modulus without sign change for $\c_{\circ} \le b \le \cp_+$. 
In the first case, we set $\cp_j = \cp_-$ while in the latter, we set $\cp_j = \cp_+$.

If the thus-identified $\cp_j \ge \c_\circ$, we consider the time series in reverse
such that w.l.o.g., we suppose that $\c_\circ \ge \cp_j$ and 
$\vert \mc X_{s_\circ, b, e_\circ}(f) \vert$ is strictly decreasing between $\cp_j$ and $\c_{\circ}$.
In addition, we assume that $\mc X_{s_\circ, \c_\circ, e_\circ} > 0$;
otherwise, consider $-X_t$ (resp. $-f_t$ and $-\vep_t$) in place of $X_t$ ($f_t$ and $\vep_t$).

Then, by \eqref{eq:pf:wbs:two}, \eqref{cond:wbs:three} and the arguments analogous to
those adopted in \eqref{eq_ck_b24a2}, we yield 
$\vert \mc X_{s_\circ, \c_\circ, e_\circ}(\vep)\vert /\mc X_{s_\circ, \c_\circ, e_\circ} = o(1)$ and in particular,
\begin{align}\label{eq_ck_b24a}
\mc X_{s_\circ, \c_\circ, e_\circ}(f) > 0
\end{align}
for large enough $n$.
Also from \eqref{eq:pf:wbs:one}--\eqref{eq:pf:wbs:two} and by the construction of $(s_{m(j)}, e_{m(j)}]$, we yield
\begin{align}
\mc X_{s_\circ, \cp_j, e_\circ}(f) &\ge 
\mc X_{s_\circ, \c_{\circ}, e_\circ}(f) \ge \mc X_{s_\circ, \c_{\circ}, e_\circ}-2\omega_n
\ge \mc X_{s_{m(j)}, \cp_j, e_{m(j)}}(f) - 4\omega_n\notag\\
&\ge \frac{\sqrt{d_j^2 \delta_j}}{\sqrt 6} - 4\omega_n \ge \frac{\sqrt{d_j^2\delta_j}}{2\sqrt 6}
\label{eq:f:lower}
\end{align}
under \eqref{cond:wbs:three}.
Besides, since $\mc X_{s_\circ, \c_\circ, e_\circ} \ge \mc X_{s_\circ, \cp_j, e_\circ}$, it holds
\begin{align}
\label{eq_ck_wbs_a}
\mc X_{s_\circ, \cp_j, e_\circ}(f) - \mc X_{s_\circ, \c_\circ, e_\circ}(f)
\le 
\mc X_{s_\circ, \c_\circ, e_\circ}(\vep) - \mc X_{s_\circ, \cp_j, e_\circ}(\vep)
\end{align}
and further, the positivity of the LHS of \eqref{eq_ck_wbs_a} implies
\begin{align}
1 \le \frac{|\mc X_{s_\circ, \cp_j, e_\circ}(\vep) - \mc X_{s_\circ, \c_\circ, e_\circ}(\vep)|}
{\mc X_{s_\circ, \cp_j, e_\circ}(f) - \mc X_{s_\circ, \c_\circ, e_\circ}(f)}.
\label{eq_ck_wbs}
\end{align}

Using the notations adopted in the proof of Proposition~\ref{prop:step:three},
we denote $\mc X_{s, b, e}(\vep) = \sqrt{\mc W}_b \mc E_b$ 
(suppressing the dependence on $s$ and $e$).
Then, 
\begin{align*}
|\mc X_{s_\circ, \c_\circ, e_\circ}(\vep) - \mc X_{s_\circ, \cp_j, e_\circ}(\vep)|
\le \l\vert \sqrt{\mc W_{\c_\circ}} - \sqrt{\mc W_{\cp_j}} \r\vert \; \l\vert \mc E_{\cp_j} \r\vert
+ \sqrt{\mc W_{\c_\circ}} \vert \mc E_{\cp_j} - \mc E_{\c_\circ} \vert.
\end{align*}
By the mean value theorem,
\begin{align*}
\l\vert \sqrt{\mc W_{\c_\circ}} - \sqrt{\mc W_{\cp_j}} \r\vert
\le \frac{\sqrt{2} (\c_\circ - \cp_j)}{\min(\cp_j - s_\circ, e_\circ - \cp_j)^{3/2}}.
\end{align*}
Also, on $\mc M_n^{(11)} $,
\begin{align*}
& \vert \mc E_{\cp_j} \vert = \l\vert 
\frac{e_\circ - \cp_j}{e_\circ - s_\circ} \sum_{t = s_\circ + 1}^{\cp_j} \vep_t -
\frac{\cp_j - s_\circ}{e_\circ - s_\circ} \sum_{t = \cp_j + 1}^{e_\circ} \vep_t \r\vert
\le \sqrt{2\min(\cp_j - s_\circ, e_\circ - \cp_j)} \; \omega_n,
\\
& \vert \mc E_{\cp_j} - \mc E_{\c_\circ} \vert = 
\l\vert \sum_{t = \cp_j + 1}^{\c_\circ} \vep_t - 
\frac{\c_\circ - \cp_j}{e _\circ- s_\circ} \sum_{t = s_\circ + 1}^{e_\circ} \vep_t
\r\vert \le
\sqrt{\c_\circ - \cp_j} \; \omega_n + 
\frac{\c_\circ - \cp_j}{\sqrt{e _\circ- s_\circ}} \; \omega_n \le 2\sqrt{\c_\circ - \cp_j} \; \omega_n.
\end{align*}

Combining the above, we arrive at
\begin{align}
|\mc X_{s_\circ, \c_\circ, e_\circ}(\vep) - \mc X_{s_\circ, \cp_j, e_\circ}(\vep)| 
\le \frac{2\,(\c_{\circ}-\cp_j)}{\min(\cp_j - s_{\circ}, e_{\circ} - \cp_j)} \; \omega_n
+ \sqrt{\frac{8(\c_{\circ}-\cp_j)}{\min(\c_\circ - s_{\circ}, e_{\circ} - \c_\circ)}} \; \omega_n. \label{eq_noise_ck}
\end{align}

%

The proof proceeds by considering the following possible scenarios.

{\bf Case 1:} There is at least one change point to the right of $\cp_j$ in $(s_\circ, e_\circ)$,
i.e., $\cp_{j + 1} < e_{\circ}$, and $\mc X_{s_\circ, \cp_j, e_\circ}(f) \ge 
\mc X_{s_\circ, \cp_{j + 1}, e_\circ}(f)$.
Adopting the arguments in the proof of Theorem~2 in \cite{wang2018} under their Case~2~(b),
we can show that $\cp_j - s_\circ \ge c_1\delta_j$ for some universal constant $c_1 \in (0, 1]$;
otherwise, we cannot have $\mc X_{s_\circ, \cp_j, e_\circ}(f) \ge 
\mc X_{s_\circ, \cp_{j + 1}, e_\circ}(f)$.
This ensures that $j \in \{u + 1, \ldots, v - 1\}$.
Then, 
\begin{align*}
\frac{c_1 \delta_j \mc X_{s_\circ, \cp_j, e_\circ}(f) (\c_\circ - \cp_j)}{2n^2}
& \le \mc X_{s_\circ, \cp_j, e_\circ}(f) - \mc X_{s_\circ, \c_\circ, e_\circ}(f) 
\\
& \le \vert \mc X_{s_\circ, \cp_j, e_\circ}(\vep)- \mc X_{s_\circ, \c_\circ, e_\circ}(\vep) \vert
\le 4\omega_n
\end{align*}
where the first inequality follows from Lemma~9 of \cite{wang2018b}
(with $\mc X_{s_\circ, s_\circ + t, e_\circ}(f)$, $e_\circ - s_\circ$, $\cp_j - s_\circ$, $\cp_{j + 1} - s_\circ$ 
and $c_1\delta_j/n$ taking the roles of $g(t)$, $n$, $z$, $z^\prime$ and $\tau$ therein, respectively),
the second from \eqref{eq_ck_wbs_a} and the last from \eqref{eq:pf:wbs:two}.
Together with \eqref{eq:f:lower}
and that $\min(\cp_j - s_\circ, e_\circ - \cp_j) \ge c_1\delta_j$,
we obtain
\begin{align}
\label{eq:rough:bound}
\c_\circ - \cp_j
\le 24 \sqrt{6} (c_1 |d_j| \delta_j^{3/2})^{-1} n^2 \omega_n
< \frac{c_1}{2} \delta_j
\end{align}
under \eqref{cond:wbs:one} for $n$ large enough
which, together with \eqref{eq_noise_ck}, leads to
\begin{align}
\label{eq_noise_ck2}
|\mc X_{s_\circ, \c_\circ, e_\circ}(\vep) - \mc X_{s_\circ, \cp_j, e_\circ}(\vep)| 
\le 8\sqrt{\frac{\c_{\circ}-\cp_j}{\min(\cp_j - s_{\circ}, e_{\circ} - \cp_j)}} \; \omega_n. 
\end{align}
Then, combining this with \eqref{eq:f:lower}--\eqref{eq_ck_wbs}, we yield
\begin{align*}
1 &\le \frac{|\mc X_{s_\circ, \cp_j, e_\circ}(\vep) - \mc X_{s_\circ, \c_\circ, e_\circ}(\vep)|}
{\mc X_{s_\circ, \cp_j, e_\circ}(f) - \mc X_{s_\circ, \c_\circ, e_\circ}(f)}
\le \frac{8\omega_n\sqrt{(\c_\circ - \cp_j)/\min(\cp_j - s_\circ, e_\circ - \cp_j)} }
{c_1 \delta_j \mc X_{s_\circ, \cp_j, e_\circ}(f)(\c_\circ - \cp_j)/(2n^2)}
\le \frac{32\sqrt{6} n^2 \omega_n}
{\sqrt{c_1^3d_j^2 \delta_j^4 (\c_\circ - \cp_j)}}
\end{align*}
such that under \eqref{cond:wbs:one}, we can find some fixed $c_W$ 
for $\rhow = c_W n^{4 - 4\beta} (\omega_n)^2$ satisfying
$d_j^2(\c_\circ - \cp_j) \le 6144 c_1^{-3} \delta_j^{-4} n^4 \omega_n^2 \le \rhow$.
\medskip

{\bf Case 2:} $\cp_{j + 1} < e_\circ$ and 
$\mc X_{s_\circ, \cp_j, e_\circ}(f) < \mc X_{s_\circ, \cp_{j + 1}, e_\circ}(f)$.
In this case, from Lemma~8~(d) of \cite{wang2018b}, 
$\mc X_{s_\circ, b, e_\circ}(f)$ strictly decreases and then increases for $\cp_j \le b \le \cp_{j + 1}$
without changing sign,
and thus we can find $\tau := \max\{\cp_j  + 1 \le b \le \cp_{j + 1}: \,
\mc X_{s_\circ, b, e_\circ}(f) \le \mc X_{s_\circ, \cp_{j + 1}, e_\circ}(f) - 4\omega_n\}$.
Adopting the arguments  in the proof of Theorem~2 in \cite{wang2018}
under their Case~2~(c),  
we have $e_\circ - \cp_{j + 1} \ge c_1\delta_{j + 1}$,
which in turn leads to $\cp_{j + 1} - \tau + 1 < c_1\delta_{j + 1}/2$.
Since by construction and the first line of \eqref{eq:f:lower} we get
\begin{align*}
\mc X_{s_\circ, \cp_j, e_\circ}(f) 
\ge \mc X_{s_\circ, \cp_{j + 1}, e_\circ}(f) - 4\omega_n
\ge \mc X_{s_\circ, \tau, e_\circ}(f),
\end{align*}
we can then adopt the same argument as in Case~1 and prove the claim,
by applying Lemma~9 of \cite{wang2018b}
with $\mc X_{s_\circ, s_\circ + t, e_\circ}(f)$, $e_\circ - s_\circ$, $\cp_j - s_\circ$, $\tau - s_\circ$ 
and $c_2\delta_j/n$ for some $c_2 \in (0, c_1/2]$ taking the roles of $g(t)$, $n$, $z$, $z^\prime$ and $\tau$
in the lemma, respectively.
\medskip

{\bf Case 3:} There is no change point to the right of $\cp_j$ in $(s_\circ, e_\circ)$,
i.e., $\cp_{j + 1} \ge e_\circ$.
We first establish that $\min(\cp_j - s_\circ, e_\circ - \cp_j) \ge \min(c_2, c_3)\delta_j$ for
$c_2$ introduced under Case~2 and some $c_3 \in (0, 1/24]$,
which ensures that $j \in \{u + 1, \ldots, v - 1\}$.
To this end, consider the following two cases: (a) $\cp_{j - 1} \le s_\circ$ and (b) $\cp_{j - 1} > s_\circ$.
Under (a), if $\min(\cp_j - s_\circ, e_\circ - \cp_j) < c_3\delta_j$,
by construction and from \eqref{eq:pf:wbs:two}--\eqref{cond:wbs:three} we yield
\begin{align*}
\mc X_{s_\circ, \c_\circ, e_\circ} \le \sqrt{c_3d_j^2\delta_j} + 2\omega_n <
|\mc X_{s_{m(j)}, \cp_j, e_{m(j)}}(f)| - 2\omega_n \le |\mc X_{s_{m(j)}, \c_j^*, e_{m(j)}}|
\end{align*}
for large enough $n$,
which contradicts \eqref{eq:pf:wbs:one}.
Under (b), we have either $\mc X_{s_\circ, \cp_j, e_\circ}(f) \ge \mc X_{s_\circ, \cp_{j - 1}, e_\circ}(f)$ or not.
In either situations, applying the arguments borrowed from \cite{wang2018b} 
under Cases~1--2 in the reverse direction,
we can establish that $e_\circ - \cp_j \ge c_2\delta_j$.

Next, define $\vartheta = \frac{1}{\cp_j - s_\circ} \sum_{t = s_\circ + 1}^{\cp_j} f_t - f_{\cp_j + 1}$.
Then,
\begin{align*}
\mc X_{s_\circ, \cp_j, e_\circ}(f) \le \vartheta \sqrt{\min(\cp_j - s_\circ, e_\circ - \cp_j)}.
\end{align*}
Applying Lemma~7 of \cite{wang2018b}
with $e_\circ - s_\circ$ and $\cp_j - s_\circ$ taking the roles of $n$ and $z$ in the lemma, respectively,
we obtain
\begin{align}
&\mc X_{s_\circ, \cp_j, e_\circ}(f) - \mc X_{s_\circ, \c_\circ, e_\circ}(f) 
\ge \frac{2\vartheta(\c_\circ - \cp_j)}{3\sqrt{6 \min(\cp_j - s_\circ, e_\circ - \cp_j)}}
\ge \frac{2\,\mc X_{s_\circ, \cp_j, e_\circ}(f)(\c_\circ - \cp_j)}{3\sqrt{6} \min(\cp_j - s_\circ, e_\circ - \cp_j)}.
\label{eq:lem:19}
\end{align}

Combining \eqref{eq:pf:wbs:two}, \eqref{cond:wbs:three}, \eqref{eq:f:lower}, \eqref{eq_ck_wbs_a} and \eqref{eq:lem:19},
\begin{align*}
&  \c_\circ - \cp_j \le 72 (d_j^2\delta_j)^{-1/2} \omega_n 
\min(\cp_j - s_\circ, e_\circ - \cp_j)
\le \frac{1}{2} \min(\cp_j - s_\circ, e_\circ - \cp_j)
\end{align*}
for large enough $n$.
Then, \eqref{eq:f:lower} and \eqref{eq_ck_wbs} with \eqref{eq_noise_ck2} yields
\begin{align*}
1 &\le \frac{|\mc X_{s_\circ, \cp_j, e_\circ}(\vep) - \mc X_{s_\circ, \c_\circ, e_\circ}(\vep)|}{
\mc X_{s_\circ, \cp_j, e_\circ}(f) - \mc X_{s_\circ, \c_\circ, e_\circ}(f)}
\le 
\frac{4\omega_n\sqrt{(\c_\circ - \cp_j)/\min(\cp_j - s_\circ, e_{\circ} - \cp_j)}}
{\mc X_{s_\circ, \cp_j, e_\circ}(f) (\c_\circ - \cp_j)/\{3\sqrt{6} \min(\cp_j - s_\circ, e_\circ - \cp_j)\}}
\\
&\le
\frac{144\omega_n\sqrt{\min(\cp_j - s_\circ, e_\circ - \cp_j)}}
{\sqrt{d_j^2\delta_j(\c_\circ - \cp_j)}}
\le
\frac{144\;\omega_n}
{\sqrt{d_j^2(\c_\circ - \cp_j)}}
\end{align*}
by noting that $\delta_j \ge e_\circ - \cp_j \ge \min(\cp_j - s_\circ, e_\circ - \cp_j)$ under Case~3.
Therefore, there exists some large $c_W > 0$ 
such that
$d_j^2(\c_\circ - \cp_j) \le 144^2 \, \omega_n^2 \le \rhow$.
\medskip

In all Cases~1--3, we have established that 
$\min(\cp_j - s_\circ, e_\circ - \cp_j) \ge \min(c_2, c_3)\delta_j$
(recalling that $c_2 < c_1/2$).
From that $d_j^2|\c_\circ - \cp_j| \le \rhow$ and \eqref{cond:wbs:one},
we yield 
\begin{align*}
\frac{|\c_\circ - \cp_j|}{\delta_j} \le \frac{c_Wd_j^{-2}n^{4 - 4\beta}\omega_n^2}{c_\beta n^\beta} \to 0
\end{align*}
as $n \to \infty$.
Hence, $\min(\c_\circ - s_\circ, e_\circ - \c_\circ) \ge \{\min(c_2, c_3) + o(1)\}\delta_j$
and we conclude that (c) holds with some $c \in (0, \min(c_2, c_3))$.

\section{Proof of the result in Section~\ref{sec:optimal}}

\subsection{Proof of Proposition~\ref{prop:vep}}

We first prove assertion (a):
By Hoeffding's inequality (see Theorem~2.6.3 of \cite{vershynin2018}), we have
\begin{align*}
\p\l(\max_{0 \le s < e \le n} \frac{1}{\sqrt{e - s}} \l\vert \sum_{t = s + 1}^e \vep_t \r\vert \ge 
\omega_n \r)
\le n (n + 1) \exp\l(-c_{\vep}\omega_n^2\r)
\end{align*}
where $c_{\vep}$ is an absolute constant depending on the distribution of $\vep_t$ (via its Orlicz norm).
Consequently, $\omega_n = \sqrt{3\log(n)/c_{\vep}}$ fulfils Assumption~\ref{assum:vep}. 
Next, define $S^{\pm}_{\ell}=\pm\,\sum_{t=1}^{\ell}\vep_t$ 
and note that $\{\exp(S^{\pm}_{\ell})\}$ is a non-negative sub-martingale.
Then following the proof of Lemma~5 of \cite{wang2018b},
by Doob's martingale inequality and Proposition~2.5.2~(v) of \cite{vershynin2018},
we get for some constant $c^\prime_{\vep} > 0$ 
only depending on the distribution of $\vep_t$ (via its Orlicz norm)
and any $0 < l_n < u_n < \infty$,
\begin{align*}
& \p\left(\max_{l_n \le \ell \le u_n} \frac{\sqrt{l_n}}{\ell} S^{\pm}_{\ell} \ge \omegao \right) \le  
\sum_{i = \lfloor \log_2 l_n\rfloor}^{\lceil \log_2 u_n \rceil} 
\p\left( \max_{2^{i - 1} \le \ell < 2^i} S^{\pm}_{\ell} \ge \frac{2^{i - 1} \omegao}{\sqrt{l_n}} \right)
\\
& =\sum_{i = \lfloor \log_2l_n \rfloor}^{\lceil \log_2u_n \rceil} \inf_{\lambda > 0} 
\p\left( \max_{2^{i - 1} \le \ell < 2^i}  e^{\lambda S^{\pm}_{\ell}} \ge 
\exp\l(\frac{\lambda 2^{i - 1}\omegao}{\sqrt{l_n}}\r)\right)
\\
& \le \sum_{i = \lfloor \log_2l_n \rfloor}^{\lceil \log_2u_n \rceil} \inf_{\lambda > 0}
\exp\l(2^i c^{\prime}_{\vep} \lambda^2 - \frac{\lambda 2^{i - 1}\omegao}{\sqrt{l_n}}\r)
= \sum_{i = \lfloor \log_2l_n \rfloor}^{\lceil \log_2u_n \rceil} 
\exp\left( -\frac{2^{i+1}\,(\omegao)^2}{32\, c_{\vep}^{\prime} l_n }\right)
\\
& \le \sum_{j = 0}^{\lceil \log_2u_n \rceil-\lfloor \log_2l_n \rfloor} \exp\left( -\frac{2^j(\omegao)^2}{32\,c_{\vep}^{\prime}} \right)
\le \sum_{j \ge 0}(\max(q_n,\nu_n))^{-2^{j+1}}
\le \frac{2}{(\max(q_n,\nu_n))^2}
\end{align*}
for $\omegao=\sqrt{64 c^{\prime}_{\vep} \log(\max(q_n,\nu_n))}$ provided that $\max(q_n, \nu_n) \ge 2$.
Consequently,
\begin{align*}
& \p\left( \max_{l_n \le \ell \le u_n} \frac{\sqrt{l_n}}{\ell}
\left|\sum_{t = 1}^{\ell} \vep_t \right| \ge \omegao \right) \le \frac{4}{(\max(q_n, \nu_n))^2}.
\end{align*}
By sub-additivity and the i.i.d.\ assumption of the errors (distributional equality)
and since $\nu_n \to \infty$ at an arbitrary rate,
the above choice for $\omegao$ fulfils Assumption~\ref{assum:vep}. 
Furthermore, by L\'{e}vy's reflection principle (see e.g., Theorem~3.1.11 of \cite{gine2016})
and Hoeffding's inequality, it holds for some constant $c_{\vep}^{\prime\prime}$ and any $u_n > 1$,
\begin{align*}
&	\p\left( \max_{1 \le \ell \le  u_n} \frac{1}{\sqrt{u_n}} S_{\ell}^{\pm}
\ge \omegat \right) \le 2\, \p\left( \frac{1}{\sqrt{u_n}} S_{u_n}^{\pm}\ge \omegat \right)
\le 4\,\exp\left( -c_{\vep}^{\prime\prime}(\omegat)^2 \right).
\end{align*}
Therefore, with $\omegat = \sqrt{2\,\log(\max(q_n,\nu_n))/c_{\vep}^{\prime\prime}}$, we have
\begin{align*}
\p\left( \max_{1 \le \ell \le  u_n} \frac{1}{\sqrt{u_n}}
\left|\sum_{j=1}^{\ell}\vep_t\right| \ge \omegat \right) \le \frac{8}{(\max(q_n,\nu_n))^2},
\end{align*}
such that the proof can be completed as for $\omegao$, concluding the proof of (a).

The assertion in (c.i) follows directly from the invariance principle 
and the fact that $\lambda_n = o(\sqrt{n})$ in addition to (a) 
($\omega_n \asymp \sqrt{\log(n)}$ derived for the increments of the Wiener process). 
To prove (c.ii), let 
\begin{align*}
M_n(j) = \max_{d_j^{-2} a_n \le \ell \le \cp_j - \cp_{j - 1}} 
\frac{\sqrt{d_j^{-2}a_n} }{\ell}\left\vert \sum_{t = \cp_j - \ell + 1}^{\cp_j} \vep_t\right \vert.
\end{align*}
Then, by Theorem~B.3 in \citet{kirch2006resampling}, it holds uniformly in $j$ that
\begin{align*}
& \E|M_n(j)|^{\gamma} = O(1) \, \left( \frac{1}{(d_j^{-2}a_n)^{\gamma/2}}
\sum_{i = 1}^{d_j^{-2}a_n} i^{\gamma/2 - 1} + 
(d_j^{-2}a_n)^{\gamma/2} \, \sum_{i > d_j^{-2}a_n}\frac{1}{i^{\gamma/2 + 1}} \right) = O(1),
\end{align*}
where $O(1)$ does not depend on $j$.
From this and Markov's inequality, we yield
\begin{align*}
\p\left(\max_{1 \le j \le q_n} M_n(j) \ge \omegao \right) &\le 
\frac{q_n \, \max_{1 \le j \le q_n} \E\left(\vert M_n(j) \vert^{\gamma} \right)}{(\omegao)^{\gamma}}
= O(1)\,\left(\frac{q_n^{1/\gamma}}{\omegao}\right)^{\gamma}
\end{align*}
such that the claim follows with $\omegao \asymp q_n^{1/\gamma}\,\nu_n$. 
The assertion for $\omegat$ follows analogously.

The assertion in (b) for $\omega_n$ follows directly from Theorem~1.1 of \citet{mikosch2010}.
The assertion for $\omegao$ and $\omegat$ follows analogously as in the proof of (c.ii):
The moments $\E(\vep_t^{\beta^\prime})$ and $\E(\vep_t^\beta)$ 
for all $\beta <\beta^{\prime} < \alpha$ exist
and independent and centred sequences fulfil the moment condition in (c.ii),
see e.g.,\ Theorem~3.7.8 of \cite{stout1974}.
By (c.ii),  it gives $\omegao \asymp q_n^{1/\beta^{\prime}}\,\nu_n$ 
so that the given choice $\omegao \asymp\max(q_n^{1/\beta}, \nu_n)$ is also valid.

\section{Proofs of the results in Section~\ref{sec:loc}}
\label{sec:pf:main:props}

Recall that for the change point currently under consideration, $\cp_{\circ} \in \mc D := \C \cap (s, e)$,
we write its neighbouring change points as $\cp_{\pm}$ 
(i.e., $\Cp \cap [\cp_{-},\cp_{\circ}) = \{\cp_{-}\}$ 
as well as $\Cp \cap (\cp_{\circ}, \cp_{+}] = \{\cp_{+}\}$),
allowing $\cp_- = 0$ and $\cp_+ = n$,
and denote the associated jump sizes by $d_\circ$ and $d_{\pm}$, respectively.
For any candidate $\c_\circ \in \mc D$ and a subset $\mc A \subset \mc D$ in consideration, 
let $\c_\pm \in \mc A \cup \{s, e\}$ satisfy $\c_- < \c_\circ < \c_+$ and 
$(\c_-, \c_+) \cap \mc A \setminus \{\c_\circ\} = \emptyset$.
Also, within the corresponding interval $(\c_-, \c_+]$,
we refer to $\c_\circ$ as detecting $\cp_\circ \in \Cp \cap (\c_-, \c_+]$
if $\cp_\circ = \arg\min_{\cp \in \Cp \cap (\c_-, \c_+]} |\c_\circ - \cp|$,
even though there may be some $\cp_j \notin (\c_-, \c_+]$ closer to $\c_\circ$ than $\cp_\circ$.
In addition, we denote by $\c_{\circ}^*$ ($\c_{\circ}^{\prime}$)
a strictly valid (acceptable) estimator for $\cp_{\circ}$.
We write $\sic(\mc A) = \sic(\mc A|\mc C, \wh{\Cp}, s, e)$
where there is no confusion,
since the difference between $\sic(\mc A|\mc C, \wh{\Cp}, s, e)$ 
and $\sic(\mc A^\prime|\mc C, \wh{\Cp}, s, e)$ does not depend on $s$, $e$, $\mc C \setminus (s, e)$ and $\wh{\Cp}$
for any $\mc A, \mc A^\prime \subset \mc D = \C \cap (s, e)$.

\subsection{Auxiliary lemmas}
\label{sec:prem:res}

In this section, we list some auxiliary lemmas that will be used in  
the proof of Theorem~\ref{thm:sbic}.
Unless stated otherwise, we assume that
the conditions made in Theorem~\ref{thm:sbic} are met throughout.

Recall the definition of the CUSUM statistic computed on $X_t$ as
\begin{align*}
\mc X_{\c_-, \c_\circ, \c_+} \equiv \mc X_{\c_-, \c_\circ, \c_+}(X) := 
\sqrt{\frac{\c_+ - \c_-}{(\c_+ - \c_\circ)(\c_\circ - \c_-)}}
\sum_{t = \c_- + 1}^{\c_\circ} (X_t - \bar{X}_{(\c_- + 1):\c_+})
\end{align*}
for any $1 \le \c_- + 1 \le \c_\circ < \c_+ \le n$,
and analogously define $\mc X_{\c_-, \c_\circ, \c_+}(f)$ and $\mc X_{\c_-, \c_\circ, \c_+}(\vep)$
with $f_t$ and $\vep_t$ in place of $X_t$, respectively.
Also, we use the notation $\bar{f}_{u:v} = (v - u + 1)^{-1}\sum_{t = u}^v f_t$ for any $1 \le u \le v \le n$
and $\bar{\vep}_{u:v}$ is defined analogously.

\begin{lem}
\label{lem_ck_1}
For $\max(\c_-, \cp_-) < \c < \cp_{\circ} < \min(\c_+,\cp_+)$,
it holds with $r_+ := \max(0, \c_+-\cp_{+})$ and $r_- := \max(0,\cp_--\c_-)$,
\begin{align*}
\mathcal{F}_{\c}= \sum_{t=\c_-+1}^{\c} (f_t - \bar{f}_{(\c_-+1):\c_+})
= -\frac{(\c-\c_-)\,(\c_+-\cp_{\circ})}{\c_+-\c_-}\, d_{\circ} -
\frac{\c-\c_-}{\c_+-\c_-}\,d_{+}r_+-\frac{\c_+-k}{\c_+-\c_-}\,d_-r_-
\end{align*}
as well as
\begin{align*}
\mc X_{\c_-,\c,\c_+}(f)
= -\sqrt{\frac{(\c-\c_-)(\c_+-\c)}{\c_+-\c_-}} 
\left(\frac{(\c_{+}-\cp_{\circ})\,d_{\circ}}{\c_+-\c} + 
\frac{r_+\,d_+}{\c_+-\c}+\frac{r_-\,d_-}{\c-\c_-}\right).
\end{align*}

Similarly, for $\max(\c_-, \cp_-) < \cp_{\circ} \le \c < \min(c_+, \cp_+)$, it holds
\begin{align*}
\mathcal{F}_{\c}
= -\frac{(\c_+-\c)\,(\cp_{\circ}-\c_-)}{\c_+-\c_-}\, d_{\circ} -
\frac{\c-\c_-}{\c_+-\c_-}\,d_{+}r_+-\frac{\c_+-k}{\c_+-\c_-}\,d_-r_-
\end{align*}
as well as
\begin{align*}
\mc X_{\c_-,\c,\c_+}(f) =
- \sqrt{\frac{(\c-\c_-)(\c_+-\c)}{\c_+-\c_-}} 
\left(\frac{(\cp_{\circ}-\c_-)\,d_{\circ}}{\c-\c_-} + 
\frac{r_+\,d_+}{\c_+-\c} + \frac{r_-\,d_-}{\c-\c_-}\right).
\end{align*}
\end{lem}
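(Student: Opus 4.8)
The plan is to compute the partial sum $\mathcal{F}_\c = \sum_{t=\c_-+1}^{\c}(f_t - \bar f_{(\c_-+1):\c_+})$ directly from the piecewise-constant structure of $f$, and then to substitute into the definition of $\mc X_{\c_-,\c,\c_+}(f)$. First I would note that on the interval $(\c_-,\c_+]$ the signal $f$ takes at most three distinct values determined by the neighbouring change points: since $\Cp\cap(\cp_-,\cp_+)=\{\cp_\circ\}$, any change points of $f$ inside $(\c_-,\c_+]$ other than $\cp_\circ$ can only be $\cp_-$ (if $\c_-<\cp_-$, i.e.\ $r_->0$) or $\cp_+$ (if $\c_+>\cp_+$, i.e.\ $r_+>0$). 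Writing $f$ relative to its value on the block containing $\c$, the contributions of the jumps $d_\circ$, $d_+$, $d_-$ decouple additively, so it suffices to handle each jump separately and sum.

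The key computational identity is the elementary fact that for a signal which is $0$ on $\{\c_-+1,\dots,m\}$ and equals $d$ on $\{m+1,\dots,\c_+\}$ with $\c_-<m<\c_+$, one has $\sum_{t=\c_-+1}^{\c_+} g_t = (\c_+-m)d$ and hence $\bar g_{(\c_-+1):\c_+} = \frac{\c_+-m}{\c_+-\c_-}d$, while $\sum_{t=\c_-+1}^{\c} g_t$ equals $0$ if $\c\le m$ and $(\c-m)d$ if $\c>m$; subtracting $(\c-\c_-)\bar g_{(\c_-+1):\c_+}$ then yields the centred partial sum. Applying this with $m=\cp_\circ$ (jump $d_\circ$), and — in the case $\max(\c_-,\cp_-)<\c<\cp_\circ$ where $\c\le\cp_\circ$ — one gets the term $-\frac{(\c-\c_-)(\c_+-\cp_\circ)}{\c_+-\c_-}d_\circ$. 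The $\cp_+$-jump sits entirely to the right of $\c$ (since $\c<\cp_\circ<\cp_+$) so it contributes $-\frac{\c-\c_-}{\c_+-\c_-}d_+ r_+$ with $m=\cp_+=\c_+-r_+$; the $\cp_-$-jump sits entirely to the left of $\c$ (since $\cp_-<\c_-$... wait, $\cp_-<\cp_\circ$ and $r_->0$ means $\cp_->\c_-$, so $\cp_-\le\c_-$? no) — here I would be careful: $r_-=\max(0,\cp_--\c_-)$, so $r_->0$ means $\cp_->\c_-$, and since $\cp_-<\cp_\circ<\min(\c_+,\cp_+)$ and the case hypothesis puts $\c>\max(\c_-,\cp_-)$, the $\cp_-$-jump lies in $\{\c_-+1,\dots,\cp_-\}$, entirely left of $\c$, contributing $-\frac{\c_+-\c}{\c_+-\c_-}d_- r_-$ (taking $m=\cp_-=\c_-+r_-$ and using that $\c>\cp_-$ gives the $(\c-m)d_-$ term, then recombining). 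Summing the three pieces gives the stated formula for $\mathcal{F}_\c$.

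For the CUSUM formula I would use the identity $\mc X_{\c_-,\c,\c_+}(f) = \sqrt{\frac{\c_+-\c_-}{(\c-\c_-)(\c_+-\c)}}\,\mathcal{F}_\c$, factor $-\sqrt{(\c-\c_-)(\c_+-\c)/(\c_+-\c_-)}$ out of the three terms, and simplify each ratio: e.g.\ $\frac{(\c-\c_-)(\c_+-\cp_\circ)}{\c_+-\c_-}\cdot\frac{\c_+-\c_-}{(\c-\c_-)(\c_+-\c)} = \frac{\c_+-\cp_\circ}{\c_+-\c}$, and similarly for the $d_+r_+$ and $d_-r_-$ terms, yielding the displayed expression. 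The second half of the lemma, for $\max(\c_-,\cp_-)<\cp_\circ\le\c<\min(\c_+,\cp_+)$, is handled identically except that now $\c\ge\cp_\circ$, so the $d_\circ$-jump lies to the \emph{left} of $\c$, which swaps the roles and produces $-\frac{(\c_+-\c)(\cp_\circ-\c_-)}{\c_+-\c_-}d_\circ$ in place of the earlier term, while the $d_\pm r_\pm$ contributions are unchanged. I do not expect any genuine obstacle here; the only thing to watch is the bookkeeping of which jumps fall left versus right of $\c$ under each case hypothesis, and consistently using $r_\pm=\max(0,\cdot)$ so that the formula is valid regardless of whether $\c_\pm$ overshoots $\cp_\pm$.
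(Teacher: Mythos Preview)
Your proposal is correct and is precisely the ``straightforward calculation'' the paper invokes as its entire proof; the decomposition of $f$ on $(\c_-,\c_+]$ into a constant plus at most three step functions (one per jump $d_\circ$, $d_+$, $d_-$), the elementary centred-partial-sum identity for a single step, and the substitution into $\mc X_{\c_-,\c,\c_+}(f)=\sqrt{(\c_+-\c_-)/\{(\c-\c_-)(\c_+-\c)\}}\,\mathcal F_\c$ are exactly what is meant. Your bookkeeping on which side of $\c$ each jump lies under the two case hypotheses is correct, including the clarification that $r_->0$ forces $\cp_->\c_-$ and hence $\cp_-<\c$ by the hypothesis $\c>\max(\c_-,\cp_-)$.
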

\begin{proof}
The results follow from straightforward calculations.
\end{proof}

\begin{lem}
\label{lem:cusum}
For an arbitrary set $\mc A \subset \C$ and $\c_\circ \in \mc A$,
let $\c_\pm \in \mc A \cup \{0, n\}$ satisfy 
$\c_- < \c_\circ < \c_+$ with $\mc A \cap (\c_-, \c_+) = \emptyset$.
Then,
\begin{align*}
\rss(\mc A \setminus \{\c_\circ\}) - \rss(\mc A) = |\mc X_{\c_-, \c_\circ, \c_+}|^2.
\end{align*}
\end{lem}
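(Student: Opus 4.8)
The plan is to reduce the identity to a single interval and then invoke the classical two-sample analysis-of-variance decomposition. First I would observe that $\c_-$ and $\c_+$ are the partition points in $\mc A \cup \{0,n\}$ flanking $\c_\circ$, so the partition of $\{1,\ldots,n\}$ generated by $(\mc A \setminus \{\c_\circ\}) \cup \{0,n\}$ agrees with the one generated by $\mc A \cup \{0,n\}$ except that the two consecutive blocks $(\c_-, \c_\circ]$ and $(\c_\circ, \c_+]$ are merged into the single block $(\c_-, \c_+]$. Hence, in the difference $\rss(\mc A \setminus \{\c_\circ\}) - \rss(\mc A)$, every summand attached to an unaffected block cancels, and we are left with
\[
\rss(\mc A \setminus \{\c_\circ\}) - \rss(\mc A) = \sum_{t = \c_- + 1}^{\c_+} \big(X_t - \bar{X}_{(\c_-+1):\c_+}\big)^2 - \sum_{t = \c_- + 1}^{\c_\circ} \big(X_t - \bar{X}_{(\c_-+1):\c_\circ}\big)^2 - \sum_{t = \c_\circ + 1}^{\c_+} \big(X_t - \bar{X}_{(\c_\circ+1):\c_+}\big)^2 .
\]

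Next I would apply to each of the two sub-blocks $B \in \{(\c_-,\c_\circ], (\c_\circ,\c_+]\}$ the elementary identity $\sum_{t \in B}(X_t - \bar{X}_{(\c_-+1):\c_+})^2 = \sum_{t \in B}(X_t - \bar{X}_B)^2 + |B|\,(\bar{X}_B - \bar{X}_{(\c_-+1):\c_+})^2$, which holds because $\sum_{t\in B}(X_t - \bar{X}_B) = 0$. This collapses the right-hand side of the last display to the pure between-block term
\[
(\c_\circ - \c_-)\big(\bar{X}_{(\c_-+1):\c_\circ} - \bar{X}_{(\c_-+1):\c_+}\big)^2 + (\c_+ - \c_\circ)\big(\bar{X}_{(\c_\circ+1):\c_+} - \bar{X}_{(\c_-+1):\c_+}\big)^2 .
\]
Substituting the weighted-mean relation $\bar{X}_{(\c_-+1):\c_+} = \big[(\c_\circ - \c_-)\bar{X}_{(\c_-+1):\c_\circ} + (\c_+ - \c_\circ)\bar{X}_{(\c_\circ+1):\c_+}\big]/(\c_+ - \c_-)$ and collecting terms reduces this to $\frac{(\c_\circ - \c_-)(\c_+ - \c_\circ)}{\c_+ - \c_-}\big(\bar{X}_{(\c_-+1):\c_\circ} - \bar{X}_{(\c_\circ+1):\c_+}\big)^2$.

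Finally I would verify that this equals $|\mc X_{\c_-,\c_\circ,\c_+}|^2$. From the definition of the CUSUM statistic and the same weighted-mean relation, $\sum_{t=\c_-+1}^{\c_\circ}(X_t - \bar{X}_{(\c_-+1):\c_+}) = (\c_\circ - \c_-)\big(\bar{X}_{(\c_-+1):\c_\circ} - \bar{X}_{(\c_-+1):\c_+}\big) = \frac{(\c_\circ - \c_-)(\c_+ - \c_\circ)}{\c_+ - \c_-}\big(\bar{X}_{(\c_-+1):\c_\circ} - \bar{X}_{(\c_\circ+1):\c_+}\big)$; squaring this and multiplying by the normalising factor $\frac{\c_+ - \c_-}{(\c_+ - \c_\circ)(\c_\circ - \c_-)}$ that appears in $\mc X_{\c_-,\c_\circ,\c_+}$ reproduces exactly the expression obtained in the preceding step, completing the argument. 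I do not expect any real obstacle here: the only step needing a little care is the partition bookkeeping at the start, i.e.\ confirming that the merge of $(\c_-,\c_\circ]$ with $(\c_\circ,\c_+]$ is the sole change to the partition, and everything after that is the standard decomposition of a sum of squares into within- and between-group parts followed by routine algebra.
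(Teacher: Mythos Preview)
Your proposal is correct and follows essentially the same route as the paper: reduce to the single affected interval $(\c_-,\c_+]$ by cancelling the unchanged blocks, then apply the standard within/between sum-of-squares decomposition to obtain $\frac{(\c_\circ-\c_-)(\c_+-\c_\circ)}{\c_+-\c_-}(\bar X_{(\c_-+1):\c_\circ}-\bar X_{(\c_\circ+1):\c_+})^2$, which is $|\mc X_{\c_-,\c_\circ,\c_+}|^2$. The paper carries out the same computation slightly more directly via the identity $\sum_t(X_t-\bar X)^2=\sum_tX_t^2-|B|\bar X^2$ rather than the explicit ANOVA split, but the substance is identical.
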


\begin{proof}
\begin{align*}
& \rss(\mc A \setminus \{\c_\circ\}) - \rss(\mc A)
\\
=& \sum_{t = \c_- + 1}^{\c_+} (X_t - \bar{X}_{(\c_-+1):\c_+})^2 - 
\l\{\sum_{t = \c_-+1}^{\c_\circ} (X_t - \bar{X}_{(\c_-+1):\c_\circ})^2 + \sum_{t = \c_\circ+1}^{\c_+} (X_t - \bar{X}_{(\c_\circ+1):\c_+})^2\r\}
\\
=& (\c_\circ - \c_-)\bar{X}_{(\c_-+1):\c_\circ}^2+(\c_+ - \c_\circ)\bar{X}_{(\c_\circ+1):\c_+}^2 
\\& \qquad 
-\frac{1}{\c_+ - \c_-}\l\{(\c_\circ - \c_-)\bar{X}_{(\c_-+1):\c_\circ}
+ (\c_+ - \c_\circ)\bar{X}_{(\c_\circ+1):\c_+}\r\}^2
\\
=& \l\{\sqrt{\frac{(\c_\circ - \c_-)(\c_+ - \c_\circ)}{\c_+ - \c_-}} 
\l(\bar{X}_{(\c_-+1):\c_{\circ}}-\bar{X}_{(\c_{\circ}+1):\c_+}
\r)\r\}^2
= |\mc X_{\c_-, \c_\circ, \c_+}|^2.
\end{align*}
\end{proof}

\begin{lem}
\label{lem:rss:n}
Under Assumptions~\ref{assum:vep}, \ref{assum:size} and~\ref{assum:cand}~(b),
there exist fixed $C^{\prime}, C^{\prime\prime} > 0$ for which
we have $C^{\prime}n \le \rss(\mc A) \le C^{\prime\prime} n$ for any $\mc A \subset \C$ on $\mc M_n^{(11)}$.
\end{lem}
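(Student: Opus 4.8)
The plan is to bound $\rss(\mc A)$ for an arbitrary $\mc A \subset \C$ by sandwiching it between the RSS of the empty set and the RSS of the finest possible partition induced by $\C$, and to control both extremes on the event $\mc M_n^{(11)}$. The upper bound is the easy direction: for any $\mc A$, adding more change point candidates only decreases the $\rss$, so $\rss(\mc A) \le \rss(\emptyset) = \sum_{t=1}^n (X_t - \bar X_{1:n})^2$. Writing $X_t = f_t + \vep_t$ and using $\max_j |d_j| = O(1)$ together with $\min_j \delta_j \to \infty$, the signal part contributes $O(n)$ (indeed $\sum_{t=1}^n (f_t - \bar f_{1:n})^2 \le \sum_{t=1}^n f_t^2 = O(n)$ since $f_t$ is uniformly bounded), while the cross term and the noise term $\sum_t (\vep_t - \bar\vep_{1:n})^2 \le \sum_t \vep_t^2$ are controlled: on $\mc M_n^{(11)}$, the partial sums are $O(\sqrt{n}\,\omega_n)$, but to bound $\sum_t \vep_t^2$ directly one uses the ergodicity assumption in Assumption~\ref{assum:vep}, which gives $n^{-1}\sum_t \vep_t^2 \to \E(\vep_1^2) \le C$ (or, more carefully, one stays within the non-asymptotic framework and uses that $\Var(\vep_t) \le C$ plus a second-moment bound). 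This yields $\rss(\mc A) \le C'' n$ for a fixed $C''$.

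The lower bound is the more delicate direction. Here I would use that $\rss(\mc A) \ge \rss(\C)$, i.e. the RSS of the partition using \emph{all} candidates, and then argue that removing the signal cannot make the residual sum of squares too small. One clean route: $\rss(\mc A) = \min_{g \text{ piecewise const.\ on }\mc A} \sum_t (X_t - g_t)^2 \ge \sum_t (X_t - \hat f_t^{\C})^2$ where $\hat f^{\C}$ is the least-squares fit on the partition induced by $\C$. Since $\C$ contains at most $Q_n$ points and $n^{-1}\omega_n^2 Q_n \to 0$ (Assumption~\ref{assum:cand}~(b)), the number of segments is $o(n/\omega_n^2)$, so each segment has length growing faster than $\omega_n^2$ on average; the fitted values $\hat f^{\C}$ are block averages of the $X_t$, and on $\mc M_n^{(11)}$ the deviation of each block average from the block average of $f_t$ is $O(\omega_n/\sqrt{\ell})$ for a block of length $\ell$. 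A cleaner bookkeeping: $\sum_t (X_t - \hat f_t^{\C})^2 = \sum_t (X_t - \bar X_{1:n})^2 - \{\rss(\emptyset) - \rss(\C)\}$, and the reduction $\rss(\emptyset) - \rss(\C)$ telescopes into a sum of $|\mc X|^2$-type terms (via Lemma~\ref{lem:cusum} applied repeatedly), each of which is bounded on $\mc M_n^{(11)}$ by $O(\omega_n^2)$ plus a signal contribution, so the total reduction is $O(Q_n \omega_n^2) + O(n) \cdot (\text{bounded}) $ — but this needs care since the signal contribution to the reduction could itself be order $n$. The safe statement is simply: $\rss(\C) \ge \sum_t \vep_t^2 - (\text{fitted noise energy})$, and the fitted noise energy over $Q_n+1$ blocks is $O(Q_n \omega_n^2) = o(n)$ on $\mc M_n^{(11)}$; meanwhile $\sum_t \vep_t^2 \ge c' n$ for a fixed $c' > 0$ by ergodicity (or by the variance lower bound $\Var(\vep_t) \ge c$ together with a concentration/second-moment argument). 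Combining, $\rss(\mc A) \ge \rss(\C) \ge c' n - o(n) \ge C' n$ for $n$ large enough.

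The main obstacle I anticipate is controlling the lower bound rigorously \emph{uniformly} over all $\mc A \subset \C$ and simultaneously with the other good events — specifically, making precise that the least-squares fit on a partition with $o(n/\omega_n^2)$ pieces cannot absorb more than $o(n)$ of the noise energy $\sum_t \vep_t^2$. This is exactly where Assumption~\ref{assum:cand}~(b) enters, and the argument is that the energy removed by fitting a block mean on a block $(u,v]$ is $(v-u)^{-1}(\sum_{t=u+1}^v \vep_t)^2 \le \omega_n^2$ on $\mc M_n^{(11)}$ (possibly plus cross terms with $f$), so summing over at most $Q_n + 1$ blocks gives $O(Q_n\omega_n^2) = o(n)$. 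One must be a little careful that the signal $f$ also gets partially fitted away, but since $\sum_t (f_t - \bar f)^2 = O(n)$ anyway, the combined fit removes at most $O(n)$ from $\rss(\emptyset) = O(n)$, and the point is that the \emph{residual} retains the $\Theta(n)$ lower bound coming from $\Var(\vep_t) \ge c > 0$. I would organise the proof by first establishing $c' n \le \sum_{t=1}^n \vep_t^2 \le C''' n$ on (a high-probability subevent of) $\mc M_n^{(11)}$ via ergodicity, then deriving the two bounds on $\rss(\mc A)$ from the sandwich $\rss(\C) \le \rss(\mc A) \le \rss(\emptyset)$ together with the elementary facts above.
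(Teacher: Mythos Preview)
Your overall sandwich strategy matches the paper's: bound $\rss(\mc A)$ between $\rss(\emptyset)$ (for the upper bound) and the $\rss$ of a finest partition (for the lower bound), invoking ergodicity for $\sum_t \vep_t^2 \asymp n$. The upper bound is fine and essentially identical to the paper's.

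The gap is in the lower bound. Your ``safe statement'' $\rss(\C) \ge \sum_t \vep_t^2 - (\text{fitted noise energy})$ is not actually safe: on the partition induced by $\C$ alone, the signal $f_t$ need \emph{not} be constant within each block (the candidates are only near the true change points, not equal to them), so within a block $(u,v]$ one has
\[
\sum_{t=u+1}^{v}(X_t - \bar X)^2 = \sum_t (f_t - \bar f)^2 + 2\sum_t (f_t - \bar f)(\vep_t - \bar\vep) + \sum_t (\vep_t - \bar\vep)^2,
\]
and the cross term can be negative. You gesture at this (``possibly plus cross terms with $f$'') but never control it. The paper resolves this with a one-line trick: instead of $\C$, refine to $\tilde{\C} := \C \cup \Cp$. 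Since $\rss(\mc A) \ge \rss(\C) \ge \rss(\tilde{\C})$ and $f_t$ is constant on every block of $\tilde{\C}$, the cross term vanishes identically and one gets the clean identity
\[
\rss(\tilde{\C}) = \sum_{t=1}^n \vep_t^2 - \sum_{j=0}^{A_n}\frac{1}{\tilde{\c}_{j+1}-\tilde{\c}_j}\Big(\sum_{t=\tilde{\c}_j+1}^{\tilde{\c}_{j+1}} \vep_t\Big)^2 \ge \sum_t \vep_t^2 - (Q_n + q_n + 1)\,\omega_n^2
\]
on $\mc M_n^{(11)}$. The extra $q_n$ in the block count is harmless because $q_n \omega_n^2/n \le \omega_n^2/\min_j \delta_j \to 0$, which follows from Assumption~\ref{assum:size} together with $\max_j |d_j| = O(1)$; this is precisely where Assumption~\ref{assum:size} enters, and your write-up never uses it. With this refinement the lower bound goes through cleanly.
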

\begin{proof}

Firstly, by ergodicity and that $0 < \Var(\vep_t) < \infty$, 
there exist $c_l, c_u \in (0, \infty)$ such that
\begin{align*}
0 < c_l \le \frac{1}{n} \sum_{t = 1}^n \vep_t^2 \le c_u < \infty \quad \text{a.s.}
\end{align*}

From $\sum_{t = s}^e (X_t - \bar{X}_{s:e})^2 = \min_{a \in \R} \sum_{t = s}^e (X_t - a)^2$, 
it holds that $\rss(\mc A) \ge \rss(\mc A^{\prime})$ for any $\mc A \subset \mc A^{\prime}$.
Thus we can find $C^{\prime\prime} \in (0, \infty)$ such that
for any $\mc A \subset \C$,
\begin{align*}
\rss(\mc A)& \le \rss(\emptyset) = \sum_{t = 1}^n (X_t - \bar{X}_{1:n})^2 
\le 2\,\sum_{t = 1}^n (\vep_t - \bar{\vep}_{1:n})^2 + 2\,\sum_{t=1}^n (f_t-\bar{f}_{1:n})^2\\
&
\le 2\,\sum_{t=1}^n\vep_t^2+2n\,\bar{f}^2
\le n\l(2c_u
+2 \bar{f}^2 \r) \le C^{\prime\prime} n,
\end{align*}
where $\bar f = \max_{1 \le j \le q_n} |f_j - \bar{f}_{1:n}|$ is bounded by that $\max_{1\le j \le q_n}|d_j| = O(1)$.

Next, let $\tilde{\C} := \C \cup \Cp
= \{\tilde{\c}_1 < \ldots < \tilde{\c}_{A_n}\}$
with $\tilde{\c}_0 = 0$ and $\tilde{\c}_{A_n + 1} = n$,
where $A_n \le Q_n + q_n$.
Then,  we can find $C^{\prime} \in (0, \infty)$
such that for any $\mc A \subset \C$ and $n$ large enough,
\begin{align*}
\rss(\mc A) 
& \ge \rss(\tilde{\C})
\ge \sum_{j = 0}^{A_n}\sum_{t = \tilde{\c}_j + 1}^{\tilde{\c}_{j + 1}} 
(\vep_t - \bar{\vep}_{(\tilde{\c}_j + 1):\tilde{\c}_{j + 1}})^2
= \sum_{t = 1}^n \vep_t^2 - 
\sum_{j = 0}^{A_n} 
\l(\frac{\sum_{t = \tilde{\c}_j + 1}^{\tilde{\c}_{j + 1}} \vep_t}
{\sqrt{\tilde{\c}_{j + 1} - \tilde{\c}_j}}\r)^2
\\
& \ge n\l(c_l - \frac{(Q_n + q_n) \omega_n^2}{n} \r) \ge C^{\prime} n,
\end{align*}
where the last inequality follows from
that $(\min_{1 \le j \le q_n} \delta_j)^{-1} \omega_n^2 \to 0$ 
under Assumption~\ref{assum:size}
and thus $n^{-1} \omega_n^2 q_n \to 0$, and from Assumption~\ref{assum:cand}~(b).
\end{proof}

\begin{lem}
\label{lem:sbic}
Let the conditions in Lemma~\ref{lem:rss:n} hold.
Then, there exist fixed $\underline{C}, \bar{C} > 0$ such that we have
\begin{align}
\underline{C}\,|\mc X_{\c_-, \c_\circ, \c_+}|^2 - \xi_n
\le  \sic(\mc A \setminus \{\c_\circ\}) - \sic(\mc A) 
\le \bar{C}\,|\mc X_{\c_-, \c_\circ, \c_+}|^2 - \xi_n
\label{eq:lem:sbic}
\end{align}
for any $\c_\circ \in \mc A \subset \C$.
\end{lem}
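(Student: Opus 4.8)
The plan is to start from the exact identity established in Lemma~\ref{lem:cusum}, namely $\rss(\mc A \setminus \{\c_\circ\}) - \rss(\mc A) = |\mc X_{\c_-, \c_\circ, \c_+}|^2$, and then translate the difference of $\rss$'s into a difference of $\sic$'s using the definition~\eqref{eq_sic}. Writing $R := \rss(\mc A)$ and $R' := \rss(\mc A \setminus \{\c_\circ\}) = R + |\mc X_{\c_-, \c_\circ, \c_+}|^2 \ge R$, the $\log$-part of the $\sic$ difference is $\frac{n}{2}\{\log(R'/n) - \log(R/n)\} = \frac{n}{2}\log(1 + |\mc X_{\c_-, \c_\circ, \c_+}|^2/R)$, and removing $\c_\circ$ decreases the penalty term by exactly $\xi_n$. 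Hence
\begin{align*}
\sic(\mc A \setminus \{\c_\circ\}) - \sic(\mc A) = \frac{n}{2}\log\l(1 + \frac{|\mc X_{\c_-, \c_\circ, \c_+}|^2}{\rss(\mc A)}\r) - \xi_n.
\end{align*}
So the whole statement reduces to bounding $\frac{n}{2}\log(1 + x)$ from above and below by constant multiples of $nx$, where $x = |\mc X_{\c_-, \c_\circ, \c_+}|^2/\rss(\mc A)$, and then invoking Lemma~\ref{lem:rss:n} to replace $\rss(\mc A)$ by something comparable to $n$.

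The key analytic input is the elementary two-sided bound: for $x \ge 0$ with $x$ bounded above by some constant, one has $c\,x \le \log(1+x) \le x$ for a suitable $c > 0$ (the upper bound $\log(1+x)\le x$ holds for all $x \ge 0$, and the lower bound $\log(1+x)\ge c x$ holds on any bounded interval $[0, x_{\max}]$ with $c = \log(1+x_{\max})/x_{\max}$). So the first thing I would check is that $x = |\mc X_{\c_-, \c_\circ, \c_+}|^2/\rss(\mc A)$ is indeed bounded by a universal constant on $\mc M_n^{(11)}$: by Lemma~\ref{lem:rss:n} we have $\rss(\mc A) \ge C' n$, and the numerator $|\mc X_{\c_-, \c_\circ, \c_+}|^2 = \rss(\mc A\setminus\{\c_\circ\}) - \rss(\mc A) \le \rss(\mc A\setminus\{\c_\circ\}) \le C'' n$, again by Lemma~\ref{lem:rss:n}. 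Therefore $x \le C''/C'$ uniformly, and the two-sided logarithm bound applies with $x_{\max} = C''/C'$.

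Putting it together: with $\rss(\mc A) \le C'' n$ we get the lower bound $\frac{n}{2}\log(1+x) \ge \frac{n}{2}\,c\,x = \frac{c}{2}\,\frac{n}{\rss(\mc A)}|\mc X_{\c_-, \c_\circ, \c_+}|^2 \ge \frac{c}{2 C''}|\mc X_{\c_-, \c_\circ, \c_+}|^2$, which gives $\underline{C} := c/(2C'')$; and with $\rss(\mc A) \ge C' n$ we get the upper bound $\frac{n}{2}\log(1+x) \le \frac{n}{2}x = \frac{1}{2}\frac{n}{\rss(\mc A)}|\mc X_{\c_-, \c_\circ, \c_+}|^2 \le \frac{1}{2C'}|\mc X_{\c_-, \c_\circ, \c_+}|^2$, which gives $\bar{C} := 1/(2C')$. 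Subtracting $\xi_n$ throughout yields~\eqref{eq:lem:sbic}. I do not anticipate a serious obstacle here — the only point that needs care is making sure the constants $\underline{C}$, $\bar{C}$ (and the implicit $x_{\max}$) are genuinely universal, i.e., depend only on the constants $C', C''$ from Lemma~\ref{lem:rss:n} and hence ultimately only on the bounds in Assumption~\ref{assum:vep} and on $\max_j|d_j| = O(1)$, not on the particular set $\mc A$ or the choice of $\c_\circ$. Since Lemma~\ref{lem:rss:n} already delivers $C', C''$ uniformly over all $\mc A \subset \C$ on $\mc M_n^{(11)}$, this is automatic, and the proof is essentially a two-line computation once the logarithm inequality is stated.
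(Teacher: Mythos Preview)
Your proof is correct and follows essentially the same approach as the paper: both derive the identity $\sic(\mc A \setminus \{\c_\circ\}) - \sic(\mc A) = \frac{n}{2}\log(1 + |\mc X_{\c_-, \c_\circ, \c_+}|^2/\rss(\mc A)) - \xi_n$ from Lemma~\ref{lem:cusum}, use $\log(1+x)\le x$ together with $\rss(\mc A)\ge C'n$ for the upper bound with $\bar C = 1/(2C')$, and use the boundedness of the ratio from Lemma~\ref{lem:rss:n} to get a lower bound $\log(1+x)\ge cx$ on a compact interval, yielding $\underline C = c/(2C'')$. The only cosmetic difference is that the paper phrases the lower bound via $g(x)=\log(x)/(x-1)$ on $[1,C''/C']$ and appeals to continuity, whereas you work directly with $\log(1+y)/y$ and give the explicit constant $c=\log(1+x_{\max})/x_{\max}$.
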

\begin{proof}
From Lemmas \ref{lem:cusum}--\ref{lem:rss:n} and 
that $\log(1 + x) \le x$ for all $x \ge 0$, we obtain
\begin{align*}
\sic(\mc A \setminus \{\c_\circ\}) - \sic(\mc A) 
&= \frac{n}{2}\log\l\{\frac{\rss(\mc A \setminus \{\c_\circ\})}{\rss(\mc A)}\r\} - \xi_n
= \frac{n}{2}\log\l\{1 + \frac{|\mc X_{\c_-, \c_\circ, \c_+}|^2}{\rss(\mc A)}\r\} - \xi_n
\\
&\le \frac{|\mc X_{\c_-, \c_\circ, \c_+}|^2}{2C^\prime} - \xi_n
\end{align*}
hence the RHS of \eqref{eq:lem:sbic} holds with $\bar{C} = 1/(2C^\prime)$.

Furthermore, by Lemmas \ref{lem:cusum}--\ref{lem:rss:n} it holds 
\begin{align*}
1\le \frac{\rss(\mc A \setminus \{\c_\circ\})}{\rss(\mc A)} \le \frac{C^{\prime\prime}}{C^\prime}.
\end{align*}
Let $g(x) = \log(x)/(x - 1)$.
Since $\lim_{x \downarrow 1} g(x) \to 1$ and from its continuity,
there exists a constant $C^{\prime\prime\prime}>0$ such that $\inf_{1\le x\le C^{\prime\prime}/C^\prime} g(x) \ge C^{\prime\prime\prime}$. 
Hence by Lemma~\ref{lem:rss:n}
\begin{align*}
\sic(\mc A \setminus \{\c_\circ\}) - \sic(\mc A) 
&= \frac{n}{2}\log\l\{\frac{\rss(\mc A \setminus \{\c_\circ\})}{\rss(\mc A)}\r\} - \xi_n
\ge \frac{C^{\prime\prime\prime}}{2C^{\prime\prime}}\,|\mc X_{\c_-, \c_\circ, \c_+}|^2
- \xi_n,
\end{align*}
so that $\underline{C}=C^{\prime\prime\prime}/(2C^{\prime\prime})$ meets \eqref{eq:lem:sbic}.
%
\end{proof}

\subsection{Proofs of the Propositions~\ref{prop:detectability}--\ref{prop:step:three}}

Within the proofs of the propositions, the $o$-notation always refers to $M$ 
in \eqref{eq:rates} being large enough, 
which in turn follows for large enough $n$,
and precise bounds can be given in each instance. 

\subsubsection{Proof of Proposition~\ref{prop:detectability}}
\label{sec:proof:prop:detect}

Choose $C^*>\max(1,2/\underline{C})$ and $c^*<\min(1,1/\bar{C})$ for 
$\underline{C}$ and $\bar{C}$ defined in Lemma~\ref{lem:sbic}.
The tighter the choice is, the larger $M$ in \eqref{eq:rates} is required to be.

Firstly, in the situation of (a),
we have $d_\circ^2\min(\cp_\circ - \c_-, \c_+ - \cp_\circ) \ge C^*\xi_n$
and $\max(d_+^2 r_+, d_-^2 r_-) \le C^*\xi_n$.
Then, when $\c_\circ^{\prime} \ge \cp_\circ$, it holds from Lemma~\ref{lem_ck_1},
\begin{align*}
&	\left|\mc X_{\c_-, \c_\circ^{\prime}, \c_+}\right| \ge \\
& 
\sqrt{\frac{(\c_\circ^{\prime} - \c_-)(\c_+ - \c_\circ^{\prime})}{\c_+ - \c_-}} 
\left\{\frac{(\cp_\circ - \c_-)|d_\circ|}{\c_\circ^{\prime} - \c_-} 
- \frac{r_-|d_-|}{\c_\circ^{\prime} - \c_-}
- \frac{r_+|d_+|}{\c_+ - \c_\circ^{\prime}}\right\} -
|\mc X_{\c_-, \c_\circ^{\prime}, \c_+}(\varepsilon)|.
\end{align*}
For the first summand, note that
\begin{align*}
&\sqrt{\frac{(\c_\circ^{\prime} - \c_-)(\c_+ - \c_\circ^{\prime})}{\c_+ - \c_-}} 
\frac{(\cp_\circ - \c_-)|d_\circ|}{\c_\circ^{\prime} - \c_-} =|d_{\circ}|\,\sqrt{\frac{(\c_+-\cp_{\circ})(\cp_{\circ}-\c_-)}{\c_+-\c_-}}\, \left( \frac{\c_+-\c_{\circ}^\prime}{\c_+-\cp_{\circ}} \right)^{1/2}\,\left( \frac{\cp_{\circ}-\c_-}{\c_{\circ}^{\prime}-\c_-} \right)^{1/2}\\
&
\ge |d_\circ|\,\sqrt{\frac{1}{2}\min\left( \cp_\circ - \c_-, \c_+ - \cp_\circ\right)}\,\l(1 - \frac{\rho_n\nu_n}{\xi_n}\r)\ge \sqrt{\frac{C^*\xi_n}{2}}\,(1+o(1)).
\end{align*}

For the second summand, 
\begin{align*}
\sqrt{\frac{(\c_\circ^{\prime} - \c_-)(\c_+ - \c_\circ^{\prime})}{\c_+ - \c_-}} 
\frac{r_-\;|d_-|}{\c_\circ^{\prime} - \c_-}
= \sqrt{\frac{\c_+ - \c_\circ^{\prime}}{\c_+ - \c_-}} 
\frac{r_-\;|d_-|^2}{\sqrt{\c_\circ^{\prime} - \c_-}|d_-|}
\le \frac{C^*\xi_n}{\sqrt{D_n}} = o(\sqrt{\xi_n}),
\end{align*}
where the inequality follows from noting that when $r_- > 0$, 
we have $\c_\circ^\prime - \c_- \ge \cp_\circ - \cp_-$ as well as  $d_-^2(\cp_\circ - \cp_-) \ge D_n$ 
under Assumption~\ref{assum:size}. An analogous argument applies to the third summand,
noting that $\c_+ - \c_\circ^\prime \ge (\cp_+  - \cp_\circ) \, (1-\rho_n\nu_n/D_n)$ 
when $\cp_+ < \c_+$ (hence $r_+ > 0$).

Finally, on $\mc M_n^{(11)}$, the fourth summand satisfies
\begin{align}\label{eq_noise_cusum}
|\mc X_{\c_-, \c_\circ^{\prime}, \c_+}(\varepsilon)| =
\sqrt{\frac{(\c_\circ - \c_-)(\c_+ - \c_\circ)}{\c_+ - \c_-}} 
\l\vert \bar{X}_{(\c_-+1):\c_{\circ}} - \bar{X}_{(\c_{\circ}+1):\c_+} \r\vert \le 2 \omega_n = o(\sqrt{\xi_n}).
\end{align}
Putting the above together, for $M$ large enough
(whose exact value depends on the choice of $C^*$), 
\begin{align}
\label{eq_detect_one}
\vert \mc X_{\c_-, \c_j^\prime, \c_+} \vert \ge 
\sqrt{\frac{C^*\xi_n}{2}} \l(1 + o(1) +\sqrt{2\,C^*} \, o(1)\r) +o(\sqrt{\xi_n})\ge \sqrt{\frac{\xi_n}{\underline{C}}},
\end{align}
which holds uniformly for any $s, e, \cp_\circ, \c_\pm$ and $\c_\circ^\prime$ meeting the conditions of the proposition. 
By symmetric arguments (reversing time), the same holds 
when $\c_\circ^\prime < \cp_\circ$. 
The assertion of (a) now follows from Lemma~\ref{lem:sbic}.

Next, we suppose that
$d_\circ^2(\cp_\circ - \c_-) \le c^*\xi_n$ as in the case of (b).
Recalling the decomposition of $\mc X_{\c_-, \c, \c_+}(f)$ from Lemma~\ref{lem_ck_1},
the term that does not depend on $r_+$ (note that $r_- = 0$ in the situation of (b)) 
satisfies for $\c\ge \cp_{\circ}$,
\begin{align*}
& \sqrt{\frac{(\c - \c_-)(\c_+ - \c)}{\c_+ - \c_-}} 
\frac{(\cp_\circ - \c_-)|d_\circ|}{\c - \c_-} 
\le \sqrt{d_\circ^2\min(\cp_\circ - \c_-, \c_+ - \c)}  
\\
&\le \sqrt{d_\circ^2\min(\cp_\circ - \c_-, \c_+ - \cp_\circ)} \le \sqrt{c^*\xi_n}
\end{align*}
and analogously for $\c\le \cp_{\circ}$, that 
\begin{align*}
\sqrt{\frac{(\c - \c_-)(\c_+ - \c)}{\c_+ - \c_-}} 
\frac{(\c_+ - \cp_\circ)|d_\circ|}{\c_+ - \c} \le \sqrt{c^*\xi_n}.
\end{align*}

Also, when $r_+ > 0$,
\begin{align}
\sqrt{\frac{(\c - \c_-)(\c_+ - \c)}{\c_+ - \c_-}} \frac{r_+\;|d_+|}{\c_+ - \c}
\le
\frac{r_+\;|d_+|^2}{\sqrt{|d_+|^2 (\c_+-\c)}}
\le \frac{\sqrt{2}\,C^*\xi_n}{\sqrt{D_n}}, 
\label{eq:r:plus}
\end{align}
by noting that $\c - \c_- \le \c_+ - \c_-$ and 
$\c_+ - \c \ge (\cp_+ - \cp_\circ)/2$ when $r_+ > 0$, 
because $\c$ is closer to $\cp_{\circ}$ than to $\cp_+$.
Together with \eqref{eq_noise_cusum}, this leads to
\begin{align}
\label{eq_detect_two}
\vert \mc X_{\c_-, \c, \c_+} \vert \le 
\sqrt{c^*\,\xi_n} + \frac{\sqrt{2}C^*\xi_n}{\sqrt{D_n}} + 2 \omega_n
= \sqrt{c^*\xi_n} \left(1+\frac{C^*}{\sqrt{c^*}} \, o(1) + o(1)\right) < \sqrt{\frac{\xi_n}{\bar{C}}},
\end{align}
for $M$ sufficiently large (depending on $\bar{C},\underline{C}$), with the inequality holding uniformly for any $s, e, \cp_\circ, \c_\pm$ and $\c$
meeting the conditions.
Hence the conclusion of (b) follows from Lemma~\ref{lem:sbic}.

The proof of (c) follows by symmetry (reversing time).

\subsubsection{Proof of Proposition~\ref{prop:step:two}}
\label{sec:proof:prop:step:two}

Under (a), i.e.,\ when there is no change point contained within this interval, 
by \eqref{eq_noise_cusum} we get 
\begin{align}
\left|\mc X_{\c_-, \c_{\circ}, \c_+}\right| =	\left|\mc X_{\c_-, \c_{\circ}, \c_+}(\varepsilon)\right|
\le 2 \omega_n = o(\sqrt{\xi_n})
\nn 
\end{align}
on $\mc M_n^{(11)}$, so that assertion (a) follows from Lemma~\ref{lem:sbic}.

In the case of (b.i), w.l.o.g., we assume that  
$d_{\circ}^2\, |\c_- - \cp_{\circ}| \le \rho_n\nu_n$, which in particular implies that $r_-=0$ (for $M$ large enough);	otherwise consider the series in reversed time.
Then, by assumption, $d_+^2r_+ \le C^*\xi_n$ as well as 
$\c_+ - \c_\circ \ge (\cp_+ - \cp_\circ)/2$ when $\c_+ > \cp_+$,
since $\c_{\circ}$ is closer to $\cp_{\circ}$ than any other change point within $(\c_-,\c_+)$.
We now distinguish the two cases: (I) $\c_- < \cp_{\circ} \le \c_{\circ}$ 
and (II) $\c_- < \c_{\circ} < \cp_{\circ}$.
If (I) holds, Lemma~\ref{lem_ck_1} leads to
\begin{align*}
\left|\mc X_{\c_-, \c_{\circ}, \c_+} \right| 
&	\le 
\sqrt{\frac{(\c_{\circ} - \c_-)(\c_+ - \c_{\circ})}{\c_+ - \c_-}} 
\left\{ \frac{(\cp_{\circ} - \c_-)|d_{\circ}|}{\c_{\circ} - \c_-}
+ \frac{r_+\,|d_+|}{\c_+ - \c_{\circ}} \right\}
+ |\mc X_{\c_-, \c_{\circ}, \c_+}(\varepsilon)|
\\
& \le \sqrt{\cp_\circ - \c_-}\,|d_{\circ}| + \frac{\sqrt{2}C^*\xi_n}{\sqrt{D_n}}+ 2\omega_n
=\sqrt{\xi_n} \left( o(1)+C^* o(1) \right)
< \sqrt{\frac{\xi_n}{\bar{C}}},
\end{align*}
for $M$ large enough. 
The assertion follows from Lemma~\ref{lem:sbic}. The case of (II) can be dealt with analogously. 

Similarly, in the case of (b.ii), w.l.o.g., suppose 
$d_{\circ}^2\, |\c_- - \cp_{\circ}| \le C^*\xi_n$
such that in particular, $r_- = 0$ (for $M$ large enough). 
Also, as in (b.i), it holds $\c_+ - \c_\circ \ge (\cp_+ - \cp_\circ)/2$ 
when $\c_+ > \cp_+$. 
In this case, necessarily $\c_- < \cp_\circ \le \c_\circ$
and thus by Lemma ~\ref{lem_ck_1},
\begin{align*}
\left|\mc X_{\c_-, \c_{\circ}, \c_+} \right| 
&	\le 
\sqrt{\frac{(\c_{\circ} - \c_-)(\c_+ - \c_{\circ})}{\c_+ - \c_-}} 
\left\{ \frac{(\cp_{\circ} - \c_-)|d_{\circ}|}{\c_{\circ} - \c_-}
+ \frac{r_+\,|d_+|}{\c_+ - \c_{\circ}} \right\}
+ |\mc X_{\c_-, \c_{\circ}, \c_+}(\varepsilon)|
\\
& \le \frac{(\cp_{\circ}-\c_-)|d_{\circ}|}{\sqrt{\c_{\circ}-\cp_{\circ}}} + \frac{\sqrt{2}C^*\xi_n}{\sqrt{D_n}}+ 2\omega_n
=\sqrt{\xi_n} \left(\frac{C^*}{\sqrt{\wt{C}}}+ o(1)+C^* o(1) \right)
< \sqrt{\frac{\xi_n}{\bar{C}}},
\end{align*}
for $M$ large enough, completing the proof.

\subsubsection{Proof of Proposition~\ref{prop:step:three}}
\label{sec:proof:prop:step:three}

We start with some preliminary numerical calculations that will be used throughout the proof.
We use the notations
\begin{align*}
\mc W_k = \frac{\c_+ - \c_-}{(k - \c_-)(\c_+ - k)}, \,
\mc F_k = \sum_{t = \c_- + 1}^k (f_t - \bar{f}_{(\c_- + 1):\c_+})
\text{ and }
\mc E_k = \sum_{t=\c_- + 1}^k (\vep_t - \bar{\vep}_{(\c_- + 1):\c_+});
\end{align*}
we suppress the dependence of the above definitions on $\c_{\pm}$ for brevity.

From Lemma~\ref{lem_ck_1}, we get $\mc F_{\c} = \wt{\mc F}_{\c} - R_{\c}^+ - R_{\c}^-$ with
\begin{align*}
&	\wt{\mc F}_{\c} = -d_{\circ} \,
\begin{cases}
\frac{(\c - \c_-) (\c_+ - \cp_{\circ})}{\c_+ - \c_-}, & \c \le \cp_{\circ},
\\
\frac{(\c_+ - \c)(\cp_{\circ} - \c_-)}{\c_+ - \c_-},  & \c \ge \cp_{\circ},
\end{cases}
\\[2mm]
& R_{\c}^+ = \frac{\c - \c_-}{\c_+ - \c_-}\,d_+r_+, \qquad 
R_{\c}^- = \frac{\c_+ - \c}{\c_+ - \c_-}\,d_-r_-.
\end{align*}

Note that
\begin{align}
	\mc W_k\wt{\mc F}_k=-d_{\circ}\,
	\begin{cases}
		\frac{\c_+-\cp_{\circ}}{\c_+-k}, &k\le \cp_{\circ}, \\
		\frac{\cp_{\circ}-\c_-}{k-\c_-}, &k\ge\cp_{\circ},
	\end{cases} \quad 
	\mc W_k\wt{\mc F}^2_k=d_{\circ}^2\,
	\begin{cases}
		\frac{(\c_+-\cp_{\circ})^2\,(k-\c_-)}{(\c_+-k)(\c_+-\c_-)} &k\le \cp_{\circ}, \\
		\frac{(\cp_{\circ}-\c_-)^2\,(\c_+-k)}{(k-\c_-)\,(\c_+-\c_-)}, & k\ge \cp_{\circ},
	\end{cases} \label{eq_4_term}
\end{align}
which yields
\begin{align}
&	\mc W_k\wt{\mc F}_k-\mc W_{\cp_{\circ}}\wt{\mc F}_{\cp_{\circ}}=d_{\circ}
	\begin{cases}
		\frac{\cp_{\circ}-k}{\c_+-k}, &k\le \cp_{\circ},\\
		\frac{k-\cp_{\circ}}{k-\c_-}, &k\ge\cp_{\circ},
	\end{cases} \label{eq_6_term}
\\
&	\mc W_{\cp_{\circ}}\wt{\mc F}^2_{\cp_{\circ}}- \mc W_k\wt{\mc F}^2_k=d^2_{\circ}
	\begin{cases}
		\frac{(\cp_{\circ}-k)(\c_+-\cp_{\circ})}{\c_+-k}, &k\le \cp_{\circ},\\
		\frac{(k-\cp_{\circ})(\cp_{\circ}-\c_-)}{k-\c_-}, &k\ge\cp_{\circ}.
	\end{cases} \label{eq_1_term}
\end{align}

Concerning the remainder term, we get
\begin{align}\label{eq_5_term}
	\mc W_k R_k^+ = \frac{d_+r_+}{\c_+-k},\qquad \mc W_k R_k^- = \frac{d_-r_-}{k-\c_-}, \qquad
	\mc W_k R_k^+ R_k^- = \frac{d_+r_+ \cdot d_-r_-}{k_+ - k_-},
\end{align}
as well as 
\begin{align}\label{eq_7_term}
	&	\mc W_k R_k^+-\mc W_{\cp_{\circ}} R_{\cp_{\circ}}^+=d_+r_+\frac{\c-\cp_{\circ}}{(\c_+-\c)( \c_+-\cp_{\circ}) },\notag\\
	&\mc W_k R_k^--\mc W_{\cp_{\circ}} R_{\cp_{\circ}}^-=d_-r_-\frac{\cp_{\circ}-\c}{(\c-\c_-)( \cp_{\circ}-\c_-) }.
\end{align}
Furthermore,
\begin{align*}
	&	\mc W_k(R_k^+)^2=d_+^2\,r_+^2\,\frac{k-\c_-}{(\c_+-\c_-)(\c_+-k)},\qquad \mc W_k(R_k^-)^2=d_-^2\,r_-^2\,\frac{\c_+-k}{(\c_+-\c_-)(k-\c_-)}
\end{align*}
and thus
\begin{align}\label{eq_2_term}
	&\mc W_{k} (R_k^+)^2-\mc W_{\cp_{\circ}}(R_{\cp_{\circ}}^+)^2=d_+^2r_+^2\,\frac{\c-\cp_{\circ}}{(\c_+-\c)(\c_+-\cp_{\circ})},\notag\\
	&\mc W_{\cp_{\circ}}(R_{\cp_{\circ}}^-)^2-\mc W_{\c} (R_k^-)^2
	=d_-^2r_-^2\,\frac{\c-\cp_{\circ}}{(\c-\c_{-})(\cp_{\circ}-\c_{-})}.
\end{align}
Finally, for the terms involving both $\mc{\wt F}_k$ and $\mc R_k$, we get
\begin{align}\label{eq_3_term}
	&\mc W_{\cp_{\circ}}\wt{\mc F}_{\cp_{\circ}}R^+_{\cp_{\circ}}-\mc W_{\c_{\circ}}\wt{\mc F}_{\c_{\circ}}R^+_{\c_{\circ}}
	=d_+r_+d_{\circ}\frac{\cp_{\circ}-\c_{\circ}}{\c_+-\c_{\circ}}\,\bbI_{\c_{\circ}\le \cp_{\circ}}\notag\\
	&\mc W_{\cp_{\circ}}\wt{\mc F}_{\cp_{\circ}}R^-_{\cp_{\circ}}-\mc W_{\c_{\circ}}\wt{\mc F}_{\c_{\circ}}R^-_{\c_{\circ}}
	=d_-r_-d_{\circ}\frac{\c_{\circ}-\cp_{\circ}}{\c_{\circ}-\c_-}\,\bbI_{\c_{\circ}\ge \cp_{\circ}}
\end{align}

Concerning the error terms,  
on $\mathcal{M}_n^{(12)} \cap \mathcal{M}_n^{(13)}$, 
it holds uniformly in $\c_{\circ} \in \mc V_\circ$ and $\c^*_{\circ} \in \mc V^*_\circ$,
\begin{align}
& |\mc E_{\c_\circ} - \mc E_{\c_\circ^*}|
\le \left| \sum_{t = \c_\circ + 1}^{\cp_\circ} \vep_t \right|
+ \left| \sum_{t = \min(\c_\circ^*, \cp_\circ) + 1}^{\max(\c_\circ^*, \cp_\circ)} \vep_t \right|
+ \frac{\c_\circ^* - \c_\circ}{\c_+ - \c_-}\left| \sum_{t = \c_- + 1}^{\c_+} \vep_t \right|
\notag
\\
&\qquad \le \frac{(\cp_\circ - \c_\circ)\omegao}{\sqrt{d_\circ^{-2}\rho_n\nu_n}}
+ \sqrt{d_\circ^{-2}\rho_n} \; \omegat 
+ \frac{2(\cp_\circ - \c_\circ)\omega_n}{\sqrt{\c_+-\c_-}}
\label{eq_noise_diff}
\end{align}
as well as
\begin{align*}
	&\left|\mc E_{\c}\right|\le \left|\sum_{t=\c_-+1}^{\c}\vep_t\right|+\left|\frac{\c-\c_-}{\c_+-\c_-}\sum_{t=\c_-+1}^{\c_+}\vep_t\right|\le \sqrt{\c-\c_-}\,\omega_n +\frac{\c-\c_-}{\sqrt{\c_+-\c_-}}\,\omega_n\le 2\, \sqrt{\c-\c_-}\,\omega_n
\end{align*}
and by symmetry of $\mc E_{\c}$ also $\left|\mc E_{\c}\right|\le 2\,\sqrt{\c_+-\c}\,\omega_n$, such that
\begin{align}
	\left|\mc E_{\c}\right|\le 2 \, \sqrt{\min(\c-\c_-,\c_+-\c)}\,\omega_n.
\label{eq_noise_one}
\end{align}

In what follows, we consider the following two cases: 
When $\c_{\circ}$ is closer to one of the boundary points than to $\cp_{\circ}$, i.e.,\  
$\left|\cp_{\circ}-\c_{\circ}\right| \ge \min(\c_{\circ}-\c_-, \c_+-\c_{\circ})$, 
and when this is not so.

\medskip
\textbf{Case 1: $|\cp_{\circ}-\c_{\circ}| \ge \min(\c_{\circ}-\c_-, \c_+-\c_{\circ})$.}
\medskip

We further distinguish the following two cases:
\begin{enumerate}[label = (\alph*)]
\item $\vert \cp_{\circ} - \c_{\circ} \vert \ge \c_{\circ} - \c_-$, 
which can occur only if $\c_{\circ}<\cp_{\circ}$ and $r_- = 0$ 
(otherwise $\c_{\circ}$ is closer to $\cp_{-}$ than $\cp_{\circ}$ 
which contradicts that $\c_\circ$ detects $\cp_{\circ}$). 
In particular, this implies that 
\begin{align}
\label{eq:prop:3:case1a}
\c_{\circ} - \c_- < (\cp_{\circ}-\c_-)/2.
\end{align}

\item $\vert \cp_{\circ}-\c_{\circ} \vert \ge \c_+ - \c_{\circ}$, 
which can occur only if $\c_{\circ} < \c_+$ and $r_+ = 0$.
\end{enumerate}

We detail the proof of (a) below; the assertion under (b) follow by symmetry (reversing time).

First, by \eqref{eq_4_term} and \eqref{eq_noise_one}, it holds for any $\c_-<\c<\c_+$
\begin{align*}
&	\mc W_{\c} \mc E^2_{\c}\le 2\,\mc W_{\cp_{\circ}}\wt{\mc F}^2_{\cp_{\circ}}\, \frac{(\c_+-\c_-)^2\min(\c-\c_-,\c_+-\c)\,\omega_n^2}{d_{\circ}^2\,(\c_+-\cp_{\circ})(\cp_{\circ}-\c_-)(\c_+-\c)(\c-\c_-)}\\
&\le 8 \,\mc W_{\cp_{\circ}}\wt{\mc F}^2_{\cp_{\circ}}\, \frac{\omega_n^2}{d_{\circ}^2\min(\c_+-\cp_{\circ},\cp_{\circ}-\c_-)}\le 8 \,\mc W_{\cp_{\circ}}\wt{\mc F}^2_{\cp_{\circ}}\, \frac{\omega_n^2}{c^*\xi_n}=o\left(\mc W_{\cp_{\circ}}\wt{\mc F}^2_{\cp_{\circ}}  \right).
\end{align*}
Concerning the remainder term (keeping in mind that in this situation, $r_-=0$), we get by \eqref{eq_2_term}
\begin{align}\label{eq_new_ck}
	&\mc W_{_{\c_{\circ}}} (R_{\c_{\circ}}^+)^2
	\le \mc W_{\cp_{\circ}}\wt{\mc F}^2_{\cp_{\circ}} \,\frac{d_+^2r_+^2}{d_{\circ}^2(\c_+-\cp_{\circ})^2}\,\frac{(\c_{\circ}-\c_-)(\c_+-\cp_{\circ})}{\left( \cp_{\circ}-\c_- \right)(\c_+-\c_{\circ})}\le  \mc W_{\cp_{\circ}}\wt{\mc F}^2_{\cp_{\circ}} \,\frac{(C^*)^2\xi_n^2}{D_n^2}\notag\\
	&=o\left(\mc W_{\cp_{\circ}}\wt{\mc F}^2_{\cp_{\circ}}   \right).
\end{align}
Furthermore, by \eqref{eq_4_term} it holds for all $\c_-<\c<\c_+$
\begin{align*}
	\mc W_{\c} \wt{\mc F}^2_{\c}\le \mc W_{\cp_{\circ}}\wt{\mc F}^2_{\cp_{\circ}},
\end{align*}
which is used to deal with the mixed terms to arrive at
\begin{align*}
	|\mc X_{\c_-, \c_\circ, \c_+}|^2=\mc W_{\c_{\circ}} \wt{\mc F}^2_{\c_{\circ}}+o\left(\mc W_{\cp_{\circ}}\wt{\mc F}^2_{\cp_{\circ}}  \right).
\end{align*}
For $\c_{\circ}$ replaced by $\c_{\circ}^*$ in \eqref{eq_new_ck} we get
\begin{align*}
		&\mc W_{_{\c_{\circ}^*}} (R_{\c_{\circ}^*}^+)^2
		\le  \mc W_{\cp_{\circ}}\wt{\mc F}^2_{\cp_{\circ}} \,\frac{(C^*)^2\xi_n^2}{D_n^2}\,\frac{1+\frac{\rho_n}{c^*\xi_n}}{1-\frac{\rho_n}{c^*\xi_n}}
	=o\left(\mc W_{\cp_{\circ}}\wt{\mc F}^2_{\cp_{\circ}}   \right),
\end{align*}
resulting in
\begin{align*}
	|\mc X_{\c_-, \c_\circ^*, \c_+}|^2=\mc W_{\c_{\circ}^*} \wt{\mc F}^2_{\c_{\circ}^*}+o\left(\mc W_{\cp_{\circ}}\wt{\mc F}^2_{\cp_{\circ}}  \right).
\end{align*}

By \eqref{eq_4_term} and \eqref{eq_1_term} we get
\begin{align*}
	\mc W_{\cp_{\circ}}\wt{\mc F}^2_{\cp_{\circ}}- \mc W_{\c_{\circ}}\wt{\mc F}^2_{\c_{\circ}}
	=\mc W_{\cp_{\circ}}\wt{\mc F}^2_{\cp_{\circ}} \, \frac{\cp_{\circ}-\c_{\circ}}{\cp_{\circ}-\c_-}\,\frac{\c_+-\c_-}{\c_+-\c_{\circ}}
	\ge \mc W_{\cp_{\circ}}\wt{\mc F}^2_{\cp_{\circ}} \, \left(1-\frac{\c_{\circ}-\c_-}{\cp_{\circ}-\c_-}\right)\ge \frac{1}{2}\mc W_{\cp_{\circ}}\wt{\mc F}^2_{\cp_{\circ}},
\end{align*}
where the last inequality follows from \eqref{eq:prop:3:case1a}.
Similarly, 
\begin{align}
& \mc W_{\cp_{\circ}}\wt{\mc F}^2_{\cp_{\circ}} - \mc W_{\c^*_{\circ}}\wt{\mc F}^2_{\c^*_{\circ}}
= \mc W_{\cp_{\circ}}\wt{\mc F}^2_{\cp_{\circ}} \, |\cp_{\circ} - \c^*_{\circ}|\, 
\frac{\c_+-\c_-}{(\c_+-\c^*_{\circ})(\cp_{\circ}-\c_-)}\notag
\\
&	\le \mc W_{\cp_{\circ}}\wt{\mc F}^2_{\cp_{\circ}}\, 
\left(\frac{d_{\circ}^2|\cp_{\circ}-\c^*_{\circ}|}{d_{\circ}^2\min(\c_+-\cp_{\circ},\cp_{\circ}-\c_-) } \right)
(2+o(1))\notag\\
& \le \mc W_{\cp_{\circ}}\wt{\mc F}^2_{\cp_{\circ}}\, \frac{\rho_n}{c^*\xi_n} (2+o(1))=o\left(  \mc W_{\cp_{\circ}}\wt{\mc F}^2_{\cp_{\circ}} \right).\label{eq_B5_1}
\end{align}

Putting the above together and by Lemma~\ref{lem:cusum},
\begin{align*}
&\rss(\mc A \cup \{\c_\circ\}) - \rss(\mc A\cup \{\c_\circ^*\})
= |\mc X_{\c_-, \c_\circ^*, \c_+}|^2 - |\mc X_{\c_-, \c_\circ, \c_+}|^2\\
&=\mc W_{\cp_{\circ}}\wt{\mc F}^2_{\cp_{\circ}}- \mc W_{\c_{\circ}}\wt{\mc F}^2_{\c_{\circ}}+o\left(\mc W_{\cp_{\circ}}\wt{\mc F}^2_{\cp_{\circ}}   \right)\ge \mc W_{\cp_{\circ}}\wt{\mc F}^2_{\cp_{\circ}} \left( \frac{1}{2}+o(1) \right)>0,
\end{align*}
which proves the claim.

\medskip
\textbf{Case 2: $\min(\c_\circ - \c_-, \c_+ - \c_\circ) > |\c_\circ - \cp_\circ|$.}
\medskip

In this case, we have
\begin{align}
\label{eq:prop:three:k}
\c_\circ - \c_- > (\cp_\circ - \c_-)/2
\quad \text{and} \quad
\c_+ - \c_\circ > (\c_+ - \cp_\circ)/2.
\end{align}

By Lemma~\ref{lem:cusum}, the following decomposition holds:
\begin{align}
& \rss(\mc A \cup \{\c_{\circ}\}) - \rss(\mc A \cup \{\c_{\circ}^{*}\})  
= |\mc X_{\c_-, \c_{\circ}^*, \c_+}|^2 - |\mc X_{\c_-, \c_{\circ}, \c_+}|^2 \notag \\
&=  \mc W_{\c_{\circ}^*}(\mc F_{\c_{\circ}^*} + \mc E_{\c_{\circ}^*})^2 - 
\mc W_{\c_{\circ}}(\mc F_{\c_{\circ}} + \mc E_{\c_{\circ}})^2 \notag \\
&= 
\left(\mc W_{\cp_{\circ}}\mc F_{\cp_{\circ}}^2 -\mc W_{\c_{\circ}}\mc F_{\c_{\circ}}^2\right)
+ \left(\mc W_{\c_{\circ}^*}\mc F_{\c_{\circ}^*}^2 -\mc W_{\cp_{\circ}}\mc F_{\cp_{\circ}}^2\right)
+ 2\mc W_{\c_{\circ}}\mc F_{\c_{\circ}}(\mc E_{\c_{\circ}^*} - \mc E_{\c_{\circ}})
\notag\\*
&\qquad
+ 2 (\mc W_{\c_{\circ}^*}\mc F_{\c_{\circ}^*} -\mc W_{\cp_{\circ}}\mc F_{\cp_{\circ}} )\mc E_{\c_{\circ}^*}
+ 2 (\mc W_{\cp_{\circ}}\mc F_{\cp_{\circ}} -\mc W_{\c_{\circ}}\mc F_{\c_{\circ}} )\mc E_{\c_{\circ}^*} 
\notag \\*
&\qquad + \mc W_{\c_{\circ}}(\mc E_{\c_{\circ}^*}^2 - \mc E_{\c_{\circ}}^2)
+ (\mc W_{\c_{\circ}^*} - \mc W_{\c_{\circ}})\mc E_{\c_{\circ}^*}^2
\notag\\
&=: A_1(\mathcal{F}) + A_2(\mathcal{F}) + A_3(\mathcal{F}) + A_4(\mathcal{F}) + A_5(\mathcal{F}) + A_6 + A_7.
\label{eq:decomposition}
\end{align}

We now show that for $n$ large enough,
\begin{align*}
& A_1(\wt{\mc F}) > 0, \qquad 
\frac{|A_1(\wt{\mc F})-A_1(\mc F)|}{A_1(\wt{\mc F})} = o(1), \\
& \frac{|A_j(\mc F)|}{A_1(\wt{\mc F})} = o(1)
\quad \text{for } j = 2, \ldots, 5, \quad \text{ and } \quad 
\frac{|A_j|}{A_1(\wt{\mc F})} = o(1) \quad \text{for } j = 6, 7
\end{align*}
on $\mathcal{M}_n$,
uniformly in $\c_{\pm}, \c_{\circ}$ and $\c^*_{\circ}$ meeting the conditions of the proposition.
Consequently, 
$\rss\left(\mathcal{A} \cup \{\c_{\circ}\}\right) > \rss\left(\mathcal{A} \cup \{\c_{\circ}^{*}\}\right)$,
which proves the assertion. 


W.l.o.g., let $\c_{\circ} < \cp_{\circ}$ (otherwise consider the time series in reverse).
In what follows, all the inequalities are uniform in the sense that
they hold provided that the conditions of the proposition are met.

Firstly, from \eqref{eq_1_term}, 
\begin{align}
A_1(\wt{\mc F}) &= \mc W_{\cp_{\circ}}\wt{\mc F}^2_{\cp_{\circ}} - 
\mc W_{\c_{\circ}}\wt{\mc F}^2_{\c_{\circ}} 
= \frac{d_\circ^2(\cp_\circ - \c_\circ)(\c_+ - \cp_\circ)}{\c_+ - \c_\circ}
\ge 
\frac{d_\circ^2}{2} \min(\cp_\circ - \c_\circ, \c_+ - \cp_\circ)
\notag
\\
&\ge \l\{\begin{array}{ll} 
\frac{\rho_n\nu_n}{2} > 0 & \text{when } \cp_\circ - \c_\circ \le \c_+ - \cp_\circ,
\\
\frac{c^*\xi_n}{2} > 0 & \text{when } \cp_\circ - \c_\circ > \c_+ - \cp_\circ.
\end{array}\r.
\label{eq_A1_main}
\end{align}

Next, under the conditions imposed on $\c_\pm$ and from \eqref{eq_2_term},
when $\cp_+ < \c_+$ such that $r_+ \ne 0$, 
\begin{align}
& \left|\mc W_{\c_{\circ}} (R_{\c_{\circ}}^+)^2- \mc W_{\cp_{\circ}} (R_{\cp_{\circ}}^+)^2\right| 
= \frac{d_+^2r_+^2(\cp_\circ - \c_\circ)}{(\c_+ - \c_\circ)(\c_+ - \cp_\circ)}
\le A_1(\wt{ \mc F}) \cdot \frac{d_+^4r_+^2}{d_\circ^2d_+^2(\c_+ - \cp_\circ)^2}\notag \\
& \qquad
\le A_1(\wt{\mc F}) \cdot \l(\frac{C^*\xi_n}{D_n}\r)^2 = o(A_1(\wt{\mc F})).
\label{eq_remainder1_1}
\end{align}
Similarly, when $\c_-<\cp_-$ such that $r_-\neq 0$, by \eqref{eq:prop:three:k}
\begin{align}
& \left|\mc W_{\c_{\circ}} (R_{\c_{\circ}}^{-})^2- \mc W_{\cp_{\circ}} (R_{\cp_{\circ}}^{-})^2\right| 
= \frac{d_-^2r_-^2(\cp_\circ - \c_\circ)}{(\c_\circ - \c_-)(\cp_\circ - \c_-)}
= A_1(\wt {\mc F}) \cdot \frac{d_-^4r_-^2}
{d_\circ^2(\c_\circ - \c_-)\,d_-^2(\cp_\circ - \c_-)}\,\frac{\c_+ - \c_\circ}{\c_+ - \cp_\circ}\notag \\
&\qquad
\le 2A_1(\wt{\mc F}) \cdot \l(\frac{C^*\xi_n}{D_n}\r)^2\, \left( 1+\frac{d_{\circ}^2(\cp_{\circ}-\c_{\circ})}{d_{\circ}^2(\c_+-\cp_{\circ})} \right)=o\left( A_1(\wt{\mc F}) \right)
\label{eq_remainder1_2}
\end{align}
since $d_{\circ}^2(\cp_{\circ}-\c_{\circ})\le \wt{C}\xi_n$. 
Furthermore, from \eqref{eq_3_term}, if $r_+ \ne0$,
\begin{align}
& \l\vert \mc W_{\cp_{\circ}}\wt{\mc F}_{\cp_{\circ}}R^+_{\cp_{\circ}} - 
\mc W_{\c_{\circ}}\wt{\mc F}_{\c_{\circ}}R^+_{\c_{\circ}} \r\vert + 
\l\vert \mc W_{\cp_{\circ}}\wt{\mc F}_{\cp_{\circ}}R^-_{\cp_{\circ}} - 
\mc W_{\c_{\circ}}\wt{\mc F}_{\c_{\circ}}R^-_{\c_{\circ}} \r\vert
= \frac{|d_+| r_+ \cdot |d_\circ|(\cp_\circ - \c_\circ)}{\c_+ - \c_\circ} \nn
\\
& \qquad = A_1(\wt{\mc F}) \cdot \frac{d_+^2r_+}{|d_\circ d_+|\,(\c_+ - \cp_\circ)}
\le A_1(\wt{\mc F}) \cdot \frac{C^*\xi_n}{D_n}=o\left( A_1(\wt{\mc F}) \right).
\label{eq_remainder2}
\end{align}
Together, \eqref{eq_A1_main}--\eqref{eq_remainder2} 
establish that 
$|A_1(\wt{\mc F}) - A_1({\mc F})| = o\l(A_1(\wt{\mc F})\r)$. 

For $A_2(\mc F)$, first note that by \eqref{eq_1_term} it holds 
\begin{align*}
A_2(\wt{\mc F}) \le d_{\circ}^2 |\c_{\circ}^*-\cp_{\circ}|
\le \rho_n  = o\l( A_1(\wt{\mc F}) \r).
\end{align*}
By analogous arguments to those
adopted in \eqref{eq_remainder1_1}--\eqref{eq_remainder2},
we also obtain $A_2(\mc F) = o\left( A_1(\wt{\mc F}) \right)$.

From \eqref{eq_4_term}, \eqref{eq_A1_main}
and \eqref{eq_noise_diff}, we yield
\begin{align*}
& |A_3(\wt{\mc F})|
= 2|\mc W_{\c_\circ}\wt{\mc F}_{\c_\circ}| \; |\mc E_{\c_\circ^*} - \mc E_{\c_\circ}|
= \frac{2(\c_+ - \cp_\circ)|d_\circ|}{\c_+ - \c_\circ} \; |\mc E_{\c_\circ^*} - \mc E_{\c_\circ}|
\\
& \le 2A_1(\wt{\mc F}) \left( \frac{ \omegao}{\sqrt{\rho_n\nu_n}}+ \frac{\sqrt{\rho_n}\omegat}{d_{\circ}^2(\cp_{\circ}-\c_{\circ})}+\frac{2\omega_n}{|d_{\circ}|\, \sqrt{\c_+-\c_-}} \right)\le  2A_1(\wt{\mc F}) \left( \frac{ \omegao}{\sqrt{\rho_n\nu_n}}+ \frac{\omegat}{\nu_n\sqrt{\rho_n}}+\frac{2\omega_n}{\sqrt{c^*\xi_n}} \right)\\
&=o\left( A_1(\wt{\mc F}) \right).
\end{align*}
Also, by \eqref{eq_5_term} and because
$\c_+ - \cp_{\circ} \ge \cp_+ - \cp_{\circ}$ when $r_+ > 0$, we get
\begin{align*}
& |A_3(R^+)| = \frac{2r_+|d_+|}{\c_+ - \c_\circ} \;  |\mc E_{\c_\circ^*} - \mc E_{\c_\circ}|
\\
& \le 2 A_1(\wt{\mc F})\, \left(\frac{\omegao\,r_+ |d_+|}{\sqrt{\rho_n\nu_n}\,|d_{\circ}|\,(\c_+-\cp_{\circ})}+\frac{\sqrt{\rho_n}\omegat\,r_+\,|d_+|}{|d_{\circ}|^3\, (\cp_{\circ}-\c_{\circ})\,(\c_+-\cp_{\circ})}  + \frac{2\omega_n\,r_+\,|d_+|}{d_{\circ}^2\sqrt{\c_+-\c_-}\,(\c_+-\cp_{\circ})}\right)\\
&\le 2\,C^*\, A_1(\wt{\mc F})\, {\frac{\xi_n}{D_n}}\,\left(\frac{\omegao}{\sqrt{\rho_n\nu_n}}+\frac{\omegat}{\nu_n\,\sqrt{\rho_n}}+\frac{2\omega_n}{\sqrt{c^*\xi_n}}\right)=o\left(A_1(\wt{\mc F})  \right).
\end{align*}
Similarly, using \eqref{eq:prop:three:k} for the last inequality,
\begin{align*}
& |A_3(R^-)| = \frac{2r_-|d_-|}{\c_\circ - \c_-} \;  |\mc E_{\c_\circ^*} - \mc E_{\c_\circ}|\\
&\le 2 A_1(\wt{\mc F})\,\frac{\c_+-\c_{\circ}}{\c_+-\cp_{\circ}}\, \left( \frac{|d_-|\,r_-\,\omegao}{|d_{\circ}|\,(\c_{\circ}-\c_-)\,\sqrt{\rho_n\nu_n}}
+\frac{|d_-|\,r_- \,\sqrt{\rho_n}\,\omegat}{|d_{\circ}|^3(\c_{\circ}-\c_-)\,(\cp_{\circ}-\c_{\circ})}
+ \frac{2\,|d_-|\,r_-\,\omega_n}{d_{\circ}^2(\c_{\circ}-\c_-)\,\sqrt{\c_+-\c_-}}
\right)\\
&\le 4 A_1(\wt{\mc F})\,C^*\,\left( 1+\frac{\wt{C}}{c^*} \right)\,\frac{\xi_n}{D_n}\,\left( \frac{\omegao}{\sqrt{\rho_n\nu_n}}+\frac{\omegat}{\nu_n\sqrt{\rho_n}}+\frac{2\omega_n}{\sqrt{c^*\xi_n}} \right)=o\left( A_1(\wt{\mc F}) \right).
\end{align*}

As for $A_4(\mathcal{F})$,
 \eqref{eq_6_term}, \eqref{eq_A1_main} and \eqref{eq_noise_one} lead to
\begin{align*}
& |A_4(\wt{\mc F})| \le \frac{2|d_\circ||\c_\circ^* - \cp_\circ|}{\min(\c_+ - \c_\circ^*, \c_\circ^* - \c_-)}
\; |\mc E_{\c_\circ^*}|\le \frac{4\,|d_{\circ}|\,|\c_{\circ}^*-\cp_{\circ}|\,\omega_n}{\sqrt{\min(\c_+ - \c_\circ^*, \c_\circ^* - \c_-)}}
\\
&\le 8\,A_1(\wt{\mc F})\,\frac{\omega_n}{\sqrt{c^*\xi_n\left( 1-\frac{\rho_n}{c^*\xi_n} \right)}}\,\frac{\rho_n}{\min(\rho_n\nu_n,c^*\xi_n)}=o\left( A_1(\wt{\mc F}) \right),
\end{align*}
while from \eqref{eq_7_term},
\begin{align*}
& |A_4(R^+)| \le \frac{4|d_+|r_+|\c_\circ^* - \cp_\circ|\,\omega_n\,\sqrt{\min(\c_+-\c_{\circ}^*,\c_{\circ}^*-\c_-)}}{(\c_+ - \c_\circ^*)(\c_+ - \cp_\circ)}
\\
&\le 4 \,A_1(\wt{\mc F})\,\frac{|d_+|\,r_+}{|d_{\circ}|\,(\c_+-\cp_{\circ})}\, \frac{\omega_n\,|\c_{\circ}^*-\cp_{\circ}|}{|d_{\circ}|\,\sqrt{\c_+-\c^*_{\circ}}\,\min(\cp_{\circ}-\c_{\circ},\c_+-\cp_{\circ})}\\
&\le 4 \,A_1(\wt{\mc F})\,\frac{C^*\xi_n}{D_n}\,\frac{\omega_n}{\sqrt{c^*\xi_n\left( 1-\frac{\rho_n}{c^*\xi_n} \right)}}\,\frac{\rho_n}{\min(\rho_n\nu_n,c^*\xi_n)} =o\left( A_1(\wt{\mc F}) \right).
\end{align*}
Analogously, we obtain the bound of the same order for $|A_4(R^-)|$.

For $A_5(\mc F)$, from \eqref{eq_6_term}, \eqref{eq_7_term} and \eqref{eq:prop:three:k},
\begin{align*}
	&|A_5(\wt{\mc F})| =2\,  |\mc E_{\c_\circ^*}| \frac{|d_{\circ}|\,(\cp_{\circ}-\c_{\circ})}{\c_+-\c_{\circ}}
	\le 4\,A_1(\wt{\mc{ F}})\,\frac{\sqrt{\min(\c_+-\c_{\circ}^*,\c_{\circ}^*-\c_-)}\,\omega_n}{|d_\circ|(\c_+ - \cp_\circ)}\\
	&\le 4\,A_1(\wt{\mc{ F}})\, \frac{\omega_n}{|d_{\circ}| \sqrt{\c_+-\cp_{\circ}}}\,\sqrt{\frac{\c_+-\c_{\circ}^*}{\c_+-\cp_{\circ}}}
	\le 4 \, A_1(\wt{\mc{ F}})\,\frac{\omega_n}{\sqrt{c^*\xi_n}}\,\sqrt{1+\frac{\rho_n}{c^*\xi_n}}=o\left( A_1(\wt{\mc{ F}}) \right).
\end{align*}

Furthermore, by  \eqref{eq_7_term}, \eqref{eq:prop:three:k}, \eqref{eq_A1_main} and \eqref{eq_noise_one} it holds
\begin{align*}
	&|A_5(R^-)| = 2\,\frac{|d_-|\,r_-\, (\cp_{\circ}-\c_{\circ})}{(\c_{\circ}-\c_-)\,(\cp_{\circ}-\c_-)}\,\left|\mc E_{\c_\circ^*}\right|\\
	&\le 4\, A_1(\wt{\mc{ F}})\,\frac{|d_-|\,r_-}{d_{\circ}^2\,(\c_{\circ}-\c_-)}\, \frac{\c_+-\c_{\circ}}{(\c_+-\cp_{\circ})\,(\cp_{\circ}-\c_-)}\, \omega_n\,\sqrt{\min(\c_+-\c_{\circ}^*,\c_{\circ}^*-\c_-)}\\
	&\le 16 \,  A_1(\wt{\mc{ F}})\,\frac{|d_-|\,r_-}{|d_{\circ}|\,(\cp_{\circ}-\c_-)}\,\frac{\omega_n}{|d_{\circ}| \sqrt{\min(\c_+-\cp_{\circ},\cp_{\circ}-\c_-)}}\,\sqrt{\frac{\min(\c_+-\c^*_{\circ},\c^*_{\circ}-\c_-)}{\min(\c_+-\cp_{\circ},\cp_{\circ}-\c_-)}}\\
	&\le 16 \,  A_1(\wt{\mc{ F}})\,\frac{C^*\xi_n}{D_n}\,\frac{\omega_n}{\sqrt{c^*\xi_n}}\, \sqrt{1+\frac{\rho_n}{c^*\xi_n}}=o\left(A_1(\wt{\mc{ F}})  \right).
\end{align*}
Similar but slightly easier arguments give the same bound (with the factor 16 replaced by 4) for $|A_5(R^+)|$.
Since
\begin{align*}
&	\frac{\c_+-\c_-}{(\c_+-\cp_{\circ})(\cp_{\circ}-\c_-)}\,
\left\vert \mc E_{\c_{\circ}^*}+\mc E_{\c_{\circ}} \right\vert
\le 2\,\omega_n\,\frac{\sqrt{\min(\c_+-\c_{\circ},\c_{\circ}-\c_-)+\min(\c_+-\c^*_{\circ},\c^*_{\circ}-\c_-)}}{\min(\c_+-\cp_{\circ},\cp_{\circ}-\c_-)}\\
&\le 2\,\omega_n\,\frac{\sqrt{2+\frac{\wt{C}}{c^*}+\frac{\rho_n}{c^*\xi_n}}}{\sqrt{\min(\c_+-\cp_{\circ},\cp_{\circ}-\c_-)}},
\end{align*}
we yield from \eqref{eq:prop:three:k} and \eqref{eq_noise_diff}--\eqref{eq_noise_one}
\begin{align*}
	&|A_6|=\frac{\c_+-\c_-}{(\c_{\circ}-\c_-)(\c_+-\c_{\circ})}\,
	\left\vert \mc E_{\c_{\circ}^*}+\mc E_{\c_{\circ}} \right\vert\,
	\left|\mc E_{\c_{\circ}^*}-\mc E_{\c_{\circ}}\right|\\
	&\le 8 A_1(\wt{\mc{ F}})\, \sqrt{2+\frac{\wt{C}}{c^*}+\frac{\rho_n}{c^*\xi_n}}\,\frac{\omega_n}{|d_{\circ}|\,\sqrt{\min(\c_+-\cp_{\circ},\cp_{\circ}-\c_-)}}
	\\&\qquad \times
	\left( \frac{\omegao}{\sqrt{\rho_n\nu_n}}+\frac{\sqrt{\rho_n}\,\omegat}{|d_{\circ}|^2(\cp_{\circ}-\c_{\circ})}+2\frac{\omega_n}{|d_{\circ}|\sqrt{\c_+-\c_-} }\right)\\
	&\le 8 A_1(\wt{\mc{ F}})\, \sqrt{2+\frac{\wt{C}}{c^*}+\frac{\rho_n}{c^*\xi_n}}\,\frac{\omega_n}{\sqrt{c^*\xi_n}}\,\left(\frac{\omegao}{\sqrt{\rho_n\nu_n}}+\frac{\omegat}{\nu_n\,\sqrt{\rho_n}}+\frac{2\omega_n}{\sqrt{c^*\xi_n}}  \right)=o\left(A_1(\wt{\mc{ F}})  \right).
\end{align*}

Finally, noting that by \eqref{eq:prop:three:k}
\begin{align*}
|\mc W_{\c_\circ^*} - \mc W_{\c_\circ}|
&\le \frac{(\c_+ - \c_-)}{(\c_+ - \c_\circ^*)(\c_\circ^* - \c_-)}
\,\frac{|\c_\circ^* - \c_\circ|\,\{(\c_+ - \c_\circ^*) + (\c_\circ - \c_-)\}}
{(\c_+ - \c_\circ)(\c_\circ - \c_-)}\\
&\le 2\, A_1(\wt{\mc{ F}})\,\frac{1}{\min(\c_+-\c_{\circ}^*,\c_\circ^*-\c_-)}\, \frac{|\c_{\circ}-\c_{\circ}^*|}{|\cp_{\circ}-\c_{\circ}|}\,\left( \frac{1}{d_{\circ}^2\,(\c_{\circ}-\c_-)}\,\frac{\c_+-\c_{\circ}^*}{\c_+-\cp_{\circ}}+\frac{1}{d_{\circ}^2(\c_+-\cp_{\circ})} \right)
\\
&\le 2\, A_1(\wt{\mc{ F}})\,\frac{1}{\min(\c_+-\c_{\circ}^*,\c_\circ^*-\c_-)}\,\left(1+\frac{1}{\nu_n}\right)\,\left( 3+\frac{2\rho_n}{c^*\xi_n} \right)\,\frac{1}{c^*\xi_n}
\end{align*}
we bound $A_7$ as
\begin{align*}
|A_7| \le 8\, A_1(\wt{\mc{ F}})\,\left(1+\frac{1}{\nu_n}\right)\,\left( 3+\frac{2\rho_n}{c^*\xi_n} \right)\,\frac{\omega_n^2}{c^*\xi_n}=o\left(A_1(\wt{\mc{ F}})  \right),
\end{align*}
which concludes the proof.

\section{Proof of the results in Section~\ref{sec:mosum}}
\label{sec:proof:mosum}

\subsection{Proof of Proposition~\ref{prop:mosum:a}}

The following lemma is used for the proofs of Proposition~\ref{prop:mosum:a} 
and Corollary~\ref{cor:mosum:a}.

\begin{lem}
\label{lem_prop_mosum}
\begin{enumerate}[label = (\alph*)] 
\item Under the assumption of Proposition~\ref{prop:mosum:a}, consider 
\begin{align*}
\mc S_n(j) = \l\{ \l\vert T_{\cp_j, n}(G(j)) \r\vert 
\ge \max\l(\max_{|\c - \cp_j| > (1 - \eta) G(j)} \l\vert T_{k, n}(G(j)) \r\vert, 
\tau\, D_n(G(j), \alpha) \r) \r\},
\end{align*}
and $\mc S_n = \bigcap_{1 \le j \le q_n} \mc S_n(j)$.
Then for any $\alpha, \eta \in (0, 1)$, we have
\begin{align*}
\p(\mc S_n(j)) \to 1 \text{ for any } j = 1, \ldots, q_n \quad \text{and} \quad 
\p(\mc S_n) \to 1.
\end{align*}
\item Under the assumptions of Corollary~\ref{cor:mosum:a} below, analogous assertions hold with
$\wt{\mc S}_n(j)$ replacing $\mc S_n(j)$, where
\begin{align*}
\wt{\mc S}_n(j) = \bigcap_{0 \le r \le 2/\eta - 2}
\l[\l\{\l\vert T_{\cp_j + r \eta G/2, n}(G) \r\vert 
\ge \max_{k \in [\cp_j + (r + 1) \eta G/2, \cp_j + (r + 2) \eta G/2]}
\l\vert T_{k, n}(G) \r\vert \r\} \r.
\\
\qquad \qquad \qquad \l.
\bigcap \l\{\l\vert T_{\cp_j - r \eta G/2, n}(G) \r\vert 
\ge \max_{k \in [\cp_j - (r + 2) \eta G/2, \cp_j - (r + 1) \eta G/2}
\l\vert T_{k, n}(G) \r\vert \r\}
\r].
\end{align*}
\end{enumerate}
\end{lem}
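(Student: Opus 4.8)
The plan is to reduce both parts of the lemma to the single high‑probability event $\mc M_n^{(11)}$ of Assumption~\ref{assum:vep}~(a), on which the noise contribution to the MOSUM detector is uniformly negligible, and then to run an entirely deterministic comparison of the explicitly computable signal part of the detector. First I would record the split $T_{b,n}(G;X)=T_{b,n}(G;f)+T_{b,n}(G;\vep)$ together with the uniform bound on the noise part: on $\mc M_n^{(11)}$, for every bandwidth $G$ and every admissible $b$,
\[
|T_{b,n}(G;\vep)|=\sqrt{\tfrac{G}{2}}\,\bigl|\bar\vep_{(b-G+1):b}-\bar\vep_{(b+1):(b+G)}\bigr|\le \sqrt{2}\,\omega_n ,
\]
since each of the two block sums of $\vep_t$ over $G$ consecutive indices is at most $\omega_n\sqrt{G}$ in modulus. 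This bound is simultaneous in $b$, $G$ and $j$, so it is enough to argue on $\mc M_n^{(11)}$ throughout, and both $\p(\mc S_n(j))\to1$ and $\p(\mc S_n)=\p(\bigcap_j\mc S_n(j))\to1$ will then follow from $\p(\mc M_n^{(11)})\to1$ — no union bound over the (possibly diverging) number of change points is needed.

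Next I would compute the signal part near $\cp_j$ using Condition~(a) of Proposition~\ref{prop:mosum:a}, $2G(j)\le\delta_j$, which guarantees that every window $(b-G(j),b+G(j)]$ with $|b-\cp_j|<G(j)$ contains $\cp_j$ as its only change point; a direct calculation of the type in Lemma~\ref{lem_ck_1} then gives $T_{b,n}(G(j);f)=-\sqrt{G(j)/2}\,d_j\,(1-|b-\cp_j|/G(j))$ for such $b$, i.e.\ a symmetric tent peaking at $b=\cp_j$ with height $\sqrt{G(j)/2}\,|d_j|$, while $T_{b,n}(G(j);f)$ vanishes for $b$ outside the reach of any change point. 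Combining with the noise bound, $|T_{\cp_j,n}(G(j))|\ge\sqrt{G(j)/2}\,|d_j|-\sqrt2\,\omega_n$, whereas $|T_{k,n}(G(j))|\le\eta\sqrt{G(j)/2}\,|d_j|+\sqrt2\,\omega_n$ on the competing range $|k-\cp_j|>(1-\eta)G(j)$ within the reach of $\cp_j$; moreover the significance threshold obeys $\tau\,D_n(G(j);\alpha)=o\bigl(\sqrt{G(j)}\,|d_j|\bigr)$, because $D_n(G;\alpha)$ is of order $\sqrt{\log(n/G)}=O(\omega_n)$ while $|d_j|\sqrt{G(j)}/\omega_n\to\infty$ by $d_j^2G(j)\ge c_MD_n$ and $D_n^{-1}\omega_n^2\to0$ (Condition~(b)). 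Since $\eta<1$ is fixed and $|d_j|\sqrt{G(j)}\gg\omega_n$, the gap $(1-\eta)\sqrt{G(j)/2}\,|d_j|-2\sqrt2\,\omega_n$ is eventually positive uniformly in $j$, which yields $\mc S_n(j)$ on $\mc M_n^{(11)}$ for $n$ large. For part~(b) the same decomposition applies, the extra input being that the tent $|T_{b,n}(G;f)|$ is strictly decreasing as $b$ moves away from $\cp_j$, so over one $\eta G/2$-step the signal drops by the fixed amount $\tfrac{\eta}{2}\sqrt{G/2}\,|d_j|$; under the (stronger) hypotheses of Corollary~\ref{cor:mosum:a} — which keep the relevant windows clear of neighbouring change points — this drop dominates the total noise fluctuation $O(\omega_n)$ across the two sides, so each of the staircase comparisons defining $\wt{\mc S}_n(j)$ holds deterministically on $\mc M_n^{(11)}$ for $n$ large.

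The main obstacle I anticipate is the bookkeeping required to make these comparisons genuinely uniform in $j$ while correctly excluding the influence of neighbouring change points: once $|k-\cp_j|$ exceeds $G(j)$ the window around $k$ may reach $\cp_{j\pm1}$ and its signal part is then governed by $d_{j\pm1}$ rather than $d_j$, so one must either confine the competing maximum in $\mc S_n(j)$ to the genuine reach of $\cp_j$ or use the spacing $2G(j)\le\delta_j$ (and, for part~(b), the single fixed bandwidth together with the homogeneity‑type conditions of Corollary~\ref{cor:mosum:a}) to rule out such interference. The only other delicate point is pinning down the order of $\tau\,D_n(G;\alpha)$ against $\omega_n$ and $|d_j|\sqrt{G(j)}$, and this is subsumed once one invokes the standard MOSUM‑threshold rate $D_n(G;\alpha)\asymp\sqrt{\log(n/G)}$.
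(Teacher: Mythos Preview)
Your proposal is correct and follows essentially the same approach as the paper: split the MOSUM detector into signal and noise, bound the noise uniformly by $O(\omega_n)$ on $\mc M_n^{(11)}$, compute the tent-shaped signal part explicitly, and compare the peak value $\sqrt{G(j)/2}\,|d_j|$ against both the competing maximum $\eta\sqrt{G(j)/2}\,|d_j|$ and the threshold $\tau D_n(G(j);\alpha)=O(\sqrt{\log n})$, with uniformity in $j$ coming for free from the single event $\mc M_n^{(11)}$. The paper's proof is terser (phrased via $O_P(\omega_n)$ and a reference to Lemma~5.1(a) of \cite{kirch2014}), but your event-based formulation and explicit tent calculation are exactly what underlies it; the bookkeeping issue you flag about neighbouring change points is real but, as you note, is handled by the spacing condition $2G(j)\le\delta_j$ restricting the effective comparison range.
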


\begin{proof}
Adopting the arguments analogous to those used in the proof of Lemma~5.1~(a) of \cite{kirch2014}, we get
\begin{align*}
\sqrt{2}\,|T_{\cp_j,n}(G(j))| \ge |d_j|\sqrt{G(j)} + O_P\left(\omega_n\right) = |d_j|\sqrt{G(j)} \,(1+o_P(1)),
\\
\max_{|\c - \cp_j| > (1 - \eta) G(j)} \sqrt{2}\,\vert T_{k, n}(G(j)) \vert 
\le \eta\, |d_j|\sqrt{G(j)} + O_P(\omega_n) = \eta\,|d_j|\sqrt{G(j)}\,(1 + o_P(1)).
\end{align*}
Also, noting that $D_n(G(j), \alpha) = O(\sqrt{\log(n)})$ and $D_n/\sqrt{\log(n)} \to \infty$,
the `significance' of $\vert T_{\cp_j, n}(G(j)) \vert$ follows,
and so does the assertion for $S_n(j)$. 
For the set $S_n$, the assertion follows because all $O_P$-terms hold uniformly in $j$.
The assertion of (b) follows analogously.
\end{proof}

With the help of Lemma~\ref{lem_prop_mosum},
we can now prove Proposition~\ref{prop:mosum:a} 
by adopting the arguments of the proof of Theorem~3.2 of \cite{kirch2014}. 
Therefore we only sketch the proof by emphasizing the differences using the notations adopted therein. 
In particular the quantities $V_{l, n}^{(j)}(G(j))$ and $A_i(l, n; G(j)) = A_i(l, n), \, i = 1, 2$ are defined as in that proof.

On $S_n(j)$ defined in Lemma~\ref{lem_prop_mosum}~(a),
the maximiser of $\vert T_{b, n}(G(j)) \vert$ over $b$ satisfying $\vert b - \cp_j \vert \le (1 - \eta)G(j)$,
fulfils the $\eta$-criterion and as such 
is a candidate produced by the MOSUM procedure which we denote by $\c_j$ in the following. 
For this candidate, it holds:
\begin{align*}
& \l\{ \c_j - \cp_j < -C_M(\omegao/d_j)^2 \r\} \subset  
\\
& \qquad \qquad \qquad 
\l\{ \max_{\cp_j - G(j) + 1 \le l < \cp_j - C_M(\omegao/d_j)^2} V_{l, n}^{(j)}(G(j)) \ge 
\max_{\cp_j - C_M(\omegao/d_j)^2 \le l \le \cp_j + G(j)} V_{l, n}^{(j)}(G(j)) \r\}.
\end{align*}

Furthermore, by Condition~(b) we obtain
\begin{align*}
\max_{1 \le j \le q_n} \frac{1}{|d_j|\sqrt{G(j)}}\,\max_{|l-\cp_j| < G(j)} \vert A_2(l, n; G(j)) \vert = o_P(1).
\end{align*}
Also by Condition~(c), we can find a suitable constant $\wt{C}_M > 0$ such that for all $C_M>0$, it holds
\begin{align*}
\p\l(	\max_{1 \le j \le q_n} \sqrt{2G(j)}\,\max_{\cp_j - G(j) \le l \le \cp_j - C_M(\omegao/d_j)^2}
\frac{\sqrt{C_M(\omegao/d_j)^2}}{|\cp_j - l|} \, \vert A_1(l, n; G(j)) \vert
> \wt{C}_M\, \omegao \r) \to 0,
\end{align*}
from which we can find a suitable choice of $C_M$ depending only on $\wt C_M$ such that 
\begin{align*}
\p\left( \max_{1\le j\le q_n} \frac{\sqrt{2G(j)}}{|d_j|} \max_{\cp_j - G(j) \le l \le \cp_j - C_M(\omegao/d_j)^2}\frac{\left| A_1(l, n; G(j))\right|}{|\cp_j-l|} \ge \frac{1}{3}
\right) \to 0.
\end{align*}
Consequently,
\begin{align*}
\p\l( \min_{1 \le j \le q_n} d_j^2(\c_j - \cp_j) < - C_M (\omegao)^2, \mc S_n \r) = o(1).
\end{align*}
The case $\c_j - \cp_j > C_M(\omegao/d_j)^2$  
can be dealt with analogously, 
which concludes the proof. 

\subsection{Proof of Proposition~\ref{prop:mosum:b}}
\label{sec:prop:mosum:b}

Firstly note that the $\eta$-criterion employed by the MOSUM procedure
implicitly imposes an upper bound on the number of estimators returned:
At bandwidths $\mbf G = (G_\ell, G_r)$,
for each local maximiser $\c$ of the MOSUM detector, it is checked whether
the local maximum corresponds to the maximum absolute MOSUM value 
within the interval $(\c - \eta \, G_\ell, \c + \eta \, G_r]$ and, if so, $\c$ is marked as a candidate change point.
Therefore, the maximal number of possible candidates detectable at scale $(G_\ell, G_r)$ is
$(\eta \min(G_\ell, G_r))^{-1} n$.
Then, by~\eqref{eq_band_asym}, it holds 
\begin{align*}
\min(G_\ell, G_r) \ge \frac{\max(G_\ell, G_r)}{C_{\text{asym}}} \ge \frac{G_\ell + G_r}{2\,C_{\text{asym}}}.
\end{align*}
From this and by the construction of $\mc G$ with $G_{\ell} = F_{\ell}\,G_0$, it holds
\begin{align}
\vert \C(\mc H, \alpha) \vert \le \sum_{\ell, r = 1}^{H_n} \vert \C(G_\ell, G_r, \alpha) \vert
\le 2\,C_{\text{asym}} \, \frac{n}{\eta G_0}\,\sum_{\ell, r = 1}^{H_n} \frac{1}{F_\ell + F_r} 
\le 2\,C_{\text{asym}} \, \frac{\psi}{\eta}\,\frac{n}{G_0},
\nn 
\end{align}
for some universal constant $\psi$ satisfying
\begin{align*}
\sum_{\ell, r = 1}^{\infty} \frac{1}{F_{\ell} + F_r} \le \psi <\infty.
\end{align*}
This holds as the Fibonacci numbers are asymptotically bounded from below by an exponentially decreasing sequence,
i.e., $F_\ell \ge (3/2)^\ell$ for all $\ell \ge 10$ which is easily seen by induction. 
Then, the conclusion follows from $\omega_n^2/G_0 \to 0$.

\subsection{Single-scale MOSUM procedure}
\label{sec:cor:mosum:a}

As a corollary, we show that the single-bandwidth MOSUM procedure yields consistent estimators 
with optimal localisation rate either under sub-Gaussianity, or when there are finitely many change points,
but only under the assumption that the change points are {\it homogeneous}
as defined in Definition~\ref{def_scenarios}~(a.i). 
It improves upon Theorem~3.2 of \cite{kirch2014}
where the optimal rate is obtained only in the case when $q_n$ is finite. 
By construction, when the change points are heterogeneous as in Definition~\ref{def_scenarios}~(a.iii),
the single-bandwidth MOSUM procedure cannot produce consistent estimators.

\begin{cor}
\label{cor:mosum:a}
Let $\C(G, \alpha_n) = \{\c_{G, j}: \, 1 \le j \le \wh q_G\}$ denote the set of 
estimated change points from a single-bandwidth MOSUM procedure,
obtained according to either the $\eta$- or $\epsilon$-criterion (see \cite{meier2018} for their description)
with $\eta, \epsilon \in (0, 1)$,
where the bandwidth $G$ and the significance level $\alpha_n$ satisfy
\begin{align*}
& \min_{0 \le j \le q_n} (\cp_{j + 1} - \cp_j) > 2G, \quad
\min_{1 \le j \le q_n} \frac{d_j^2G}{\log(n/G)} \to \infty, \quad \text{and} 
\\
& \alpha_n \to 0 \quad \text{with} \quad D_n(G; \alpha_n) = O(\sqrt{\log(n/G)}).
\end{align*}
We further assume that the invariance principle holds as in Proposition~\ref{prop:vep}~(c.i) with
\begin{align*}
\frac{\lambda_n^2 \log(n/G)}{G} \to 0.
\end{align*}
Then, there exists a universal constant $C_M > 0$ such that
\begin{align*}
\p\l( \wh q_G = q_n; \, \max_{1 \le j \le q_n} d_j^2 
\vert \c_{G, j} \bbI_{j \le \wh q_G} - \cp_j \vert \le C_M (\omegao)^2 \r) \to 1.
\end{align*}
\end{cor}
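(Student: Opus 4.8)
The plan is to deduce the corollary from Proposition~\ref{prop:mosum:a} applied with the common bandwidth choice $G(j) = G$ for all $j$, supplemented by two extra arguments controlling over-estimation. First I would check the hypotheses of Proposition~\ref{prop:mosum:a} in the relaxed form of Remark~\ref{rem:prop:mosum:a}: the spacing requirement $2G \le \delta_j$ is immediate from $\min_{0\le j\le q_n}(\cp_{j+1}-\cp_j) > 2G$; since all change points are served by the one bandwidth, the homogeneity hypothesis of Remark~\ref{rem:prop:mosum:a}~(b)--(c) holds trivially; and the remaining noise/moment condition follows from the invariance principle together with $\lambda_n^2\log(n/G)/G\to 0$ and $\min_j d_j^2G/\log(n/G)\to\infty$. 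The delicate point here is to run the noise bounds at the window scale $G$ rather than globally, so that $\sqrt{\log(n/G)}$ and not $\sqrt{\log n}$ appears: I would replace the global maximal inequality of Assumption~\ref{assum:vep}~(a) by a maximal inequality over the $O(n/G)$ windows of length $2G$, approximate the partial sums by increments of the Wiener process with error $O(\lambda_n)$ uniformly, and use $\lambda_n^2\log(n/G)/G\to 0$ to absorb that error, obtaining $\max_{G\le b\le n-G}|T_{b,n}(G;\vep)| = O_P(\sqrt{\log(n/G)})$, which up to constants is of smaller order than $\tau D_n(G;\alpha_n)$. Proposition~\ref{prop:mosum:a} then yields, on an event of probability tending to one, at least one candidate $\c_{G,j}\in\C(G,\alpha_n)$ with $d_j^2|\c_{G,j}-\cp_j|\le C_M(\omegao)^2$ for every $j$.

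It then remains to prove $\wh q_G = q_n$, i.e.\ the absence of spurious candidates and of duplicated candidates. For the absence of false positives, observe that whenever the window $(b-G,b+G]$ contains no change point --- which holds for every $b$ outside $\bigcup_{j=1}^{q_n}[\cp_j-G,\cp_j+G]$ by $2G\le\delta_j$ --- one has $T_{b,n}(G;X) = T_{b,n}(G;\vep)$ exactly, so by the bound above the supremum of $|T_{b,n}(G;X)|$ over such $b$ exceeds the threshold $\tau D_n(G;\alpha_n)$ with probability at most $\alpha_n + o(1)\to 0$ (here $\alpha_n\to0$ is essential), and no candidate is reported there. For the absence of duplicates, I would work on the relevant event supplied by Lemma~\ref{lem_prop_mosum} (namely $\mc S_n$ for the $\eta$-criterion, respectively $\bigcap_j\wt{\mc S}_n(j)$ for the $\epsilon$-criterion), on which $|T_{\cp_j,n}(G)|$ dominates the detector off a neighbourhood of $\cp_j$; combined with the fact that $b\mapsto|T_{b,n}(G;f)|$ is unimodal with a single peak at $\cp_j$ on $[\cp_j-G,\cp_j+G]$ (triangular shape, the window containing exactly $\cp_j$) and that the noise is of order $\sqrt{\log(n/G)} = o(|d_j|\sqrt{G})$ uniformly in $j$, any local maximiser of the detector passing the respective criterion inside $[\cp_j-G,\cp_j+G]$ must be $O(C_M(\omegao)^2/d_j^2)$-close to $\cp_j$, and there can be only one such point. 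This is essentially a re-run of the argument in the proof of Proposition~\ref{prop:mosum:a} --- which already locates such a maximiser --- now used to exclude a second one.

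Intersecting these events, each change point $\cp_j$ is detected by exactly one candidate $\c_{G,j}$ with $d_j^2|\c_{G,j}-\cp_j|\le C_M(\omegao)^2$ and no candidate is reported elsewhere, so that $\wh q_G = q_n$ and $\max_{1\le j\le q_n} d_j^2|\c_{G,j}\bbI_{j\le\wh q_G}-\cp_j|\le C_M(\omegao)^2$ hold simultaneously with probability tending to one. I expect the main obstacle to be the uniformity in $j$ of the two control steps when $q_n\to\infty$: obtaining the noise bound at the refined scale $\log(n/G)$ rather than $\log n$ uniformly over all windows, and making the no-duplicate argument quantitative enough to hold for every change point at once. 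The localisation rate itself is inherited directly from Proposition~\ref{prop:mosum:a} and is not the difficulty.
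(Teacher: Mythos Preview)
Your proposal is correct and follows essentially the same route as the paper's proof: establish (i) exactly one significant local maximiser in each $G$-neighbourhood of a change point and (ii) no significant local maxima elsewhere, then read off the localisation rate from the argument of Proposition~\ref{prop:mosum:a}. The paper delegates (ii) to Lemma~5.1(b) of \cite{kirch2014} and obtains the uniqueness in (i) directly from Lemma~\ref{lem_prop_mosum}~(b) (i.e.\ the cascaded events $\wt{\mc S}_n(j)$), whereas you spell out the $\sqrt{\log(n/G)}$ noise bound via the invariance principle and argue uniqueness through unimodality of the signal detector plus Lemma~\ref{lem_prop_mosum}; note that in the paper it is the event $\bigcap_j\wt{\mc S}_n(j)$ from part~(b), not $\mc S_n$, that delivers uniqueness for both criteria, so your assignment of events to criteria is slightly off, but this does not affect the validity of the argument.
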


\begin{proof}
First, we need to show that asymptotically,
(i) there is exactly one significant local maximum in the $G$-environment of each change point, 
and (ii) there are no other significant local maxima. 
The second assertion follows by~Lemma 5.1~(b) of \cite{kirch2014}.
Concerning (i), by Lemma~\ref{lem_prop_mosum}~(b), 
there is only one (and significant by (a)) local maximum
within a $G$-environment of every change point on an asymptotic one set,
which also fulfils the $\epsilon$-criterion by Lemma~5.1~(a) in \cite{kirch2014}.
Then, the localisation rate follows by the same arguments as 
in the proof of Proposition~\ref{prop:mosum:a}, completing the proof.
\end{proof}

\section{Computational complexity}
\label{sec:comp}

Analysing the computational complexity of the localised pruning is challenging without further assumption
on the number of the candidates analysed at each iteration (denoted by $\mc D$ in Step~2 of {\tt LocAlg}).
In implementing the algorithm, we impose a fixed upper bound of $N = 24$ on $\vert \mc D \vert$;
if $\vert \mc D \vert > N$ at a particular iteration,
we modify the order in which the candidates remaining in $\mc C$ are processed which often resolves the issue.
Theorem~\ref{thm:sbic} holds irrespective of which candidate is chosen in Step~1 of  {\tt LocAlg},
and thus this step does not harm the theoretical guarantee.
To guard against the contingency where all other candidates in $\mc C$ 
also have more than $N$ conflicting candidates,
a manual thinning step for the set $\mc D$ is implemented in the R package {\tt mosum} 
which triggers a warning message,
see Appendix~A of \citet{meier2018} for further details. 
In practice, this manual thinning step is rarely activated; for example,
for the {\bf dense} test signals with frequent change points and $n \ge 2 \times 10^4$
considered in simulations (see Section~\ref{sec:sim:setup}),
we did not encounter a single occurrence over $1000$ realisations for each test signal. 
Since there are at most $O(n\log^{-1}(n))$ candidates in total (see Assumption~\ref{assum:cand}~(b)),
the localised pruning requires $O(n + 2^N n\log^{-1}(n))$ operations in the worst case.

The MOSUM-based candidate generating procedure discussed in Section~\ref{sec:mosum} 
requires $O(n\vert \mc H \vert)$ operations,
where $\mc H$ denotes the set of asymmetric bandwidths.
In Section~\ref{sec:mosum}, we propose a scheme for bandwidth generation 
which ensures the adaptivity of the multiscale MOSUM procedure
while bounding the total number of bandwidths to be considered at
$\vert \mc H \vert = O(\log(n))$ through a condition on 
the balancedness of asymmetric bandwidths (see~\eqref{eq_band_asym}),
which amounts to the computation time of $O(n\log(n))$ for the multiscale MOSUM procedure.
The computational complexity of the CUSUM-based candidate generation
depends on the number $R_n$ of the random intervals drawn as $O(R_n n)$,
which in turn needs to increase in $(\min_j \delta_j)^{-1}$ as
$n^2(\min_j \delta_j)^{-2}\log(n)/R_n \to 0$ (see~\eqref{cond:wbs:two}) for adaptivity.

In summary, with the MOSUM-based candidate generating mechanism,
the combined two-stage methodology requires $O(n\log(n) + 2^N n\log^{-1}(n))$ operations in total,
which is much faster than many competitors requiring dynamic programming-type solutions
(such as those proposed in \citet{frick2014}, \citet{wang2018d} and \citet{fromont2020})
whose computational complexity is $O(n^2)$, 
see also Table~\ref{table:overview} for the summary of the computational complexity
of various methods for univariate data segmentation. 

\section{Complete simulation results}
\label{sec:sim:app}

\subsection{Set-up}
\label{sec:sim:setup}

We consider the five test signals from \cite{fryzlewicz2014}
referred to as {\tt blocks}, {\tt fms}, {\tt mix}, {\tt teeth10} and {\tt stairs10},
see Appendix~B therein for further details.
In addition, we include the following test signals extending the original ones
in order to investigate the scalability of the localised pruning algorithm:
\begin{description}[leftmargin = 0cm]
\setlength\itemsep{0em} 
\item[Dense test signals.] Each test signal is concatenated
until the length of the resultant signal exceeds $2 \times 10^4$.
\item[Sparse test signals:] Each test signal is embedded 
in the series of i.i.d.\ random variables of length $n = 2 \times 10^4$ at $t = 500$.
\end{description}

For $\vep_t$, we consider 
\begin{enumerate}[label=($\mc E$\arabic*)]
\setlength\itemsep{0em}
\item \label{eq:n1} independent Gaussian random variables as in \cite{fryzlewicz2014},
\item \label{eq:n2} independent random variables following the $t_5$ distribution, and
\item \label{eq:n3} AR($1$) processes with Gaussian innovations 
and the AR parameter $\varrho \in \{0.3, 0.9\}$,
\end{enumerate}
while keeping the signal-to-noise ratio defined by
$\{\Var(\vep_t)\}^{-1/2}\min_{1 \le j \le q_n}|d_j|$ constant
across different error distributions.
Under \ref{eq:n3}, in order to account for the information loss due to the serial dependence,
the length of each segment between adjacent change points is 
increased by the factor of $\lfloor 1/(1-\varrho) \rfloor$. 

We apply the localised pruning algorithm outlined in Section~\ref{section_LA}
together with the two candidate generating mechanisms described in 
Section~\ref{sec:mosum} and Section~\ref{sec:wbs}.
\begin{description}[leftmargin = 0cm]
\item[MOSUM-based candidate generation (`MoLP'):] 
We use the multiscale extension of the MOSUM procedure 
with the asymmetric bandwidths $\mc H$ selected as described in Section~\ref{sec:mosum},
setting $G_0 = 10$ in the case of \ref{eq:n1}--\ref{eq:n2}
and $G_0 = \max(10, \lfloor 8/(1 - \varrho) \rfloor)$ in the case of \ref{eq:n3},
and $C_{\text{asym}} = 4$.
In deriving the asymptotic critical value, we use $\alpha \in \{0.1, 0.2\}$
except when the change points are dense, we consider $\alpha \in \{0.2, 0.4\}$
to ensure that $\C(\mc H, \alpha)$ meets Assumption~\ref{assum:cand}~(a).
Also, we set $\eta = 0.4$ for locating the change points according to the $\eta$-criterion,
and consider both the inverse of $p$-value ($h_{\mc P}$) and 
the jump size ($h_{\mc J}$, see \eqref{eq:jump:sort}) associated with their detection 
for sorting the change point candidates in Step~1 of {\tt LocAlg}.
For variance estimation, we adopt the MOSUM variance estimator of \cite{kirch2014}
in the case of the independent errors in \ref{eq:n1}--\ref{eq:n2}.

When serial dependence is present under \ref{eq:n3},
we use the MOSUM variance estimator inflated by the factor of $(1 + \wh\varrho) / (1 - \wh\varrho)$
with an estimator of the AR parameter $\wh\varrho$.
For this, we first generate candidates via the multiscale MOSUM procedure.
Here, through using the MOSUM variance estimator without any correction,
the procedure is expected not to under-estimate the number of change points.
Then, $\wh\varrho$ is obtained as the Yule-Walker estimator from the resultant residuals,
which is fed into correct the MOSUM variance estimator as above.

MoLP is implemented in the R package {\tt mosum} \citep{mosum}.

\item[CUSUM-based candidate generation (`CuLP'):]
We select the number of random intervals for each recursion of WBS2 as recommended in \cite{fryzlewicz2018}.
Instead of selecting an upper bound $\wt Q_n$ on its cardinality,
we use the following subset of $\C(R_n, \wt{Q}_n = n)$
\begin{align*}
\C(R_n, n, \zeta_n) = \l\{\c_\circ: \, 
\c_\circ \in \C(R_n, \wt{Q}_n = n) \text{ with the corresponding } 
|\mc X_{s_\circ, \c_\circ, e_\circ}| \ge \zeta_n \r\},
\end{align*}
which provides more flexibility with respect to the choice of the threshold $\zeta_n$.
In addition, for numerical stability in local variance estimation (described below), 
we consider only those $\c_\circ \in \C(R_n, n, \zeta_n)$
with $\min(\c_\circ - s_\circ, e_\circ - \c_\circ) \ge 5$.
As the thresholding needs not remove all false positives,
we can choose $\zeta_n$ generously even in the presence of heavy-tailed or serially dependent errors.
We use $\zeta_n = C_\zeta \cdot K \wh\tau_n \sqrt{2\log(n)}$,
where $K$ is a constant chosen as per \cite{fryzlewicz2018}.
The deflation factor $C_\zeta$ is set at $C_\zeta = 0.9$, except when change points are dense,
in which case we also consider stronger deflation by $C_\zeta = 0.5$ to ensure that 
$\C(R_n, n, \zeta_n)$ meets Assumption~\ref{assum:cand}~(a).

As in MoLP, we estimate the (long-run) variance 
using a local estimator extending the MOSUM variance estimator of \cite{kirch2014}:
for \ref{eq:n1}--\ref{eq:n2}, it is obtained as the sample variance of the residuals over $\mc I(\c)$
after fitting a stump function with a break at the candidate change point $\c$;
for \ref{eq:n3}, we inflate the local variance estimator by 
the factor of $(1 + \wh\varrho) / (1 - \wh\varrho)$.
\end{description}
For the penalty of $\sic$, we consider $\xi_n \in \{\log^{1.01}(n), \log^{1.1}(n)\}$ for \ref{eq:n1};
$\xi_n = \{\log^{1.1}(n), n^{2/4.99}\}$ for \ref{eq:n2};
$\xi_n = \{\log^{1.1}(n), \log^2(n)\}$ for \ref{eq:n3},
respectively referred to as the `light' and `heavy' penalties.

The following competitors are considered for the comparative study.
\begin{enumerate}
\item The multiscale MOSUM procedure with the `bottom-up' merging ({\tt bottom.up}) 
implemented in the R package {\tt mosum} \citep{mosum}
(see \cite{messer2014} and also \cite{meier2018}).
\item WBS \citep{fryzlewicz2014} applied with 
the strengthened Bayesian information criterion (WBS.sBIC,
implemented in the R package {\tt breakfast} \citep{breakfast}).
For the generation of random intervals, the same approach as that in CuLP is taken.
\item WBS2.SDLL proposed in \cite{fryzlewicz2018},
whose implementation is available on \url{https://github.com/pfryz/wild-binary-segmentation-2.0}).
\item Pruned exact linear time (PELT) algorithm of \cite{killick2012} (R package {\tt changepoint} \citep{changepoint}).
\item The dynamic programming algorithm based on functional pruning (S3IB) proposed in \cite{rigaill2010} 
(R package {\tt Segmentor3IsBack} \citep{s3ib}).
\item Tail-greedy unbalanced Haar (TGUH) algorithm of \cite{fryzlewicz2017}
(R package {\tt breakfast} \citep{breakfast}).
\item FDRSeg \citep{li2016}, the multiscale segmentation method controlling the false discovery rate 
(R package {\tt FDRSeg} \citep{fdrseg}).
\item cumSeg \citep{muggeo2010}, the method based on transforming the data and 
iteratively fitting a linear model (R package {\tt cumSeg} \citep{cumSeg}).
\end{enumerate}
Unless stated otherwise, we apply the above methods with default choices of parameters
recommended by the authors.
Additionally, we consider:
\begin{enumerate}
\setcounter{enumi}{8}
\item Functional pruning optimal partitioning (FPOP) algorithm of \cite{maidstone2017}
(R package {\tt FPOP} \citep{fpop}) with the penalty set at $\sqrt{2\log(n)}$ 
is considered for the test signals with $n \ge 2 \times 10^4$.

\item Jump segmentation for dependent data (JUSD) of \cite{tecuapetla2017} 
and DepSMUCE of \cite{dette2018} are considered for \ref{eq:n3}.
the latter extending the simultaneous multiscale change point estimator (SMUCE) \citep{frick2014} 
to the dependent case
(both implemented using the R package {\tt stepR} \citep{stepR}).
For JUSD, the estimator of the long-run variance relies on the assumption of $m$-dependence
yet there does not exist an automatic way of determining $m$;
instead we use $m = [\log(0.1)/\log(\varrho)]$ utilising the typically unavailable knowledge of $\varrho$;
For DepSMUCE, the recommended choice of block length $K = 10$ often severely under-estimates
the long-run variance, 
and thus we supply $K = [\log(0.1)/\log(\varrho)]$.
\end{enumerate}

Many of the algorithms mentioned above are specifically tailored for the data
with i.i.d.\ innovations following sub-Gaussian distributions, with the exception of cumSeg, JUSD and DepSMUCE.

\subsection{Results}
\label{sec:sim:res}

All simulations are based on $1000$ replications.

We define that a change point $\cp_j$ is detected
if there exists at least one estimator that falls between
$\max\{(\cp_j + \cp_{j-1})/2, \cp_j - \bar{\delta}\}$ and 
$\min\{(\cp_j + \cp_{j + 1})/2, \cp_j + \bar{\delta}\}$,
where $\bar{\delta} = \min_{1 \le j \le q_n - 1} (\cp_{j + 1} - \cp_j)$.
Based on this, we report the true positive rate 
(TPR, the proportion of the correctly identified change point out of the $q_n$ true change points) and
false positive rate (FPR, the proportion of the spurious estimators out of the $\wh q$ estimated change points).
Also reported are
the Adjusted Rand Index (ARI) measuring the similarity between the estimated and true segmentations
\citep{rand1971, hubert1985},
the relative mean squared error (MSE) of the estimated piecewise constant signal
to that of the signals estimated using the true change points,
Bayesian information criterion (BIC) with the penalty term $\log(n)$,
and the weighted average of trimmed distances 
$\delta_{\trim} = (\sum_{j=1}^{q_n} d_j^2)^{-1} \sum_{j=1}^{q_n} d_j^2 \cdot \delta_{\trim, j}$
where
\begin{align}
\delta_{\trim, j} = \min\l\{ \frac{\cp_{j + 1} - \cp_j}{2}, \frac{\cp_j - \cp_{j - 1}}{2},
\min_{1 \le j' \le \wh q} |\wh \cp_{j'} - \cp_j| \r\},
\end{align}
averaged over $1000$ replications.
Also, we provide $v_{\trim} = q_n^{-1} \sum_{j=1}^{q_n} \mbox{MAD}(\delta_{\trim, j})$,
where the MAD operator is taken over $1000$ replications for each change point $\cp_j$.
Finally, for the dense and sparse test signals, we report the average execution time.


\subsection*{\ref{eq:n1} Independent Gaussian errors}

Tables \ref{supp:table:sim:orig}--\ref{supp:table:sim:sparse} report the simulation results 
in the presence of independent Gaussian errors for the original five test signals
and their dense and sparse versions.
Figures~\ref{fig:sim:blocks}--\ref{fig:sim:sparse:stairs10} visualise
the performance of various methods 
by plotting the weighted densities of estimated change point(s) 
falling between two adjacent change points
$[(\cp_{j - 1} + \cp_j)/2 + 1, (\cp_j + \cp_{j + 1})/2]$ for $j = 1, \ldots, q_n$.

Table~\ref{supp:table:sim:orig} indicate that
choices of $\alpha$ for the MoLP or the sorting function $h$ and the penalty $\xi_n$
for the localised pruning algorithm do not greatly influence the results.
In particular, with $n$ relatively small ($\le 2048$),
the choice of penalty $\xi_n$ does not alter the results much.
Difference in performance due to these choices are more apparent 
when $n$ is large ($\ge 2 \times 10^4$),
see Tables~\ref{supp:table:sim:dense}--\ref{supp:table:sim:sparse}.

When change points are dense, a lighter penalty $\xi_n$ and a generous choice of the critical value
for the candidate generation method
(larger $\alpha$ for the MoLP, smaller $C$ for the CuLP)
are preferable for some test signals such as {\tt teeth10}, 
which ensures that the candidate set contains
at least one valid estimator for each $\cp_j$ (Assumption~\ref{assum:cand}~(a)).
On the other hand, when the change points are sparse, a heavier penalty $\xi_n$
is successful in removing spurious false positives over a long stretch of stationary observations
without harming the TPR much.
Between the two methods equipped with different candidate generating methods
CuLP tends to incur more false positives than the MoLP.
Overall, the MoLP produces estimators of better localisation accuracy,
possibly benefiting from the systematic approach to candidate generation adopted
by the multiscale MOSUM procedure. 
This is also reflected on the execution time of the two methods when the change points are dense.

{\tt bottom.up}, compared to the MoLP, tends to return many false positives.
This reflects the corresponding theoretical requirements on the MOSUM procedure,
that the significance level is small ($\alpha = \alpha_n \to 0$, \citet{kirch2014}) and 
that the bandwidths are in the order of $n$ \citep{messer2014},
and the problem is further amplified with increasing $n$ 
(see Table~\ref{supp:table:sim:sparse})
and heavy-tailed errors as observed under \ref{eq:n2}.
An interesting phenomenon is observed in Figure~\ref{fig:sim:fms} 
which plots the weighted densities of estimated change points
for the {\tt fms} test signal, where {\tt bottom.up} incurs several false positives systematically.
This is attributed to spurious estimators detected with large bandwidths
between the first and the second change points.

There is no single method that outperforms the rest universally for all test signals and evaluation criteria. 
While S3IB marginally outperforms other competitors in terms of TPR, it is at the price of larger FPR.
FPOP, another functional pruning algorithm, is computationally fast and generally performs well,
but fails at handling the teeth-like jump structure of {\tt teeth10} 
(see Tables~\ref{supp:table:sim:dense}--\ref{supp:table:sim:sparse}).
WBS2.SDLL shows its strength in handling frequent changes, 
although returning marginally more false positives
compared to other methods achieving comparable TPR.
Both PELT and cumSeg tend to under-estimate the number of change points across all test signals
and so does WBS.sBIC. The latter result indicates
that minimisation of an information criterion along a solution path
is not as efficient as the pruning criteria \ref{eq_c1}--\eqref{eq_c2} adopted by {\tt PrunAlg},
both computationally or empirical performance-wise.
Interestingly, when the frequent changes in {\tt teeth10} are repeated over 
$n \ge 2 \times 10^4$ observations,
the BIC is minimised at the null model (Table~\ref{supp:table:sim:dense}), 
further suggesting that the sequential minimisation of BIC often leads to less favourable results 
compared to {\tt PrunAlg}.


In terms of computation time, 
FPOP, PELT and {\tt bottom.up} take less than $0.1$ seconds to process a long signal.
It is followed by the MoLP and TGUH,
demonstrating that the localised pruning is scalable to long signals.
While CuLP tends to be slower than MoLP,
it still surpasses WBS.sBIC and WBS2.SDLL in this respect (except for the dense {\tt block} signal),
which demonstrates the computational gain achievable by the localised exhaustive search
adopted in the proposed methodology.
FDRSeg and S3IB, while showing good performance for short test signals,
are computationally too expensive for long signals and, along with cumSeg, 
are omitted in these situations.

\cite{meier2018} observed that for the MoLP,
the ordering function $h_{\mc P}$ incurs many ties as the $p$-values
associated with candidates detected at larger bandwidths 
are set exactly to be zero by the machine,
which increases the search space for the inner algorithm {\tt PrunAlg} 
and consequently slows down the pruning procedure.
As there is no meaningful difference in terms of change point detection accuracy,
we recommend the use of $h_{\mc J}$.

{\small
\setlength{\tabcolsep}{3pt}
\begin{longtable}{c|ccc|c|c|c|c|c|cc}
\caption{Summary of change point estimation over $1000$ realisations for 
the test signals with Gaussian errors:
we use $\xi_n \in \{\log^{1.01}(n), \log^{1.1}(n)\}$ as the `light' and `heavy' penalties
for the localised pruning.} 
\label{supp:table:sim:orig} 
\endfirsthead
\endhead
\hline\hline
model &	$\alpha$ &	penalty &	method &	TPR &	FPR &	ARI &	MSE &	BIC &	$\delta_{\trim}$ &	$v_{\trim}$ 	\\	\hline
{\tt blocks} &	0.1 &	light &	MoLP-$h_{\mc P}$ &	0.954 &	0.009 &	0.977 &	5.155 &	4784.242 &	351.119 &	262.916	\\	
&	&	&	MoLP-$h_{\mc J}$ &	0.955 &	0.009 &	0.978 &	5.003 &	4783.629 &	332.611 &	246.996	\\	
&	&	heavy &	MoLP-$h_{\mc P}$ &	0.944 &	0.004 &	0.977 &	5.231 &	4784.409 &	377.367 &	195.167	\\	
&	&	&	MoLP-$h_{\mc J}$ &	0.945 &	0.004 &	0.978 &	5.1 &	4783.634 &	347.61 &	179.247	\\	
&	0.2 &	light &	MoLP-$h_{\mc P}$ &	0.96 &	0.014 &	0.975 &	5.217 &	4784.947 &	365.521 &	258.931	\\	
&	&	&	MoLP-$h_{\mc J}$ &	0.961 &	0.014 &	0.977 &	4.911 &	4783.747 &	327.638 &	227.091	\\	
&	&	heavy &	MoLP-$h_{\mc P}$ &	0.949 &	0.006 &	0.977 &	5.08 &	4784.11 &	356.655 &	195.167	\\	
&	&	&	MoLP-$h_{\mc J}$ &	0.949 &	0.005 &	0.978 &	5.01 &	4783.597 &	338.395 &	163.327	\\	
&	&	light &	CuLP &	0.934 &	0.095 &	0.919 &	10.091 &	4805.107 &	1001.623 &	284.352	\\	
&	&	heavy &	CuLP &	0.936 &	0.033 &	0.95 &	7.631 &	4794.349 &	708.949 &	197.160	\\	\cline{5-11}
&	0.2 &	- &	{\tt bottom.up} &	0.958 &	0.278 &	0.877 &	6.308 &	4812.993 &	372.152 &	309.686	\\	
&	- &	- &	WBS.sBIC &	0.938 &	0.032 &	0.962 &	7.304 &	4795.326 &	694.854 &	264.426	\\	
&	- &	- &	WBS2.SDLL &	0.94 &	0.027 &	0.971 &	5.51 &	4785.457 &	359.36 &	195.167	\\	
&	- &	- &	PELT &	0.878 &	0.001 &	0.961 &	6.413 &	4785.848 &	588.644 &	220.480	\\	
&	- &	- &	S3IB &	0.974 &	0.019 &	0.979 &	4.773 &	4782.984 &	306.041 &	186.930	\\	
&	- &	- &	cumSeg &	0.772 &	0.002 &	0.914 &	13.119 &	4818.988 &	1743.155 &	555.444	\\	
&	- &	- &	TGUH &	0.948 &	0.023 &	0.967 &	6.589 &	4788.875 &	488.462 &	342.805	\\	
&	0.2 &	- &	FDRSeg &	0.975 &	0.081 &	0.956 &	5.367 &	4788.694 &	328.394 &	235.020	\\	\hline
{\tt fms} &	0.1 &	light &	MoLP-$h_{\mc P}$ &	0.982 &	0.015 &	0.954 &	4.402 &	-564.883 &	0.175 &	0.168	\\	
&	&	&	MoLP-$h_{\mc J}$ &	0.981 &	0.015 &	0.955 &	4.356 &	-564.894 &	0.175 &	0.151	\\	
&	&	heavy &	MoLP-$h_{\mc P}$ &	0.98 &	0.009 &	0.955 &	4.407 &	-564.878 &	0.178 &	0.168	\\	
&	&	&	MoLP-$h_{\mc J}$ &	0.979 &	0.01 &	0.955 &	4.354 &	-564.897 &	0.178 &	0.168	\\	
&	0.2 &	light &	MoLP-$h_{\mc P}$ &	0.99 &	0.02 &	0.958 &	4.138 &	-565.219 &	0.148 &	0.151	\\	
&	&	&	MoLP-$h_{\mc J}$ &	0.99 &	0.021 &	0.957 &	4.119 &	-565.183 &	0.148 &	0.151	\\	
&	&	heavy &	MoLP-$h_{\mc P}$ &	0.989 &	0.012 &	0.958 &	4.129 &	-565.219 &	0.152 &	0.151	\\	
&	&	&	MoLP-$h_{\mc J}$ &	0.988 &	0.012 &	0.959 &	4.064 &	-565.258 &	0.15 &	0.151	\\	
&	&	light &	CuLP &	0.997 &	0.149 &	0.905 &	5.379 &	-562.971 &	0.137 &	0.033	\\	
&	&	heavy &	CuLP &	0.997 &	0.074 &	0.937 &	4.446 &	-565.116 &	0.139 &	0.033	\\	\cline{5-11}
&	0.2 &	- &	{\tt bottom.up} &	0.976 &	0.32 &	0.836 &	6.266 &	-548.575 &	0.312 &	0.151	\\	
&	- &	- &	WBS.sBIC &	0.975 &	0.014 &	0.96 &	4.747 &	-564.272 &	0.235 &	0.103	\\	
&	- &	- &	WBS2.SDLL &	0.995 &	0.032 &	0.955 &	4.187 &	-566.031 &	0.139 &	0.033	\\	
&	- &	- &	PELT &	0.934 &	0.001 &	0.954 &	5.016 &	-565.769 &	0.389 &	0.033	\\	
&	- &	- &	S3IB &	0.999 &	0.1 &	0.944 &	4.98 &	-566.096 &	0.101 &	0.033	\\	
&	- &	- &	cumSeg &	0.754 &	0.012 &	0.918 &	14.05 &	-549.512 &	1.841 &	0.103	\\	
&	- &	- &	TGUH &	0.995 &	0.04 &	0.945 &	4.822 &	-565.036 &	0.15 &	0.067	\\	
&	0.2 &	- &	FDRSeg &	0.998 &	0.086 &	0.953 &	4.441 &	-564.844 &	0.113 &	0.033	\\	\hline
{\tt mix} &	0.1 &	light &	MoLP-$h_{\mc P}$ &	0.911 &	0.007 &	0.738 &	4.195 &	842.617 &	30.552 &	18.818	\\	
&	&	&	MoLP-$h_{\mc J}$ &	0.913 &	0.007 &	0.74 &	4.178 &	842.562 &	29.944 &	18.818	\\	
&	&	heavy &	MoLP-$h_{\mc P}$ &	0.9 &	0.003 &	0.717 &	4.262 &	842.654 &	30.849 &	23.436	\\	
&	&	&	MoLP-$h_{\mc J}$ &	0.901 &	0.004 &	0.716 &	4.238 &	842.554 &	30.361 &	23.436	\\	
&	0.2 &	light &	MoLP-$h_{\mc P}$ &	0.929 &	0.009 &	0.772 &	4.096 &	842.634 &	30.252 &	16.765	\\	
&	&	&	MoLP-$h_{\mc J}$ &	0.93 &	0.009 &	0.772 &	4.083 &	842.564 &	29.729 &	16.765	\\	
&	&	heavy &	MoLP-$h_{\mc P}$ &	0.916 &	0.005 &	0.749 &	4.178 &	842.633 &	30.459 &	17.791	\\	
&	&	&	MoLP-$h_{\mc J}$ &	0.916 &	0.005 &	0.748 &	4.14 &	842.547 &	29.933 &	18.818	\\	
&	&	light &	CuLP &	0.937 &	0.054 &	0.788 &	4.844 &	845.765 &	43.738 &	17.905	\\	
&	&	heavy &	CuLP &	0.926 &	0.026 &	0.77 &	4.609 &	844.302 &	40.569 &	15.738	\\	\cline{5-11}
&	0.2 &	- &	{\tt bottom.up} &	0.951 &	0.064 &	0.805 &	4.326 &	848.366 &	32.832 &	19.160	\\	
&	- &	- &	WBS.sBIC &	0.817 &	0.034 &	0.638 &	9.916 &	869.485 &	131.693 &	18.533	\\	
&	- &	- &	WBS2.SDLL &	0.91 &	0.021 &	0.735 &	4.562 &	843.571 &	35.944 &	18.304	\\	
&	- &	- &	PELT &	0.771 &	0.002 &	0.461 &	6.148 &	846.354 &	48.85 &	12.659	\\	
&	- &	- &	S3IB &	0.96 &	0.074 &	0.815 &	4.774 &	843.513 &	33.146 &	20.642	\\	
&	- &	- &	cumSeg &	0.333 &	0 &	0.273 &	25.195 &	904.25 &	752.167 &	87.473	\\	
&	- &	- &	TGUH &	0.902 &	0.026 &	0.702 &	5.374 &	845.653 &	47.727 &	30.336	\\	
&	0.2 &	- &	FDRSeg &	0.936 &	0.075 &	0.775 &	4.951 &	846.699 &	36.313 &	16.765	\\	\hline
{\tt teeth10} &	0.1 &	light &	MoLP-$h_{\mc P}$ &	0.95 &	0.001 &	0.92 &	2.337 &	-73.202 &	0.333 &	0.000	\\	
&	&	&	MoLP-$h_{\mc J}$ &	0.95 &	0.001 &	0.92 &	2.337 &	-73.202 &	0.333 &	0.000	\\	
&	&	heavy &	MoLP-$h_{\mc P}$ &	0.944 &	0 &	0.912 &	2.421 &	-73.173 &	0.362 &	0.000	\\	
&	&	&	MoLP-$h_{\mc J}$ &	0.944 &	0 &	0.912 &	2.421 &	-73.173 &	0.362 &	0.000	\\	
&	0.2 &	light &	MoLP-$h_{\mc P}$ &	0.97 &	0.001 &	0.945 &	1.986 &	-73.584 &	0.235 &	0.000	\\	
&	&	&	MoLP-$h_{\mc J}$ &	0.97 &	0.001 &	0.945 &	1.986 &	-73.584 &	0.235 &	0.000	\\	
&	&	heavy &	MoLP-$h_{\mc P}$ &	0.965 &	0.001 &	0.938 &	2.077 &	-73.552 &	0.263 &	0.000	\\	
&	&	&	MoLP-$h_{\mc J}$ &	0.965 &	0.001 &	0.938 &	2.077 &	-73.552 &	0.263 &	0.000	\\	
&	&	light &	CuLP &	0.985 &	0.017 &	0.904 &	3.65 &	-76.47 &	0.463 &	0.000	\\	
&	&	heavy &	CuLP &	0.979 &	0.011 &	0.899 &	3.702 &	-76.476 &	0.488 &	0.000	\\	\cline{5-11}
&	0.2 &	- &	{\tt bottom.up} &	0.983 &	0.004 &	0.965 &	1.813 &	-73.084 &	0.164 &	0.000	\\	
&	- &	- &	WBS.sBIC &	0.644 &	0.02 &	0.579 &	9.065 &	-71.534 &	2.029 &	1.140	\\	
&	- &	- &	WBS2.SDLL &	0.977 &	0.023 &	0.896 &	3.879 &	-76.254 &	0.501 &	0.000	\\	
&	- &	- &	PELT &	0.391 &	0.007 &	0.287 &	13.038 &	-69.041 &	3.194 &	0.342	\\	
&	- &	- &	S3IB &	0.997 &	0.101 &	0.902 &	4.039 &	-76.144 &	0.392 &	0.000	\\	
&	- &	- &	cumSeg &	0.001 &	0 &	0 &	18.287 &	-63.097 &	4.995 &	0.000	\\	
&	- &	- &	TGUH &	0.961 &	0.018 &	0.867 &	4.385 &	-75.328 &	0.631 &	0.000	\\	
&	0.2 &	- &	FDRSeg &	0.958 &	0.061 &	0.859 &	4.511 &	-75.015 &	0.623 &	0.000	\\	\hline	
{\tt stairs10} &	0.1 &	light &	MoLP-$h_{\mc P}$ &	0.998 &	0.002 &	0.979 &	2.097 &	-120.634 &	0.103 &	0.000	\\	
&	&	&	MoLP-$h_{\mc J}$ &	0.998 &	0.002 &	0.979 &	2.097 &	-120.634 &	0.103 &	0.000	\\	
&	&	heavy &	MoLP-$h_{\mc P}$ &	0.998 &	0.001 &	0.979 &	2.091 &	-120.63 &	0.103 &	0.000	\\	
&	&	&	MoLP-$h_{\mc J}$ &	0.998 &	0.001 &	0.979 &	2.096 &	-120.629 &	0.103 &	0.000	\\	
&	0.2 &	light &	MoLP-$h_{\mc P}$ &	0.998 &	0.002 &	0.979 &	2.097 &	-120.634 &	0.103 &	0.000	\\	
&	&	&	MoLP-$h_{\mc J}$ &	0.998 &	0.002 &	0.979 &	2.097 &	-120.634 &	0.103 &	0.000	\\	
&	&	heavy &	MoLP-$h_{\mc P}$ &	0.998 &	0.001 &	0.979 &	2.091 &	-120.63 &	0.103 &	0.000	\\	
&	&	&	MoLP-$h_{\mc J}$ &	0.998 &	0.001 &	0.979 &	2.096 &	-120.629 &	0.103 &	0.000	\\	
&	&	light &	CuLP &	0.999 &	0.021 &	0.961 &	2.924 &	-120.123 &	0.172 &	0.000	\\	
&	&	heavy &	CuLP &	0.999 &	0.012 &	0.963 &	2.874 &	-120.195 &	0.174 &	0.000	\\	\cline{5-11}
&	0.2 &	- &	{\tt bottom.up} &	0.997 &	0.005 &	0.978 &	2.094 &	-119.81 &	0.104 &	0.000	\\	
&	- &	- &	WBS.sBIC &	1 &	0.034 &	0.959 &	2.95 &	-120.052 &	0.165 &	0.000	\\	
&	- &	- &	WBS2.SDLL &	0.998 &	0.014 &	0.958 &	3.085 &	-119.333 &	0.196 &	0.000	\\	
&	- &	- &	PELT &	0.993 &	0.001 &	0.966 &	2.729 &	-120.597 &	0.175 &	0.000	\\	
&	- &	- &	S3IB &	1 &	0.09 &	0.953 &	3.165 &	-120.419 &	0.134 &	0.000	\\	
&	- &	- &	cumSeg &	0.986 &	0.006 &	0.878 &	7.533 &	-95.203 &	0.639 &	0.424	\\	
&	- &	- &	TGUH &	0.999 &	0.009 &	0.963 &	2.93 &	-120.113 &	0.178 &	0.000	\\	
&	0.2 &	- &	FDRSeg &	1 &	0.059 &	0.957 &	3.013 &	-119.874 &	0.146 &	0.000	\\	\hline
\hline
\end{longtable}}

{\small
\setlength{\tabcolsep}{3pt}
\begin{longtable}{c|c|c|c|c|c|c|c|c|cc|c}
\caption{Summary of change point estimation over $1000$ realisations for the 
test signals with {\bf dense} change points and Gaussian errors;
we use $h = h_{\mc J}$ and $\xi_n \in \{\log^{1.01}(n), \log^{1.1}(n)\}$ for the localised pruning.}
\label{supp:table:sim:dense} 
\endfirsthead
\endhead
\hline\hline
model & $\alpha$/$C$ & penalty &	method &	TPR &	FPR &	ARI &	MSE &	BIC &	$\delta_{\trim}$ &	$v_{\trim}$ &	speed	\\	\hline
{\tt blocks} &	0.2 &	light &	MoLP &	0.935 &	0.005 &	0.981 &	5.011 &	48093.8 &	2.313 &	220.54 &	0.660	\\	
&	0.2 &	heavy &	MoLP &	0.91 &	0.001 &	0.979 &	5.473 &	48095.6 &	2.743 &	235.905 &	0.675	\\	
&	0.4 &	light &	MoLP &	0.937 &	0.007 &	0.98 &	5.051 &	48098.09 &	2.308 &	218.629 &	0.778	\\	
&	0.4 &	heavy &	MoLP &	0.912 &	0.002 &	0.979 &	5.472 &	48098.12 &	2.674 &	228.336 &	0.808	\\	
&	0.5 &	light &	CuLP &	0.863 &	0.018 &	0.899 &	15.378 &	48513.02 &	9.978 &	290.71 &	16.819	\\	
&	0.5 &	heavy &	CuLP &	0.873 &	0.003 &	0.934 &	11.312 &	48333.58 &	7.139 &	246.124 &	15.557	\\	
&	0.9 &	light &	CuLP &	0.933 &	0.007 &	0.977 &	5.408 &	48105.5 &	2.528 &	200.072 &	4.978	\\	
&	0.9 &	heavy &	CuLP &	0.904 &	0.002 &	0.97 &	6.285 &	48124.13 &	3.362 &	211.04 &	5.090	\\	\cline{5-12}
&	0.2 &	- &	{\tt bottom.up} &	0.914 &	0.206 &	0.887 &	6.468 &	48360.28 &	2.76 &	274.792 &	0.050	\\	
&	- &	- &	WBS.sBIC &	0.908 &	0.034 &	0.955 &	7.892 &	48242.17 &	4.946 &	212.22 &	77.772	\\	
&	- &	- &	WBS2.SDLL &	0.951 &	0.063 &	0.962 &	5.319 &	48143.77 &	2.101 &	204.006 &	5.028	\\	
&	- &	- &	PELT &	0.81 &	0 &	0.955 &	8.098 &	48128.61 &	6.336 &	293.799 &	0.029	\\	
&	- &	- &	TGUH &	0.919 &	0.005 &	0.974 &	6.32 &	48131.39 &	3.241 &	287.898 &	1.497	\\	
&	- &	- &	FPOP &	0.931 &	0.002 &	0.983 &	4.782 &	48076.92 &	2.331 &	198.34 &	0.010	\\	\hline
{\tt fms} &	0.2 &	light &	MoLP &	0.98 &	0.003 &	0.97 &	4.222 &	-22251.66 &	0.539 &	0.124 &	1.186	\\	
&	0.2 &	heavy &	MoLP &	0.968 &	0.001 &	0.968 &	4.563 &	-22248.33 &	0.676 &	0.155 &	1.187	\\	
&	0.4 &	light &	MoLP &	0.985 &	0.003 &	0.971 &	4.032 &	-22257.81 &	0.48 &	0.093 &	1.301	\\	
&	0.4 &	heavy &	MoLP &	0.973 &	0.001 &	0.969 &	4.392 &	-22253.62 &	0.623 &	0.142 &	1.325	\\	
&	0.5 &	light &	CuLP &	0.973 &	0.006 &	0.962 &	4.291 &	-22238.88 &	0.533 &	0.055 &	4.571	\\	
&	0.5 &	heavy &	CuLP &	0.941 &	0.001 &	0.95 &	5.403 &	-22175.93 &	0.823 &	0.086 &	4.568	\\	
&	0.9 &	light &	CuLP &	0.986 &	0.002 &	0.976 &	3.625 &	-22300.42 &	0.41 &	0.046 &	3.992	\\	
&	0.9 &	heavy &	CuLP &	0.973 &	0.001 &	0.973 &	3.988 &	-22294.87 &	0.547 &	0.049 &	3.990	\\	\cline{5-12}
&	0.2 &	- &	{\tt bottom.up} &	0.906 &	0.28 &	0.728 &	7.563 &	-21260.04 &	1.615 &	0.392 &	0.054	\\	
&	- &	- &	WBS.sBIC &	0.923 &	0.004 &	0.965 &	15.452 &	-21006.43 &	1.892 &	0.079 &	74.857	\\	
&	- &	- &	WBS2.SDLL &	0.997 &	0.017 &	0.973 &	3.605 &	-22278.99 &	0.321 &	0.053 &	6.443	\\	
&	- &	- &	PELT &	0.74 &	0 &	0.945 &	9.446 &	-22159.09 &	2.756 &	0.057 &	0.026	\\	
&	- &	- &	TGUH &	0.986 &	0.002 &	0.963 &	4.099 &	-22258.01 &	0.459 &	0.103 &	1.433	\\	
&	- &	- &	FPOP &	0.958 &	0.001 &	0.977 &	3.859 &	-22335.57 &	0.627 &	0.033 &	0.010	\\	\hline
{\tt mix} &	0.2 &	light &	MoLP &	0.879 &	0.002 &	0.678 &	4.211 &	31852.09 &	0.786 &	11.113 &	1.004	\\	
&	0.2 &	heavy &	MoLP &	0.852 &	0.001 &	0.634 &	4.431 &	31835.61 &	0.871 &	12.482 &	1.014	\\	
&	0.4 &	light &	MoLP &	0.887 &	0.002 &	0.695 &	4.154 &	31863.58 &	0.76 &	10.792 &	1.105	\\	
&	0.4 &	heavy &	MoLP &	0.858 &	0.001 &	0.647 &	4.401 &	31843.89 &	0.855 &	12.38 &	1.134	\\	
&	0.5 &	light &	CuLP &	0.906 &	0.005 &	0.733 &	4.364 &	31905.35 &	0.786 &	19.431 &	8.807	\\	
&	0.5 &	heavy &	CuLP &	0.868 &	0.003 &	0.667 &	4.779 &	31878.13 &	0.941 &	11.417 &	9.170	\\	
&	0.9 &	light &	CuLP &	0.837 &	0.003 &	0.631 &	6.9 &	32114.51 &	1.671 &	12.574 &	7.334	\\	
&	0.9 &	heavy &	CuLP &	0.739 &	0.002 &	0.511 &	11.338 &	32547.75 &	3.198 &	21.751 &	11.187	\\	\cline{5-12}
&	0.2 &	- &	{\tt bottom.up} &	0.887 &	0.025 &	0.705 &	4.385 &	32016.71 &	0.784 &	21.159 &	0.060	\\	
&	- &	- &	WBS.sBIC &	0.676 &	0 &	0.464 &	11.937 &	32905.97 &	3.445 &	22.272 &	73.173	\\	
&	- &	- &	WBS2.SDLL &	0.908 &	0.013 &	0.735 &	4.338 &	31917.1 &	0.751 &	18.889 &	9.400	\\	
&	- &	- &	PELT &	0.625 &	0 &	0.34 &	9.266 &	32013.18 &	2.766 &	5.743 &	0.032	\\	
&	- &	- &	TGUH &	0.821 &	0.002 &	0.564 &	5.593 &	31877.43 &	1.212 &	29.959 &	1.420	\\	
&	- &	- &	FPOP &	0.803 &	0.002 &	0.541 &	5.105 &	31766.6 &	1.142 &	31.329 &	0.011	\\	\hline
{\tt teeth10} &	0.2 &	light &	MoLP &	0.784 &	0 &	0.694 &	5.785 &	-2301.625 &	1.187 &	0 &	1.068	\\	
&	0.2 &	heavy &	MoLP &	0.592 &	0 &	0.492 &	9.512 &	-3259.876 &	2.177 &	0.047 &	1.054	\\	
&	0.4 &	light &	MoLP &	0.821 &	0 &	0.743 &	5.124 &	-2138.238 &	1.004 &	0 &	1.121	\\	
&	0.4 &	heavy &	MoLP &	0.639 &	0 &	0.542 &	8.704 &	-2987.338 &	1.936 &	0 &	1.108	\\	
&	0.5 &	light &	CuLP &	0.903 &	0.004 &	0.814 &	4.904 &	-1962.88 &	0.84 &	0 &	4.319	\\	
&	0.5 &	heavy &	CuLP &	0.751 &	0.002 &	0.631 &	7.505 &	-2478.257 &	1.56 &	0 &	4.349	\\	
&	0.9 &	light &	CuLP &	0.688 &	0.003 &	0.532 &	8.439 &	-2617.247 &	1.885 &	0.414 &	4.156	\\	
&	0.9 &	heavy &	CuLP &	0.438 &	0.001 &	0.309 &	12.302 &	-4266.959 &	3.111 &	0.234 &	4.185	\\	\cline{5-12}
&	0.2 &	- &	{\tt bottom.up} &	0.847 &	0 &	0.799 &	5.01 &	-1478.868 &	0.879 &	0 &	0.091	\\	
&	- &	- &	WBS.sBIC &	0 &	0 &	0 &	16.854 &	-8928.946 &	5.382 &	0 &	69.941	\\	
&	- &	- &	WBS2.SDLL &	0.932 &	0.005 &	0.859 &	4.612 &	-1831.672 &	0.712 &	0 &	10.689	\\	
&	- &	- &	PELT &	0 &	0.001 &	0 &	16.851 &	-8927.371 &	5.379 &	0 &	0.018	\\	
&	- &	- &	TGUH &	0.594 &	0.003 &	0.329 &	9.436 &	-3818.443 &	2.331 &	1.594 &	1.389	\\	
&	- &	- &	FPOP &	0.114 &	0.004 &	0.036 &	15.588 &	-7955.71 &	4.747 &	0 &	0.012	\\	
\hline
{\tt stairs10} &	0.2 &	light &	MoLP &	0.998 &	0 &	0.977 &	3.103 &	-6871.387 &	0.13 &	0 &	8.376	\\	
&	0.2 &	heavy &	MoLP &	0.997 &	0 &	0.976 &	2.339 &	-6902.29 &	0.139 &	0 &	8.566	\\	
&	0.4 &	light &	MoLP &	0.998 &	0 &	0.977 &	3.103 &	-6871.408 &	0.13 &	0 &	8.416	\\	
&	0.4 &	heavy &	MoLP &	0.997 &	0 &	0.976 &	2.339 &	-6902.319 &	0.139 &	0 &	8.574	\\	
&	0.5 &	light &	CuLP &	0.989 &	0.001 &	0.948 &	3.966 &	-6282.889 &	0.311 &	0 &	9.660	\\	
&	0.5 &	heavy &	CuLP &	0.983 &	0.001 &	0.944 &	4.222 &	-6265.687 &	0.34 &	0 &	9.607	\\	
&	0.9 &	light &	CuLP &	0.988 &	0.001 &	0.948 &	3.982 &	-6279.911 &	0.313 &	0 &	9.768	\\	
&	0.9 &	heavy &	CuLP &	0.982 &	0.001 &	0.944 &	4.239 &	-6262.338 &	0.342 &	0 &	9.777	\\	\cline{5-12}
&	0.2 &	- &	{\tt bottom.up} &	0.994 &	0.004 &	0.976 &	2.243 &	-6683.881 &	0.126 &	0 &	0.170	\\	
&	- &	- &	WBS.sBIC &	0.985 &	0.011 &	0.946 &	4.016 &	-5946.003 &	0.303 &	0 &	70.928	\\	
&	- &	- &	WBS2.SDLL &	0.992 &	0.009 &	0.95 &	3.775 &	-6354.621 &	0.293 &	0 &	7.340	\\	
&	- &	- &	PELT &	0.87 &	0 &	0.866 &	9.5 &	-5624.565 &	0.888 &	0 &	0.028	\\	
&	- &	- &	TGUH &	0.991 &	0 &	0.961 &	3.165 &	-6725.094 &	0.225 &	0 &	1.377	\\	
&	- &	- &	FPOP &	0.99 &	0 &	0.968 &	2.803 &	-6892.929 &	0.189 &	0 &	0.010	\\	
\hline\hline
\end{longtable}}

{\small
\setlength{\tabcolsep}{3pt}
\begin{longtable}{c|c|c|c|c|c|c|c|cc|c}
\caption{Summary of change point estimation over $1000$ realisations for the 
test signals with {\bf sparse} change points and Gaussian errors;
we set $\alpha = 0.2$ (for MoLP and {\tt bottom.up}) and $C_\zeta = 0.9$ for CuLP,
and use $h = h_{\mc J}$ and $\xi_n \in \{\log^{1.01}(n), \log^{1.1}(n)\}$ for the localised pruning.}
\label{supp:table:sim:sparse} 
\endfirsthead
\endhead
\hline\hline
model &	penalty & method &	TPR &	FPR &	ARI &	MSE &	BIC &	$\delta_{\trim}$ &	$v_{\trim}$ &	speed	\\	\hline
{\tt blocks} &	light &	MoLP &	0.93 &	0.019 &	0.928 &	5.501 &	46,137.59 &	2.361 &	204.58 &	0.262	\\	
&	heavy &	MoLP &	0.906 &	0.004 &	0.986 &	5.909 &	46,137.37 &	2.882 &	204.58 &	0.264	\\	
&	light &	CuLP &	0.936 &	0.043 &	0.862 &	5.85 &	46,139.66 &	2.25 &	178.918 &	1.399	\\	
&	heavy &	CuLP &	0.913 &	0.006 &	0.977 &	5.796 &	46,137.67 &	2.561 &	178.918 &	1.435	\\	\cline{4-11}
&	- &	{\tt bottom.up} &	0.918 &	0.454 &	0.146 &	8.264 &	46,218.74 &	2.831 &	262.171 &	0.042	\\	
&	- &	WBS.sBIC &	0.91 &	0.004 &	0.998 &	6.512 &	46,141.24 &	4.047 &	178.918 &	62.783	\\	
&	- &	WBS2.SDLL &	0.915 &	0.018 &	0.945 &	5.982 &	46,139.19 &	2.38 &	178.918 &	8.163	\\	
&	- &	PELT &	0.811 &	0.001 &	0.999 &	8.588 &	46,140.96 &	6.272 &	615.151 &	0.022	\\	
&	- &	TGUH &	0.92 &	0.007 &	0.998 &	6.849 &	46,141.3 &	3.236 &	324.322 &	1.361	\\	
&	- &	FPOP &	0.931 &	0.002 &	0.999 &	5.1 &	46,135.74 &	2.336 &	188.331 &	0.013	\\	\hline
{\tt fms} &	light &	MoLP &	0.954 &	0.031 &	0.907 &	5.569 &	-24,026.74 &	0.805 &	0.151 &	0.272	\\	
&	heavy &	MoLP &	0.941 &	0.007 &	0.981 &	5.527 &	-24,027.52 &	0.949 &	0.151 &	0.272	\\	
&	light &	CuLP &	0.984 &	0.063 &	0.801 &	5.205 &	-24,025.83 &	0.465 &	0.033 &	1.145	\\	
&	heavy &	CuLP &	0.969 &	0.008 &	0.973 &	4.658 &	-24,028.14 &	0.618 &	0.033 &	1.295	\\	\cline{4-11}
&	- &	{\tt bottom.up} &	0.909 &	0.635 &	0.063 &	11.633 &	-23,932.78 &	1.595 &	0.372 &	0.047	\\	
&	- &	WBS.sBIC &	0.757 &	0.013 &	0.933 &	50.855 &	-23,928.6 &	4.764 &	0.338 &	70.233	\\	
&	- &	WBS2.SDLL &	0.978 &	0.023 &	0.92 &	4.857 &	-24,026.59 &	0.552 &	0.103 &	9.007	\\	
&	- &	PELT &	0.751 &	0 &	0.999 &	10.924 &	-24,025.52 &	2.722 &	0.068 &	0.025	\\	
&	- &	TGUH &	0.965 &	0.002 &	0.995 &	5.548 &	-24,024.01 &	0.652 &	0.136 &	1.560	\\	
&	- &	FPOP &	0.96 &	0.001 &	0.998 &	4.066 &	-24,029.76 &	0.619 &	0.033 &	0.013	\\	\hline
{\tt mix} &	light &	MoLP &	0.885 &	0.015 &	0.883 &	4.724 &	27,834.89 &	0.797 &	12.431 &	0.218	\\	
&	heavy &	MoLP &	0.862 &	0.004 &	0.922 &	4.934 &	27,834.06 &	0.882 &	12.431 &	0.218	\\	
&	light &	CuLP &	0.879 &	0.041 &	0.763 &	5.434 &	27,837.56 &	0.906 &	12.431 &	2.057	\\	
&	heavy &	CuLP &	0.845 &	0.006 &	0.901 &	5.554 &	27,834.88 &	1.04 &	14.256 &	2.759	\\	\cline{4-11}
&	- &	{\tt bottom.up} &	0.905 &	0.372 &	0.063 &	6.041 &	27,902.7 &	0.781 &	16.252 &	0.043	\\	
&	- &	WBS.sBIC &	0.638 &	0.002 &	0.863 &	14.56 &	27,873.07 &	3.43 &	21.099 &	62.190	\\	
&	- &	WBS2.SDLL &	0.847 &	0.021 &	0.835 &	6.169 &	27,840.44 &	1.087 &	16.081 &	8.049	\\	
&	- &	PELT &	0.665 &	0 &	0.844 &	9.118 &	27,839.65 &	2.255 &	13.8 &	0.039	\\	
&	- &	TGUH &	0.535 &	0.006 &	0.682 &	48.116 &	28,025.52 &	12.786 &	35.24 &	1.377	\\	
&	- &	FPOP &	0.834 &	0.002 &	0.922 &	5.133 &	27,832.15 &	0.992 &	12.431 &	0.017	\\	\hline
{\tt teeth10} &	light &	MoLP &	0.738 &	0.02 &	0.914 &	6.857 &	-18,210.56 &	1.745 &	0 &	0.213	\\	
&	heavy &	MoLP &	0.639 &	0.006 &	0.981 &	8.82 &	-18,212.2 &	2.601 &	0.114 &	0.217	\\	
&	light &	CuLP &	0.783 &	0.045 &	0.788 &	7.588 &	-18,207.1 &	1.848 &	0 &	0.981	\\	
&	heavy &	CuLP &	0.671 &	0.011 &	0.957 &	9.464 &	-18,210.74 &	3.091 &	1.026 &	1.294	\\	\cline{4-11}
&	- &	{\tt bottom.up} &	0.848 &	0.386 &	0.048 &	6.646 &	-18,139.38 &	2.621 &	0 &	0.041	\\	
&	- &	WBS.sBIC &	0.42 &	0.006 &	0.995 &	13.294 &	-18,215.21 &	3.806 &	0 &	63.106	\\	
&	- &	WBS2.SDLL &	0.825 &	0.038 &	0.811 &	6.99 &	-18,206.21 &	1.483 &	0 &	7.996	\\	
&	- &	PELT &	0.164 &	0.015 &	0.999 &	16.11 &	-18,213.93 &	4.682 &	0 &	0.022	\\	
&	- &	TGUH &	0.8 &	0.007 &	0.994 &	6.922 &	-18,210.9 &	1.652 &	0 &	1.402	\\	
&	- &	FPOP &	0.444 &	0.01 &	0.999 &	12.511 &	-18,219.81 &	3.354 &	0.342 &	0.013	\\	\hline
{\tt stairs10} &	light &	MoLP &	0.996 &	0.016 &	0.898 &	2.599 &	-23,948.07 &	0.139 &	0 &	0.227	\\	
&	heavy &	MoLP &	0.989 &	0.003 &	0.973 &	2.707 &	-23,948.82 &	0.174 &	0 &	0.229	\\	
&	light &	CuLP &	0.974 &	0.035 &	0.792 &	6.56 &	-23,934.51 &	0.628 &	0 &	0.990	\\	
&	heavy &	CuLP &	0.966 &	0.006 &	0.964 &	6.536 &	-23,936.69 &	0.668 &	0 &	1.055	\\	\cline{4-11}
&	- &	{\tt bottom.up} &	0.994 &	0.32 &	0.05 &	3.375 &	-23,886.67 &	0.161 &	0 &	0.042	\\	
&	- &	WBS.sBIC &	0.988 &	0.01 &	0.987 &	4.655 &	-23,943.36 &	0.385 &	0 &	63.537	\\	
&	- &	WBS2.SDLL &	0.981 &	0.018 &	0.906 &	5.169 &	-23,941.01 &	0.524 &	0 &	8.097	\\	
&	- &	PELT &	0.955 &	0 &	1 &	4.625 &	-23,947.28 &	0.396 &	0 &	0.013	\\	
&	- &	TGUH &	0.943 &	0.002 &	0.991 &	9.605 &	-23,915.32 &	7.433 &	0 &	1.382	\\	
&	- &	FPOP &	0.998 &	0 &	1 &	2.521 &	-23,949.57 &	0.152 &	0 &	0.013	\\	\hline\hline
\end{longtable}}

\begin{figure}[htbp]
\centering
\includegraphics[width=\textwidth]{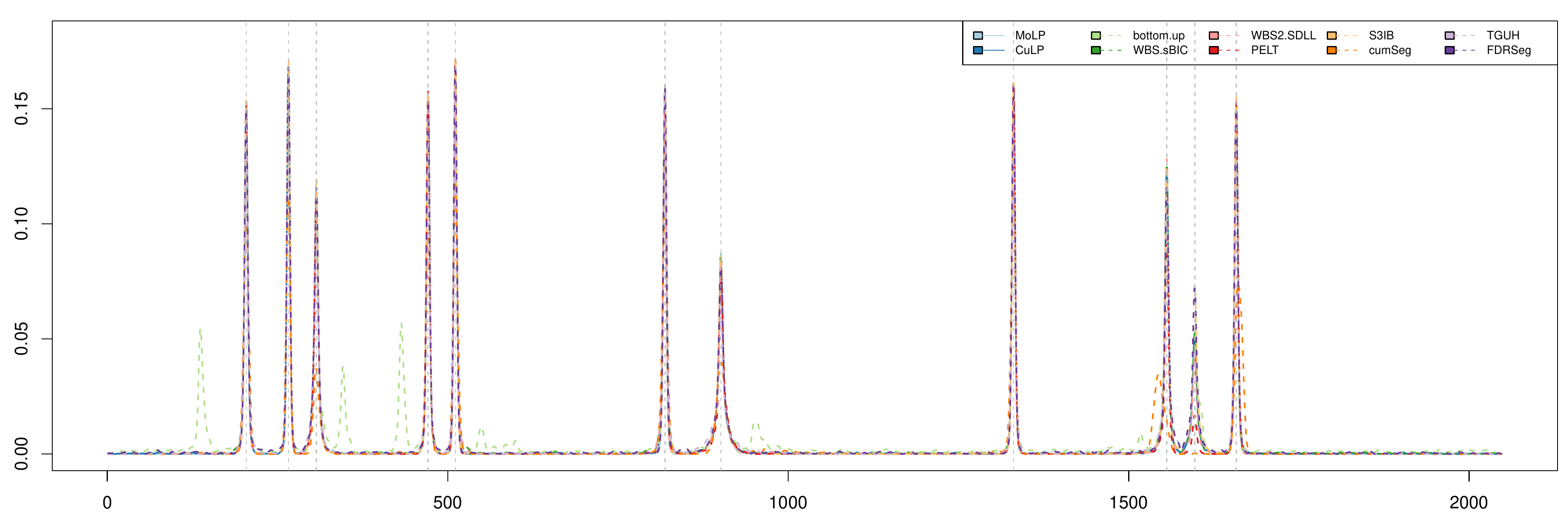}
\caption{Test signal {\tt blocks} with Gaussian errors: 
weighted density of estimated change points over $[(\cp_{j-1}+\cp_j)/2, (\cp_j+\cp_{j+1})/2]$,
$j = 1, \ldots, q_n$,
with the vertical lines indicating the locations of true change points.
We set $\alpha = 0.2$ for MoLP and {\tt bottom.up} and $C_\zeta = 0.9$ for CuLP, 
and use $h = h_{\mc J}$ and $\xi_n = \log^{1.01}(n)$ for the localised pruning.}
\label{fig:sim:blocks}
\end{figure}

\begin{figure}[htb]
\centering
\includegraphics[width=\textwidth]{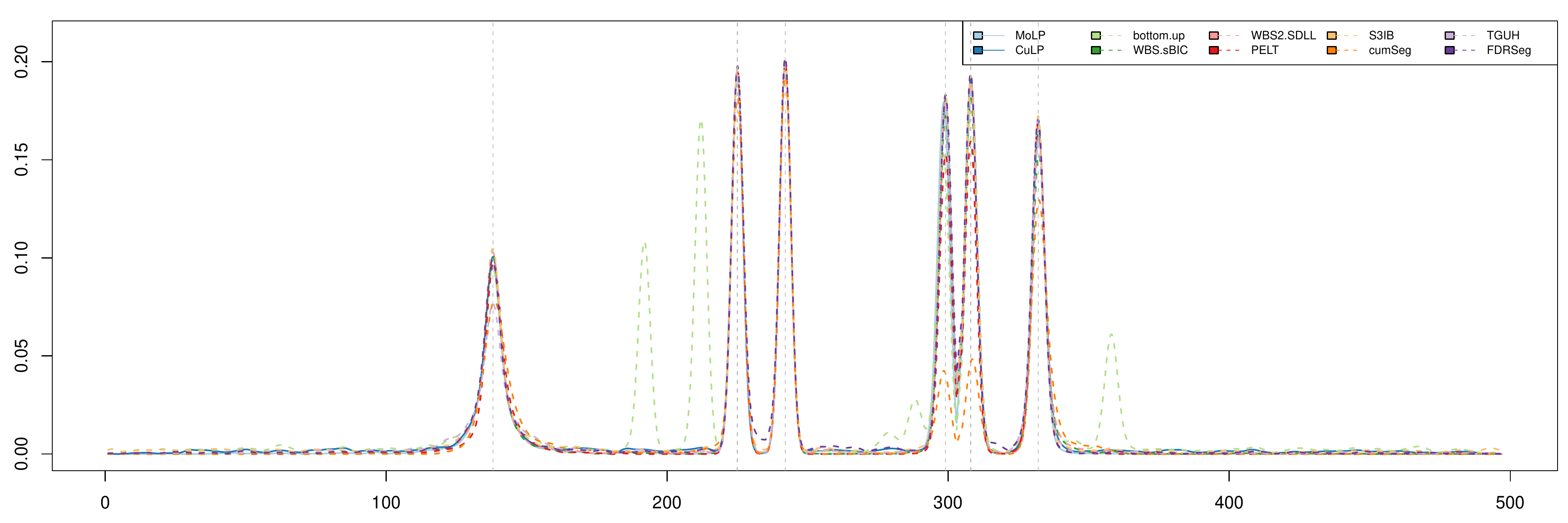}
\caption{Test signal {\tt fms} with Gaussian errors: 
weighted density of estimated change points.}
\label{fig:sim:fms}
\end{figure}

\begin{figure}[htbp]
\centering
\includegraphics[width=\textwidth]{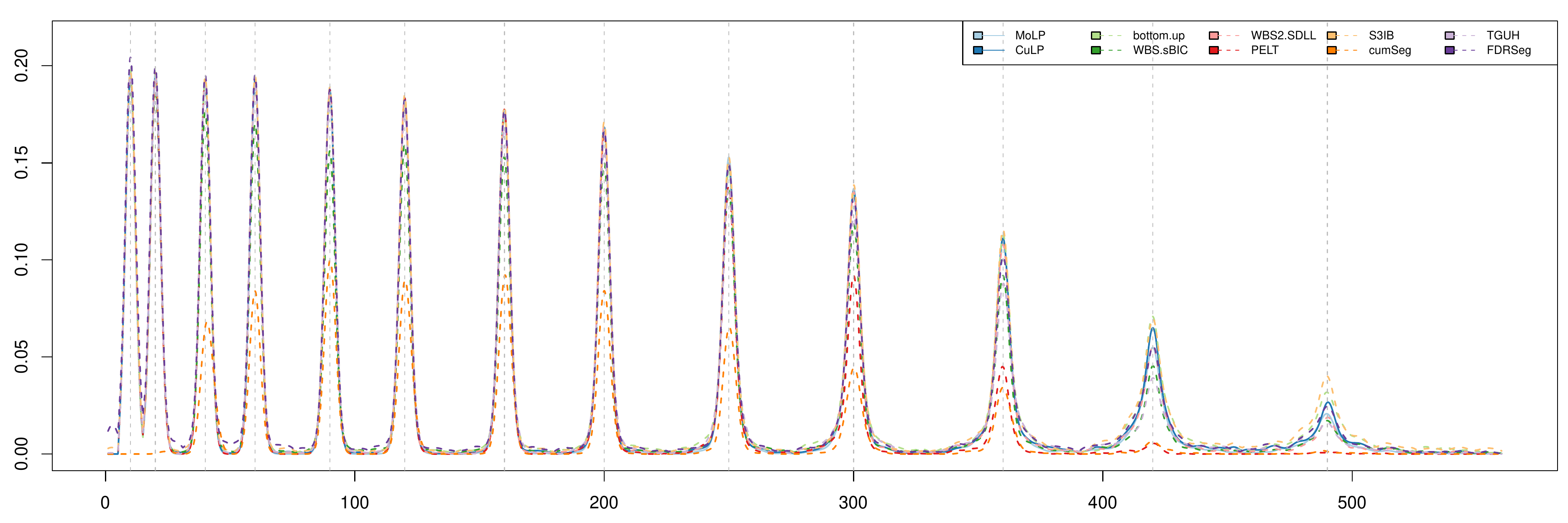}
\caption{Test signal {\tt mix} with Gaussian errors: 
weighted density of estimated change points.}
\label{fig:sim:mix}
\end{figure}

\begin{figure}[htbp]
\centering
\includegraphics[width=\textwidth]{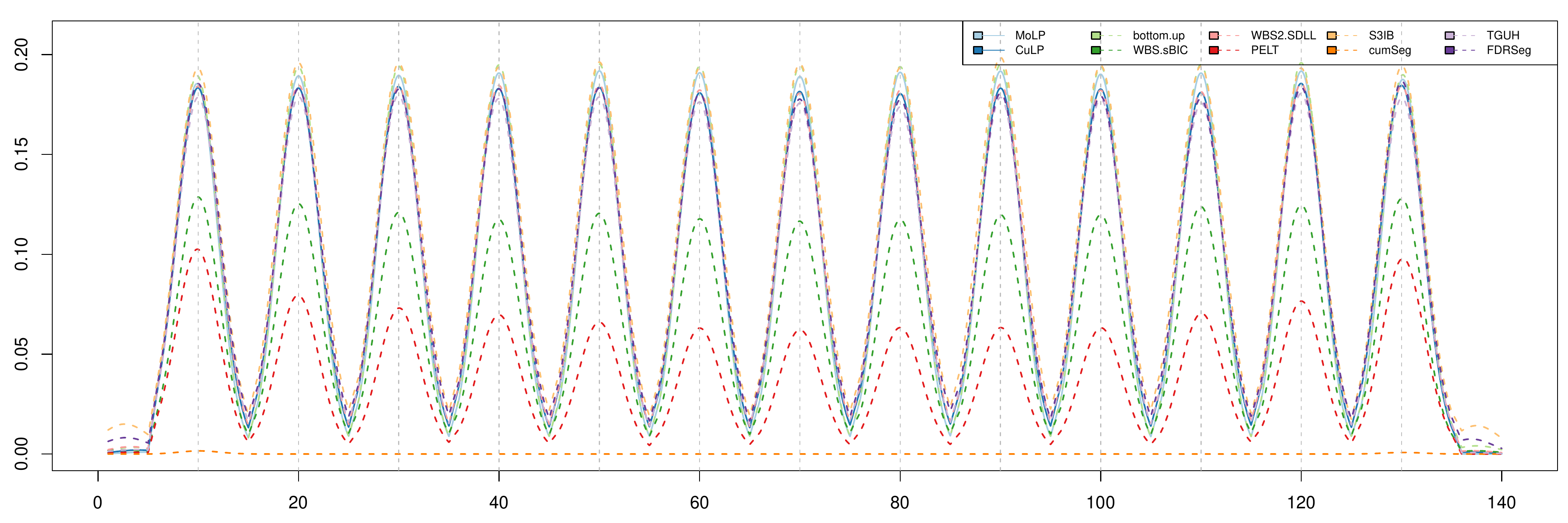}
\caption{Test signal {\tt teeth10} with Gaussian errors: 
weighted density of estimated change points.}
\label{fig:sim:teeth10}
\end{figure}

\begin{figure}[htbp]
\centering
\includegraphics[width=\textwidth]{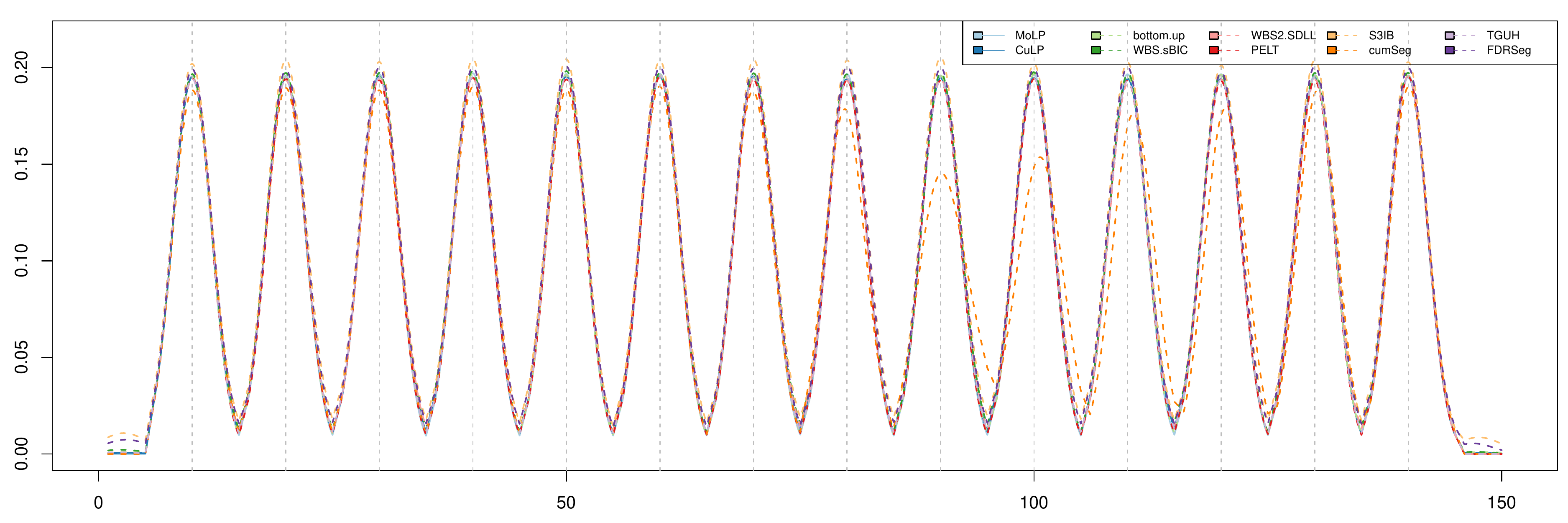}
\caption{Test signal {\tt stairs10} with Gaussian errors: 
weighted density of estimated change points.}
\label{fig:sim:stairs10}
\end{figure}

\begin{figure}[htbp]
\centering
\includegraphics[width=\textwidth]{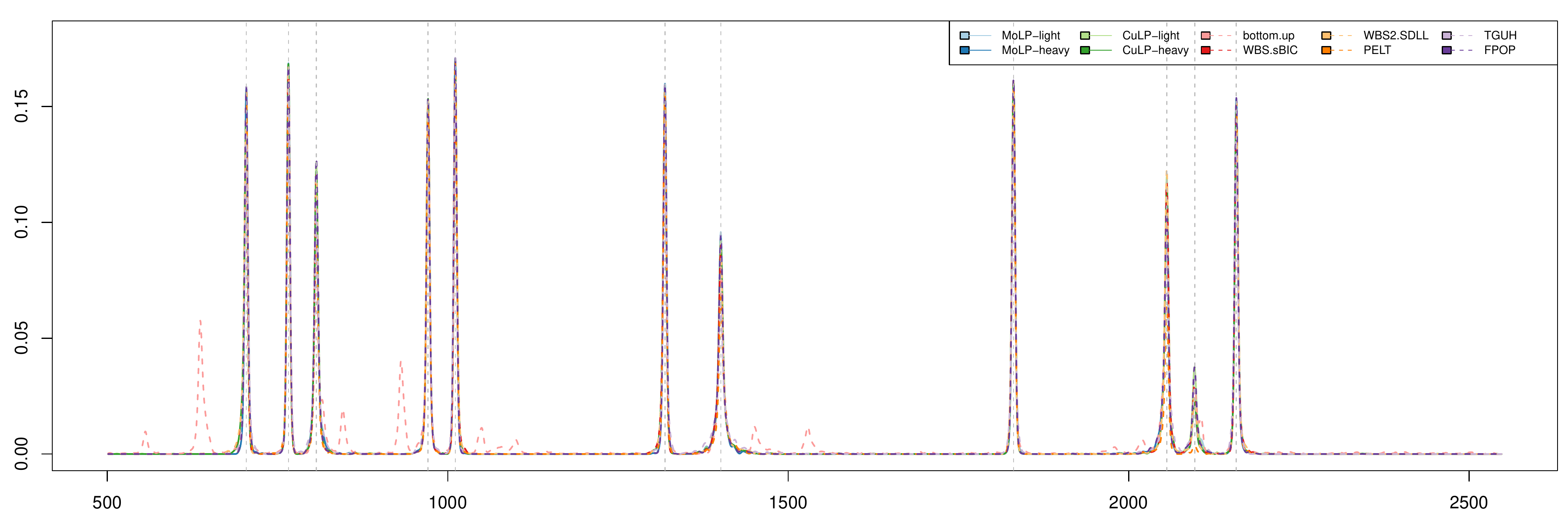}
\caption{Long test signal {\tt blocks} with sparse change points and Gaussian errors: 
weighted density of estimated change points 
with the vertical lines indicating the locations of true change points.
We set $\alpha = 0.2$ for MoLP and {\tt bottom.up} and $C_\zeta = 0.9$ for CuLP, 
and use $h = h_{\mc J}$ and $\xi_n \in \{\log^{1.01}(n), \log^{1.1}(n)\}$ for the localised pruning.}
\label{fig:sim:sparse:blocks}
\end{figure}

\begin{figure}[htbp]
\centering
\includegraphics[width=\textwidth]{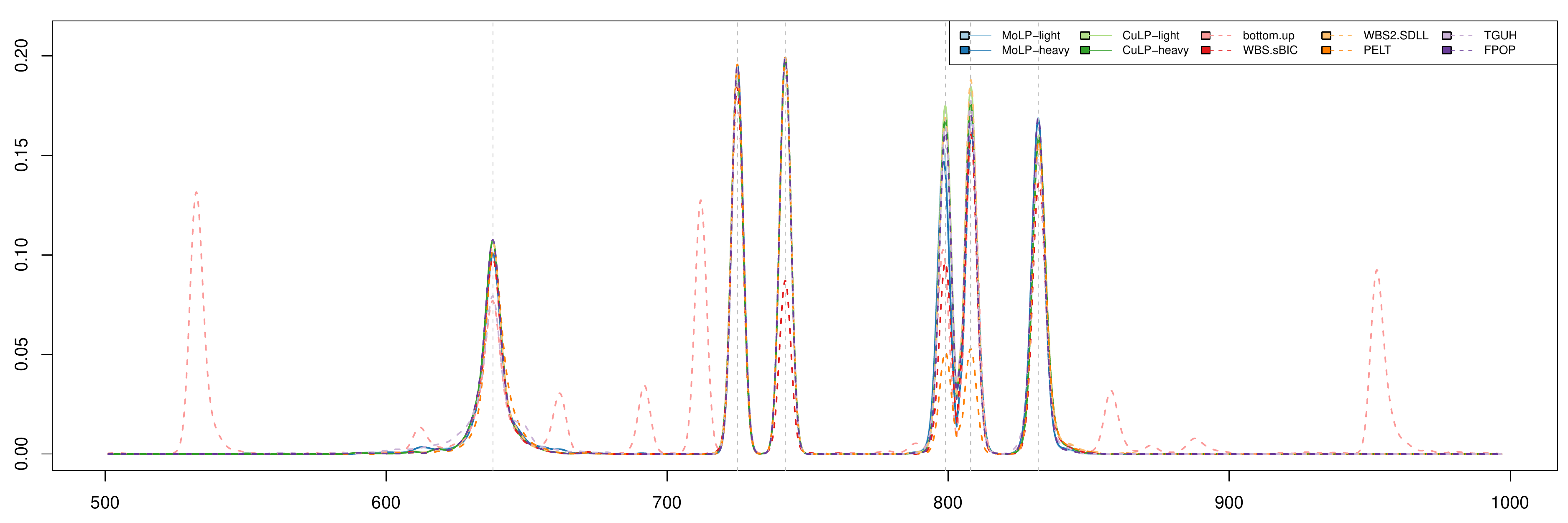}
\caption{Long test signal {\tt fms} with sparse change points and Gaussian errors: 
weighted density of estimated change points.}
\label{fig:sim:sparse:fms}
\end{figure}

\begin{figure}[htbp]
\centering
\includegraphics[width=\textwidth]{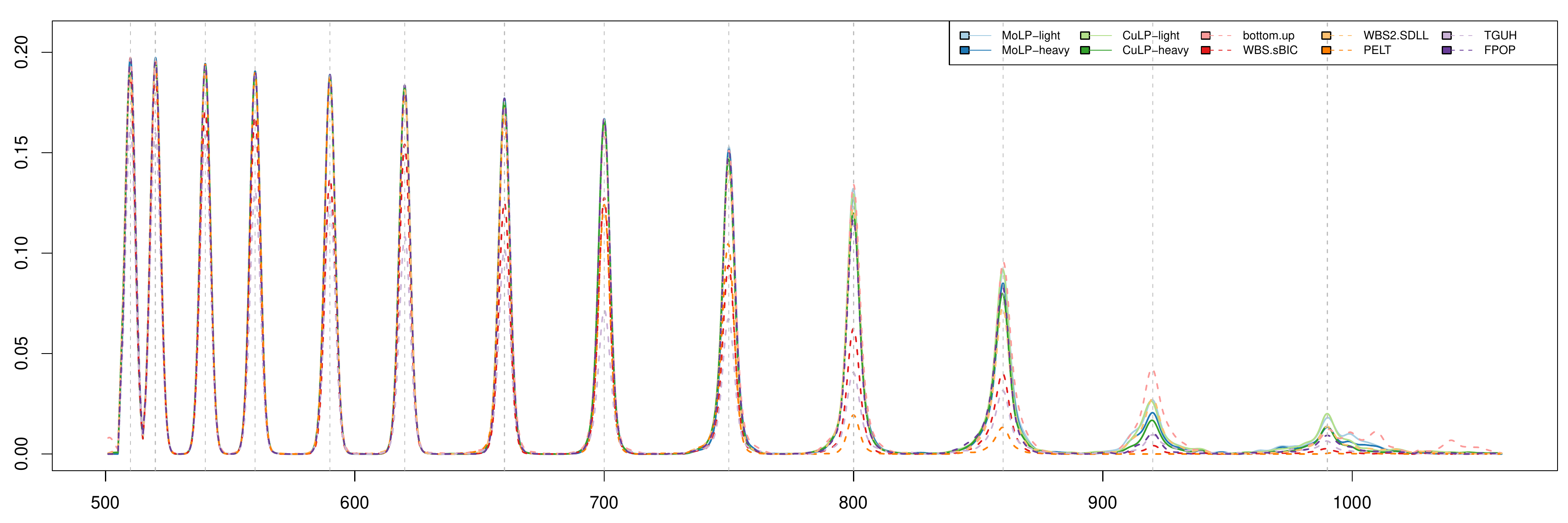}
\caption{Long test signal {\tt mix} with {\bf sparse} change points and Gaussian errors: 
weighted density of estimated change points.}
\label{fig:sim:sparse:mix}
\end{figure}

\begin{figure}[htbp]
\centering
\includegraphics[width=\textwidth]{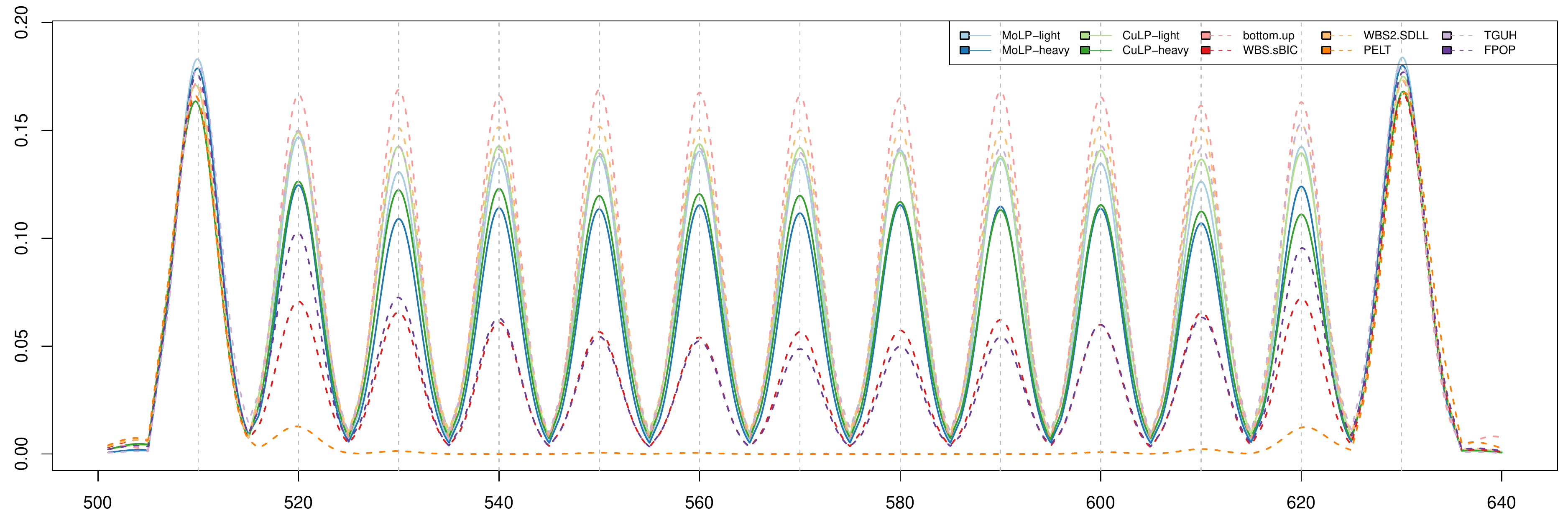}
\caption{Long test signal {\tt teeth10} with sparse change points and Gaussian errors.}
\label{fig:sim:sparse:teeth10}
\end{figure}

\begin{figure}[htbp]
\centering
\includegraphics[width=\textwidth]{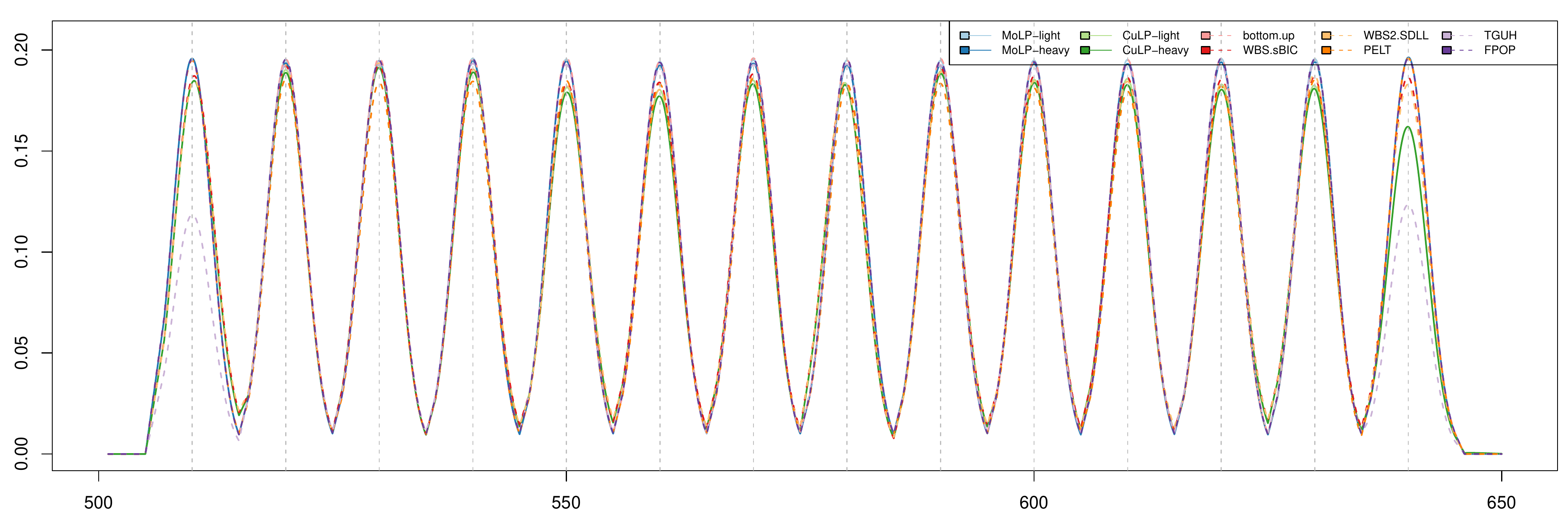}
\caption{Long test signal {\tt stairs10} with sparse change points and Gaussian errors.}
\label{fig:sim:sparse:stairs10}
\end{figure}

\pagebreak

\subsection*{\ref{eq:n2} Independent heavy-tailed errors}

Tables \ref{supp:table:sim:orig:t}--\ref{supp:table:sim:sparse:t}
report the results when $\vep_t \sim_{\iid} t_5$.
Figures~\ref{fig:sim:t:blocks}--\ref{fig:sim:t:stairs10} visualise
the performance of various methods 
by plotting the weighted densities of estimated change point(s).

The heavy penalty $\xi_n = n^{2/4.99}$ is a theoretically valid choice
conforming to Assumption~\ref{assum:penalty} in light of Remark~\ref{rem:vep:two}~(a).
When $n$ is small (Table~\ref{supp:table:sim:orig:t}),
this penalty successfully prevents false positives but 
the resulting procedure lacks power.
The light penalty $\xi_n = \log^{1.1}(n)$ works
reasonably well in not causing false positives while attaining high TPR,
yielding the performance comparable to that observed with 
Gaussian errors (see Table~\ref{supp:table:sim:orig}).
When $n$ is large and change points are sparse (Table~\ref{supp:table:sim:sparse:t}), 
the localised pruning under-estimates the number of change points 
for some test signals such as {\tt fms}, {\tt teeth10} and {\tt stairs10}.
This, in part, is due to that the candidate generating method fails to produce
at least one valid estimator for each true change point,
e.g., compare the TPR for MoLP (resp. CuLP) and {\tt bottom.up} (WBS2.SDLL),
thus failing Assumption~\ref{assum:cand}~(a).
In addition, for theoretical consistency, Assumption~\ref{assum:size} 
requires the magnitude of changes
to be larger for their detection in the presence of heavy-tailed errors 
whereas it is kept at the same level as in ($\mc E 1$) with Gaussian errors.
Most of the competitors are tailored for sub-Gaussian errors,
and they incur considerable false positives, 
a phenomenon that is amplified in Table~\ref{supp:table:sim:sparse:t}
as $n$ is large and change points sparse.

{\small
\setlength{\tabcolsep}{3pt}
\begin{longtable}{c|c|c|c|c|c|c|c|c|cc}
\caption{Summary of change point estimation over $1000$ realisations for 
the test signals with $t_5$ errors;
we set $\alpha = 0.2$ for MoLP and {\tt bottom.up} and $C_\zeta = 0.9$ for CuLP,  
and use $h = h_{\mc J}$ and $\xi_n \in \{\log^{1.1}(n), n^{2/4.99}\}$ for the localised pruning.}
\label{supp:table:sim:orig:t} 
\endfirsthead
\endhead
\hline\hline
model &	penalty &	method &	TPR &	FPR &	ARI &	MSE &	BIC &	$\delta_{\trim}$ &	$v_{\trim}$ 	\\	\hline
{\tt blocks} &	light &	MoLP &	0.948 &	0.005 &	0.98 &	4.76 &	4781.907 &	312.528 &	49.512	\\	
&	heavy &	MoLP &	0.775 &	0 &	0.928 &	10.072 &	4796.438 &	808.116 &	261.309	\\	
&	light &	CuLP &	0.943 &	0.013 &	0.974 &	5.255 &	4783.241 &	366.154 &	110.24	\\	
&	heavy &	CuLP &	0.743 &	0 &	0.908 &	11.676 &	4800.744 &	1360.405 &	208.862	\\	\cline{4-10}
&	- &	{\tt bottom.up} &	0.95 &	0.217 &	0.921 &	6.295 &	4805.876 &	390.389 &	297.751	\\	
&	- &	WBS.sBIC &	0.902 &	0.248 &	0.891 &	16.965 &	4785.934 &	852.116 &	195.167	\\	
&	- &	WBS2.SDLL &	0.974 &	0.611 &	0.694 &	28.436 &	4808.934 &	288.258 &	111.779	\\	
&	- &	PELT &	0.927 &	0.17 &	0.935 &	13.346 &	4761.393 &	398.671 &	57.462	\\	
&	- &	S3IB &	0.965 &	0.265 &	0.913 &	15.858 &	4758.723 &	303.6 &	121.246	\\	
&	- &	cumSeg &	0.768 &	0.001 &	0.915 &	13.462 &	4817.388 &	1758.934 &	592.281	\\	
&	- &	TGUH &	0.961 &	0.35 &	0.885 &	15.58 &	4794.246 &	467.44 &	353.441	\\	
&	- &	FDRSeg &	0.992 &	0.678 &	0.713 &	33.873 &	4822.753 &	312.756 &	252.471	\\	\hline
{\tt fms} &	light &	MoLP &	0.984 &	0.009 &	0.961 &	4.429 &	-567.376 &	0.204 &	0.151	\\	
&	heavy &	MoLP &	0.948 &	0.003 &	0.937 &	5.543 &	-566.093 &	0.312 &	0.168	\\	
&	light &	CuLP &	0.988 &	0.012 &	0.963 &	3.871 &	-568.467 &	0.175 &	0.033	\\	
&	heavy &	CuLP &	0.967 &	0.001 &	0.956 &	4.239 &	-568.043 &	0.249 &	0.033	\\	\cline{4-10}
&	- &	{\tt bottom.up} &	0.983 &	0.302 &	0.869 &	6.099 &	-550.932 &	0.325 &	0.237	\\	
&	- &	WBS.sBIC &	0.978 &	0.174 &	0.91 &	10.382 &	-572.498 &	0.198 &	0.033	\\	
&	- &	WBS2.SDLL &	0.999 &	0.464 &	0.749 &	17.344 &	-561.768 &	0.108 &	0.033	\\	
&	- &	PELT &	0.984 &	0.108 &	0.941 &	8.202 &	-575.528 &	0.185 &	0.033	\\	
&	- &	S3IB &	1 &	0.427 &	0.813 &	16.341 &	-572.761 &	0.111 &	0.033	\\	
&	- &	cumSeg &	0.757 &	0.012 &	0.916 &	13.595 &	-552.024 &	1.777 &	0.120	\\	
&	- &	TGUH &	0.998 &	0.306 &	0.866 &	11.119 &	-565.622 &	0.141 &	0.067	\\	
&	- &	FDRSeg &	1 &	0.523 &	0.777 &	18.759 &	-562.495 &	0.116 &	0.050	\\	\hline
{\tt mix} &	light &	MoLP &	0.916 &	0.005 &	0.753 &	3.874 &	839.603 &	30.371 &	12.659	\\	
&	heavy &	MoLP &	0.854 &	0.001 &	0.634 &	4.428 &	841.166 &	34.099 &	6.501	\\	
&	light &	CuLP &	0.905 &	0.007 &	0.736 &	4.202 &	840.603 &	38.31 &	13.686	\\	
&	heavy &	CuLP &	0.822 &	0.001 &	0.577 &	5.118 &	843.018 &	46.415 &	14.028	\\	\cline{4-10}
&	- &	{\tt bottom.up} &	0.941 &	0.034 &	0.795 &	4.37 &	845.202 &	39.285 &	19.274	\\	
&	- &	WBS.sBIC &	0.835 &	0.14 &	0.652 &	11.853 &	860.71 &	123.284 &	23.722	\\	
&	- &	WBS2.SDLL &	0.961 &	0.303 &	0.769 &	9.536 &	843.276 &	29.806 &	15.624	\\	
&	- &	PELT &	0.837 &	0.065 &	0.582 &	6.918 &	833.81 &	37.444 &	13.8	\\	
&	- &	S3IB &	0.973 &	0.261 &	0.812 &	9.019 &	833.356 &	29.335 &	13.8	\\	
&	- &	cumSeg &	0.346 &	0 &	0.286 &	23.734 &	900.692 &	742.379 &	75.442	\\	
&	- &	TGUH &	0.93 &	0.229 &	0.724 &	8.453 &	842.304 &	40.007 &	27.257	\\	
&	- &	FDRSeg &	0.971 &	0.409 &	0.775 &	11.321 &	850.546 &	32.429 &	19.274	\\	\hline
{\tt teeth10} &	light &	MoLP &	0.937 &	0.001 &	0.905 &	2.49 &	-75.183 &	0.394 &	0	\\	
&	heavy &	MoLP &	0.908 &	0 &	0.873 &	2.941 &	-74.966 &	0.531 &	0	\\	
&	light &	CuLP &	0.843 &	0.001 &	0.76 &	5.507 &	-73.049 &	1.044 &	0	\\	
&	heavy &	CuLP &	0.781 &	0.001 &	0.696 &	6.431 &	-72.824 &	1.33 &	0	\\	\cline{4-10}
&	- &	{\tt bottom.up} &	0.986 &	0.003 &	0.969 &	1.687 &	-75.091 &	0.147 &	0	\\	
&	- &	WBS.sBIC &	0.722 &	0.064 &	0.655 &	8.248 &	-74.897 &	1.63 &	0	\\	
&	- &	WBS2.SDLL &	0.988 &	0.092 &	0.903 &	4.353 &	-80.031 &	0.387 &	0	\\	
&	- &	PELT &	0.643 &	0.036 &	0.539 &	9.114 &	-75.593 &	1.994 &	0.912	\\	
&	- &	S3IB &	0.998 &	0.172 &	0.904 &	4.802 &	-81.069 &	0.317 &	0	\\	
&	- &	cumSeg &	0.001 &	0 &	0 &	18.377 &	-63.278 &	4.996 &	0	\\	
&	- &	TGUH &	0.985 &	0.082 &	0.896 &	4.628 &	-80.295 &	0.457 &	0	\\	
&	- &	FDRSeg &	0.99 &	0.187 &	0.88 &	5.184 &	-78.135 &	0.408 &	0	\\	
\hline
{\tt stairs10} &	light &	MoLP &	0.994 &	0.001 &	0.972 &	2.352 &	-122.885 &	0.14 &	0.000	\\	
&	heavy &	MoLP &	0.993 &	0.001 &	0.971 &	2.382 &	-122.851 &	0.145 &	0	\\	
&	light &	CuLP &	0.99 &	0.001 &	0.952 &	3.424 &	-120.553 &	0.248 &	0	\\	
&	heavy &	CuLP &	0.989 &	0.001 &	0.951 &	3.461 &	-120.514 &	0.255 &	0	\\	\cline{4-10}
&	- &	{\tt bottom.up} &	0.686 &	0.099 &	0.555 &	36.307 &	-44.435 &	3.031 &	2.859	\\	
&	- &	WBS.sBIC &	1 &	0.082 &	0.953 &	3.643 &	-125.895 &	0.155 &	0	\\	
&	- &	WBS2.SDLL &	0.999 &	0.07 &	0.951 &	3.682 &	-124.76 &	0.172 &	0	\\	
&	- &	PELT &	0.997 &	0.015 &	0.967 &	2.905 &	-125.858 &	0.154 &	0	\\	
&	- &	S3IB &	1 &	0.158 &	0.943 &	4.088 &	-126.209 &	0.124 &	0	\\	
&	- &	cumSeg &	0.981 &	0.007 &	0.881 &	7.23 &	-98.274 &	0.625 &	0.424	\\	
&	- &	TGUH &	0.999 &	0.074 &	0.956 &	3.636 &	-125.944 &	0.157 &	0	\\	
&	- &	FDRSeg &	1 &	0.199 &	0.929 &	4.504 &	-122.972 &	0.138 &	0	\\	\hline\hline
\end{longtable}}

{\small
\setlength{\tabcolsep}{3pt}
\begin{longtable}{c|c|c|c|c|c|c|c|cc|c}
\caption{Summary of change point estimation over $1000$ realisations for 
the test signals with {\bf sparse} change points and $t_5$ errors;
we set $\alpha = 0.2$ for MoLP and {\tt bottom.up} and $C_\zeta = 0.9$ for CuLP,  
and use $h = h_{\mc J}$ and $\xi_n \in \{\log^{1.1}(n), n^{2/4.99}\}$ for the localised pruning.}
\label{supp:table:sim:sparse:t} 
\endfirsthead
\endhead
\hline\hline
model &	penalty & method &	TPR &	FPR &	ARI &	MSE &	BIC & $\delta_{\trim}$ &	$v_{\trim}$ & 	speed 	\\	\hline
{\tt blocks} &	light &	MoLP &	1 &	0.007 &	0.979 &	2.271 &	23,128.27 &	369.841 &	0 &	0.174	\\	
&	heavy &	MoLP &	0.995 &	0 &	1 &	2.749 &	23,130.49 &	3.024 &	0 &	0.174	\\	
&	light &	CuLP &	1 &	0.02 &	0.934 &	3 &	23,129.62 &	1,047.866 &	0 &	5.167	\\	
&	heavy &	CuLP &	0.996 &	0 &	1 &	3.05 &	23,130.73 &	2.753 &	0 &	5.170	\\	\cline{4-11}
&	- &	{\tt bottom.up} &	1 &	0.383 &	0.326 &	5.17 &	23,189.64 &	11,697.4 &	0 &	0.021	\\	
&	- &	WBS2.SDLL &	1 &	0.899 &	0.016 &	155.253 &	23,393.5 &	17,507.5 &	0 &	3.735	\\	
&	- &	PELT &	1 &	0.509 &	0.156 &	53.467 &	22,995.83 &	14,698.61 &	0 &	0.007	\\	
&	- &	TGUH &	1 &	0.647 &	0.174 &	47.33 &	23,151.79 &	14,031.65 &	0 &	0.816	\\	
&	- &	FPOP &	1 &	0.755 &	0.053 &	97.408 &	22,997.7 &	16,684.8 &	0 &	0.080	\\	\hline
{\tt fms} &	light &	MoLP &	0.525 &	0.019 &	0.955 &	9.429 &	-11,996.67 &	497.164 &	0.208 &	0.174	\\	
&	heavy &	MoLP &	0.113 &	0 &	0.588 &	29.304 &	-11,959.72 &	64.733 &	0 &	0.179	\\	
&	light &	CuLP &	0.525 &	0.05 &	0.893 &	9.741 &	-11,996.39 &	1,314.255 &	0.208 &	5.075	\\	
&	heavy &	CuLP &	0.012 &	0 &	0.056 &	29.968 &	-11,962.25 &	6.072 &	0 &	7.484	\\	\cline{4-11}
&	- &	{\tt bottom.up} &	0.644 &	0.523 &	0.183 &	12.73 &	-11,953.4 &	12,939.04 &	0.865 &	0.021	\\	
&	- &	WBS2.SDLL &	0.73 &	0.958 &	0.006 &	300.3 &	-11,724.1 &	18,837.29 &	0.54 &	3.774	\\	
&	- &	PELT &	0.493 &	0.784 &	0.067 &	107.746 &	-12,130.48 &	16,040.72 &	0.243 &	0.007	\\	
&	- &	TGUH &	0.501 &	0.874 &	0.048 &	114.887 &	-11,972.02 &	15,887.4 &	1.211 &	0.813	\\	
&	- &	FPOP &	0.634 &	0.899 &	0.02 &	188.625 &	-12,128.99 &	18,015.36 &	0.634 &	0.081
\\	\hline
{\tt mix} &	light &	MoLP &	1 &	0.005 &	0.969 &	2.468 &	13,982.47 &	425.41 &	0.456 &	0.205	\\	
&	heavy &	MoLP &	0.806 &	0 &	0.908 &	14.593 &	14,033.51 &	157.413 &	16.423 &	0.234	\\	
&	light &	CuLP &	0.999 &	0.106 &	0.535 &	4.964 &	13,986.6 &	6,802.449 &	0.456 &	12.851	\\	
&	heavy &	CuLP &	0.819 &	0.001 &	0.909 &	12.7 &	14,022.57 &	201.298 &	1.825 &	19.848	\\	\cline{4-11}
&	- &	{\tt bottom.up} &	1 &	0.246 &	0.205 &	4.795 &	14,020.46 &	12,786.47 &	1.939 &	0.023	\\	
&	- &	WBS2.SDLL &	0.999 &	0.884 &	0.007 &	128.533 &	14,248.03 &	18,671.8 &	0.456 &	4.028	\\	
&	- &	PELT &	0.998 &	0.469 &	0.08 &	44.492 &	13,849.79 &	15,860.93 &	0.456 &	0.007	\\	
&	- &	TGUH &	0.951 &	0.637 &	0.072 &	52.071 &	14,027.56 &	15,578.4 &	3.307 &	0.874	\\	
&	- &	FPOP &	1 &	0.723 &	0.024 &	80.491 &	13,851.49 &	17,852.09 &	0.456 &	0.088	\\	\hline
{\tt teeth10} &	light &	MoLP &	0.165 &	0.088 &	0.963 &	7.109 &	-9,105.34 &	480.773 &	0.228 &	0.188	\\	
&	heavy &	MoLP &	0.075 &	0.028 &	0.953 &	12.442 &	-9,088.252 &	106.865 &	0 &	0.191	\\	
&	light &	CuLP &	0.177 &	0.35 &	0.493 &	9.382 &	-9,100.71 &	7,072.111 &	0.228 &	15.962	\\	
&	heavy &	CuLP &	0.077 &	0.009 &	0.954 &	12.195 &	-9,090.402 &	140.83 &	0 &	21.479	\\	\cline{4-11}
&	- &	{\tt bottom.up} &	0.286 &	0.474 &	0.161 &	8.533 &	-9,069.533 &	13,124.67 &	0.228 &	0.022	\\	
&	- &	WBS2.SDLL &	0.398 &	0.952 &	0.004 &	133.288 &	-8,820.831 &	19,022.91 &	0.228 &	3.851	\\	
&	- &	PELT &	0.143 &	0.85 &	0.054 &	49.521 &	-9,239.104 &	16,239.11 &	0.342 &	0.007	\\	
&	- &	TGUH &	0.257 &	0.853 &	0.065 &	47.879 &	-9,086.805 &	15,828 &	0.342 &	0.857	\\	
&	- &	FPOP &	0.183 &	0.934 &	0.015 &	85.667 &	-9,236.122 &	18,214.47 &	0.228 &	0.083	\\	\hline
{\tt stairs10} &	light &	MoLP &	0.59 &	0.007 &	0.966 &	10.283 &	-11,936.32 &	436.29 &	2.012 &	0.481	\\	
&	heavy &	MoLP &	0.377 &	0 &	0.993 &	23.415 &	-11,883.82 &	12.535 &	0.212 &	0.625	\\	
&	light &	CuLP &	0.599 &	0.025 &	0.894 &	10.521 &	-11,932.84 &	1,353.719 &	2.012 &	4.987	\\	
&	heavy &	CuLP &	0.396 &	0 &	0.994 &	22.688 &	-11,886.22 &	13.098 &	0.424 &	5.221	\\	\cline{4-11}
&	- &	{\tt bottom.up} &	0.585 &	0.314 &	0.162 &	20.418 &	-11,861.37 &	13,113.7 &	1.906 &	0.023	\\	
&	- &	WBS2.SDLL &	0.725 &	0.907 &	0.004 &	124.732 &	-11,662.26 &	19,024.46 &	1.059 &	4.216	\\	
&	- &	PELT &	0.591 &	0.576 &	0.055 &	49.277 &	-12,069.15 &	16,222.22 &	2.012 &	0.007	\\	
&	- &	TGUH &	0.651 &	0.68 &	0.086 &	47.761 &	-11,902.81 &	15,607.59 &	2.012 &	0.889	\\	
&	- &	FPOP &	0.691 &	0.777 &	0.016 &	80.377 &	-12,067.29 &	18,207.36 &	1.589 &	0.085	\\	\hline\hline
\end{longtable}}

\begin{figure}[htbp]
\centering
\includegraphics[width=\textwidth]{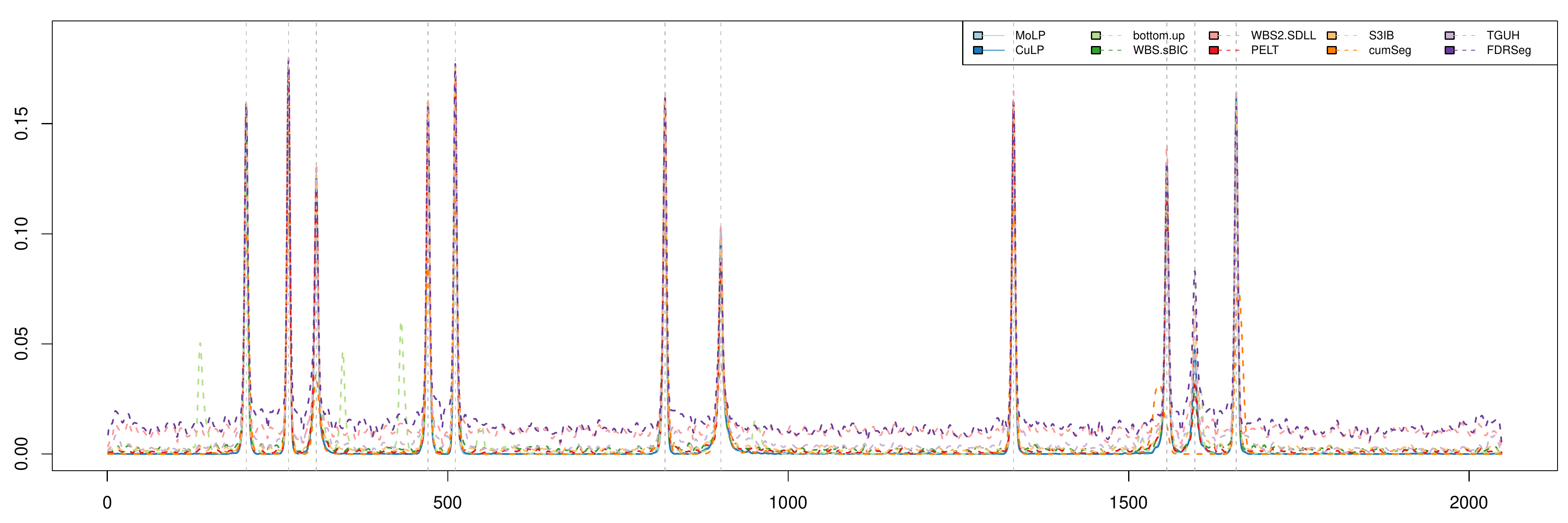}
\caption{Test signal {\tt blocks} with $t_5$ errors: 
weighted density of estimated change points
with the vertical lines indicating the locations of true change points.
We set $\alpha = 0.2$ for MoLP and {\tt bottom.up} and $C_\zeta = 0.9$ for CuLP,  and use $h = h_{\mc J}$ and $\xi_n = \log^{1.1}(n)$ for the localised pruning.}
\label{fig:sim:t:blocks}
\end{figure}

\begin{figure}[htbp]
\centering
\includegraphics[width=\textwidth]{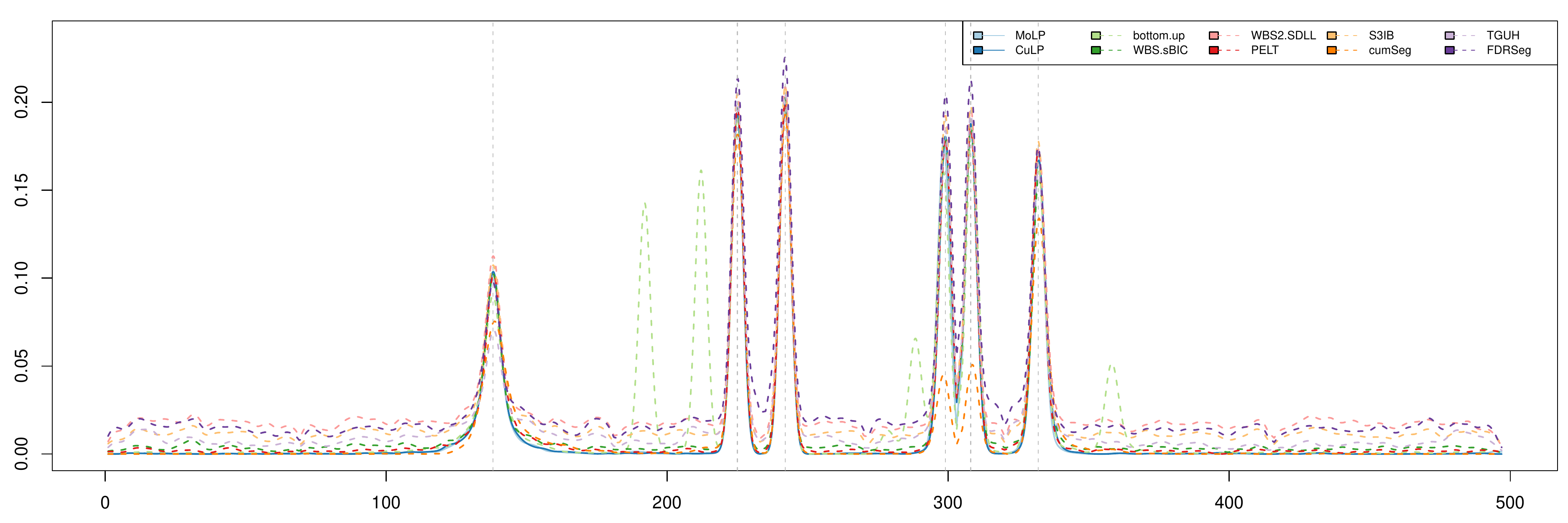}
\caption{Test signal {\tt fms} with $t_5$ errors: weighted density of estimated change points.}
\label{fig:sim:t:fms}
\end{figure}

\begin{figure}[htbp]
\centering
\includegraphics[width=\textwidth]{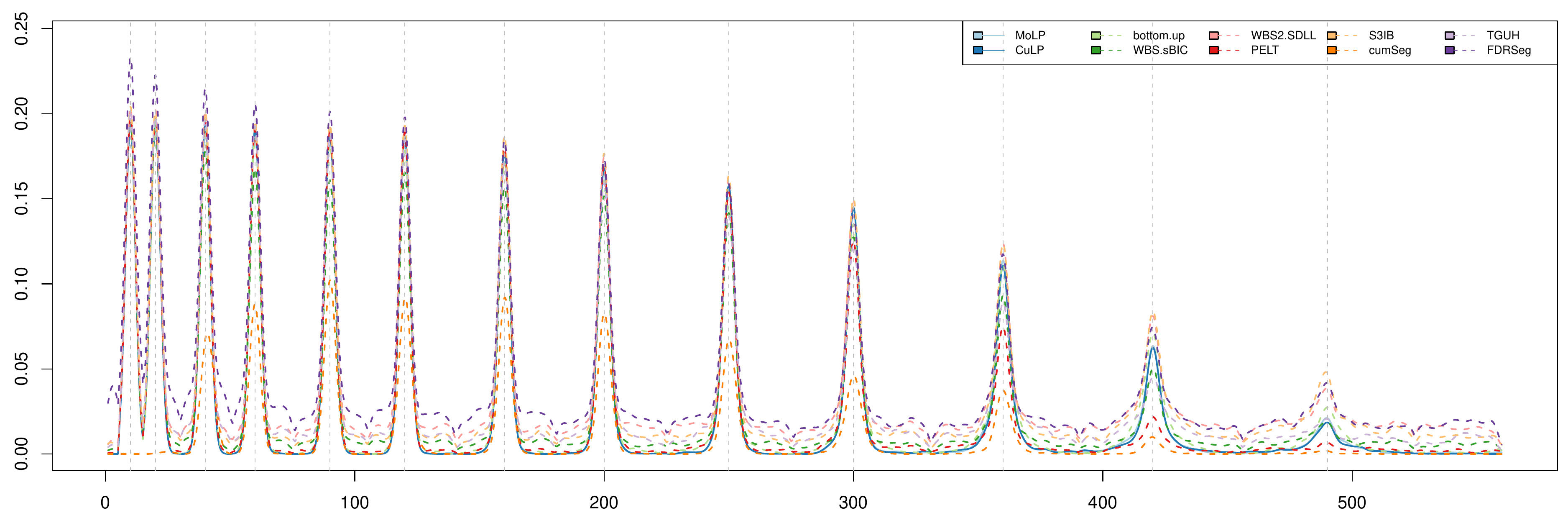}
\caption{Test signal {\tt mix} with $t_5$ errors: weighted density of estimated change points.}
\label{fig:sim:t:mix}
\end{figure}

\begin{figure}[htbp]
\centering
\includegraphics[width=\textwidth]{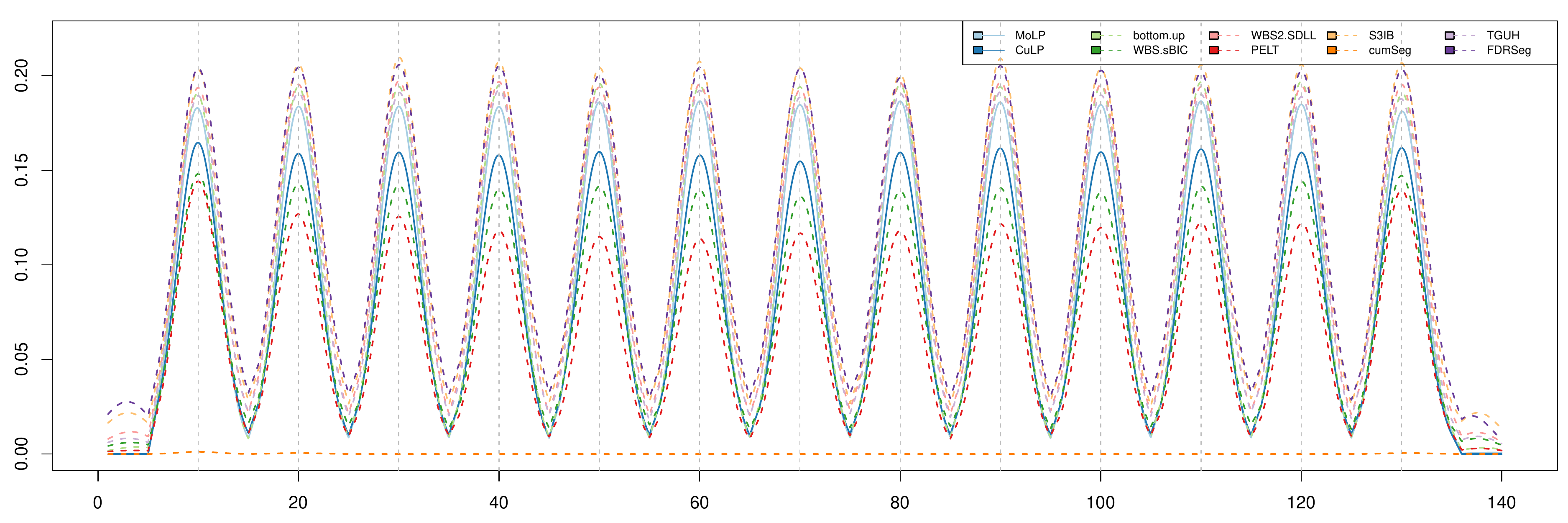}
\caption{Test signal {\tt teeth10} with $t_5$ errors: weighted density of estimated change points.}
\label{fig:sim:t:teeth10}
\end{figure}

\begin{figure}[htbp]
\centering
\includegraphics[width=\textwidth]{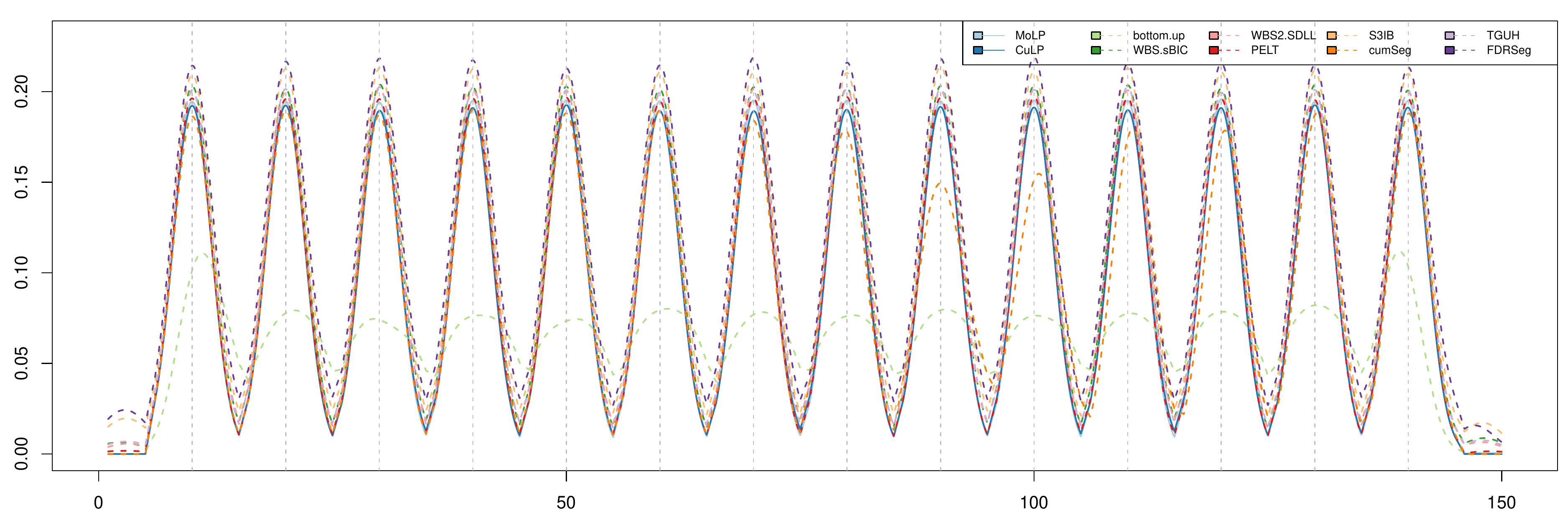}
\caption{Test signal {\tt stairs10} with $t_5$ errors: weighted density of estimated change point.}
\label{fig:sim:t:stairs10}
\end{figure}

%
%
%
%

\pagebreak

\subsection*{\ref{eq:n3} Serially correlated errors}

Tables \ref{table:sim:ar:0.3} --\ref{table:sim:ar:0.9} report the simulation results
obtained from the test signals generated with serially correlated errors
following AR($1$) processes. 
Figures~\ref{fig:sim:ar:0.3:blocks}--\ref{fig:sim:ar:0.9:stairs10} visualise
the performance of various methods 
by plotting the weighted densities of estimated change point(s).

In the presence of week serial dependence (AR parameter $\varrho = 0.3$),
the choice of light penalty $\xi_n = \log^{1.1}(n)$ is observed to be
effective in suppressing the false positives 
in the localised pruning procedure,
while attaining the TPR close to $90\%$.
When the serial dependence is strong ($\varrho = 0.9$),
a heavier penalty of $\xi_n = \log^2(n)$ is required to control the FPR.
Overall the proposed localised pruning is successful in handling serial dependence.

JUSD tends to over-estimate $\tau^2$
even with an informed choice of the parameter for its estimation.
DepSMUCE shows weakness in detecting frequent jumps
as in {\tt teeth10}, whether the serial correlations are small or large,
due to the block-based approach to the estimation of $\tau^2$.

We also consider those methods that
do not require an explicit estimation of $\tau^2$ (WBS.sBIC, cumSeg),
or use a threshold involving its estimator only as a secondary check (WBS2.SDLL),
to which we supply the estimator of $\tau^2$ used by DepSMUCE;
for {\tt bottom.up}, we supplied the true $\tau^2$.
WBS.sBIC tends to over-estimate the number of change points
due to the inadequacy of the chosen penalty when the serial dependence is strong,
which is confirmed by that this set of many spurious estimators returns the minimum BIC.
Although the final model returned by WBS2.SDLL does not critically depend on 
the estimator of $\tau^2$, its performance appears to be heavily dependent on its estimator 
in some settings.
The cumSeg, when the serial correlations are weak, 
tends to under-estimate the number of change points as in \ref{eq:n1}
whereas when the AR parameter is large, it returns many false positives.

{\small
\setlength{\tabcolsep}{3pt}
\begin{longtable}{c|cc|c|c|c|c|c|cc}
\caption{Summary of change point estimation over $1000$ realisations for the 
test signals with Gaussian AR($1$) process
as $\vep_t$ where $\varrho = 0.3$ is used as the AR parameter;
we set $\alpha = 0.2$ for MoLP, {\tt bottom.up}, JUDS and DepSMUCE and $C_\zeta = 0.9$ for CuLP,
and use $h = h_{\mc J}$ and $\xi_n \in \{\log^{1.1}(n), \log^2(n)\}$ for the localised pruning.}
\label{table:sim:ar:0.3} 
\endfirsthead
\endhead
\hline\hline
model &	penalty &	method &	TPR &	FPR &	ARI &	MSE &	BIC &	$\delta_{\trim}$ &	$v_{\trim}$ 	\\	\hline
{\tt blocks} &	light &	MoLP &	0.887 &	0.027 &	0.943 &	5.479 &	4774.904 &	740.203 &	319.633	\\	
&	heavy &	MoLP &	0.348 &	0 &	0.66 &	20.411 &	4884.454 &	6922.511 &	444.873	\\	
&	light &	CuLP &	0.878 &	0.04 &	0.931 &	6.193 &	4779.81 &	995.415 &	434.106	\\	
&	heavy &	CuLP &	0.316 &	0 &	0.61 &	20.57 &	4885.94 &	7516.739 &	128.913	\\	\cline{4-10}
&	- &	{\tt bottom.up} &	0.851 &	0.148 &	0.903 &	6.75 &	4795.408 &	986.112 &	598.441	\\	
&	- &	WBS.sBIC &	0.909 &	0.094 &	0.92 &	6.921 &	4785.532 &	1059.468 &	414.596	\\	
&	- &	WBS2.SDLL &	0.943 &	0.226 &	0.864 &	6.832 &	4781.659 &	606.197 &	332.621	\\	
&	- &	cumSeg &	0.742 &	0.007 &	0.887 &	9.407 &	4814.161 &	2347.276 &	961.464	\\	
&	- &	JUSD &	0.757 &	0.007 &	0.921 &	8.575 &	4800.773 &	1889.446 &	658.822	\\	
&	- &	DepSMUCE &	0.804 &	0.013 &	0.932 &	7.46 &	4792.994 &	1461.211 &	506.747	\\	\hline
{\tt fms} &	light &	MoLP &	0.892 &	0.064 &	0.88 &	5.812 &	-566.721 &	0.576 &	0.287	\\	
&	heavy &	MoLP &	0.411 &	0 &	0.595 &	17.571 &	-527.843 &	2.096 &	0	\\	
&	light &	CuLP &	0.95 &	0.084 &	0.903 &	4.729 &	-570.251 &	0.392 &	0.136	\\	
&	heavy &	CuLP &	0.42 &	0.001 &	0.605 &	16.863 &	-530.367 &	1.917 &	0	\\	\cline{4-10}
&	- &	{\tt bottom.up} &	0.834 &	0.261 &	0.825 &	7.536 &	-551.757 &	0.975 &	1.692	\\	
&	- &	WBS.sBIC &	0.962 &	0.104 &	0.892 &	5.166 &	-569.772 &	0.33 &	0.136	\\	
&	- &	WBS2.SDLL &	0.975 &	0.103 &	0.885 &	4.702 &	-570.251 &	0.287 &	0.136	\\	
&	- &	cumSeg &	0.74 &	0.029 &	0.884 &	9.685 &	-551.414 &	2.167 &	0.465	\\	
&	- &	JUSD &	0.397 &	0.001 &	0.586 &	17.694 &	-525.871 &	2.151 &	0.000	\\	
&	- &	DepSMUCE &	0.824 &	0.009 &	0.912 &	7.265 &	-558.459 &	1.15 &	0.225	\\	\hline
{\tt mix} &	light &	MoLP &	0.864 &	0.023 &	0.637 &	3.915 &	833.581 &	56.885 &	24.52	\\	
&	heavy &	MoLP &	0.24 &	0 &	0.123 &	15.707 &	907.876 &	706.508 &	34.214	\\	
&	light &	CuLP &	0.851 &	0.042 &	0.618 &	4.499 &	836.491 &	75.568 &	29.994	\\	
&	heavy &	CuLP &	0.157 &	0 &	0.082 &	17.166 &	918.815 &	842.511 &	0	\\	\cline{4-10}
&	- &	{\tt bottom.up} &	0.863 &	0.014 &	0.649 &	4.173 &	840.582 &	72.021 &	30.907	\\	
&	- &	WBS.sBIC &	0.857 &	0.101 &	0.658 &	6.551 &	856.824 &	128.398 &	35.810	\\	
&	- &	WBS2.SDLL &	0.89 &	0.043 &	0.677 &	4.2 &	833.53 &	59.276 &	36.837	\\	
&	- &	cumSeg &	0.399 &	0 &	0.33 &	13.36 &	893.913 &	703.169 &	184.755	\\	
&	- &	JUSD &	0.423 &	0.004 &	0.299 &	15.69 &	919.363 &	695.591 &	290.019	\\	
&	- &	DepSMUCE &	0.587 &	0.006 &	0.409 &	11.989 &	895.764 &	484.526 &	167.534	\\	\hline
{\tt teeth10} &	light &	MoLP &	0.873 &	0.002 &	0.83 &	2.611 &	-79.075 &	0.718 &	0	\\	
&	heavy &	MoLP &	0.084 &	0 &	0.075 &	10.437 &	-63.117 &	4.581 &	0	\\	
&	light &	CuLP &	0.874 &	0.035 &	0.773 &	3.882 &	-80.887 &	1.042 &	0	\\	
&	heavy &	CuLP &	0.081 &	0 &	0.067 &	10.4 &	-62.294 &	4.615 &	0	\\	\cline{4-10}
&	- &	{\tt bottom.up} &	0.78 &	0.003 &	0.736 &	3.979 &	-74.989 &	1.163 &	0	\\	
&	- &	WBS.sBIC &	0.8 &	0.07 &	0.704 &	4.737 &	-79.29 &	1.37 &	0.000	\\	
&	- &	WBS2.SDLL &	0.098 &	0.004 &	0.087 &	10.081 &	-65.161 &	4.558 &	0.000	\\	
&	- &	cumSeg &	0.01 &	0 &	0.005 &	10.693 &	-63.908 &	4.956 &	0.000	\\	
&	- &	JUSD &	0 &	0 &	0 &	10.727 &	-63.434 &	5 &	0.000	\\	
&	- &	DepSMUCE &	0.002 &	0 &	0.001 &	10.724 &	-63.469 &	4.992 &	0.000	\\	 \hline
{\tt stairs10} &	light &	MoLP &	0.989 &	0.005 &	0.966 &	1.977 &	-127.028 &	0.174 &	0	\\	
&	heavy &	MoLP &	0.616 &	0 &	0.669 &	17.122 &	-71.541 &	2.04 &	0.318	\\	
&	light &	CuLP &	0.994 &	0.041 &	0.944 &	2.568 &	-127.349 &	0.235 &	0	\\	
&	heavy &	CuLP &	0.688 &	0 &	0.709 &	13.441 &	-82.097 &	1.864 &	0	\\	\cline{4-10}
&	- &	{\tt bottom.up} &	0.651 &	0.083 &	0.543 &	23.182 &	-48.558 &	3.155 &	2.012	\\	
&	- &	WBS.sBIC &	0.998 &	0.084 &	0.94 &	2.556 &	-128.074 &	0.193 &	0.000	\\	
&	- &	WBS2.SDLL &	0.984 &	0.022 &	0.936 &	2.798 &	-125.513 &	0.319 &	0.000	\\	
&	- &	cumSeg &	0.968 &	0.008 &	0.84 &	6.048 &	-98.312 &	0.862 &	0.847	\\	
&	- &	JUSD &	0.524 &	0 &	0.616 &	19.1 &	-63.866 &	2.651 &	0.741	\\	
&	- &	DepSMUCE &	0.551 &	0 &	0.627 &	18.169 &	-67.326 &	2.576 &	1.906	\\	
\hline\hline
\end{longtable}}

{\small
\setlength{\tabcolsep}{3pt}
\begin{longtable}{c|cc|c|c|c|c|c|cc}
\caption{Summary of change point estimation over $1000$ realisations for the 
test signals with Gaussian AR($1$) process as $\vep_t$ where $\varrho = 0.9$ is used as the AR parameter;
we set $\alpha = 0.2$ for MoLP, {\tt bottom.up}, JUDS and DepSMUCE and $C_\zeta = 0.9$ for CuLP,
and use $h = h_{\mc J}$ and $\xi_n \in \{\log^{1.1}(n), \log^2(n)\}$ for the localised pruning.}
\label{table:sim:ar:0.9} 
\endfirsthead
\endhead
\hline\hline
model &	penalty &	method &	TPR &	FPR &	ARI &	MSE &	BIC &	$\delta_{\trim}$ &	$v_{\trim}$ 	\\	\hline
{\tt blocks} &	light &	MoLP &	0.965 &	0.555 &	0.738 &	8.856 &	46624.3 &	4598.286 &	2825.82	\\	
&	heavy &	MoLP &	0.942 &	0.14 &	0.905 &	5.792 &	46941.57 &	6350.733 &	3242.88	\\	
&	light &	CuLP &	0.941 &	0.316 &	0.849 &	6.529 &	46908.09 &	6314.565 &	2814.783	\\	
&	heavy &	CuLP &	0.916 &	0.059 &	0.94 &	5.445 &	47029.13 &	7342.356 &	3070.561	\\	\cline{4-10}
&	- &	{\tt bottom.up} &	0.824 &	0.067 &	0.918 &	6.786 &	47293.08 &	10658.44 &	5986.544	\\	
&	- &	WBS.sBIC &	1 &	0.985 &	0.023 &	78.455 &	41629.19 &	764.912 &	176.043	\\	
&	- &	WBS2.SDLL &	0.97 &	0.443 &	0.723 &	9.548 &	46582.41 &	5485.552 &	3300.133	\\	
&	- &	cumSeg &	0.929 &	0.343 &	0.722 &	7.914 &	47207.89 &	10499.69 &	6610.925	\\	
&	- &	JUSD &	0.782 &	0.009 &	0.928 &	7.675 &	47399.17 &	16699.38 &	5104.590	\\	
&	- &	DepSMUCE &	0.917 &	0.137 &	0.876 &	5.602 &	47045.95 &	7352.367 &	3336.377	\\	\hline
{\tt fms} &	light &	MoLP &	0.962 &	0.398 &	0.758 &	6.1 &	-6128.199 &	1.448 &	0.993	\\	
&	heavy &	MoLP &	0.951 &	0.116 &	0.856 &	4.743 &	-6050.056 &	2.128 &	1.288	\\	
&	light &	CuLP &	0.976 &	0.302 &	0.815 &	5.463 &	-6111.884 &	2.702 &	0.862	\\	
&	heavy &	CuLP &	0.962 &	0.079 &	0.906 &	4.312 &	-6060.462 &	3.078 &	0.965	\\	\cline{4-10}
&	- &	{\tt bottom.up} &	0.74 &	0.04 &	0.771 &	9.335 &	-5761.415 &	10.171 &	15.326	\\	
&	- &	WBS.sBIC &	1 &	0.973 &	0.039 &	40.62 &	-7663.268 &	0.133 &	0.050	\\	
&	- &	WBS2.SDLL &	0.982 &	0.176 &	0.84 &	5.383 &	-6122.909 &	2.406 &	0.923	\\	
&	- &	cumSeg &	0.959 &	0.565 &	0.518 &	11.825 &	-6044.72 &	12.896 &	8.774	\\	
&	- &	JUSD &	0.762 &	0.01 &	0.878 &	8.301 &	-5811.716 &	12.849 &	1.678	\\	
&	- &	DepSMUCE &	0.876 &	0.041 &	0.902 &	6.363 &	-5920.303 &	8.368 &	8.838	\\	\hline
{\tt mix} &	light &	MoLP &	0.902 &	0.3 &	0.671 &	4.467 &	7585.826 &	341.03 &	284.26	\\	
&	heavy &	MoLP &	0.868 &	0.046 &	0.64 &	4.034 &	7695.666 &	500.375 &	256.946	\\	
&	light &	CuLP &	0.927 &	0.179 &	0.718 &	4.323 &	7598.369 &	464.118 &	306.1	\\	
&	heavy &	CuLP &	0.886 &	0.038 &	0.673 &	3.934 &	7669.327 &	509.42 &	337.063	\\	\cline{4-10}
&	- &	{\tt bottom.up} &	0.739 &	0.013 &	0.461 &	5.539 &	7981.962 &	1040.016 &	406.518	\\	
&	- &	WBS.sBIC &	1 &	0.948 &	0.104 &	19.779 &	5946.152 &	88.406 &	14.256	\\	
&	- &	WBS2.SDLL &	0.937 &	0.093 &	0.742 &	4.128 &	7586.685 &	464.132 &	291.616	\\	
&	- &	cumSeg &	0.864 &	0.28 &	0.738 &	8.866 &	8083.526 &	4268.618 &	728.527	\\	
&	- &	JUSD &	0.624 &	0.007 &	0.454 &	11.194 &	8450.092 &	4347.482 &	1298.358	\\	
&	- &	DepSMUCE &	0.828 &	0.009 &	0.643 &	5.659 &	7906.877 &	1559.279 &	447.061	\\	\hline
{\tt teeth10} &	light &	MoLP &	0.924 &	0.003 &	0.877 &	2.75 &	-1301.465 &	6.007 &	0	\\	
&	heavy &	MoLP &	0.898 &	0.001 &	0.843 &	2.901 &	-1290.286 &	7.262 &	0	\\	
&	light &	CuLP &	0.887 &	0.113 &	0.803 &	3.878 &	-1262.316 &	9.108 &	0	\\	
&	heavy &	CuLP &	0.807 &	0.02 &	0.726 &	4.243 &	-1215.827 &	12.82 &	0	\\	\cline{4-10}
&	- &	{\tt bottom.up} &	0.689 &	0.002 &	0.632 &	5.783 &	-1086.38 &	17.678 &	6.273	\\	
&	- &	WBS.sBIC &	1 &	0.844 &	0.345 &	5.45 &	-1793.232 &	0.468 &	0.000	\\	
&	- &	WBS2.SDLL &	0.651 &	0.017 &	0.605 &	5.568 &	-1138.749 &	19.842 &	4.505	\\	
&	- &	cumSeg &	0.997 &	0.292 &	0.801 &	3.976 &	-1375.372 &	6.059 &	3.878	\\	
&	- &	JUSD &	0.151 &	0.003 &	0.156 &	10.515 &	-680.996 &	44.887 &	0.000	\\	
&	- &	DepSMUCE &	0.079 &	0.002 &	0.081 &	10.607 &	-658.432 &	47.344 &	0.000	\\	\hline
{\tt stairs10} &	light &	MoLP &	0.99 &	0.008 &	0.968 &	1.959 &	-1854.062 &	1.632 &	0	\\	
&	heavy &	MoLP &	0.99 &	0.002 &	0.97 &	1.943 &	-1851.876 &	1.65 &	0	\\	
&	light &	CuLP &	0.991 &	0.128 &	0.936 &	2.507 &	-1893.471 &	1.895 &	0	\\	
&	heavy &	CuLP &	0.99 &	0.024 &	0.958 &	2.314 &	-1862.168 &	2.083 &	0	\\	\cline{4-10}
&	- &	{\tt bottom.up} &	0.771 &	0.083 &	0.708 &	13.097 &	-1263.68 &	19.22 &	5.825	\\	
&	- &	WBS.sBIC &	1 &	0.843 &	0.351 &	5.299 &	-2399.815 &	0.011 &	0.000	\\	
&	- &	WBS2.SDLL &	0.864 &	0.01 &	0.871 &	6.744 &	-1612.956 &	8.71 &	0.000	\\	
&	- &	cumSeg &	1 &	0.262 &	0.848 &	4.329 &	-1793.323 &	3.954 &	2.700	\\	
&	- &	JUSD &	0.535 &	0 &	0.648 &	18.334 &	-975.106 &	25.541 &	8.419	\\	
&	- &	DepSMUCE &	0.469 &	0 &	0.59 &	25.603 &	-772.494 &	29.519 &	5.242	\\	\hline\hline
\end{longtable}}

\begin{figure}[htbp]
\centering
\includegraphics[width=\textwidth]{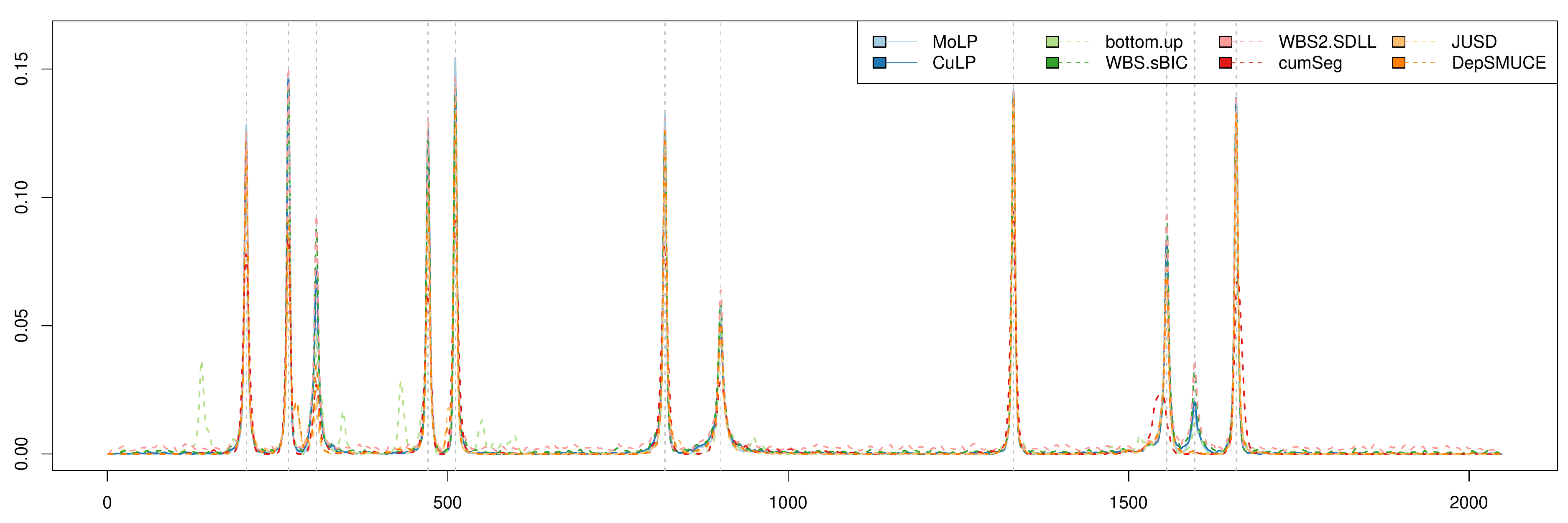}
\caption{Test signal {\tt blocks} with AR($1$) process as $\vep_t$ where $\varrho = 0.3$:
weighted density of estimated change points.
We set $\alpha = 0.2$ for MoLP, {\tt bottom.up}, JUSD and DepSMUCE and $C_\zeta = 0.9$ for CuLP, 
and use $h = h_{\mc J}$ and $\xi_n = \log^{1.1}(n)$ for the localised pruning.}
\label{fig:sim:ar:0.3:blocks}
\end{figure}

\begin{figure}[htbp]
\centering
\includegraphics[width=\textwidth]{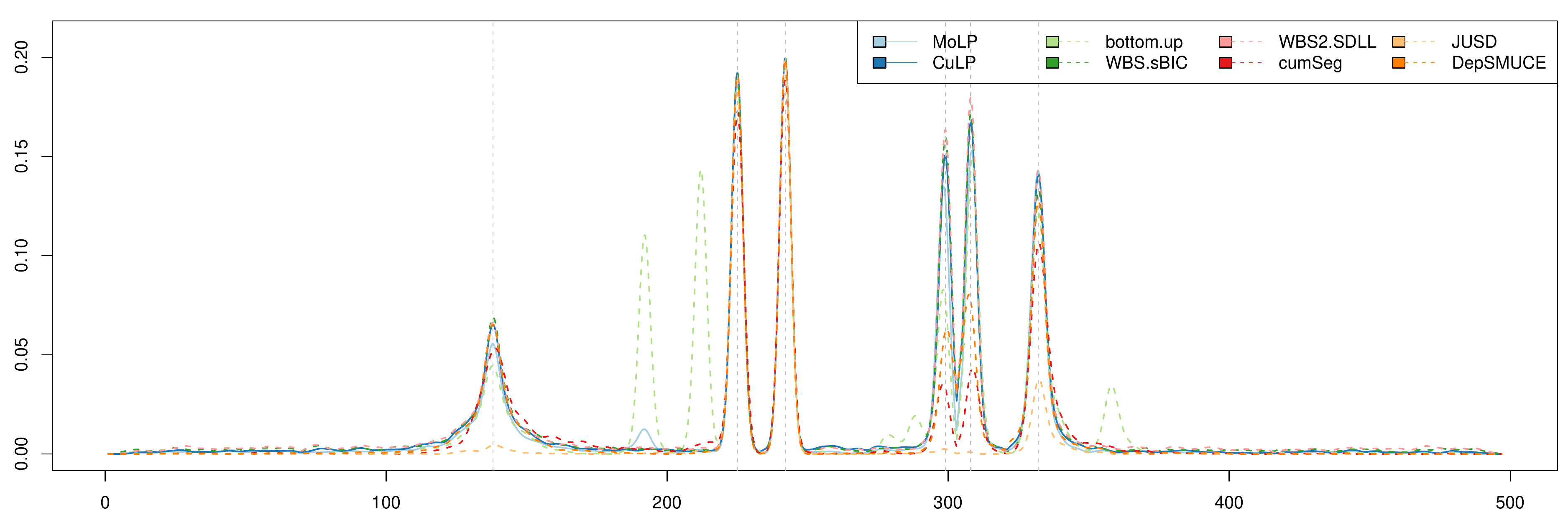}
\caption{Test signal {\tt fms} with AR($1$) process as $\vep_t$ where $\varrho = 0.3$:
weighted density of estimated change points.}
\label{fig:sim:ar:0.3:fms}
\end{figure}

\begin{figure}[htbp]
\centering
\includegraphics[width=\textwidth]{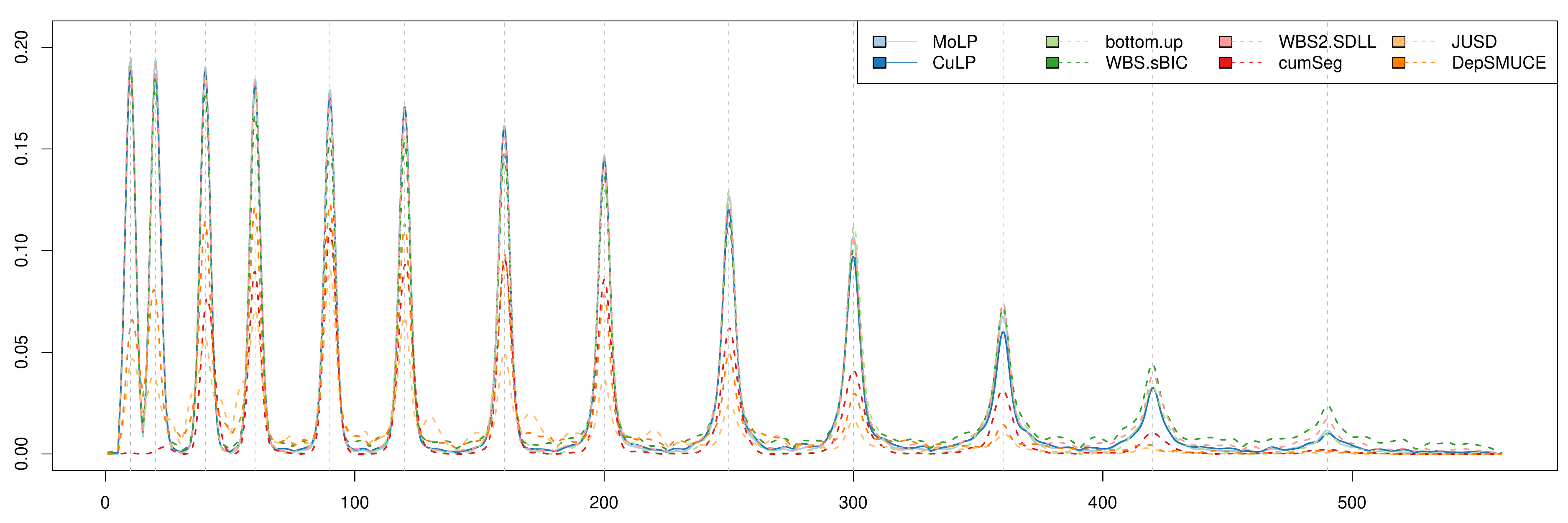}
\caption{Test signal {\tt mix} with AR($1$) process as $\vep_t$ where $\varrho = 0.3$:
weighted density of estimated change points.}
\label{fig:sim:ar:0.3:mix}
\end{figure}

\begin{figure}[htbp]
\centering
\includegraphics[width=\textwidth]{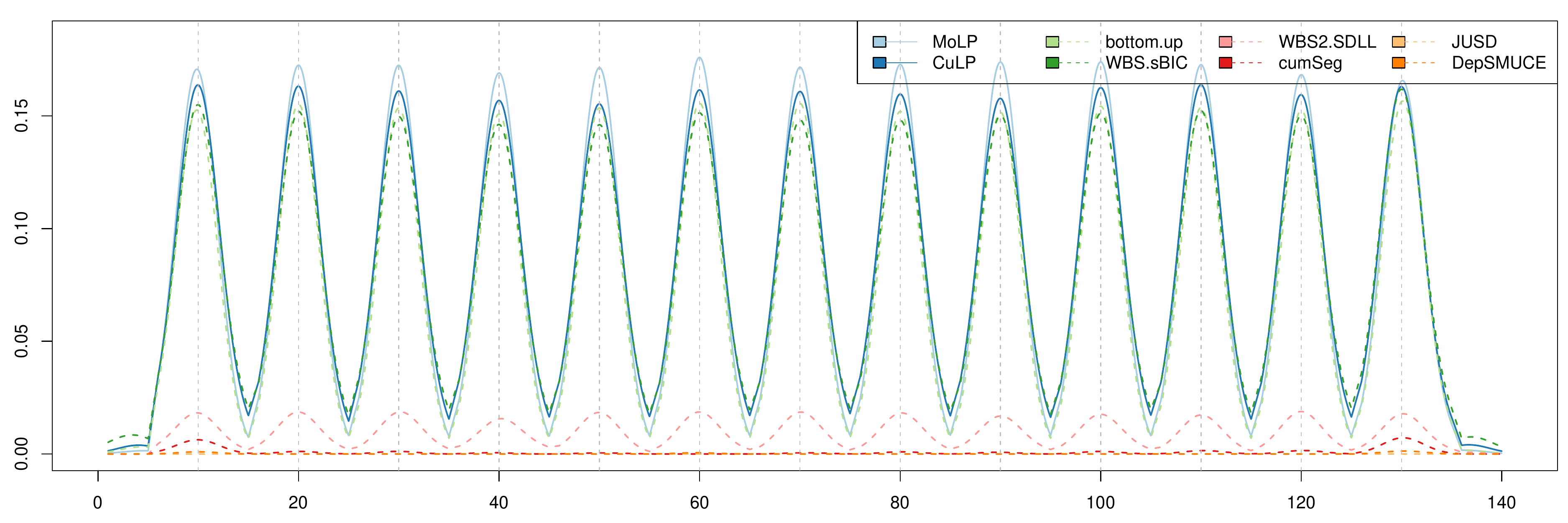}
\caption{Test signal {\tt teeth10} with AR($1$) process as $\vep_t$ where $\varrho = 0.3$:
weighted density of estimated change points.}
\label{fig:sim:ar:0.3:teeth10}
\end{figure}

\begin{figure}[htbp]
\centering
\includegraphics[width=\textwidth]{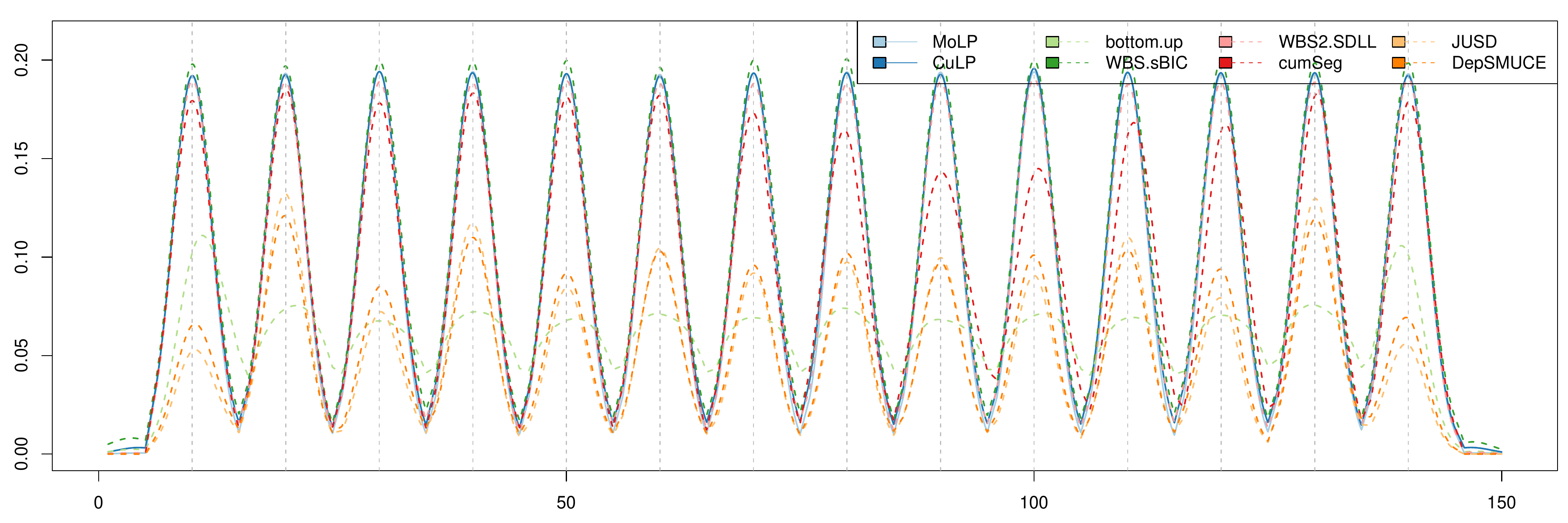}
\caption{Test signal {\tt stairs10} with AR($1$) process as $\vep_t$ where $\varrho = 0.3$:
weighted density of estimated change points.}
\label{fig:sim:ar:0.3:stairs10}
\end{figure}

\begin{figure}[htbp]
\centering
\includegraphics[width=\textwidth]{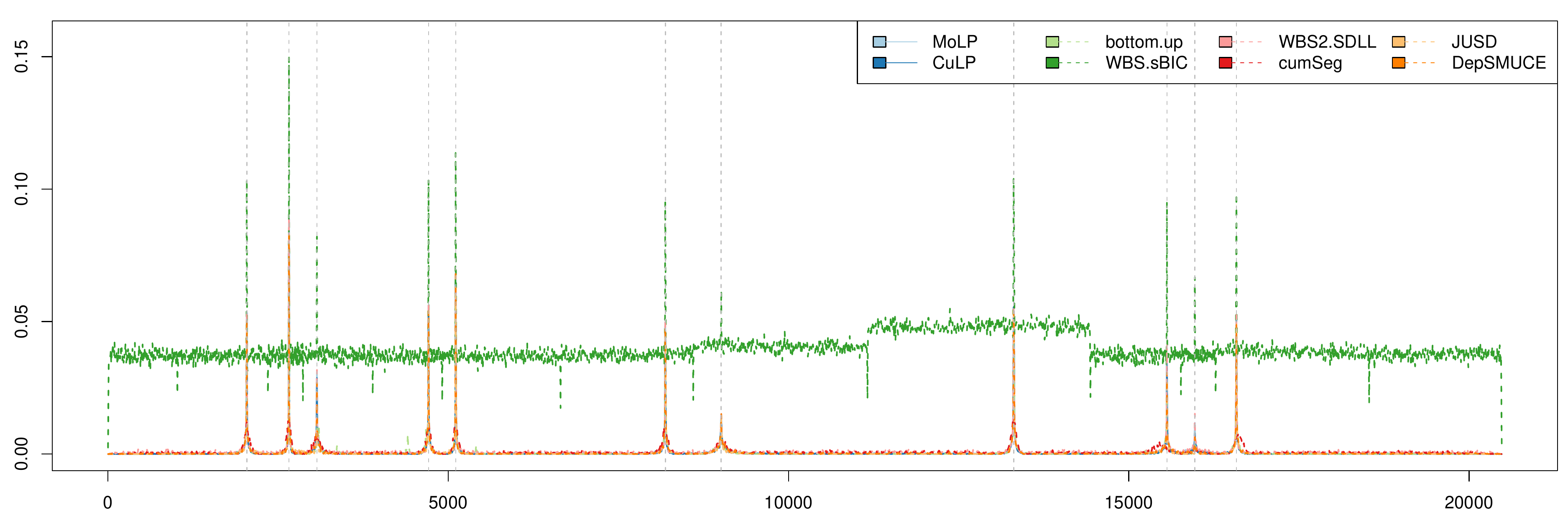}
\caption{Test signal {\tt blocks} with AR($1$) process as $\vep_t$ where $\varrho = 0.9$:
weighted density of estimated change points.
We set $\alpha = 0.2$ for MoLP, {\tt bottom.up}, JUSD and DepSMUCE and $C_\zeta = 0.9$ for CuLP, 
and use $h = h_{\mc J}$ and $\xi_n = \log^{2}(n)$ for the localised pruning.}
\label{fig:sim:ar:0.9:blocks}
\end{figure}

\begin{figure}[htbp]
\centering
\includegraphics[width=\textwidth]{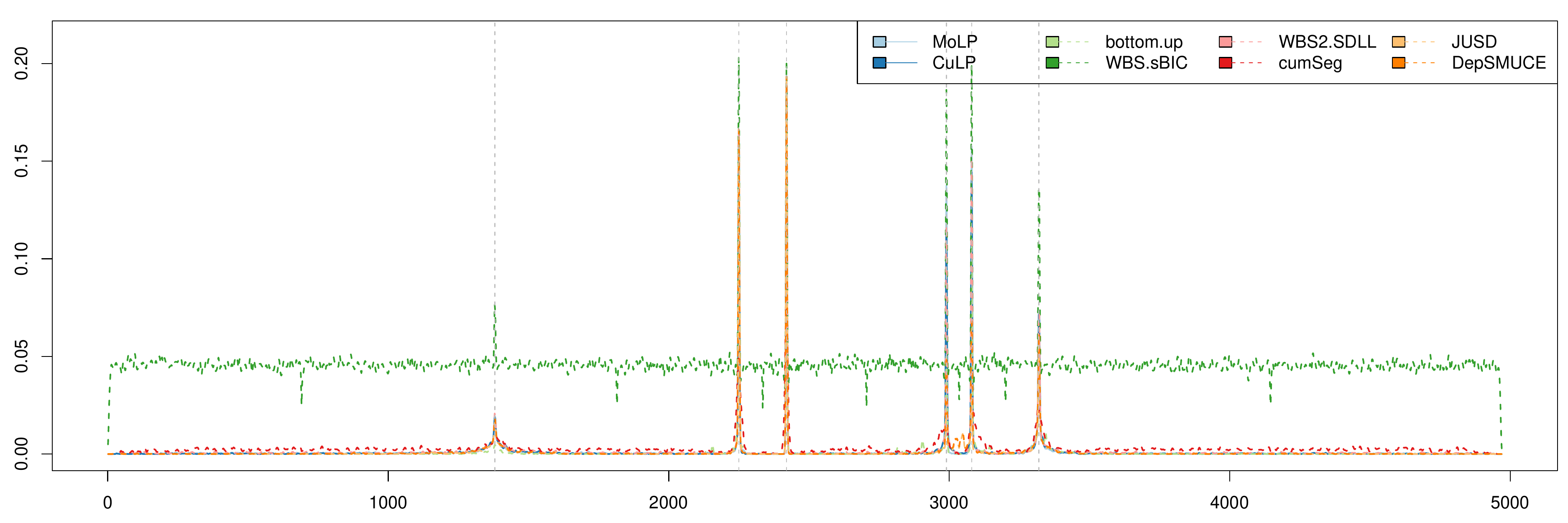}
\caption{Test signal {\tt fms} with AR($1$) process as $\vep_t$ where $\varrho = 0.9$:
weighted density of estimated change points.}
\label{fig:sim:ar:0.9:fms}
\end{figure}

\begin{figure}[htbp]
\centering
\includegraphics[width=\textwidth]{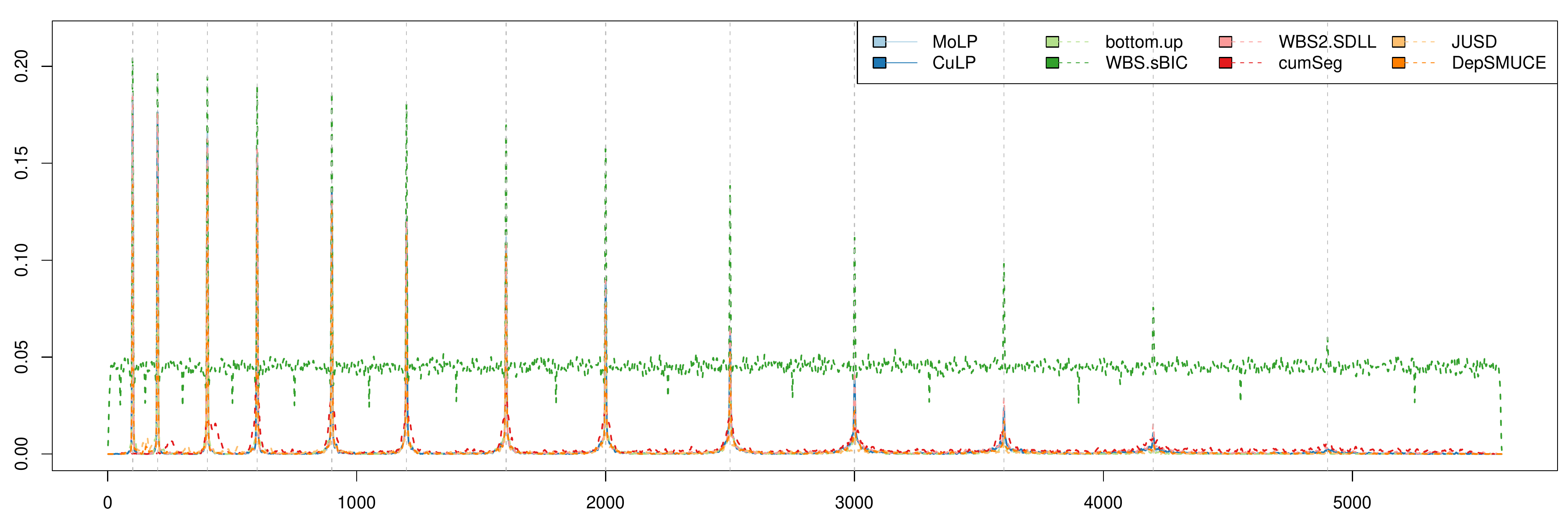}
\caption{Test signal {\tt mix} with AR($1$) process as $\vep_t$ where $\varrho = 0.9$:
weighted density of estimated change points.}
\label{fig:sim:ar:0.9:mix}
\end{figure}

\clearpage
\begin{figure}[htbp]
\centering
\includegraphics[width=\textwidth]{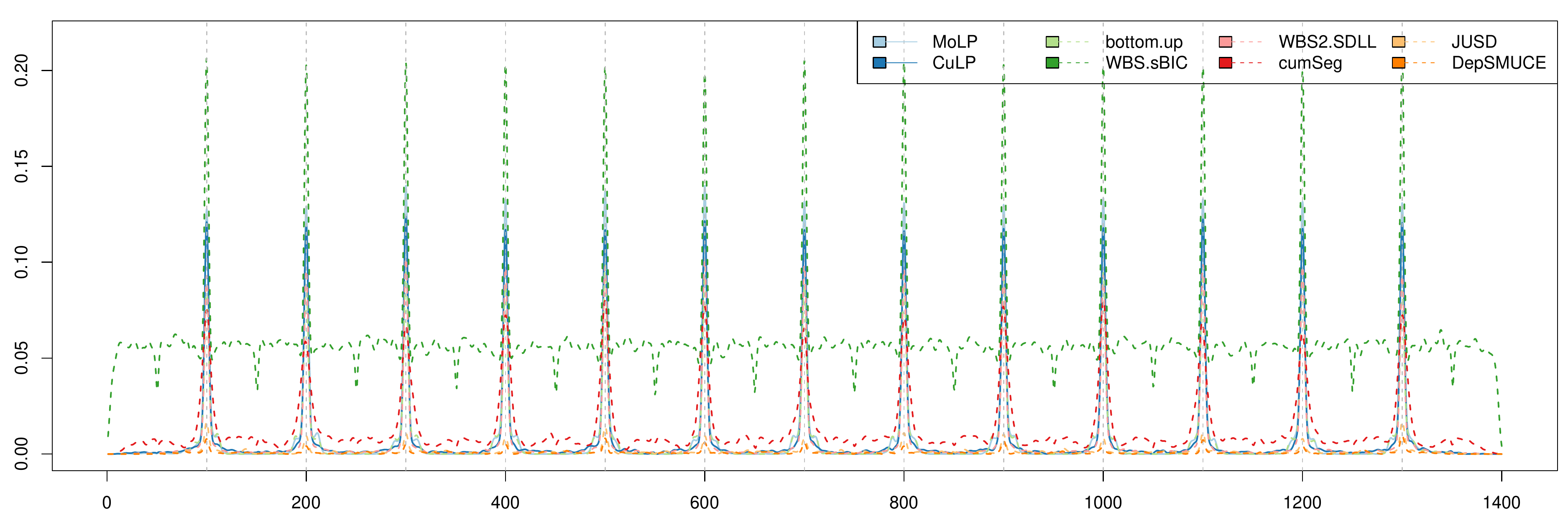}
\caption{Test signal {\tt teeth10} with AR($1$) process as $\vep_t$ where $\varrho = 0.9$:
weighted density of estimated change points.}
\label{fig:sim:ar:0.9:teeth10}
\end{figure}
\begin{figure}[htbp]
\centering
\includegraphics[width=\textwidth]{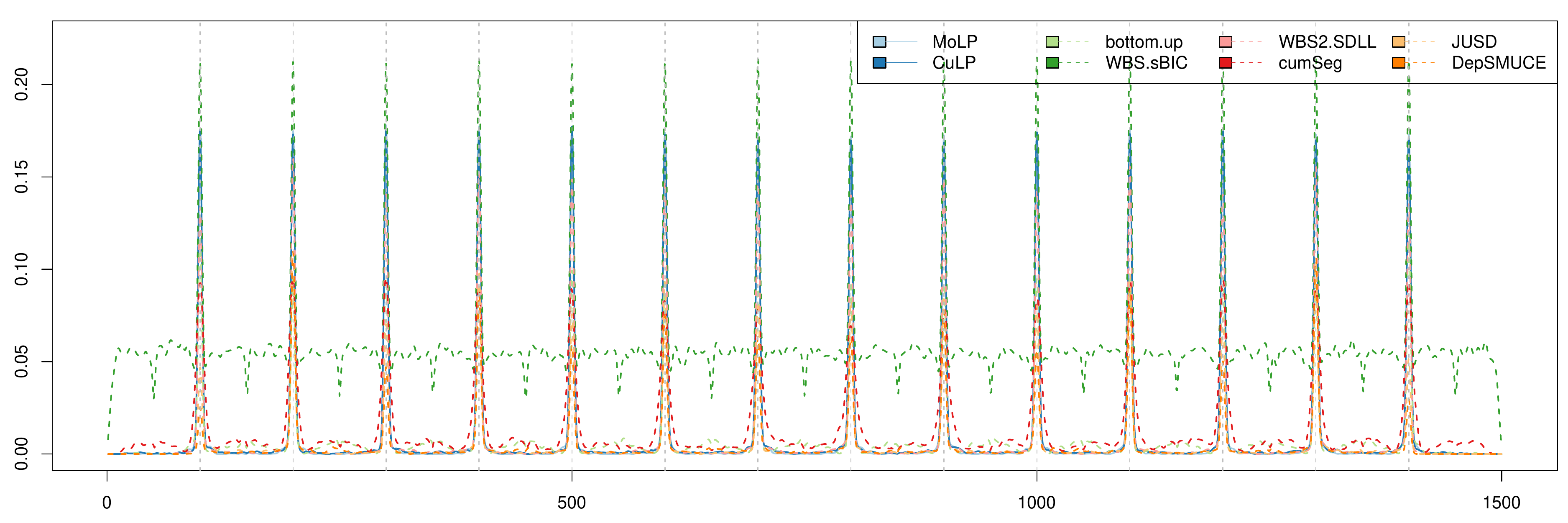}
\caption{Test signal {\tt stairs10} with AR($1$) process as $\vep_t$ where $\varrho = 0.9$:
weighted density of estimated change points.}
\label{fig:sim:ar:0.9:stairs10}
\end{figure}

\clearpage
\section{Algorithms}
\label{sec:algs}

Algorithm~\ref{alg:one} provides the pseudo code for the outer algorithm
of the proposed localised pruning methodology,
which iteratively identifies the local interval over which pruning is to be performed.

\begin{algorithm}[htp]
\caption{Outer algorithm for localisation (\tt LocAlg)}
\label{alg:one}
\DontPrintSemicolon
\SetAlgoLined
\SetKwFunction{exhaustive}{{\tt PrunAlg}}
\KwIn{Data $\{X_t\}_{t = 1}^n$, a set of candidate change point estimators $\C$, 
a candidate sorting function $h(\cdot)$}
\BlankLine
{\bf Step 0:} set $\wh{\Cp} = \emptyset$ and $\mc C \leftarrow \C$\;
\Repeat{$\mc C$ is empty}{
{\bf Step 1:} find $\mc C_\circ$ as $\mc C_\circ \leftarrow \{\c \in \mc C:\, h(\c) = \max_{\c' \in \mc C} h(\c')\}$\;
\qquad \lIf{$|\mc C_\circ| = 1$}{$\c_\circ \leftarrow \mc C_\circ$} 
\qquad \lElse{$\c_\circ \leftarrow \arg\min_{\c \in \mc C_\circ} |\mc I(\c)|$}

{\bf Step 2:} find
\begin{align*}
& \c_L \leftarrow \max\{ \c < \c_\circ \colon 
	\c \in \wh{\Cp} \cup \{0\} \text{ or } (\c \in \mc C \text{ and }\mc I(\c) \cap \mc I(\c_\circ) = \emptyset )\}, \\
& \c_R \leftarrow \min\{ \c > \c_\circ \colon 
	\c \in \wh{\Cp} \cup \{n\} \text{ or } ( \c \in \mc C \text{ and }\mc I(\c) \cap \mc I(\c_\circ) = \emptyset) \}
\end{align*}
and set $\mc D \leftarrow (\c_L, \c_R) \cap \mc C$\;

{\bf Step 3:} $\wh{\mc A} \leftarrow \exhaustive(\mc D, \mc C, \wh{\Cp}, \c_L, \c_R)$\;

{\bf Step 4:} set 
$\mc R \leftarrow \{\c_\circ\} \cup (\mc D \cap [\min\wh{\mc A}, \max\wh{\mc A}])$\;
\qquad \lIf{$\c_L \in \wh{\Cp} \cup \{0\}$}{$\mc R \leftarrow \mc R \cup \{\mc D \cap (\c_L, \min\wh{\mc A}) \}$ }
\qquad \lIf{$\c_R \in \wh{\Cp} \cup \{n\}$}{$\mc R \leftarrow \mc R \cup \{\mc D \cap (\max\wh{\mc A}, \c_R) \}$ }

{\bf Step 5:} set $\wh{\Cp} \leftarrow \wh{\Cp} \cup \wh{\mc A}$ and
$\mc C \leftarrow \mc C \setminus \mc R$
}
\BlankLine
\KwOut{$\wh{\Cp}$}
\end{algorithm}

\clearpage\newpage
Algorithm~\ref{alg:two} outlines the efficient implementation of 
the inner algorithm employed in Step~3 of the outer algorithm (Algorithm~\ref{alg:one}).
For further details on its implementation, see \cite{meier2018}.

\begin{algorithm}[htbp]
\caption{Inner algorithm for pruning ({\tt PrunAlg})}
\label{alg:two}
\DontPrintSemicolon
\SetAlgoLined
\SetKwFunction{exhaustive}{{\tt PrunAlg}}
\SetKwData{flag}{flag}
\SetKwData{true}{true} 
\SetKwData{false}{false}
\SetKwData{child}{child}
\SetKwData{inf}{Inf}

\SetKwProg{Fn}{Function}{:}{}
\Fn{$\exhaustive(\mc D, \mc C, \wh{\Cp}, s, e)$}{
	\BlankLine
	Enumerate all $M = 2^{|\mc D|}$ subsets of $\mc D$ (including $\emptyset$) denoted by $\mc D_i, \, i = 1, \ldots, M$. 

	Set $\mc F \leftarrow \emptyset$, $\wh{\mc A} \leftarrow \emptyset$, $\ell \leftarrow |\mc D|$,
	and assign $\flag_i \leftarrow \true$ for all $i = 1, \ldots, M$.

	\Repeat{$\ell = 0$}{
		\For{\textup{$\mc D_i$ with $|\mc D_i| = \ell$ and $\flag_i = \false$}}{
			identify
			\begin{center}
			$\child(\mc D_i) = \{j: \, \mc D_j \subset \mc D_i  \text{ with } |\mc D_j| = \ell-1 \text{ and }
			\flag_j = \true \}$
			\end{center}
			\lFor{$j \in \child(\mc D_i)$}{ $\flag_j \leftarrow \false$ }
		}
		\For{\textup{$\mc D_i$ with $|\mc D_i| = \ell$ and $\flag_i = \true$}}{
			update $\mc F \leftarrow \mc F \cup \{i\}$ and
			identify $\child(\mc D_i)$\;
			\For{$j \in \child(\mc D_i)$}{
				\lIf{\textup{$\sic(\mc D_i | \mc C, \wh{\Cp}, s, e) < \sic(\mc D_j  | \mc C, \wh{\Cp}, s, e)$}}{
					$\flag_j \leftarrow \false$ }
			}

		}
		$\ell \leftarrow \ell - 1$
	}

	\If{$\mc F \ne \emptyset$}{
		find $m^* \leftarrow \min_{i \in \mc F} \, |\mc D_i|$\;
		identify $i^* \leftarrow \arg\min_{i: \, \mc D_i \subset_R \mc D_{i^\prime}, i^\prime \in \mc F, \, 
		m^* \le |\mc D_{i^\prime}| \le m^* + 2} \sic(\mc D_i| \mc C, \wh{\Cp}, s, e)$\;
		set $\wh{\mc A} \leftarrow \mc D_{i^*}$
	}

  	\KwRet{$\wh{\mc A}$}\;
}
\end{algorithm}

%
%
%
%
%
%
%
%

\end{document}